\newtheorem{theorem}{Theorem}
\newtheorem{definition}{Definition}
\begin{document}

\setstcolor{red}

\title{Magic Boundaries of 3D Color Codes}
\author{Zijian Song}
\affiliation{Department of Physics and Astronomy, University of California, Davis, California 95616, USA}

\author{Guanyu Zhu}
\affiliation{IBM Quantum, T.J. Watson Research Center, Yorktown Heights, NY 10598 USA}

\begin{abstract}
 We investigate boundaries of 3D color codes and provide a systematic classification into 101 distinct boundary types, including two novel classes. The first class consists of 1 boundary and is generated by sweeping the codimension-1 (2D) $T$-domain wall across the system and attaching it to the $X$-boundary that condenses only magnetic fluxes. Since the $T$-domain wall cannot condense on the $X$-boundary, a new {\it magic boundary} is produced, where the boundary stabilizers contain $XS$-stabilizers going beyond the conventional Pauli stabilizer formalism, and hence contains `magic'. Neither electric nor magnetic excitations can condense on such a magic boundary, and only the composite of the magnetic flux and codimension-2 (1D) $S$-domain wall can condense on it, which makes the magic boundary going beyond the classification of the Lagrangian subgroup.  The second class consists of 70 boundaries and is generated by sweeping the $S$-domain wall across a codimension-1 submanifold and attaching it to the boundary. This generates a codimension-2 (1D) {\it nested boundary} at the intersection. We also connect these novel boundaries to their previously discovered counterpart in the $\mathbb{Z}_2^3$ gauge theory, equivalent to three copies of 3D toric codes, where the $S$ and $T$ domain walls correspond to gauged symmetry-protected topological (SPT) defects.  New boundaries are produced whenever the corresponding symmetry of the SPT defect remains unbroken on the boundary.  Applications of the magic boundaries include implementing fault-tolerant non-Clifford logical gates, e.g., in the context of fractal topological codes.

\end{abstract}

\maketitle

{\hypersetup{linkcolor=black}
\tableofcontents}

\section{Introduction}

In recent decades, there has been significant progress on the development of quantum error-correcting codes and fault-tolerant quantum computation. A well-known class of error-correcting codes is the topological stabilizer codes (TSCs). The error syndrome that occurs in TSCs can usually be detected by measuring the stabilizers, and the code distance can be made arbitrarily large by increasing the size of the lattice~\cite{dennis2002topological}. The family of TSCs includes the surface codes~\cite{kitaev2003fault,bravyi1998quantum}, the color codes~\cite{bombin2006topological}, and their generalizations with non-trivial topology, boundary conditions, and topological  defects, etc.~\cite{eastin2009restrictions,fowler2009high,bombin2010topological,bombin2007topological}. From the perspective of quantum matter, a  topological stabilizer code also has a corresponding Abelian topological order, which can be described by an underlying topological quantum field theory (TQFT) or equivalently a $G$-gauge theory with a discrete Abelian gauge group $G$.

Among a large family of topological stabilizer codes, the color code has obtained significant interest due to its capability for fault-tolerant quantum operations, particularly transversal logical gates~\cite{bombin2006topological}. Logical gates from the Clifford group can be implemented transversally in the 2D color code by applying a unitary operation to each physical qubit. Additionally, it is worth noting that the 2D color code boundary can be used to distill magic states through code deformation~\cite{bombin2009quantum}, which can be utilized for universal quantum computing~\cite{gottesman1998heisenberg}. More generally, a $D$-dimensional color code admits transversal logical gates in the $D^\text{th}$ level of the Clifford hierarchy \cite{kubica2015unfolding}. Furthermore, an equivalence is established between $D$-dimensional color codes and $D$ copies of $D$-dimensional toric codes~\cite{kubica2015unfolding}.  From the TQFT perspective, a $D$-dimensional toric code corresponds to a ($D$+1)-dimensional $\mathbb{Z}_2$ gauge theory, while a  $D$-dimensional color code corresponds to a ($D$+1)-dimensional $\mathbb{Z}_2^D$ gauge theory.

It was demonstrated in Ref.~\cite{levin2013protected} that for (2+1)D Abelian topological order, each gapped boundary can be associated with a corresponding {\it Lagrangian subgroup}. This subgroup constitutes the abelian subset of the topological excitations in the bulk that can condense (annihilate) on the boundary. However, an exotic boundary in three copies of 3D toric codes equivalent to a (3+1)D $\mathbb{Z}_2^3$ gauge theory, which will be dubbed as the `\textit{magic boundary}', was constructed in Ref.~\cite{zhu2022topological} that goes beyond the classification of the Lagrangian subgroup even though the corresponding (3+1)$D$ topological order is still Abelian. Given the equivalence between a 3D color code and three copies of 3D toric codes ($\mathbb{Z}_2^3$ gauge theory), a natural question arises: what are the microscopic descriptions and properties of this boundary in the context of the 3D color code?

\subsection{Main results and ideas}

In this paper, we explicitly construct an exact microscopic lattice model of this exotic boundary based on the 3D color code and further scrutinize its properties. We refer to this boundary as the {\it magic boundary} since it goes beyond the conventional Pauli stabilizer formalism, or refer to it as the {\it $XS$-boundary} since it has $XS$-stabilizers on it. This boundary is realized by first creating an $X$-boundary of the 3D color code, followed by applying a transversal-$T$ gate, which is effectively sweeping the corresponding $T$-domain wall across the system. Given the duality between the $T$-gate in the 3D color code and the $\mathrm{CCZ}$-gate in three copies of 3D toric codes, the magic boundary realized in our work is also dual to the boundary explored in Ref.~\cite{zhu2022topological}.

This magic boundary displays exotic properties that extend beyond the conventional classification by the Lagrangian subgroup. In this context, none of the topological charges or fluxes from the bulk can condense on this boundary. This is attributed to the fact that the $T$-domain wall can not condense on the $X$-boundary; instead, it attaches to it. The interplay between the $X$-boundary and the $T$-domain wall imparts this boundary with its exotic character. We further demonstrate that altering the condensations on this exotic boundary causes its unique properties to vanish, rendering it equivalent to an elementary boundary found in the 3D color code.

To complete the classification of boundaries, we also consider the boundaries attached with codimension-2 domain walls, namely, the $S$-domain walls. This boundary is dubbed as the `{\it nested boundary}'. There are three independent $S$-domain walls in the 3D color code. We explicitly show that, for the $X$-boundary, no $S$-domain wall can condense on it; instead, it attaches to it. Consequently, the condensation set at the intersection differs from other parts of the boundary which leads to a nested boundary. We further demonstrate that for the boundaries preserving $\mathbb{Z}_2 \times \mathbb{Z}_2$ symmetries, the corresponding $S$-domain walls cannot condense on them. Conversely, for the boundaries where the corresponding $\mathbb{Z}_2 \times \mathbb{Z}_2$ symmetries are spontaneously broken, the corresponding $S$-domain walls can condense on them.

From the TQFT perspective, the transversal $T$ and $S$ gates correspond to two types of emergent symmetries of the $\mathbb{Z}_2^3$ gauge theory. Applying these symmetries transversally within a specific region $\mathcal{R}$ generates gapped domain walls along the boundaries. As pointed out in Ref.~\cite{yoshida2015topological, yoshida2016topological, yoshida2017gapped, barkeshli2023codimension, barkeshli2022higher, zhu2023non}, those two types of domain walls are gauged symmetry-protected topological (SPT) defects, which can be generated by decorating a (3+1)D invertible phases with a (2+1)D $\mathbb{Z}_2^3$ SPT (3-cocycle of type-III) or (1+1)D $\mathbb{Z}_2^2$ SPT (2-cocycle of type-II) \cite{propitius1995topological,chen2013symmetry}, respectively, and then gauging the global $\mathbb{Z}_2^3$ symmetry. Applying transversal $T$ or $S$ gate is effectively sweeping the corresponding domain walls across the entire system or codimension-1 submanifold. 

In the presence of boundaries, the $e$-boundary (boundary with $Z$-type condensations) spontaneously breaks the $\mathbb{Z}_2$ symmetry, whereas the $m$-boundary (boundary with $X$-type condensations) preserves the symmetry. Consequently, when the gauged SPT defects, such as $T$ or $S$ domain walls, touch the $e$-boundary, the corresponding global symmetry is also spontaneously broken, which trivializes the gauged SPT defects, and hence allows them to condense on the $e$-boundary. In contrast, when the gauged SPT defects touch the $m$-boundary, the global symmetry is preserved. As a result, they cannot condense on the $m$-boundary but instead attach to it and produce a new boundary.

Based on these insights, we systematically classify and explicitly construct 101 distinct types of boundaries for the 3D color code, including 1 magic boundary and 70 nested boundaries. While permuting all the color labels could yield additional boundary types, these permutations result in boundaries that are physically equivalent.  This work also serves as a classification of the gapped boundaries of the 3D color code based on the current understanding of emergent symmetries, symmetry defects, and gapped boundaries in the corresponding $\mathbb{Z}^3_2$ gauge theory.

\subsection{Outline}

The paper is organized as follows. In Section~\ref{sec:3D_toric_code} and \ref{sec:3D_color_code}, we review the definition of the 3D toric code, the 3D color code, and their corresponding topological excitations. In Section~\ref{sec:lagrangian_subgroup}, we review the definition of the Lagrangian subgroup, and In Section~\ref{sec:unfold_without_boundary}, we introduce the unfolding unitary transformation that maps between the toric code and the color code without boundaries. In Section~\ref{sec:boundaries}, we introduce the unfolding unitaries on the systems with boundaries, and further define the $X$-boundary, the $Z$-boundary and the other boundaries with a mixture of $X$ and $Z$ types condensations. In Section~\ref{sec:magic}, we construct the magic boundary from the 3D color code with $X$-boundaries, and show the exotic properties of this boundary explicitly in Subsection~\ref{sec:magicboundary}. We further discuss the nested boundaries in Subsection~\ref{sec:nested}, which are boundaries with codimension-2 (1D) $S$-domain wall defects. In Section~\ref{sec:discussion}, we draw the conclusion and have a discussion of our work, followed by an outlook towards future directions. A detailed description of the
unfolding unitary transformation is discussed in Appendix~\ref{sec:unfolding}. Gauging the 3D Ising model in the presence of boundaries is discussed in Appendix~\ref{sec:gauging}. A minimal model that has two $Z$-boundaries and four $X$-boundaries is constructed explicitly in Appendix~\ref{sec:minimal}. 

\section{Backgrounds} \label{sec:review}

In this section, we briefly review the construction of $3$-dimensional toric codes~\cite{kitaev2003fault} and $3$-dimensional topological color codes~\cite{bombin2006topological}. Then we introduce the concept of the Lagrangian subgroup and its application in the classification of gapped boundaries~\cite{kapustin2011topological,levin2013protected}. In the last subsection, we briefly review the unfolding technique to transform the 3D color code to three copies of the 3D toric code without the presence of boundaries~\cite{kubica2015unfolding}.

\subsection{The 3D toric code} \label{sec:3D_toric_code}

Consider a $3$-dimensional square lattice $\mathcal{L}(\mathcal{V}, \mathcal{E}, \mathcal{F}, \mathcal{C})$. $\mathcal{V}$ is the set of vertices $v \in \mathcal{V}$, $\mathcal{E}$ is the set of edges $e \in \mathcal{E}$, $\mathcal{F}$ is the set of faces $f \in \mathcal{F}$, and $\mathcal{C}$ is the set of cells $c \in \mathcal{C}$.

The Hamiltonian of the $3$-dimensional toric code is given by
\begin{align}
    H_{TC} = -\sum_{v} \prod_{v \in \partial e} X(e) - \sum_{f} \prod_{e \in \partial f} Z(e), \label{eq:toric_code}
\end{align}
where $X(e)$ and $Z(e)$ represent for the Pauli $X$- and $Z$-stabilizers applying on the edges labelled by $e$. Pictorially, these two kinds of stabilizers can be drawn as follows:
\begin{align}
        \adjincludegraphics[width=6cm,valign=c]{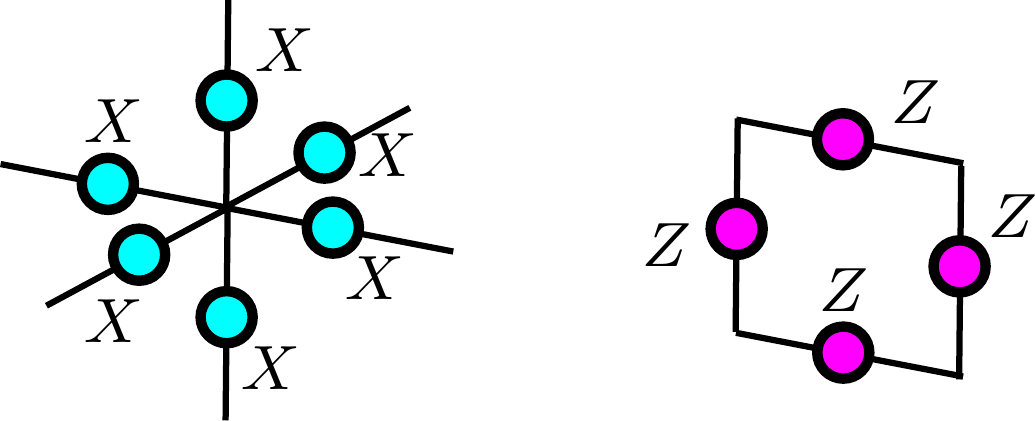}.
\end{align}
One can check that all the Hamiltonian terms are mutually commuting.

In the 3D toric code, there exist two types of topological excitations: electric excitations, referred to as $e$-particles, and magnetic excitations, referred to as $m$-flux loops. These excitations arise from violations of the $X$-stabilizers and $Z$-stabilizers, respectively. We provide examples of these excitations below.
\begin{align}
        \adjincludegraphics[width=6cm,valign=c]{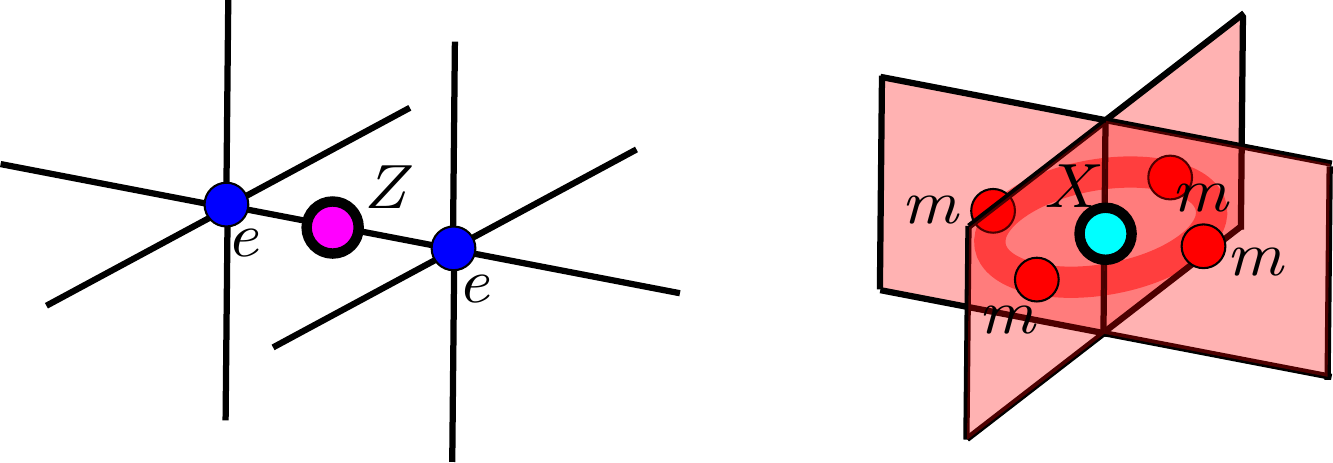}.
\end{align}
Electric charges always appear at the end points of strings of Pauli $Z$-operators and flux loops always appear at the boundaries of Pauli $X$-membranes.

In this paper we consider two types of elementary boundaries of the 3D toric code: the rough (e) boundary and the smooth (m) boundary. $Z$-strings can terminate on the rough boundary without generating any electric excitations at the intersections, while $X$-membranes can terminate on the smooth boundary without generating any flux loops at the intersections. It is equivalent to say that the $e$-particle can {\it condense} on the rough boundary, while the $m$-flux loop can {\it condense} on the smooth boundary. From a mathematical perspective, these condensations form the \textit{Lagrangian subgroup} on the boundary, as detailed in~\cite{levin2013protected}. We have a brief review of the Lagrangian subgroup in Section~\ref{sec:lagrangian_subgroup}.

\begin{figure}
    \centering
    \includegraphics[width = 7cm]{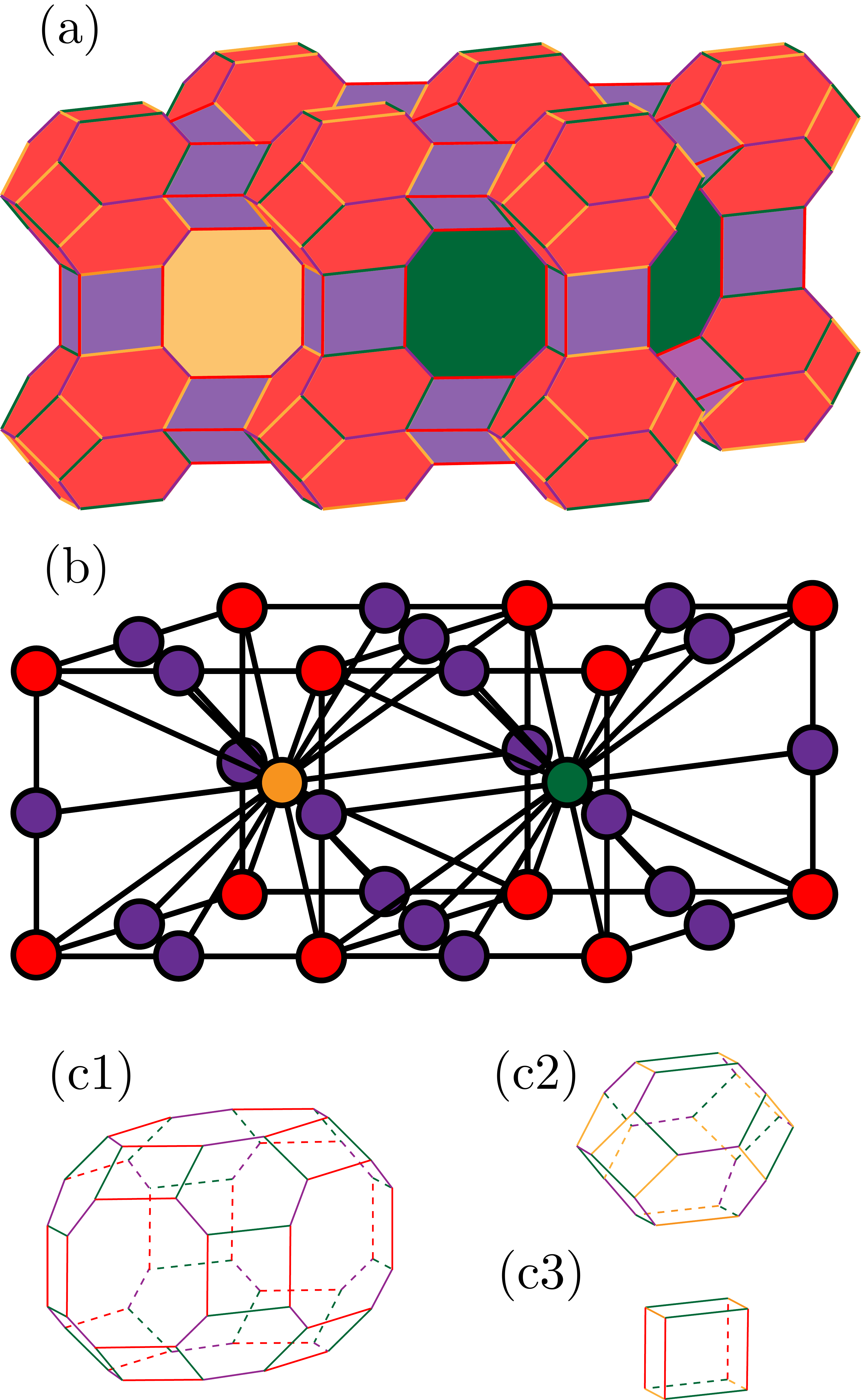}
    \caption{(a) The lattice of the 3D color code, and (b) its dual lattice. We define the color of edges to be the color of the cells they connect. The lower three polyhedrons are (c1) the truncated cuboctahedron, (c2) the truncated octahedron and (c3) the cube, serving as the building blocks of the 3D color code lattice discussed throughout this paper.}
    \label{fig:3DCC}
\end{figure}

\subsection{The 3D color code} \label{sec:3D_color_code}

The $3$-dimensional topological color code is defined on a lattice $\mathcal{L}(\mathcal{V}, \mathcal{E}, \mathcal{F}, \mathcal{C})$ with the following constraints. First, each vertex on the lattice belongs to four different edges. Second, each cells can be assigned one of four colors such that each pair of adjacent cells has different colors.

The Hamiltonian of the $3$-dimensional color code can be written as
\begin{align}
    H_{CC} = -\sum_{c} \prod_{v \subset c} X(v) - \sum_{f} \prod_{v \subset f} Z(v).
\end{align}

In general, one can choose arbitrary 4-colorable and 4-valent cellulations for the $3$D color code. For the sake of convenience, we select a particular cellulation for the three-dimensional topological color code, as depicted in Fig.~\ref{fig:3DCC}. It's important to note that the results we obtain in this paper apply generally and are not contingent on the specific cellulation chosen.

In the lattice shown in Fig.~\ref{fig:3DCC}, we use yellow, green, purple, and red colors, $\mathbf{c}_i \in \{y, g, p, r\} = \mathbf{C}$, to label the color of cells and the edges that connect cells with the same color, and we use two colors to label the plaquettes between two cells with different colors, namely $\{rp, rg, ry, pg, py, yg\}$.

Following this convention, the $X$ and $Z$ stabilizers associated with purple cells, for instance, are depicted as follows\footnote{Without specific statements, we use magenta-colored circles to represent $Z$-operators and cyan-colored circles to represent $X$-operators throughout the paper.}:
\begin{align}
        \adjincludegraphics[width=5cm,valign=c]{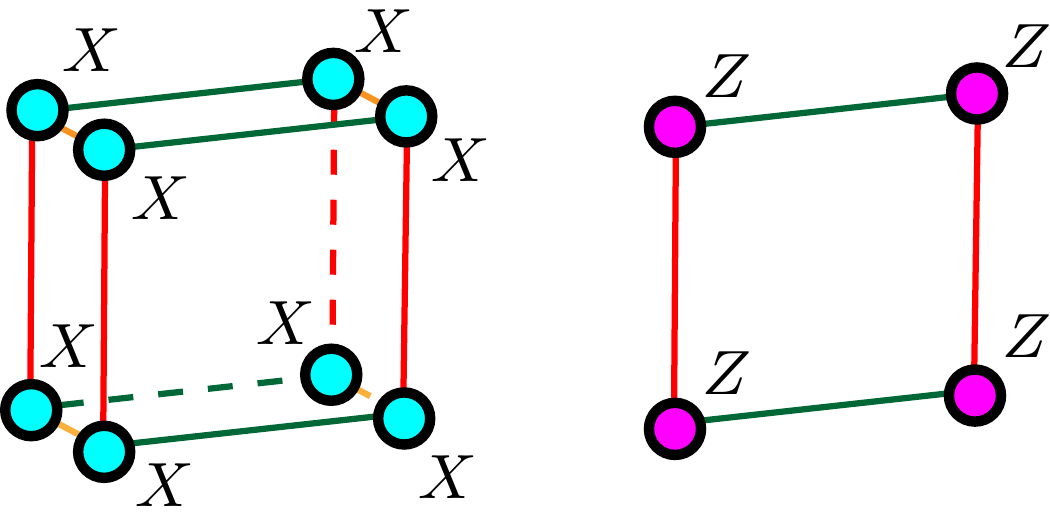}.
\end{align}
Stabilizers for the other cells are defined in a similar manner. It can be verified that the lattice's two constraints ensure all Hamiltonian stabilizers commute with each other.

Similar to the 3D toric code, strings of $Z$-operators generate point-like topological charges (electric charges) at the endpoints. They are denoted by $\{y_{\mathbf{z}}, g_{\mathbf{z}}, p_{\mathbf{z}}, r_{\mathbf{z}}\}$, with the labels representing the colors of their corresponding cells. $Z$-type excitations can be simultaneously created by applying a single $Z$-operator.
\begin{align}        \adjincludegraphics[width=6.5cm,valign=c]{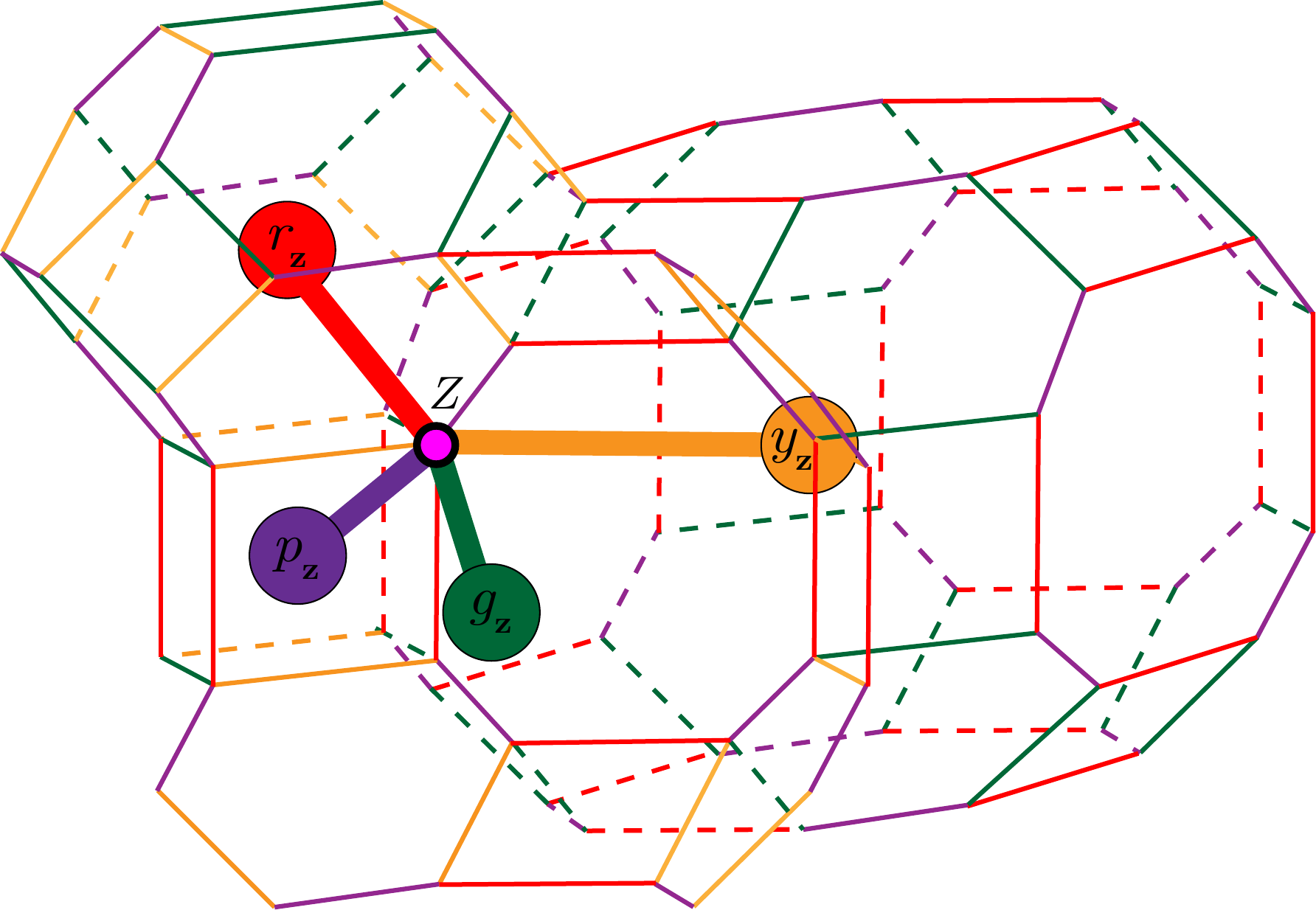}.
\end{align}
Membranes of $X$-operators create flux loops on their boundaries. We label them as $\{rp_{\mathbf{x}}, rg_{\mathbf{x}}, ry_{\mathbf{x}}, pg_{\mathbf{x}}, py_{\mathbf{x}}, yg_{\mathbf{x}}\}$. Here we use two colors to label the plaquette between two cells of different colors. These excitations (flux loops) can be created simultaneously by applying a single $X$-operator.
\begin{align}        \adjincludegraphics[width=6.5cm,valign=c]{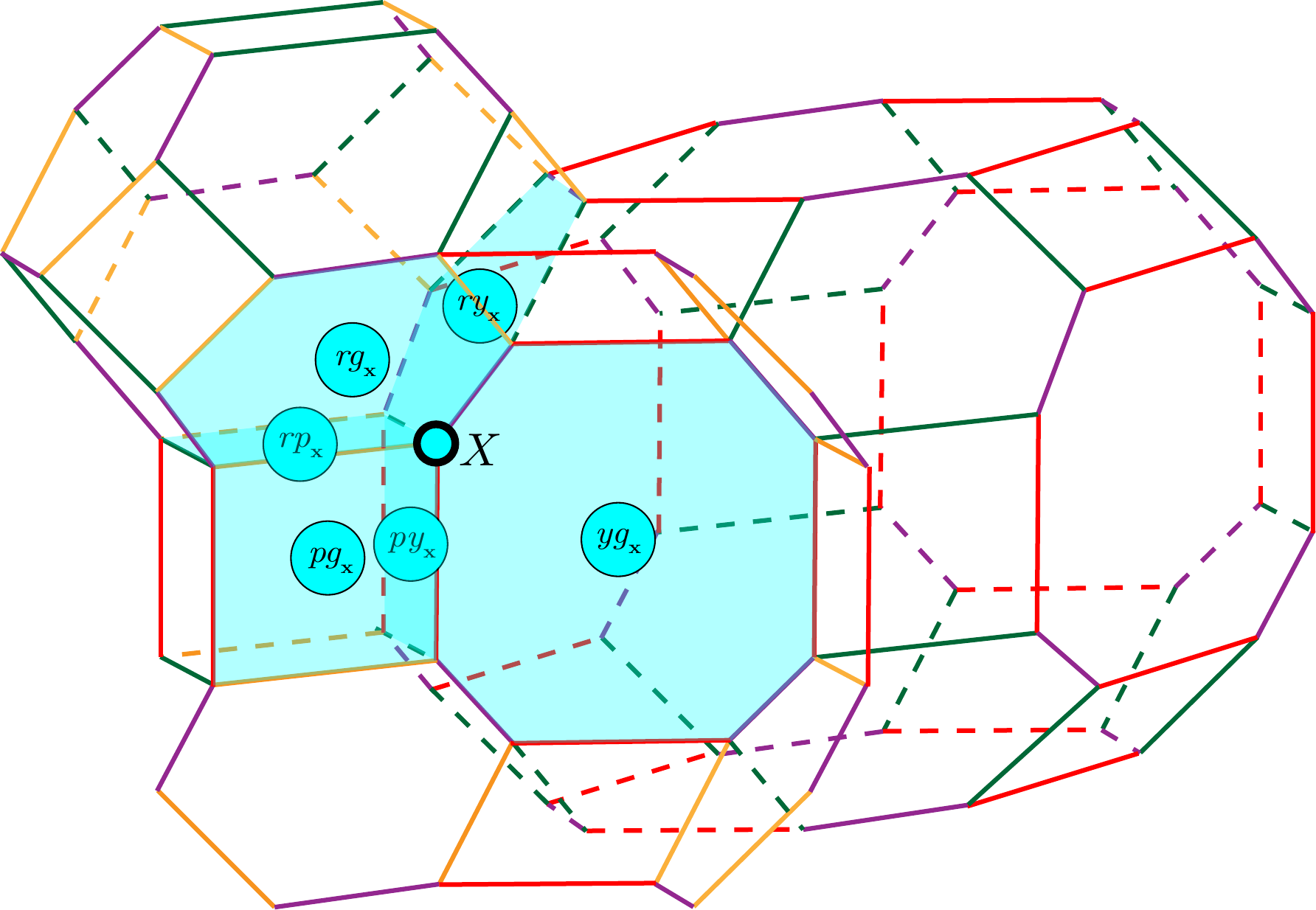}.
\end{align}
To avoid confusion, we use a single color to label these excitations. To create a single type of flux loop, one can apply $X$-operators on the corresponding membrane and the flux loop is located on the boundary of that membrane. This membrane is formed by plaquettes having two particular colors on the edges. For example, the $yg_{\mathbf{x}}$ flux loop is created by the following membrane operator, which is supported on plaquettes bounded by yellow and green edges.
\begin{align}        \adjincludegraphics[width=6.5cm,valign=c]{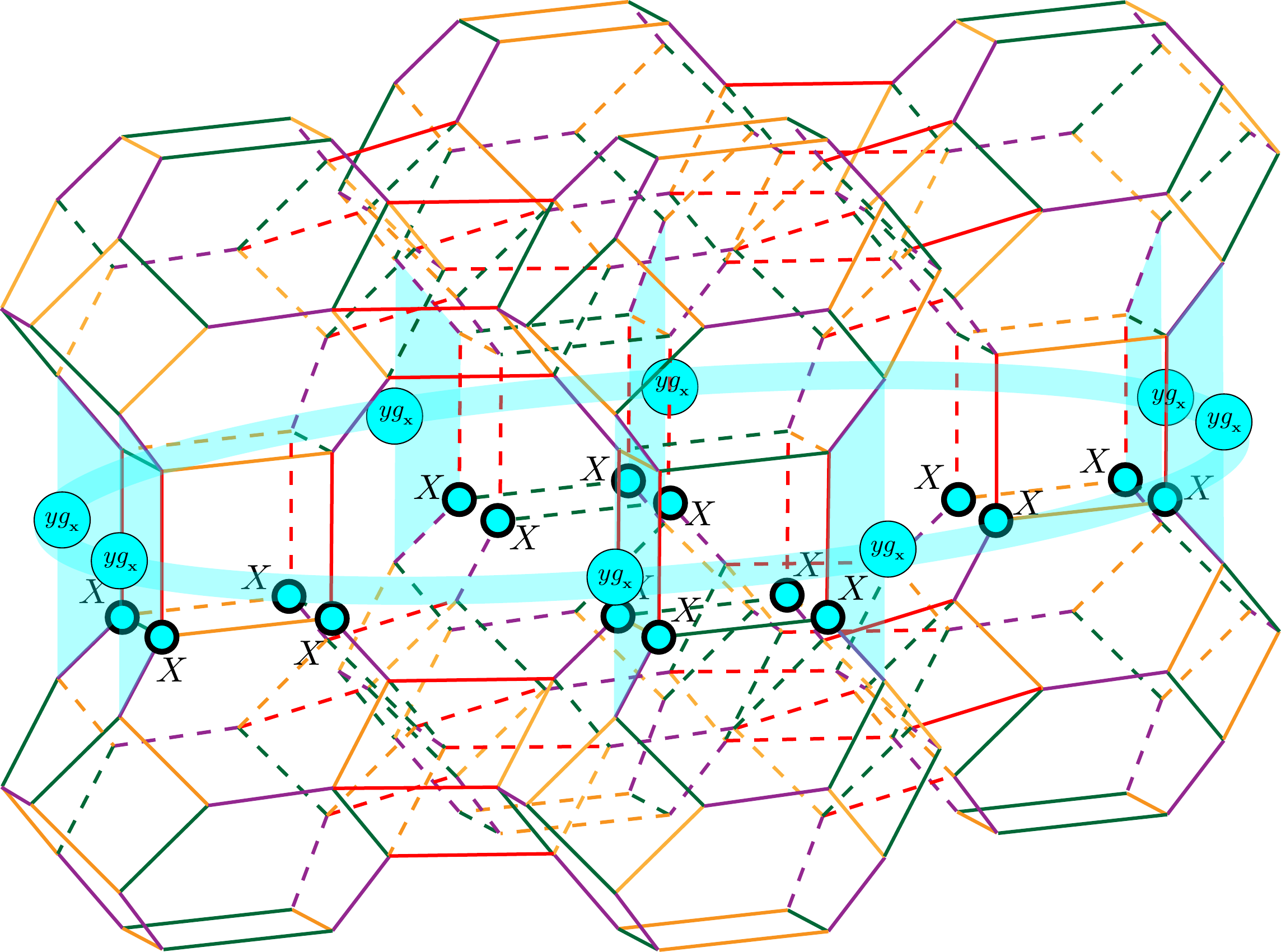}.
\end{align}
A more detailed discussion of the $X$-membrane operators and the flux loops can be found in Section~\ref{sec:paulixboundary}.

There are also excitations created by $Y$-operators, which are the composites of electric charges and magnetic flux loops. 

In the reminder of this subsection, we provide a brief review of the fusion rules of the 3D color code excitations. These excitations are abelian since when two excitations with the same color label and the same Pauli label meet, they annihilate themselves. We have
\begin{align}
    (\mathbf{c}_j \mathbf{c}_k)_{\mathbf{x}} \times (\mathbf{c}_j \mathbf{c}_k)_{\mathbf{x}} = 1, \quad (\mathbf{c}_i)_{\mathbf{z}} \times (\mathbf{c}_i)_{\mathbf{z}} = 1,
\end{align}
in which $\mathbf{c}_i,\mathbf{c}_j,\mathbf{c}_k \in \mathbf{C}$, where $\mathbf{c}_i, \mathbf{c}_j, \mathbf{c}_k$ denote color labels ranging over the set $\mathbf{C}$.

Then let us consider the excitations with the same Pauli label but with different color labels. The fusion rules can be written as,
\begin{align}
    \prod_{j, k} (\mathbf{c}_j  \mathbf{c}_k)_{\mathbf{x}} = \prod_{i} (\mathbf{c}_i)_{\mathbf{z}} =1 \label{eq:color_permutation}
\end{align}
in which we require $j \neq k$ and $j < k$. Eq.~\eqref{eq:color_permutation} indicates that charges with different color labels are not entirely independent; only three among them are independent.

\subsection{Gapped boundaries of topological phases} \label{sec:lagrangian_subgroup}

In the previous subsections, we learned that in the presence of boundaries, some excitations from the bulk of topological phases can condense on these boundaries, while others remain gapped. From the perspective of operators, this is equivalent to saying that the boundaries of certain excitation creation operators can terminate on such boundaries, resulting in no excitations being created at the intersections where the membrane meets the boundary. The set of excitations that can condense on the boundary forms a group, referred to as the {\it Lagrangian subgroup} in finite abelian theories, as discussed in~\cite{kapustin2011topological,levin2013protected}.

The Lagrangian subgroup are defined as follows. Consider the set of topological excitations in the bulk as $\mathsf{A}$. The condensation set $\mathsf{M}$ on the boundary satisfies:
\begin{itemize}
    \item $\mathsf{M}$ is a subset of $\mathsf{A}$.
    \item The excitations in $\mathsf{M}$ have trivial mutual and self statistics.
    \item Any excitations that is not in $\mathsf{M}$ has non-trivial mutual statistics with respect to at least one excitation in $\mathsf{M}$.
\end{itemize}
The set $\mathsf{M}$ that satisfies the above conditions are call a {\it Lagrangian subgroup} of $\mathsf{A}$. 

Using the boundaries of the 3D toric code as an example, we identify the topological excitations in the bulk as $\mathbb{I}$ (vacuum), $e$ (electric charge), $m$ (magnetic flux), and $em$ (composite of $e$ and $m$). There are two different choices for gapped boundaries, specifically, $(\mathbb{I}, e)$ and $(\mathbb{I}, m)$. It can be verified that both sets of condensation meet the previously mentioned conditions. Although $\mathbb{I}$ and $em$ also form a group, $em$ cannot be part of the Lagrangian subgroup due to its non-trivial self statistics.

In the abelian cases, the elements in the Lagrangian subgroup are always irreducible representations of the groups (topological excitations). Therefore, we can always use the generators to represent the condensation set. In the cases of $(\mathbb{I}, e)$ and $(\mathbb{I}, m)$ boundaries, we can simplify the notation and denote them as the $e$ and $m$ boundaries. It can be generalized to the cases when we have several copies of 3D toric codes. For example, consider there are two copies of 3D toric code with $e$-boundaries, the Lagrangian subgroup is $(\mathbb{I}, e_1, e_2, e_1 e_2)$. However, by using the generating set, it simplifies to $(e_1, e_2)$. We use the generating set of condensations to label boundaries throughout this paper unless specified. In the reminder of the paper, we refer to the boundaries that can be classified by the Lagrangian subgroup as the {\it elementary boundaries}.

\subsection{Unfolding the 3D Color Code without Boundaries} \label{sec:unfold_without_boundary}

In this subsection, we provide an overview of the unfolding technique introduced in Ref.~\cite{kubica2015unfolding}, demonstrating that a $D$-dimensional color code is equivalent to $D$ copies of toric codes, modulo local unitaries and entangling/disentangling ancilla qubits. Our discussion primarily concentrates on $3$-dimensional models, in alignment with the focus of this paper. It should be noted that these results are applicable in general $D$ dimension.

Let $CC(\mathcal{L})$ be the stabilizer group of the topological color code defined on a $3$-dimensional lattice $\mathcal{L}$, there exists an  unfolding (disentangling) unitary $U$ such that the color code can be transformed to $3$ copies of $3$-dimensional toric codes $TC(\mathcal{L}'_j)$ defined on lattice $\mathcal{L}'_{j}$, up to stabilizer group $\mathcal{S}$ of the ancilla qubits, which is
    \begin{align}
        U \left[ CC(\mathcal{L}) \otimes \mathcal{S} \right] U^{\dagger} = \otimes_{i=1}^{3} TC \left(\mathcal{L}'_{j}\right). \label{eq:equivalence}
    \end{align}
The mapping between lattice $\mathcal{L}$ and $\mathcal{L}'_j$ can be realized by the {\it shrinking} process (lattice deformation), and the unfolding unitary transformation can be realized by applying local unitaries, shrinking the lattice and disentangling qubits. We briefly summarize this process in the following paragraphs.

Consider the lattice $\mathcal{L}$ associated with a $3$-dimensional topological color code, as we defined in Section~\ref{sec:review}. The unfolding process of color $\mathbf{c}_i, \mathbf{c}_j, \mathbf{c}_k \in \mathbf{C}$, $i \neq j \neq k$ on the lattice and the map between stabilizers can be briefly summarized as follows. First, prepare one ancilla qubit on each edge colored $\mathbf{c}_i$, $\mathbf{c}_j$, or $\mathbf{c}_k$, where $\mathbf{c}_i, \mathbf{c}_j, \mathbf{c}_k \in \mathbf{C}$. Second, apply a local unitary operator $U = \bigotimes_c U_c$ on every cell with color $\mathbf{c}_i, \mathbf{c}_j, \mathbf{c}_k \in \mathbf{C}$ such that the original stabilizers in the 3D color code are mapped to three sets of stabilizers, each supported on one of the three corresponding sets of qubits on edges with different colors. Third, shrink all the $\mathbf{c}_i$-, $\mathbf{c}_j$-, and $\mathbf{c}_k$-colored cells sequentially to points. After this step, we obtain three copies of lattices with different colors.  Fourth, disentangle and discard the qubits on vertices. The outcome of the above process yields three copies of 3D toric codes model defined on the $\mathbf{c}_i$-, $\mathbf{c}_j$-, and $\mathbf{c}_k$-colored lattices, respectively. In a broader sense, for a $D$-dimensional color code, selecting and shrinking $D$ distinct colors can lead to $D$ copies of the $D$-dimensional toric codes.

More concretely, consider the 3D color code defined on the lattice illustrated in Fig.~\ref{fig:3DCC}, we briefly summarize the result after the unfolding unitaries when we select to shrink the green, yellow and purple cells.
\begin{itemize}
    \item By shrinking the green (yellow) cells, the lattice transforms into the following configuration.
    \begin{align*}
        \adjincludegraphics[width=6.5cm,valign=c]{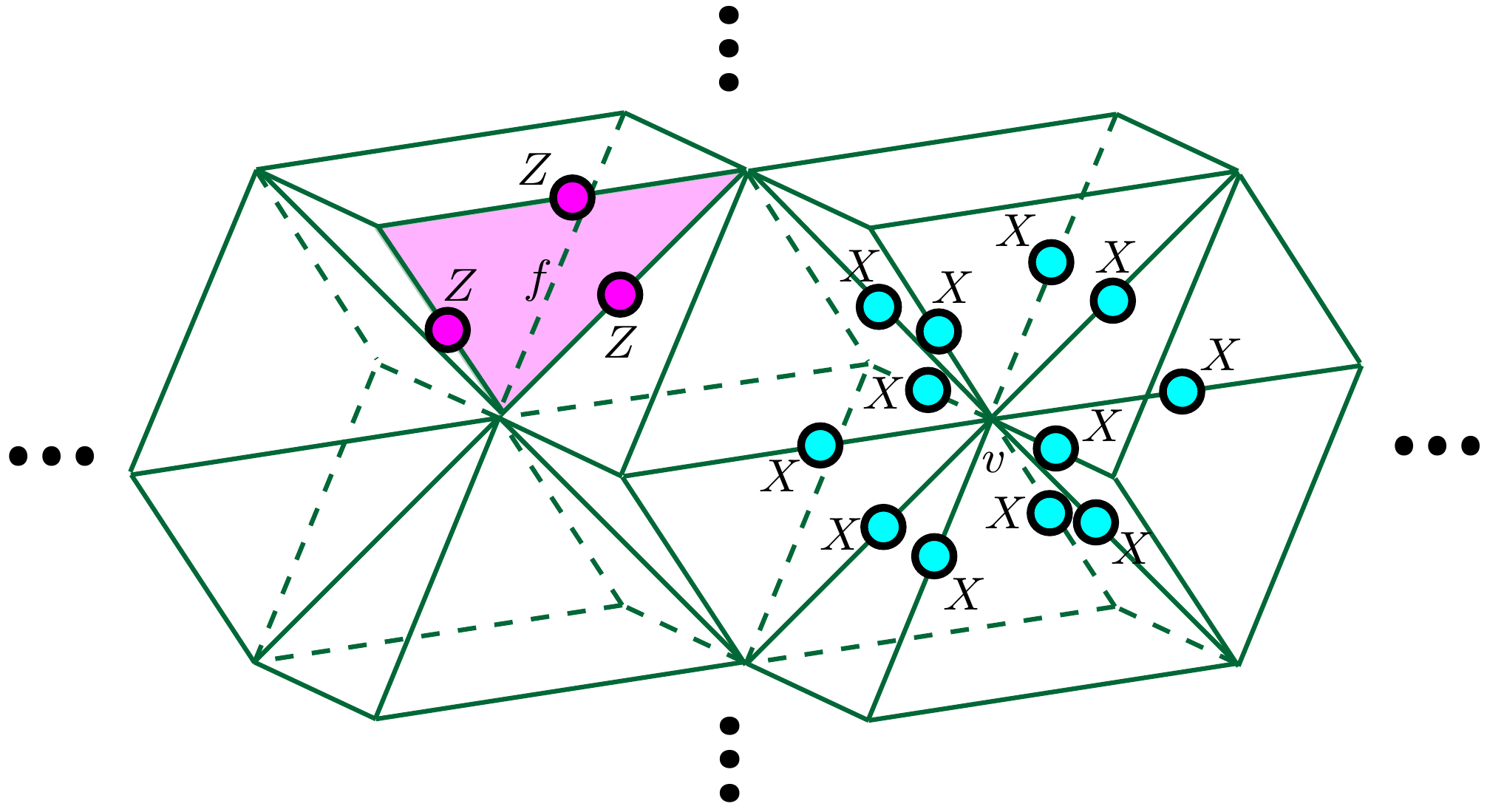}
    \end{align*}
    The ``..." in the figure indicates the lattice is extended along the horizontal and vertical directions. The $X$-type stabilizers are located at vertices, and each $X$-stabilizer consists of the tensor product of $X$-operators on the nearest edges. The $Z$-type stabilizers are given by the tensor product of $Z$-operators along the edges of faces.\footnote{Acute readers may notice that based on the shrinking rule, there are four qubits on each edge of the green (yellow) lattice. However, we can further apply local unitaries to decouple these degrees of freedom and leave only one qubit per edge, which are those of the 3D toric code. Details of this process is discussed in Appendix~\ref{sec:unfolding}}
    \item By shrinking the purple cells, the lattice and stabilizers transform into the following configuration.
    \begin{align*}
        \adjincludegraphics[width=6.5cm,valign=c]{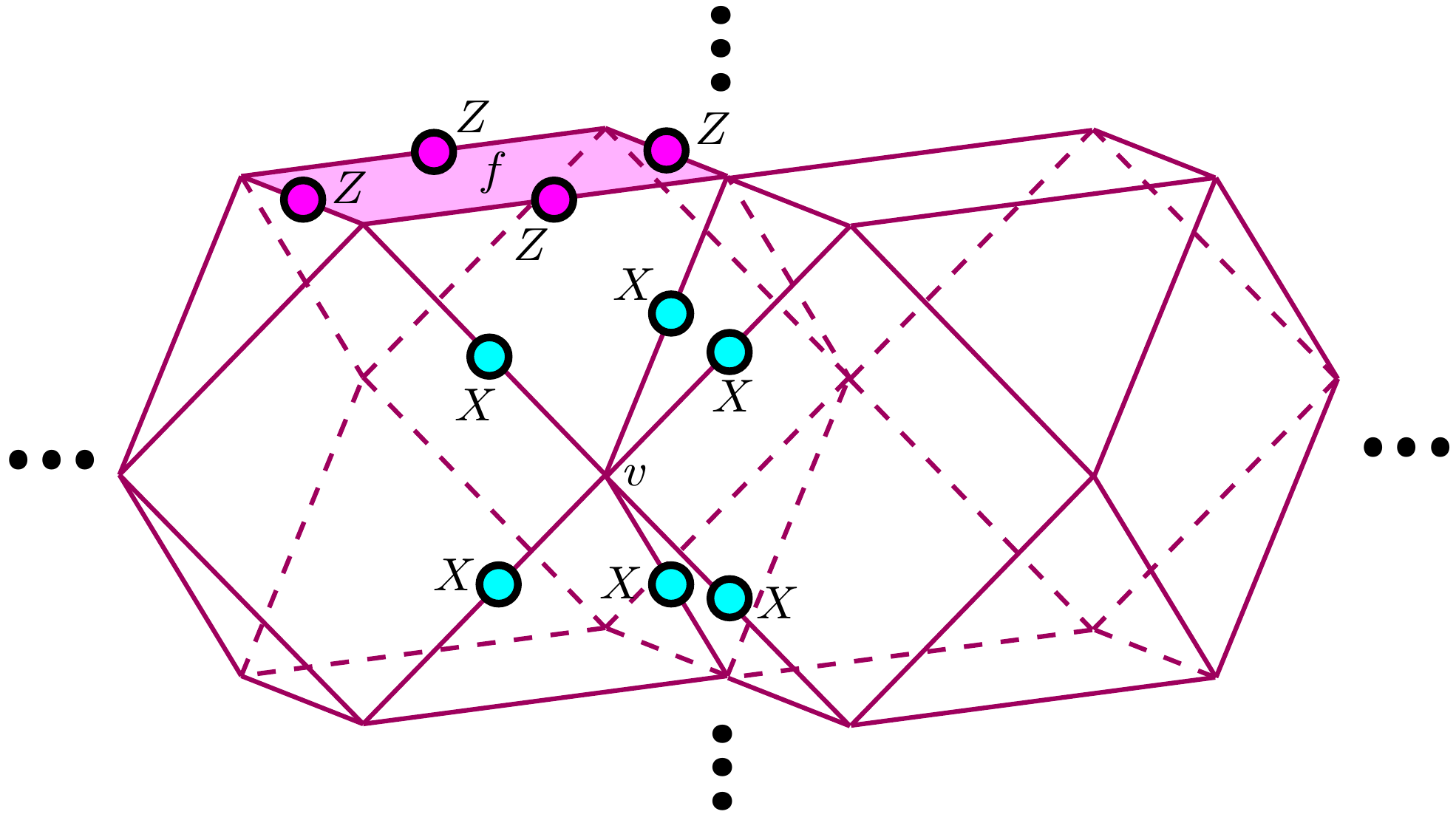}
    \end{align*}
\end{itemize}
An example of the transformations of stabilizers and lattice deformation on the purple lattice is summarized in Fig.~\ref{fig:mapping_1}, and a detailed description of the unfolding unitaries is discussed in Appendix~\ref{sec:unfolding}.

\begin{figure*}
    \centering
    \includegraphics[width = 12cm]{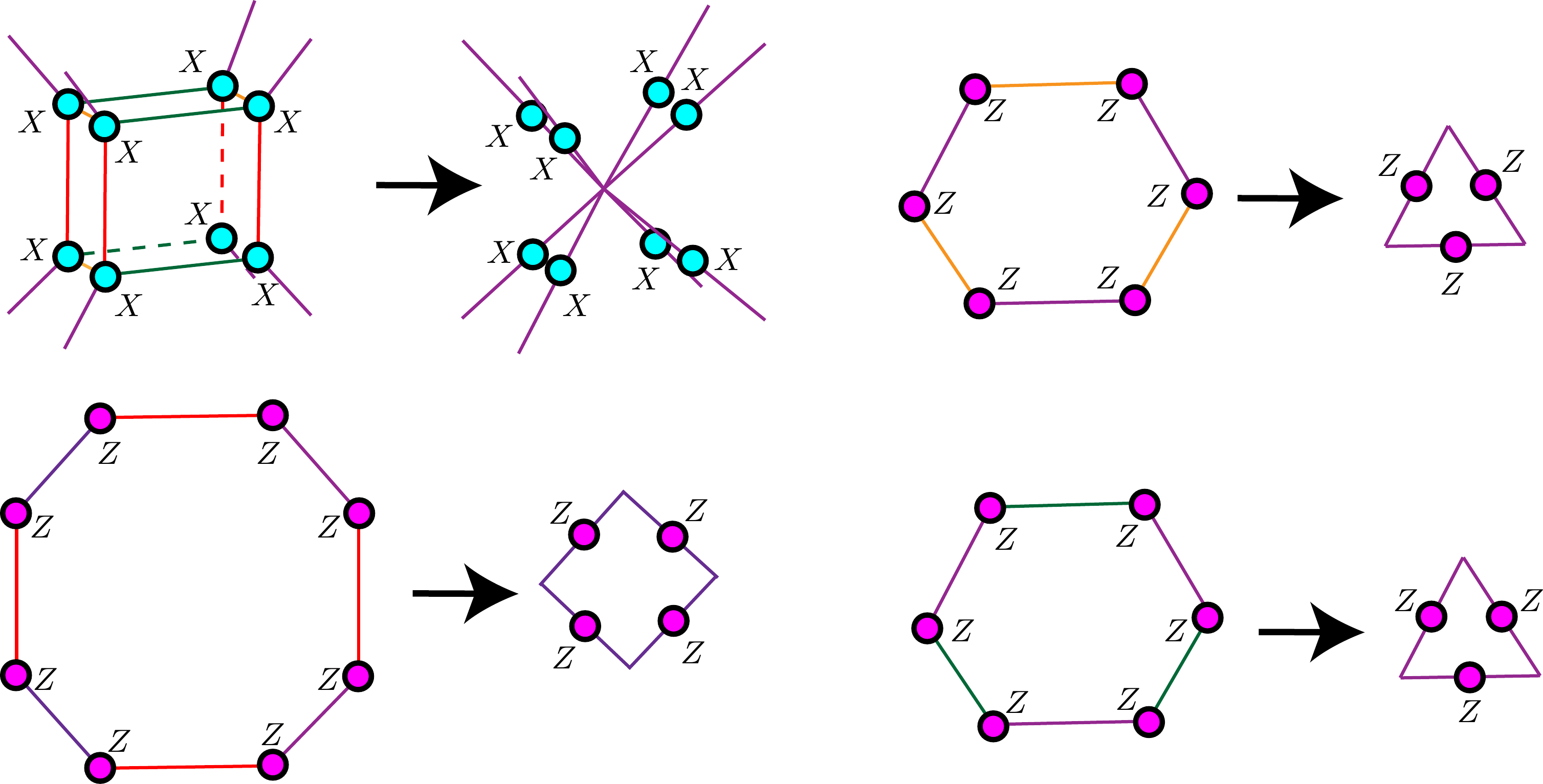}
    \caption{An example of the unfolding unitary transformation. The stabilizers of the 3D color code are on the left-hand side of the arrows, while the stabilizers of the 3D toric code are on the right-hand side. Here we choose the unfolding process for the purple cells as an illustration. However, this map works for every color and every color-code lattice. The map goes as follows: (1) Map every $X$-operator to their nearest purple edges. (2) Map two $Z$-operators at the vertices of a purple edge to a single $Z$-operator on the same edge. (3) Shrink the purple cells to points. The remaining lattice model is a 3D toric code defined on the purple lattice.}
    \label{fig:mapping_1}
\end{figure*}

\section{Unfolding the 3D color code with boundaries} \label{sec:boundaries}

\begin{figure*}
    \centering
    \includegraphics[width = 16cm]{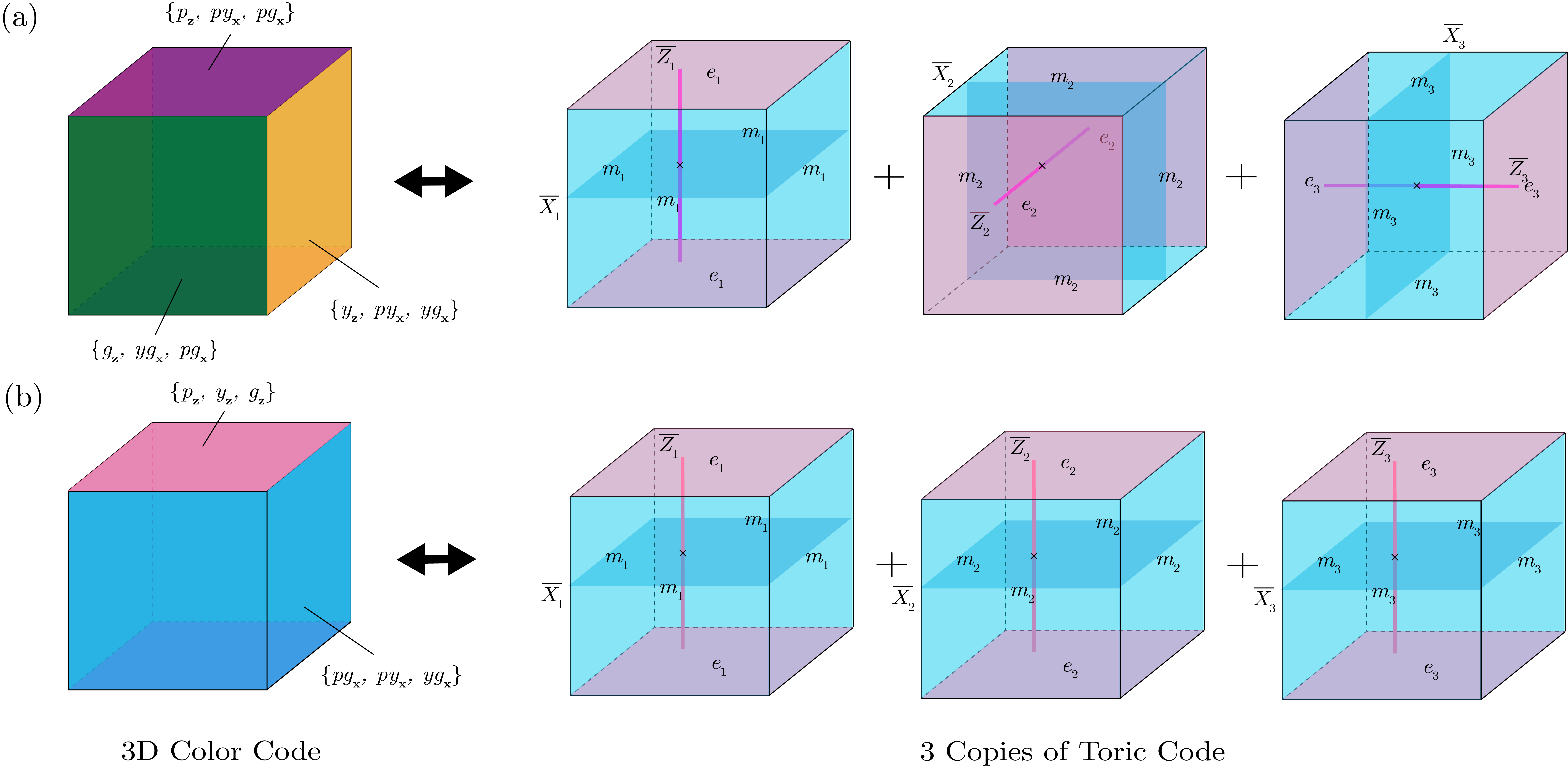}
    \caption{An illustration of the equivalence between the 3D color code with different types of boundaries and three copies of toric codes with different alignments. (a) A 3D color code with color boundaries (each pair of boundaries can condense a distinct type of ($\mathbf{c}_i)_{\mathbf{z}}$-particles.) is equivalent to three copies of toric codes with distinct alignments of boundaries. Therefore, each color boundary corresponds to a $(e_i, m_j, m_k)$-boundary of three copies of 3D toric codes. (b) A 3D color code with two $Z$-boundaries and four $X$-boundaries is equivalent to three copies of toric codes with the same alignment of boundaries. Therefore, the $Z$-boundaries correspond to the $(e_1, e_2, e_3)$-boundaries and the $X$-boundaries correspond to the $(m_1, m_2, m_3)$-boundaries. A miminal model that can realize this boundary condition is explicitly constructed in Appendix~\ref{sec:minimal}}
    \label{fig:magic_boundary_2}
\end{figure*}

In Ref.~\cite{kubica2015unfolding}, the authors show a correspondence between the excitations of two copies of the 2D toric codes and those of the 2D color code. Analogously, we establish a mapping between the excitations of three copies of the toric codes and the 3D color code.\footnote{The map is equivalent under permutations of colors.}

\begin{align}
    y_{\mathbf{z}} \to e_1,\quad pg_{\mathbf{x}} \to m_1, \notag \\
    g_{\mathbf{z}} \to e_2,\quad py_{\mathbf{x}} \to m_2, \label{eq:condensations}\\ 
    p_{\mathbf{z}} \to e_3,\quad yg_{\mathbf{x}} \to m_3. \notag
\end{align}

For the convenience of our discussion, we define $\mathbf{c}_1 = y$, $\mathbf{c}_2 = g$ and $\mathbf{c}_3 = p$. In the remainder of this section, we will demonstrate, following the original proof in Ref.~\cite{kubica2015unfolding}, that the $(e_1, e_2, e_3)$ (all-rough) boundary of three copies of toric codes is equivalent to the $\{y_{\mathbf{z}}, g_{\mathbf{z}}, p_{\mathbf{z}}\}$-boundary ($Z$-boundary) of the 3D color code, and that the $(m_1, m_2, m_3)$ (all-smooth) boundary is equivalent to the $\{pg_{\mathbf{x}}, py_{\mathbf{x}}, yg_{\mathbf{x}}\}$-boundary ($X$-boundary), up to local unitaries and entangling/disentangling ancilla qubits. We illustrate the equivalence between the 3D color code with two $Z$-boundaries and four $X$-boundaries, and three copies of the 3D toric codes with the same boundary alignment in Fig.~\ref{fig:magic_boundary_2}(b). A minimal model capable of realizing this boundary condition is explicitly constructed in Appendix~\ref{sec:minimal}. Moreover, we explicitly construct the boundary Hamiltonians for the other boundaries that can be classified by the Lagrangian subgroup.

\subsection{The $Z$-boundary} \label{sec:paulizboundary}

\begin{definition}
    Consider a 3D color code with colors $\mathbf{C} = \{y, g, p, r\}$. The color of the edges, denoted by $\mathbf{c}_i \in \mathbf{C}$, is defined as the color of the cells they connect. Let us consider a surface that intersects only with edges of a specific color. On one side of this surface, all the degrees of freedom are removed. On the other side, the $X$-stabilizers and  $Z$-stabilizers that intersect with the surface are removed and truncated respectively, while all the other stabilizers remain the same.
    This boundary is referred to as the $Z$-boundary. \label{def:Paulizboundary}
\end{definition}
\begin{theorem}
Consider the surface in Definition~\ref{def:Paulizboundary} intersects with color $\mathbf{c}_4 \in \mathbf{C}$. The condensations on the corresponding $Z$-boundary are given by $\{(\mathbf{c}_1)_{\mathbf{z}}, (\mathbf{c}_2)_{\mathbf{z}}, (\mathbf{c}_3)_{\mathbf{z}}\}$, where $\mathbf{c}_1, \mathbf{c}_2, \mathbf{c}_3, \mathbf{c}_4 \in \mathbf{C}$ and $\mathbf{c}_1 \neq \mathbf{c}_2 \neq \mathbf{c}_3 \neq \mathbf{c}_4$. This boundary is equivalent to the $(e_1, e_2, e_3)$ boundary of three copies of 3D toric codes up to local unitaries and entangling/disentangling ancilla qubits.
\end{theorem}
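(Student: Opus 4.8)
The plan is to split the statement into two independent assertions and prove each in turn: first, working entirely within the color-code picture, that the condensation set on the $Z$-boundary is precisely the group generated by $\{(\mathbf{c}_1)_{\mathbf{z}}, (\mathbf{c}_2)_{\mathbf{z}}, (\mathbf{c}_3)_{\mathbf{z}}\}$; and second, transporting this boundary through the unfolding map of Section~\ref{sec:unfold_without_boundary}, suitably adapted near the surface, to identify it with the all-rough $(e_1, e_2, e_3)$ boundary of three copies of the 3D toric code.

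For the first assertion I would argue at the level of operators. By Definition~\ref{def:Paulizboundary} the surface cuts only $\mathbf{c}_4$-colored edges, the $X$-stabilizers meeting it are deleted, and the $Z$-stabilizers meeting it are truncated. Since an electric charge $(\mathbf{c}_i)_{\mathbf{z}}$ sits at the endpoint of a $Z$-string and is detected only through anticommutation with $X$-stabilizers, I would show that for each $i \neq 4$ one can route a $Z$-string so that its terminating endpoint lands in the region where the relevant $X$-stabilizers have been removed; the charge then condenses leaving no residual syndrome. Using the fusion relation $\prod_i (\mathbf{c}_i)_{\mathbf{z}} = 1$ of Eq.~\eqref{eq:color_permutation}, the charge $(\mathbf{c}_4)_{\mathbf{z}}$ equals the product of the other three and hence already lies in the condensed group, so the three listed generators generate all condensed electric charges. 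I would then verify that no nontrivial magnetic flux condenses—an $X$-membrane cannot terminate on this boundary without leaving a flux loop, because the truncated $Z$-stabilizers still detect it—and confirm the three Lagrangian-subgroup axioms of Section~\ref{sec:lagrangian_subgroup}: the electric charges are mutually and self-bosonic, and every uncondensed excitation (any flux) braids nontrivially with at least one of them.

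For the second assertion I would apply the unfolding unitary $U$ to the bounded lattice and track how the truncated boundary stabilizers transform. The crucial point is that the local-unitary, shrinking, and disentangling steps are local, so away from the surface they reproduce the three bulk toric codes exactly as in Eq.~\eqref{eq:equivalence}; near the surface I must show that the truncated color-code $Z$-stabilizers map onto truncated toric-code plaquette terms—the defining stabilizers of a rough boundary in each copy—while the deleted $X$-stabilizers correspond to deleted vertex terms. Combined with the excitation dictionary of Eq.~\eqref{eq:condensations}, under which $\{(\mathbf{c}_1)_{\mathbf{z}}, (\mathbf{c}_2)_{\mathbf{z}}, (\mathbf{c}_3)_{\mathbf{z}}\} \to \{e_1, e_2, e_3\}$, this identifies the $Z$-boundary with the all-rough $(e_1, e_2, e_3)$ boundary up to exactly the local unitaries and ancilla entangling/disentangling already permitted in Eq.~\eqref{eq:equivalence}.

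The main obstacle I anticipate is the boundary behavior of the unfolding map. The construction of Ref.~\cite{kubica2015unfolding} is engineered for a closed lattice, and its shrinking step collapses entire cells to points; near the surface the cells are cut, so I would need to check that the local unitaries $U_c$ are still well defined on the truncated cells and that entangling or disentangling the ancillae does not generate excitations straddling the boundary. Establishing this compatibility—so that truncation commutes with unfolding up to local equivalence—is where the real effort lies, whereas the condensation bookkeeping of the first assertion is comparatively routine.
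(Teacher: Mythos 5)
Your proposal follows essentially the same two-step route as the paper's proof: first identifying the condensation set $\{(\mathbf{c}_1)_{\mathbf{z}}, (\mathbf{c}_2)_{\mathbf{z}}, (\mathbf{c}_3)_{\mathbf{z}}\}$ via $Z$-string operators terminating on the truncated boundary, then applying the unfolding unitaries to the bounded lattice and checking that each of the three resulting toric-code sublattices acquires a rough boundary, which is exactly how the paper proceeds (it exhibits the three logical strings explicitly, then displays the unfolded boundary stabilizers and excitations on the green, yellow, and purple sublattices). Your added checks (the fusion rule handling $(\mathbf{c}_4)_{\mathbf{z}}$, flux non-condensation, and the Lagrangian-subgroup axioms) and your flagged obstacle about truncation commuting with unfolding are sound refinements of the same argument rather than a different approach.
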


\begin{proof}
Without loss of generality, we choose $\mathbf{c}_4$ to be the red color, and we use the lattice of 3D color code depicted in Fig.~\ref{fig:3DCC}. To prove the theorem, we first show that the condensations on this boundary are $\{y_{\mathbf{z}}, g_{\mathbf{z}}, p_{\mathbf{z}}\}$. Then we illustrate that, through the application of local unitaries, this boundary is equivalent to the $(e_1, e_2, e_3)$-boundary in three copies of 3D toric codes. This proof is applicable in different colorizations and cell structures.

The lattice of the $Z$-boundary is shown below. Qubits are assigned on the vertices of the lattice.
\begin{align} \adjincludegraphics[width=6cm,valign=c]{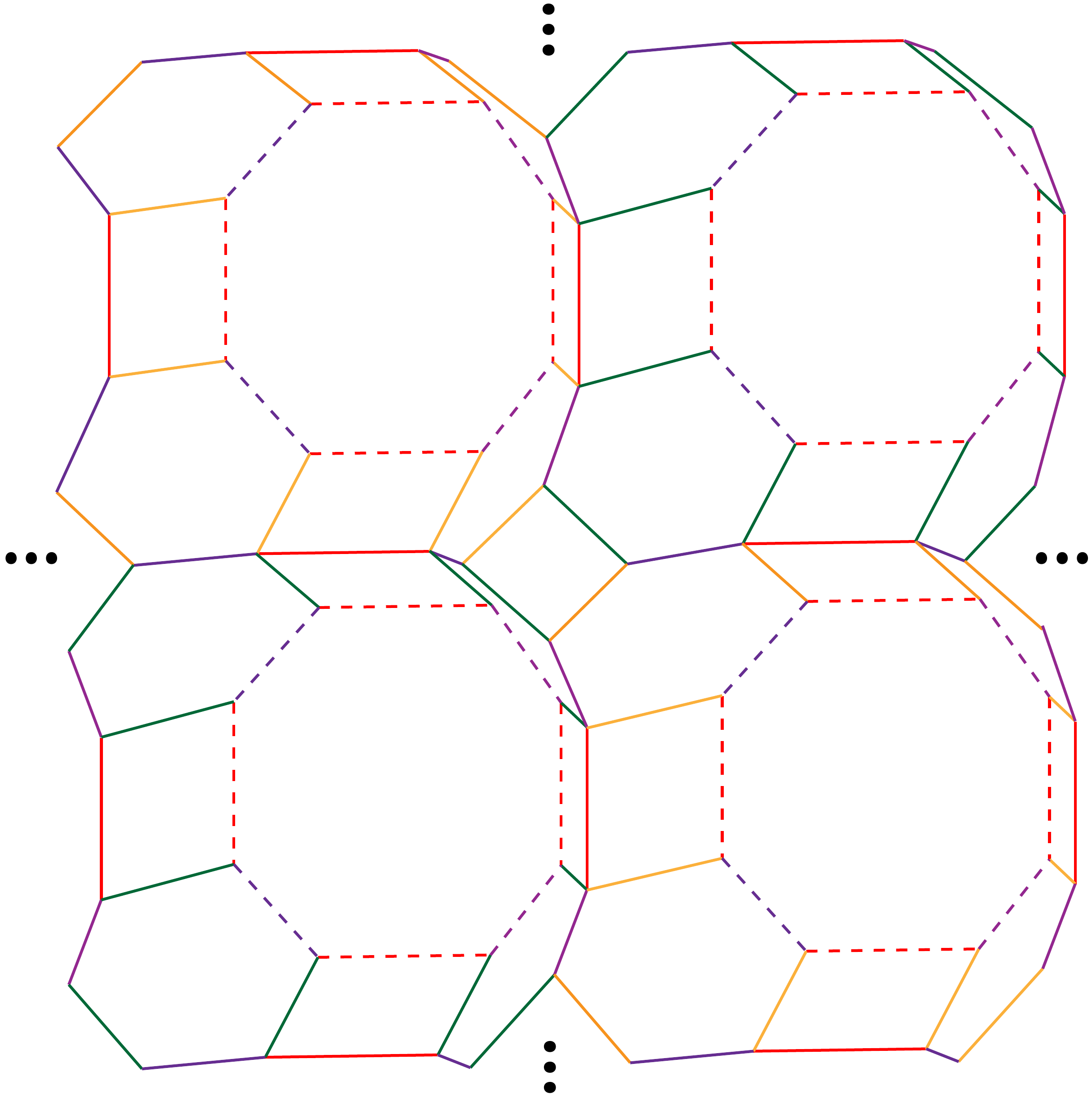} \label{eq:bdry_12}
\end{align}

In addition to the $X$-stabilizers on the cells and $Z$-stabilizers on the plaquettes, there are also truncated $Z$-stabilizers. We define these three types of stabilizers to be $\mathcal{B}^{boundary}_{yg}(Z)$, $\mathcal{B}^{boundary}_{pg}(Z)$, and $\mathcal{B}^{boundary}_{py}(Z)$, respectively and display them below. 
\begin{align*}
    \adjincludegraphics[width=7.5cm,valign=c]{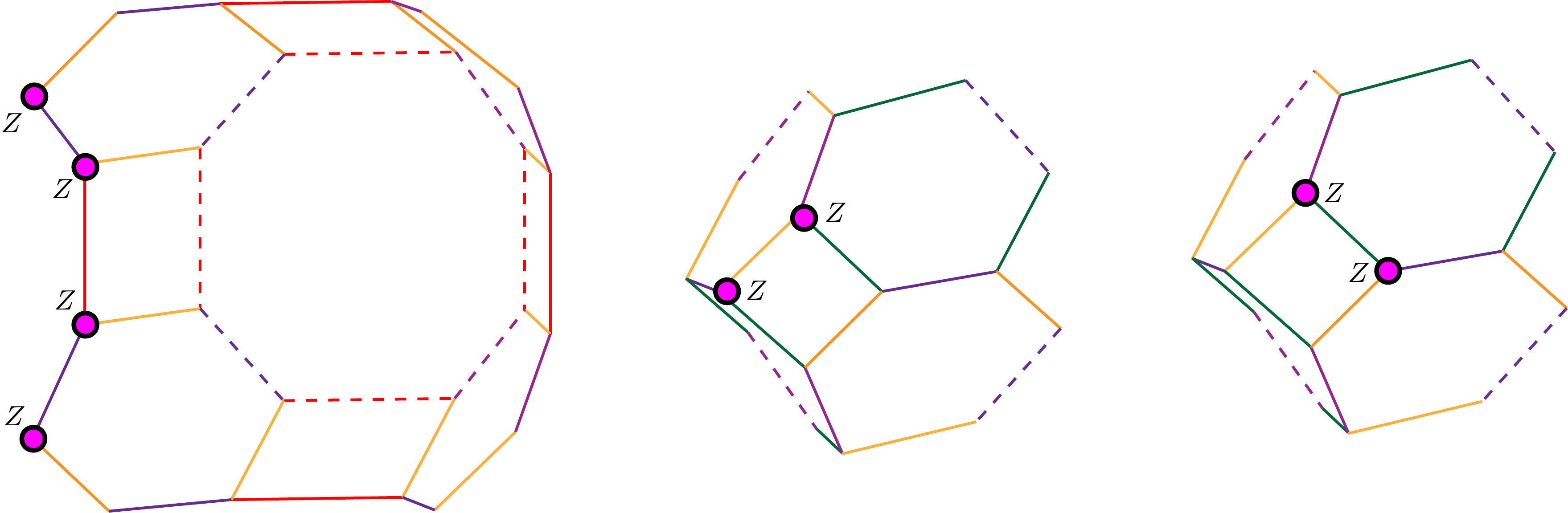}.
\end{align*}
There are no $X$-stabilizers on the truncated cells.

When a pair of $Z$-operators on the same edge with color $\mathbf{c}_i \in \mathbf{C}$ is applied in the bulk, a pair of $(\mathbf{c}_i)_{\mathbf{z}}$ excitations on the connected $\mathbf{c}_i$-colored cells are created. This is due to the fact that a pair of $Z$-operators applying on a $\mathbf{c}_i$-colored edge commutes with every stabilizer term except for the two $X$-stabilizers on the connected $\mathbf{c}_i$-colored cells. 

However, in a 3D color code with two $Z$-boundaries, there exist logical strings of $Z$-operators that terminate on both boundaries and commute with all the stabilizers. Unlike the 3D toric code, the 3D color code has three inequivalent types of logical string operators between two $Z$-boundaries. In the case of a 3D color code with $Z$-boundaries at the top and the bottom, as illustrated in subsequent figures, the string operator for $y_{\mathbf{z}}$ excitations terminating on a $Z$-boundary is given by
\begin{align*}
    \adjincludegraphics[width=5cm,valign=c]{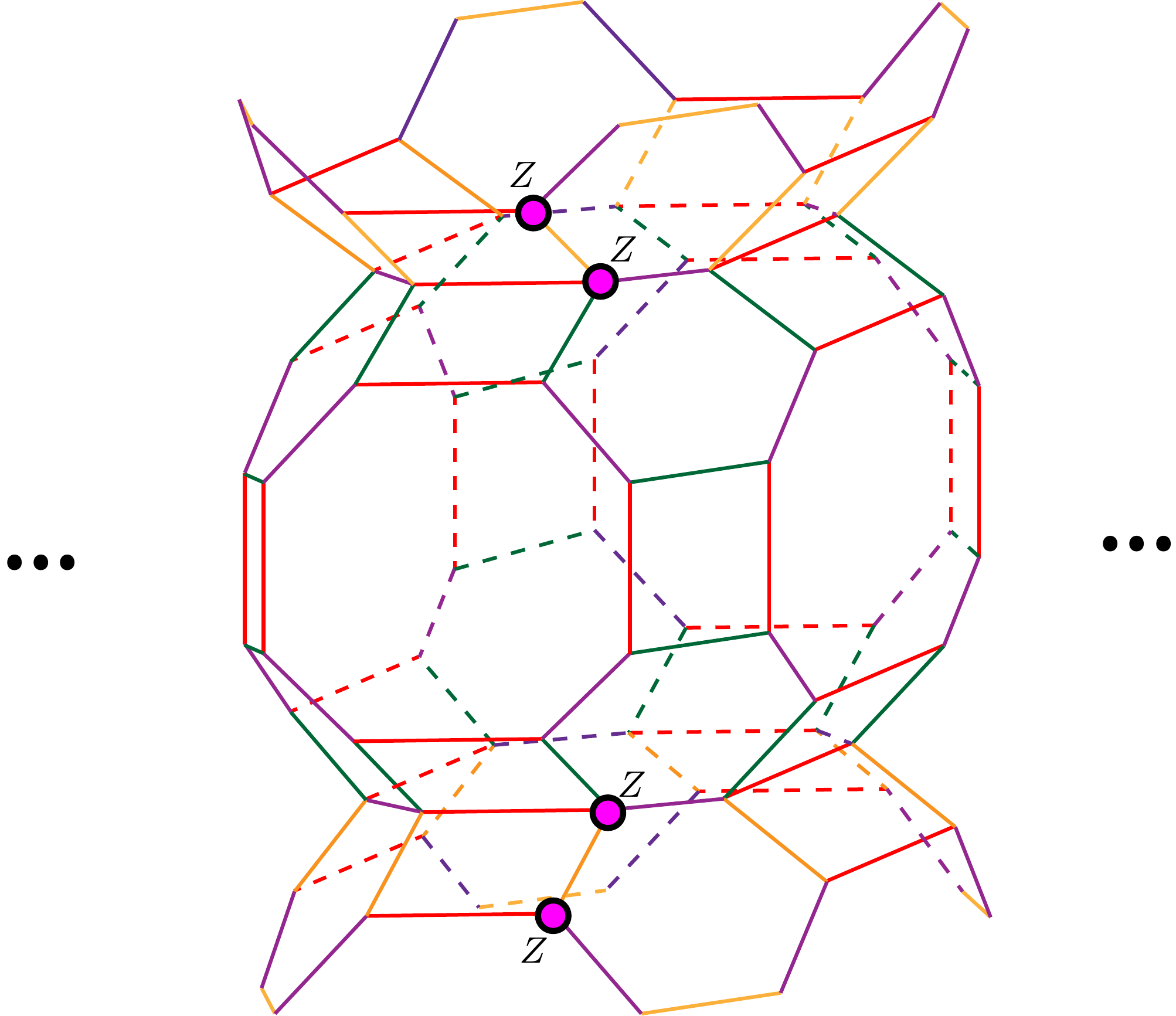}.
\end{align*}
The string operator of $g_{\mathbf{z}}$ excitations that terminates on the $Z$-boundary is given by
\begin{align*}
    \adjincludegraphics[width=5cm,valign=c]{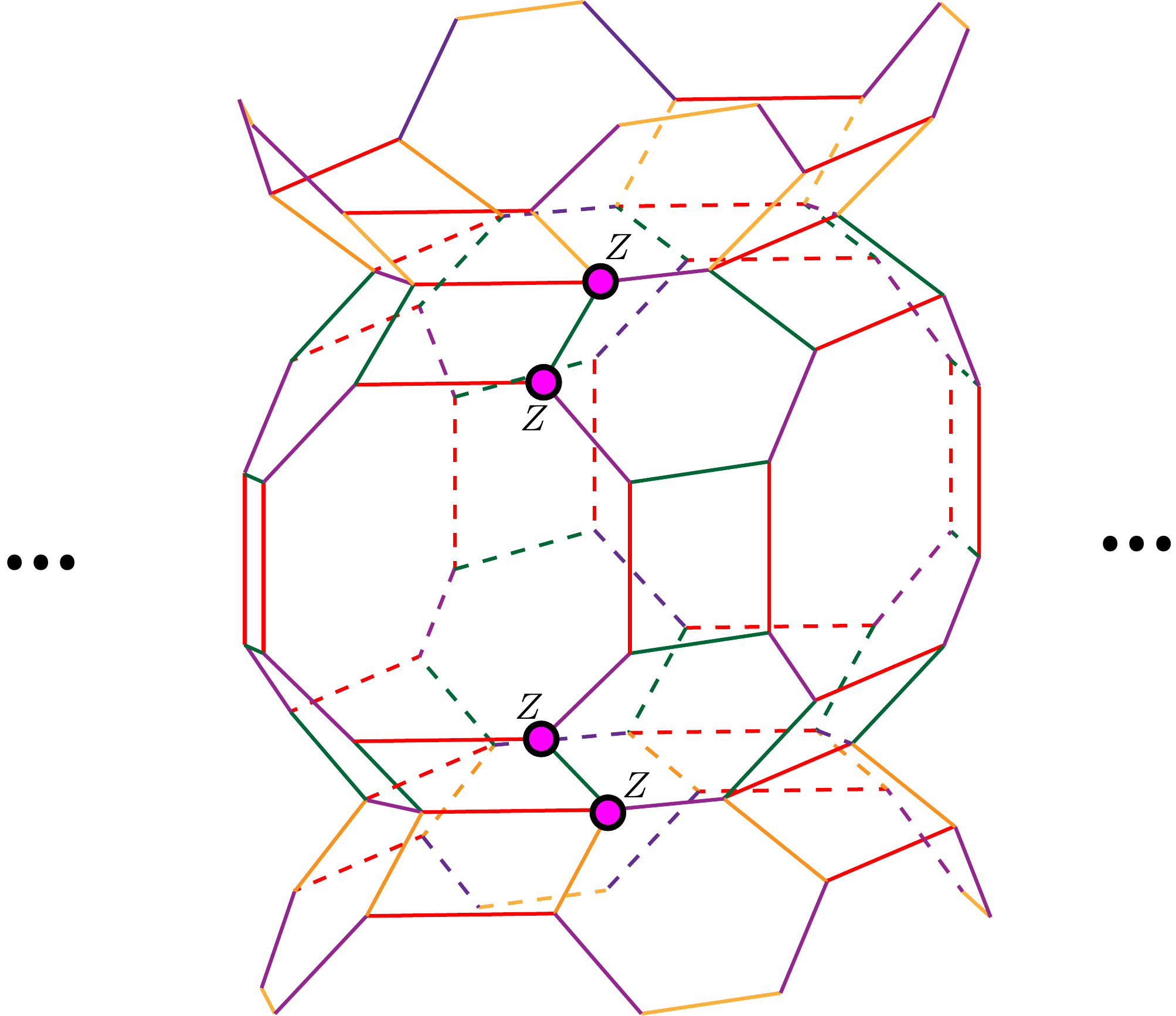}
\end{align*}
The string operator of $p_{\mathbf{z}}$ excitations that terminates on the $Z$-boundary is given by
\begin{align*}
    \adjincludegraphics[width=5cm,valign=c]{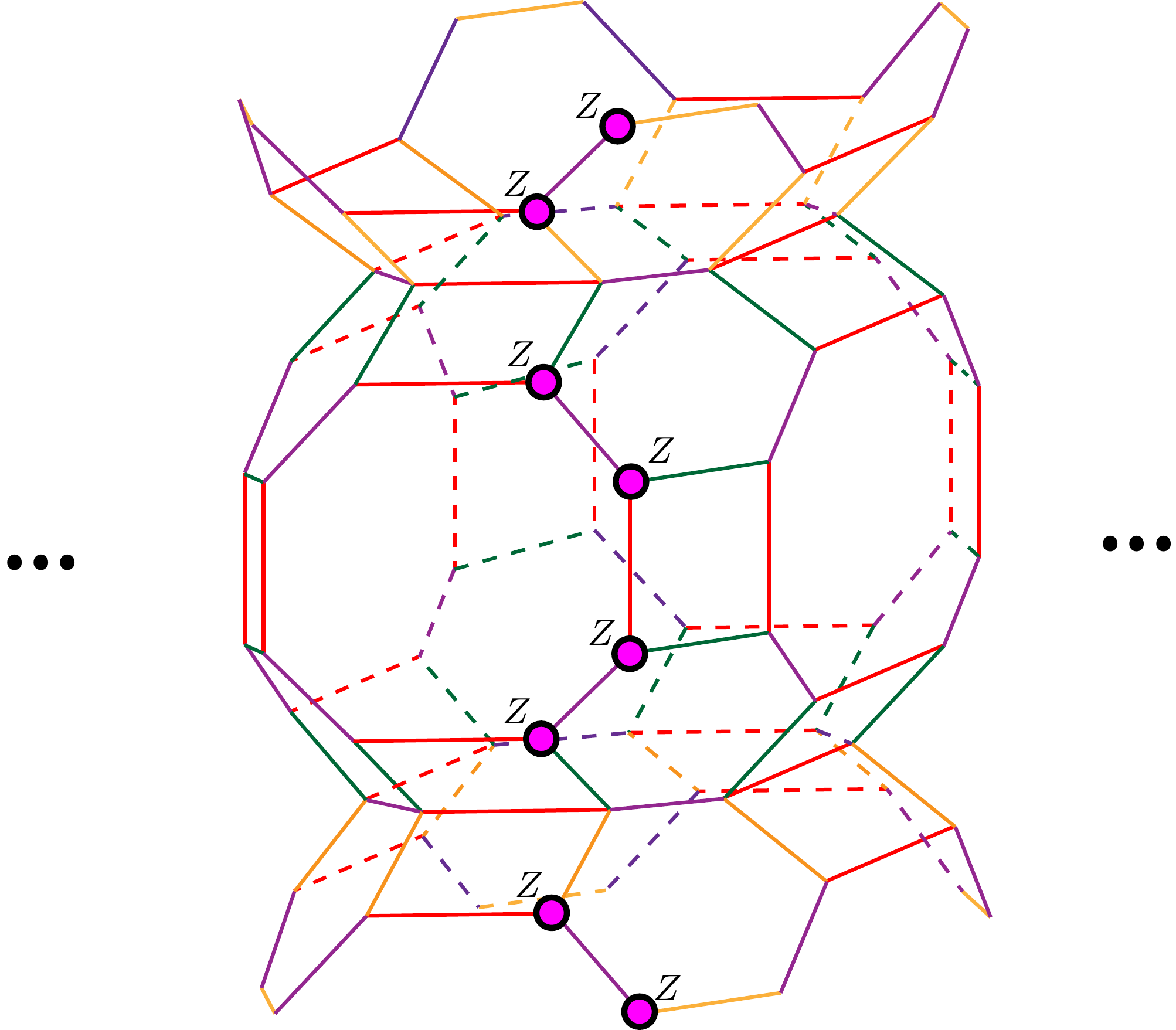}
\end{align*}

It can be verified that these three logical string operators commute with all stabilizers and cannot be decomposed into a product of local stabilizers. Therefore, the condensations on the $Z$-boundary can be written as $\{y_{\mathbf{z}}, g_{\mathbf{z}}, p_{\mathbf{z}}\}$.

Our next step is to show that this boundary, up to local unitary transformations and entangling/disentangling ancilla qubits, is equivalent to the $(e_1, e_2, e_3)$-boundary of three copies of 3D toric codes. We apply the unfolding unitaries introduced in Section~\ref{sec:unfold_without_boundary}. To simplify the presentation of the boundary, we have opted for a different alignments from the one shown for the string operators. In this case, we consider the front and rear sides of the green (and purple) lattice to be the $Z$-boundaries, as opposed to the top and bottom sides as previously shown. The green sublattice is given by
\begin{align*}
    \adjincludegraphics[width=4cm,valign=c]{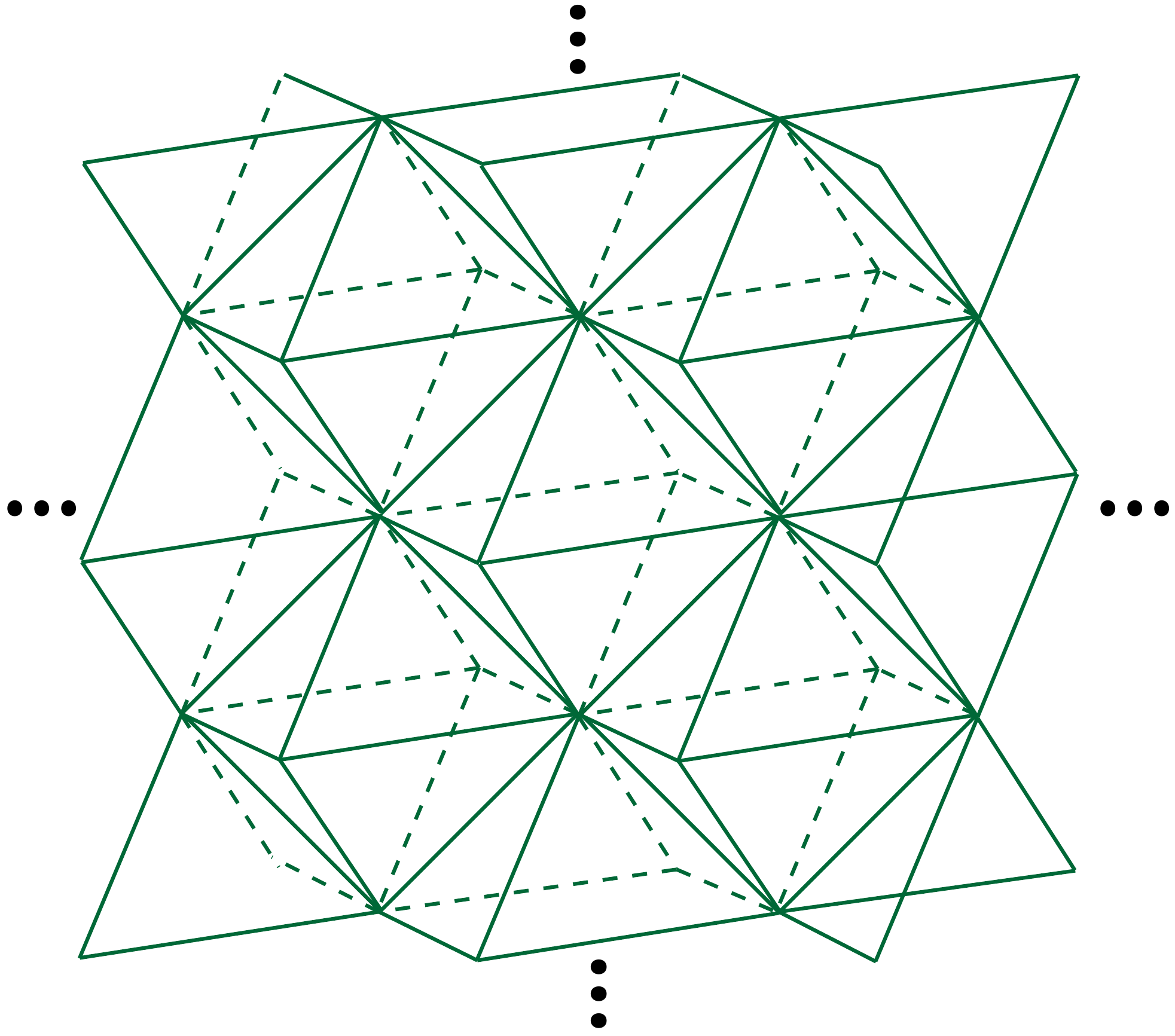}.
\end{align*}
Here, the qubits are located on the edges of the lattice. The stabilizers on this boundary are as follows.
\begin{align*}
    \adjincludegraphics[width=6cm,valign=c]{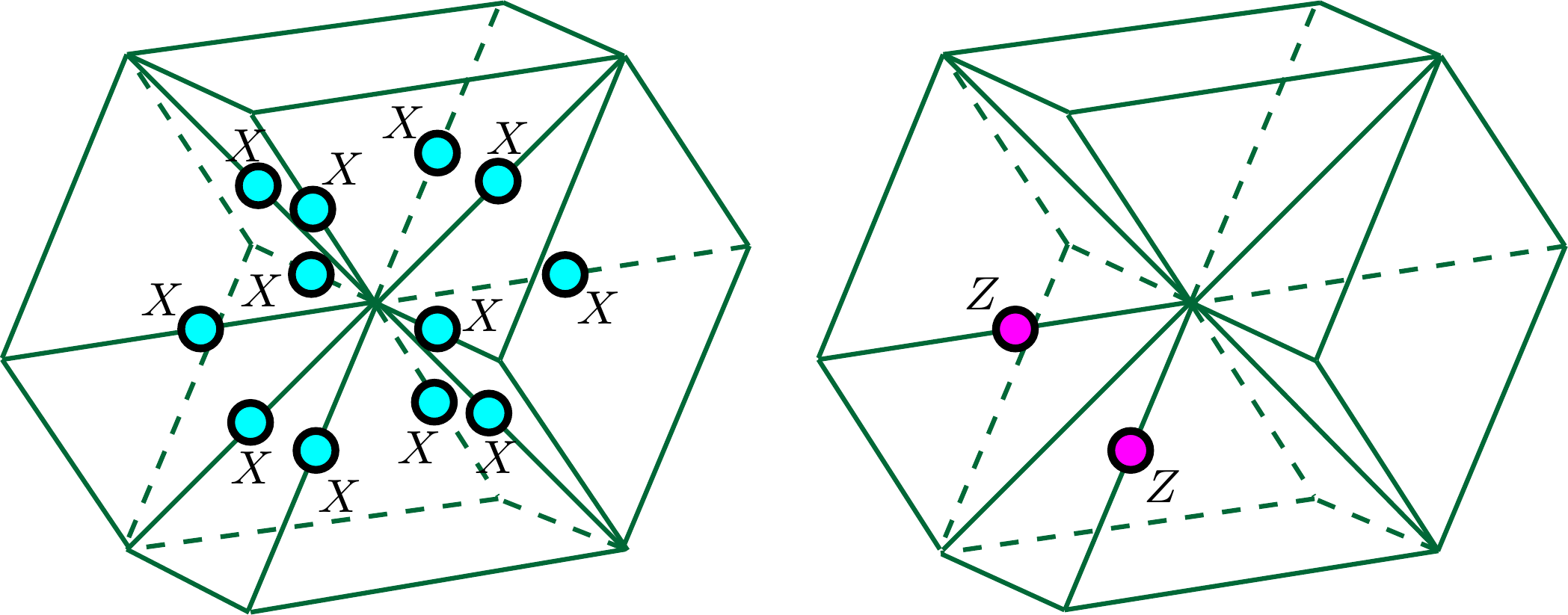}.
\end{align*}
The excitations on this boundary are given by
\begin{align*}
    \adjincludegraphics[width=6cm,valign=c]{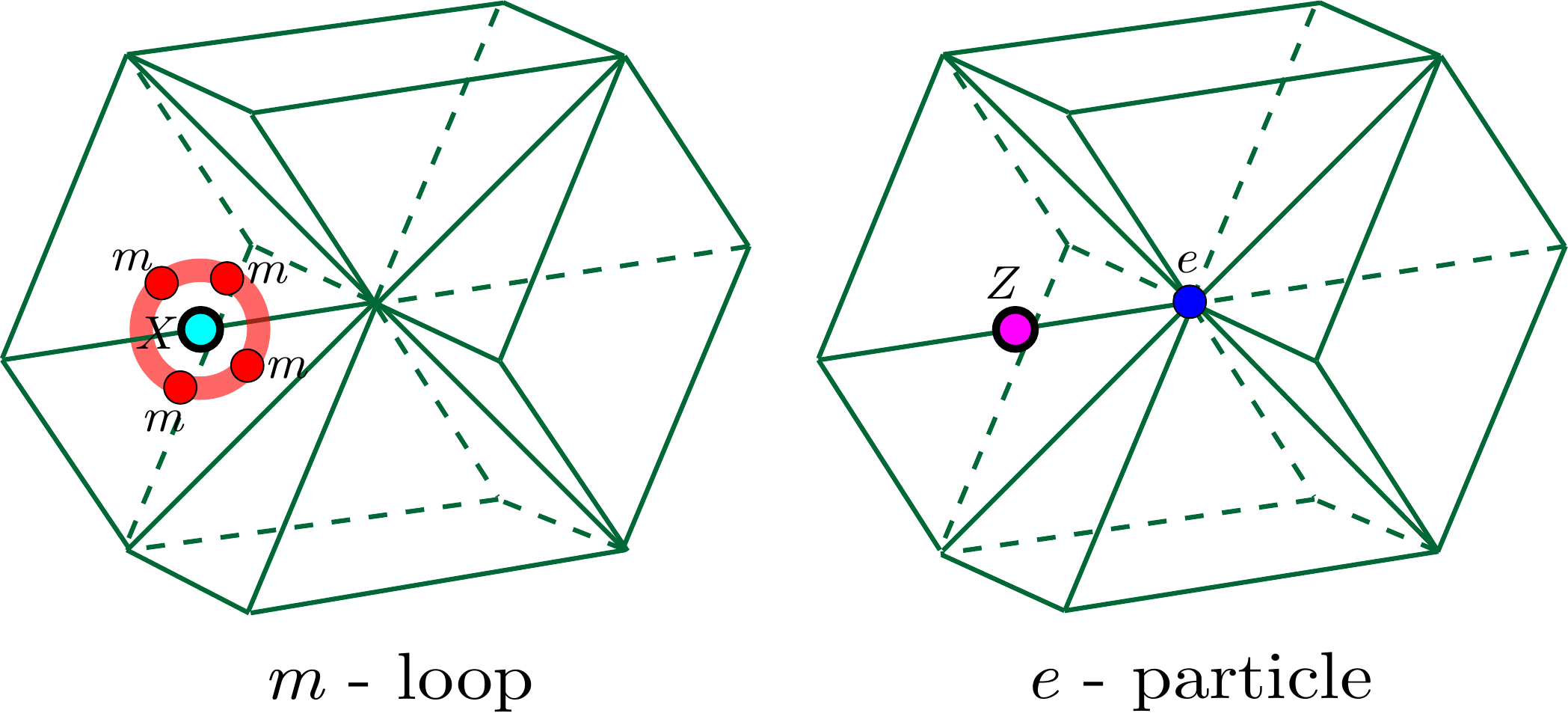}.
\end{align*}
We see that magnetic loop excitations can be created on the boundary, while electric charges can condense on this boundary. Therefore, this boundary is a rough ($e$)-boundary for the green lattice, and similar arguments can be applied to the yellow lattice due to the symmetry of the lattice.

Similarly, we apply the unfolding unitaries to the purple sublattice. The $X$-stabilizers are intact. The $Z$-stabilizers on this boundary are as follows.
\begin{align*}
    \adjincludegraphics[width=4cm,valign=c]{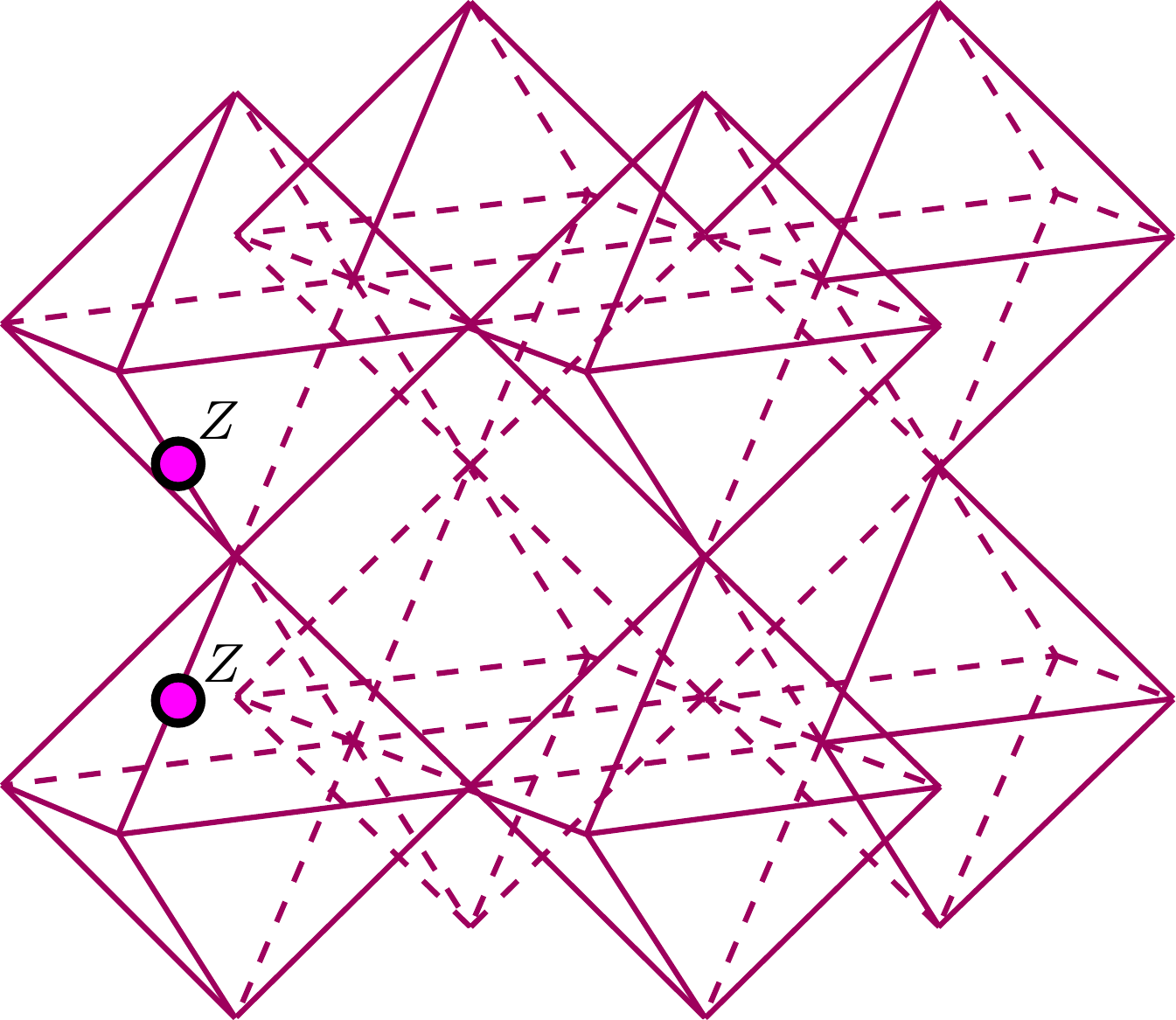}.
\end{align*}
The excitations on this boundary are displayed below.
\begin{align*}
    \adjincludegraphics[width=7.5cm,valign=c]{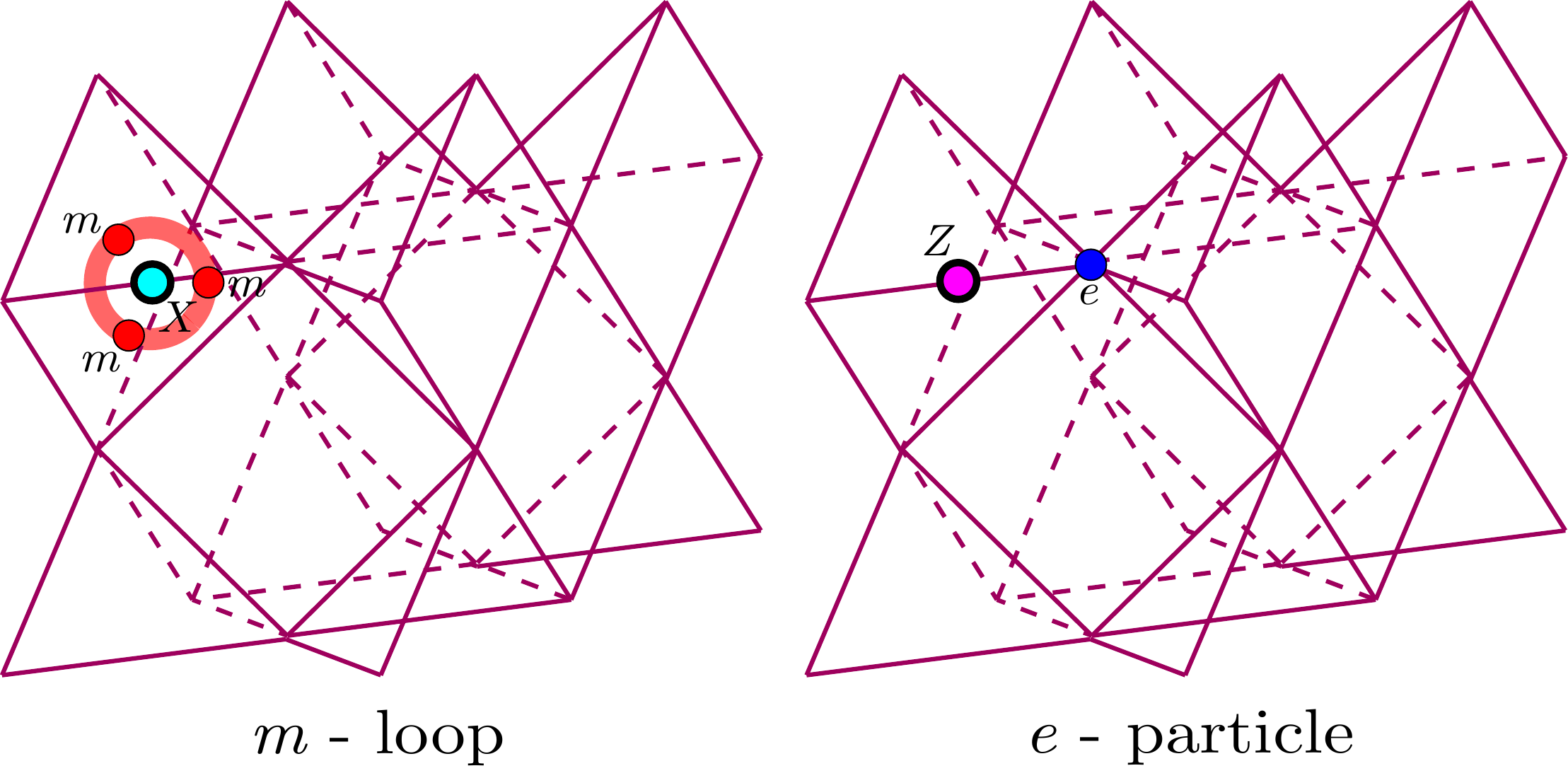}
\end{align*}
We see that the electric charges can condense on this boundary, while the magnetic condensations cannot. 

In conclusion, we show that the $Z$-boundary of the 3D color code is equivalent to the rough boundaries of three copies of 3D toric codes up to local unitary transformations and ancilla qubits. Explicitly, we show the following map.
\begin{align}
    \{y_{\mathbf{z}}, g_{\mathbf{z}}, p_{\mathbf{z}}\} \longleftrightarrow (e_1, e_2, e_3).
\end{align}
The arrow represents the equivalence up to local unitary transformation and entangling/disentangling ancilla qubits. We illustrate this equivalence in Fig.~\ref{fig:magic_boundary_2}(b).
\end{proof}

\subsection{The $X$-Boundary} \label{sec:paulixboundary}

\begin{definition}
    Consider a 3D color code with colors $\mathbf{C} = \{y, g, p, r\}$. The color of the edges, denoted by $\mathbf{c}_i \in \mathbf{C}$, is defined as the color of the cells they connect. Let us consider a surface that intersects only with edges of a specific color. On one side of this surface, all the degrees of freedom are removed. On the other side, the $Z$-stabilizers and  $X$-stabilizers that intersect with the surface are removed and truncated respectively, while all the other stabilizers remain the same. This boundary is referred to as the $X$-boundary. \label{def:Paulixboundary}
\end{definition}
\begin{theorem}
Consider the surface in Definition~\ref{def:Paulixboundary} intersects with color $\mathbf{c}_4 \in \mathbf{C}$. The condensations on the corresponding $X$-boundary are given by $\{(\mathbf{c}_2 \mathbf{c}_3)_{\mathbf{x}}, (\mathbf{c}_1 \mathbf{c}_3)_{\mathbf{x}}, (\mathbf{c}_1 \mathbf{c}_2)_{\mathbf{x}}\}$, where $\mathbf{c}_1, \mathbf{c}_2, \mathbf{c}_3, \mathbf{c}_4 \in \mathbf{C}$ and $\mathbf{c}_1 \neq \mathbf{c}_2 \neq \mathbf{c}_3 \neq \mathbf{c}_4$. This boundary is equivalent to the $(m_1, m_2, m_3)$ boundary of three copies of 3D toric codes up to local unitaries and entangling/disentangling ancilla qubits.
\end{theorem}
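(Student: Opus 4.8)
The plan is to mirror the structure of the preceding $Z$-boundary proof under the exchange of $X$ and $Z$ stabilizers and, correspondingly, of electric charges and magnetic flux loops. The argument splits into two parts: first establishing that the condensation set on the $X$-boundary is exactly the three magnetic fluxes $\{(\mathbf{c}_2\mathbf{c}_3)_{\mathbf{x}}, (\mathbf{c}_1\mathbf{c}_3)_{\mathbf{x}}, (\mathbf{c}_1\mathbf{c}_2)_{\mathbf{x}}\}$, and second using the unfolding unitaries to identify this boundary with the all-smooth $(m_1,m_2,m_3)$ boundary of three copies of 3D toric codes. I would fix $\mathbf{c}_4 = r$ and work on the lattice of Fig.~\ref{fig:3DCC}, noting that the conclusion is independent of the colorization and cellulation.

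For the first part, I would draw the truncated lattice and identify the boundary stabilizers: the $Z$-stabilizers crossing the surface are deleted and the $X$-stabilizers crossing it are truncated to half-cell operators, with no $Z$-stabilizer left on the truncated cells. A magnetic flux condenses precisely when there is an $X$-membrane operator whose free boundary terminates on the surface without leaving a flux loop at the intersection. Since the surface cuts only $\mathbf{c}_4$-colored edges, the three fluxes $(\mathbf{c}_i\mathbf{c}_j)_{\mathbf{x}}$ not involving $\mathbf{c}_4$ have membranes built from plaquettes bounded by the colors $\{\mathbf{c}_1,\mathbf{c}_2,\mathbf{c}_3\}$ that can be oriented so their free edge lies entirely on the surface. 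I would then verify each such membrane commutes with every remaining stabilizer: trivially with the (truncated) $X$-stabilizers, and with every surviving $Z$-stabilizer because the flux loop it would otherwise detect runs along the surface where the $Z$-stabilizers have been removed. Checking that these operators are not products of local stabilizers then shows the three fluxes condense, exactly as the three logical $Z$-strings were treated in the $Z$-boundary case.

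Next I would confirm the condensation set is a full Lagrangian subgroup, so that nothing further condenses. By the fusion rule~\eqref{eq:color_permutation} the three listed fluxes generate all magnetic fluxes and, by the statistics conditions of Section~\ref{sec:lagrangian_subgroup}, have trivial mutual and self statistics, while any electric charge $(\mathbf{c}_i)_{\mathbf{z}}$ braids nontrivially with at least one of them and hence stays confined---the electromagnetic dual of the corresponding check for the $Z$-boundary.

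For the second part, I would apply the unfolding unitaries of Section~\ref{sec:unfold_without_boundary} sublattice by sublattice. On the green and yellow sublattices the truncated $X$-stabilizers become the smooth-boundary stabilizers of a 3D toric code, and one reads off that $X$-membranes terminate while $Z$-strings cannot, so each is a smooth ($m$) boundary; the purple sublattice is handled the same way. Assembling the three copies yields the map $\{(\mathbf{c}_2\mathbf{c}_3)_{\mathbf{x}}, (\mathbf{c}_1\mathbf{c}_3)_{\mathbf{x}}, (\mathbf{c}_1\mathbf{c}_2)_{\mathbf{x}}\} \leftrightarrow (m_1,m_2,m_3)$. The main obstacle I anticipate is the bookkeeping for the 2D membrane operators near the truncated boundary: unlike the 1D strings of the $Z$-boundary, I must track how each plaquette of the membrane meets the half-cell $X$-stabilizers along the whole boundary curve and confirm that no residual flux is excited there. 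Arranging the membrane geometry so its free edge sits cleanly on the $\mathbf{c}_4$-cut surface, and verifying the commutation with the truncated stabilizers, is the delicate step; the unfolding part then follows mechanically by duality with the $Z$-boundary computation.
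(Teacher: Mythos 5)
Your proposal is correct and follows essentially the same route as the paper's proof: exhibit the three independent $X$-membrane operators whose free edges terminate on the boundary, verify they commute with all bulk and truncated boundary stabilizers (and are not products of local stabilizers), then apply the unfolding unitaries sublattice by sublattice to identify each copy's boundary as a smooth boundary, giving $\{(\mathbf{c}_2\mathbf{c}_3)_{\mathbf{x}}, (\mathbf{c}_1\mathbf{c}_3)_{\mathbf{x}}, (\mathbf{c}_1\mathbf{c}_2)_{\mathbf{x}}\} \leftrightarrow (m_1, m_2, m_3)$. Your extra step verifying maximality of the condensation set via braiding statistics is a small supplement: the paper instead reads off from the unfolded toric-code picture that electric charges cannot condense, so both arguments reach the same conclusion.
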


\begin{proof}
Without loss of generality, we choose $\mathbf{c}_4$ to be the red color, and we use the lattice of 3D color code in Fig.~\ref{fig:3DCC}. To prove the theorem, we first show that the condensation on this boundary is $\{pg_{\mathbf{x}}, py_{\mathbf{x}}, yg_{\mathbf{x}}\}$. Then we show that by applying local unitaries, it is equivalent to the $(m_1, m_2, m_3)$-boundary of three copies of 3D toric codes. The proof also works in other colorizations and other cell structures.

The lattice of this boundary is identical to the lattice shown in Eq.~\eqref{eq:bdry_12}. However, the stabilizers on it are different.

The $Z$-stabilizers are located on every plaquette. Since the cells of three different colors are truncated, and the $X$-stabilizers are associated with cells. Therefore, there are three types of truncated $X$-stabilizers on the $X$-boundary. We denote them as $\mathcal{A}^{boundary}_{g}(X)$, $\mathcal{A}^{boundary}_{y}(X)$, and $\mathcal{A}^{boundary}_{p}(X)$.

The truncated $X$-stabilizer on green cells, denoted as $\mathcal{A}^{boundary}_{g}(X)$, is displayed as follows.
\begin{align*}
    \adjincludegraphics[width=4cm,valign=c]{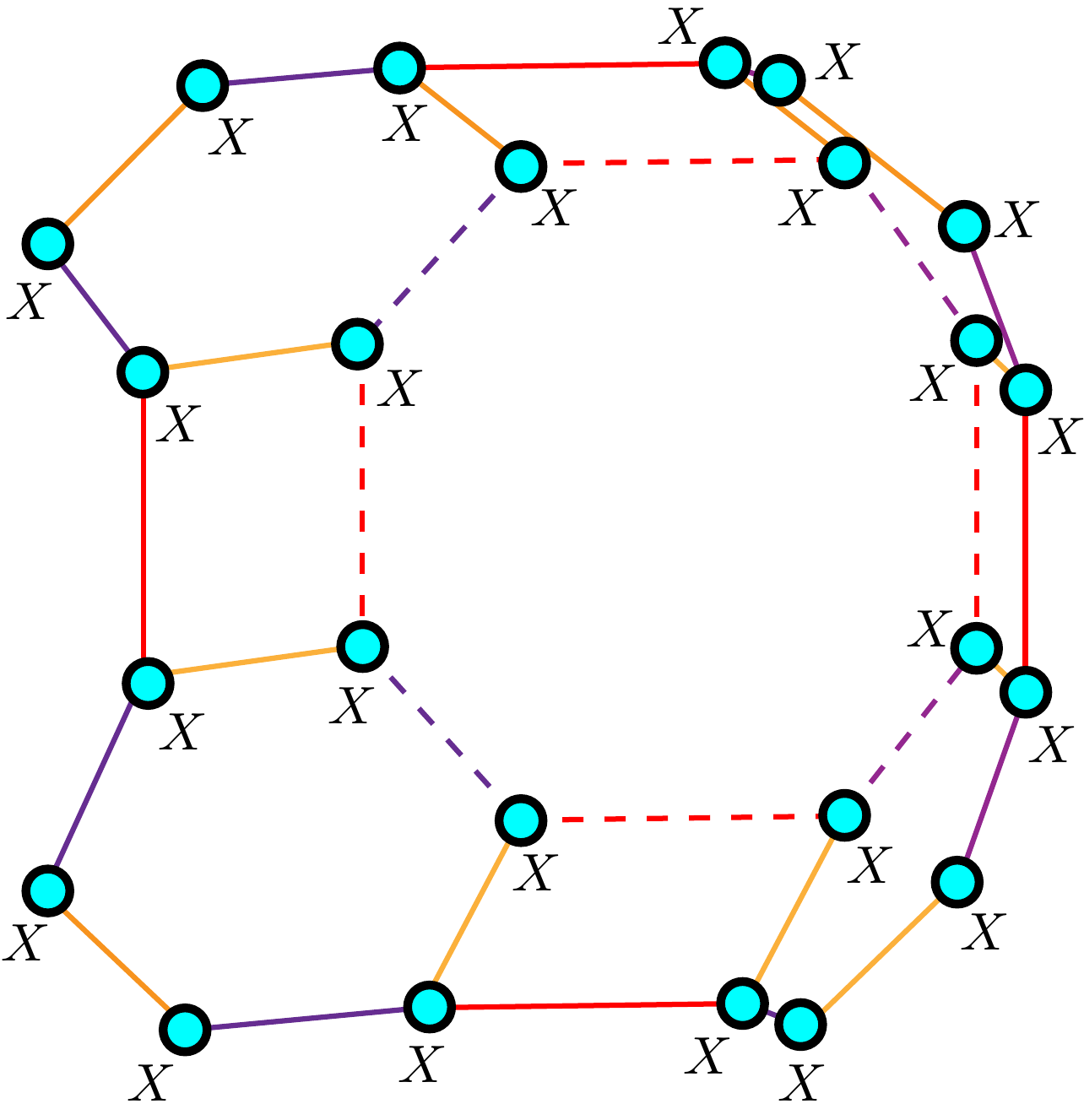}
\end{align*}
The truncated $X$-stabilizer on yellow cells, denoted as $\mathcal{A}^{boundary}_{y}(X)$, has the same structure as those on green cells. Therefore, one can obtain $\mathcal{A}^{boundary}_{y}(X)$ by translations of $\mathcal{A}^{boundary}_{g}(X)$.

The truncated $X$-stabilizer on purple cells, denoted as $\mathcal{A}^{boundary}_{p}(X)$, is displayed as follows.
\begin{align*}
    \adjincludegraphics[width=3cm,valign=c]{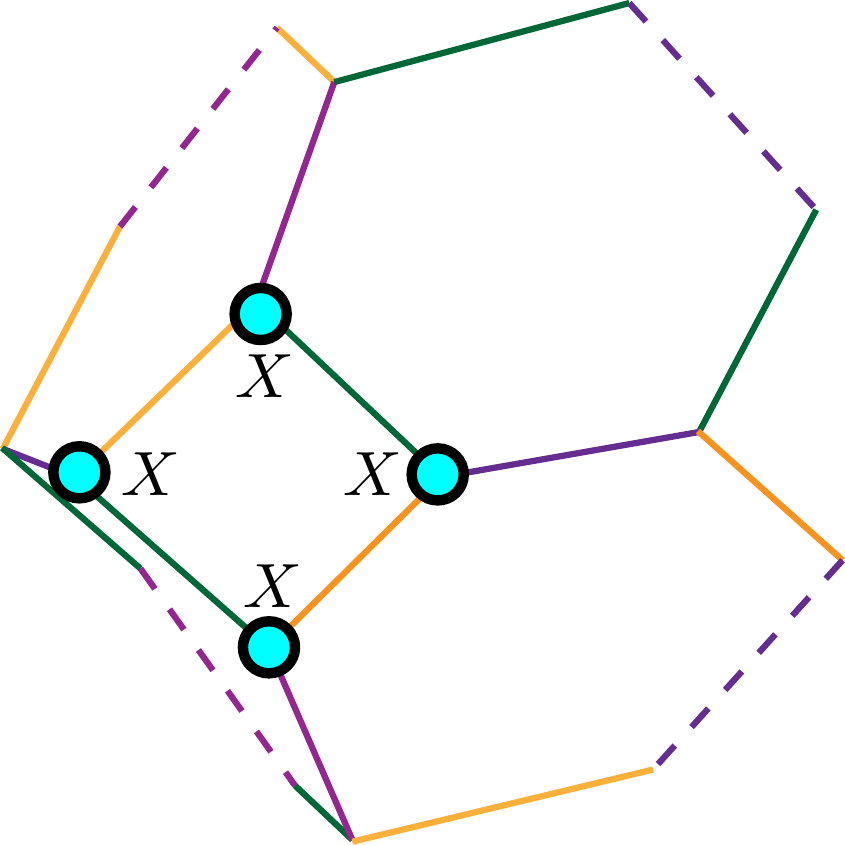}.
\end{align*}
Since we chose to cut the red edges at the beginning, all the remaining red cells are intact. Therefore, there are no truncated $X$-stabilizers of red cells on this boundary.

Similar to the case with $Z$-boundaries, three independent types of membrane operators can terminate on this boundary and commute with every stabilizer. However, the supports of the membrane operators can differ depending on the specific cellulation. In our example, we display the configurations of these (logical) membrane operators in Fig.~\ref{fig:gate_10}. We define the $X$-membrane operator on the $\mathbf{c}_i \mathbf{c}_j$-colored membrane as $\overline{X}_{\mathbf{c}_i \mathbf{c}_j}$ if it is also a logical operator, and define it as $\widetilde{X}_{\mathbf{c}_i \mathbf{c}_j}$ if it is not.
\begin{figure}
    \centering
    \includegraphics[width = 7cm]{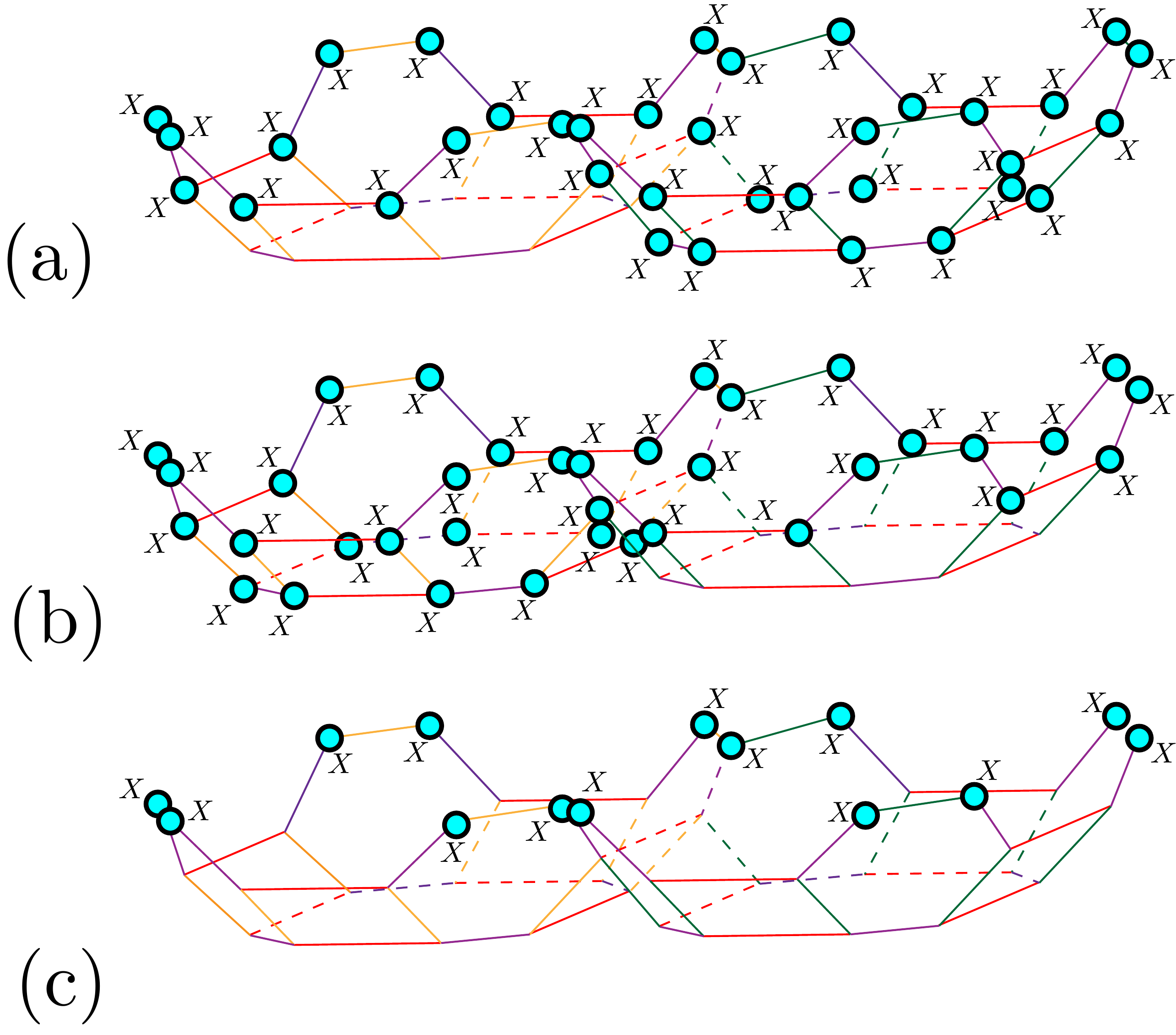}
    \caption{Three types of $X$-membrane operators. (a) The purple-green membrane operator, $\widetilde{X}_{pg}$ (or $\overline{X}_{pg}$ if it is a logical operator). (b) The purple-yellow membrane operator, $\widetilde{X}_{py}$ (or $\overline{X}_{py}$). (c) The yellow-green membrane operator, $\widetilde{X}_{yg}$ (or $\overline{X}_{yg}$).}
    \label{fig:gate_10} 
\end{figure}

The endpoints of the membrane that generates $pg_{\mathbf{x}}$ on this boundary are given by
\begin{align}
    \adjincludegraphics[width=7cm,valign=c]{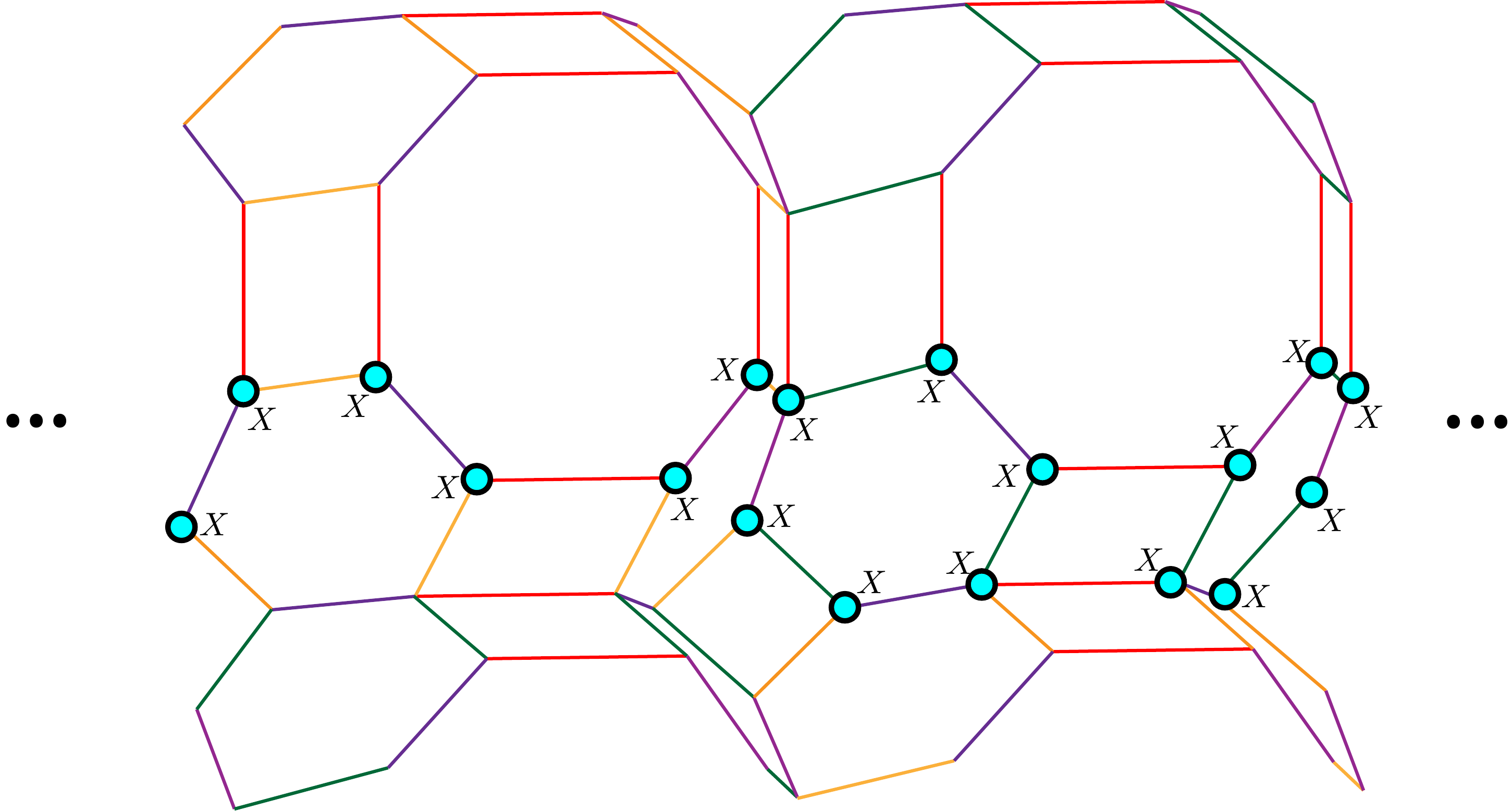}. \label{eq:bdry_27}
\end{align}
Here we show the intersection between the membrane operator and the $X$-boundary. The endpoints of the membrane that creates $py_{\mathbf{x}}$ are given by
\begin{align}
    \adjincludegraphics[width=7cm,valign=c]{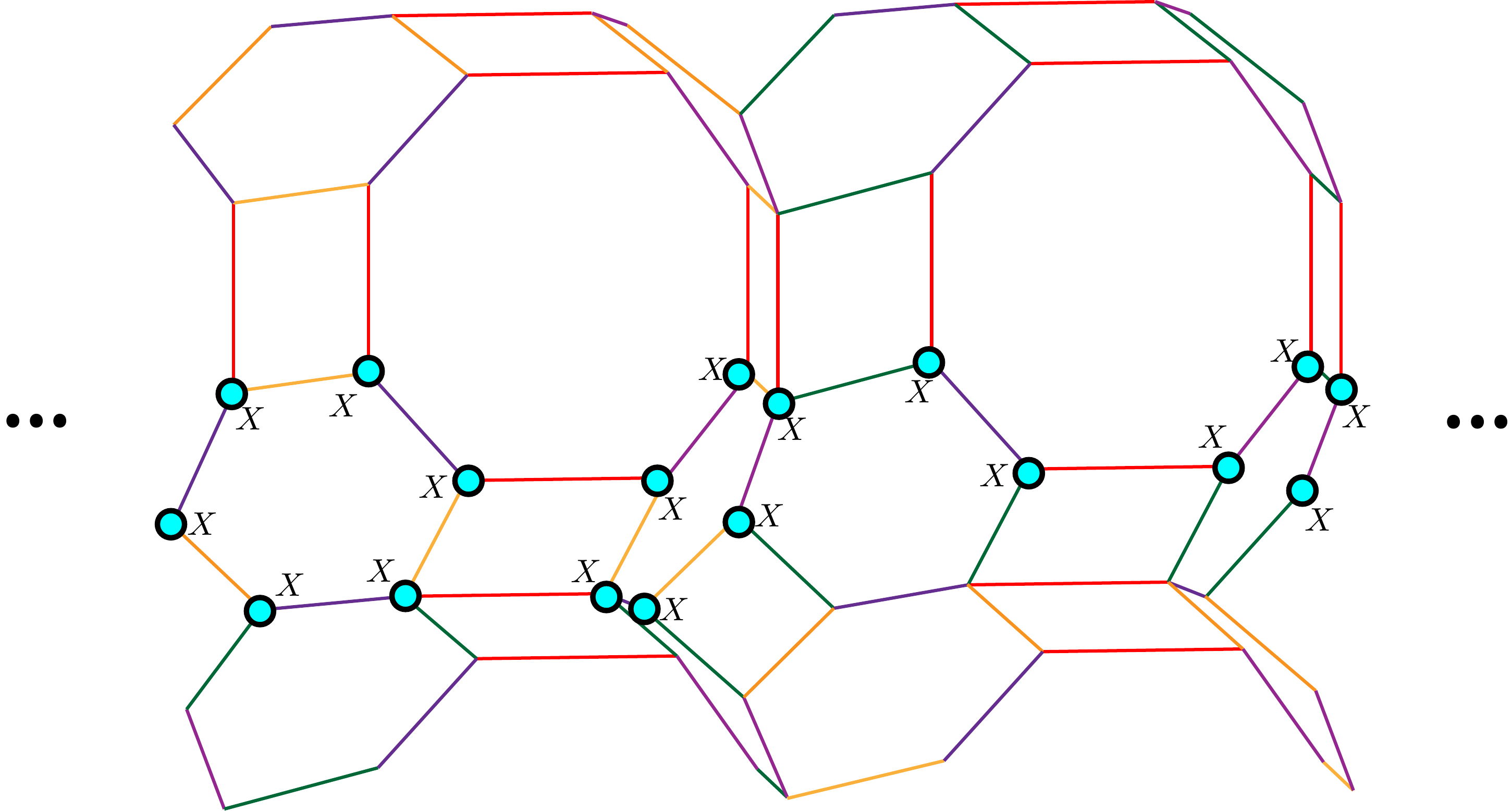}. \label{eq:bdry_28}
\end{align}
The endpoints of the membrane that creates $yg_{\mathbf{x}}$ are given by
\begin{align}
    \adjincludegraphics[width=7cm,valign=c]{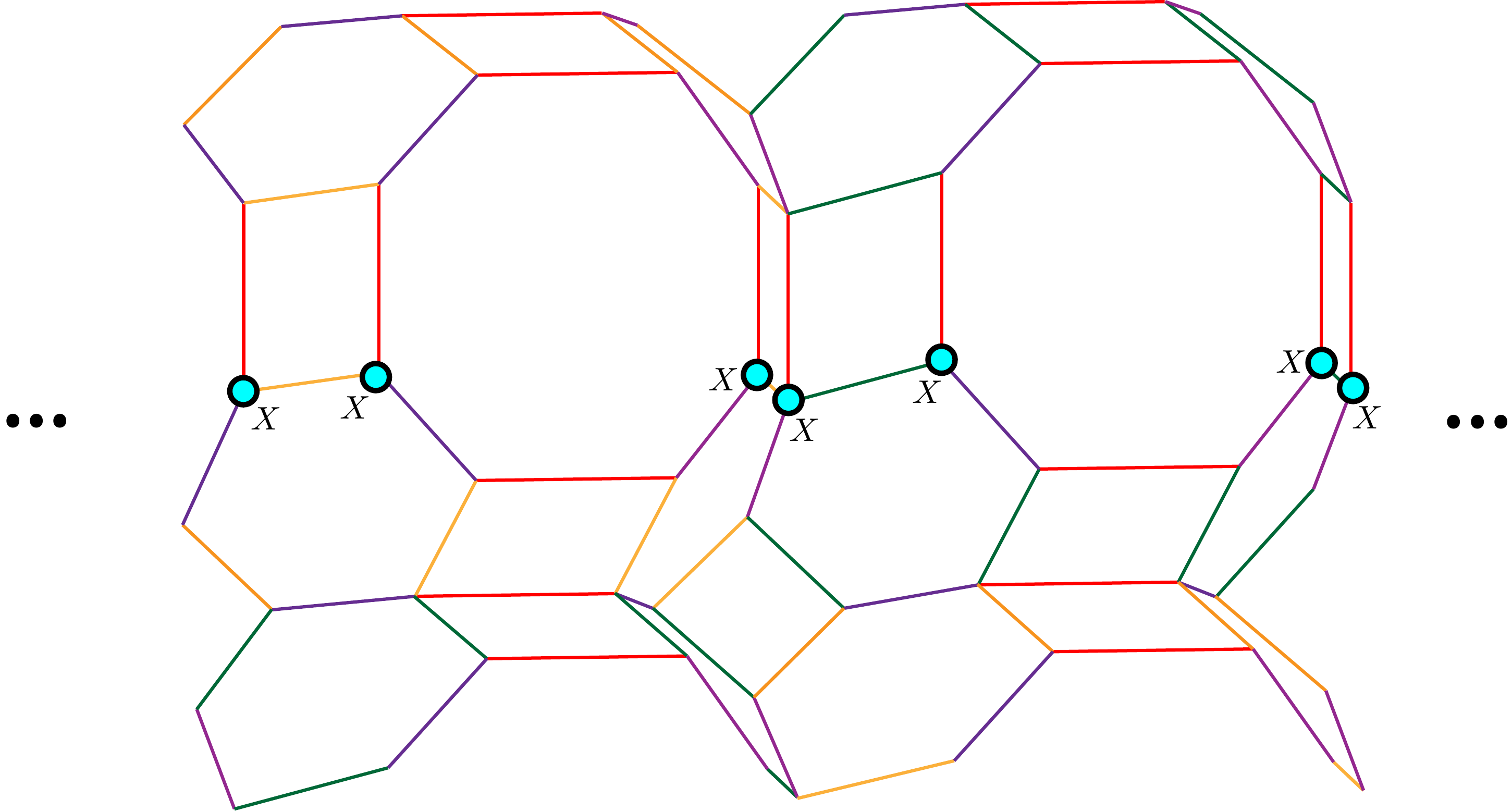}. \label{eq:bdry_29}
\end{align}

One can verify that all these logical membrane operators commute with the stabilizers in the bulk and on the $X$-boundary. Therefore, the condensations on the $X$-boundary are $\{pg_{\mathbf{x}}, py_{\mathbf{x}}, yg_{\mathbf{x}}\}$.

Our next step is to demonstrate this $X$-boundary is equivalent to the $(m_1,m_2,m_3)$-boundary of three copies of 3D toric codes. By applying the unfolding unitaries, the green lattice we get is depicted as follows.
\begin{align*}
    \adjincludegraphics[width=4.5cm,valign=c]{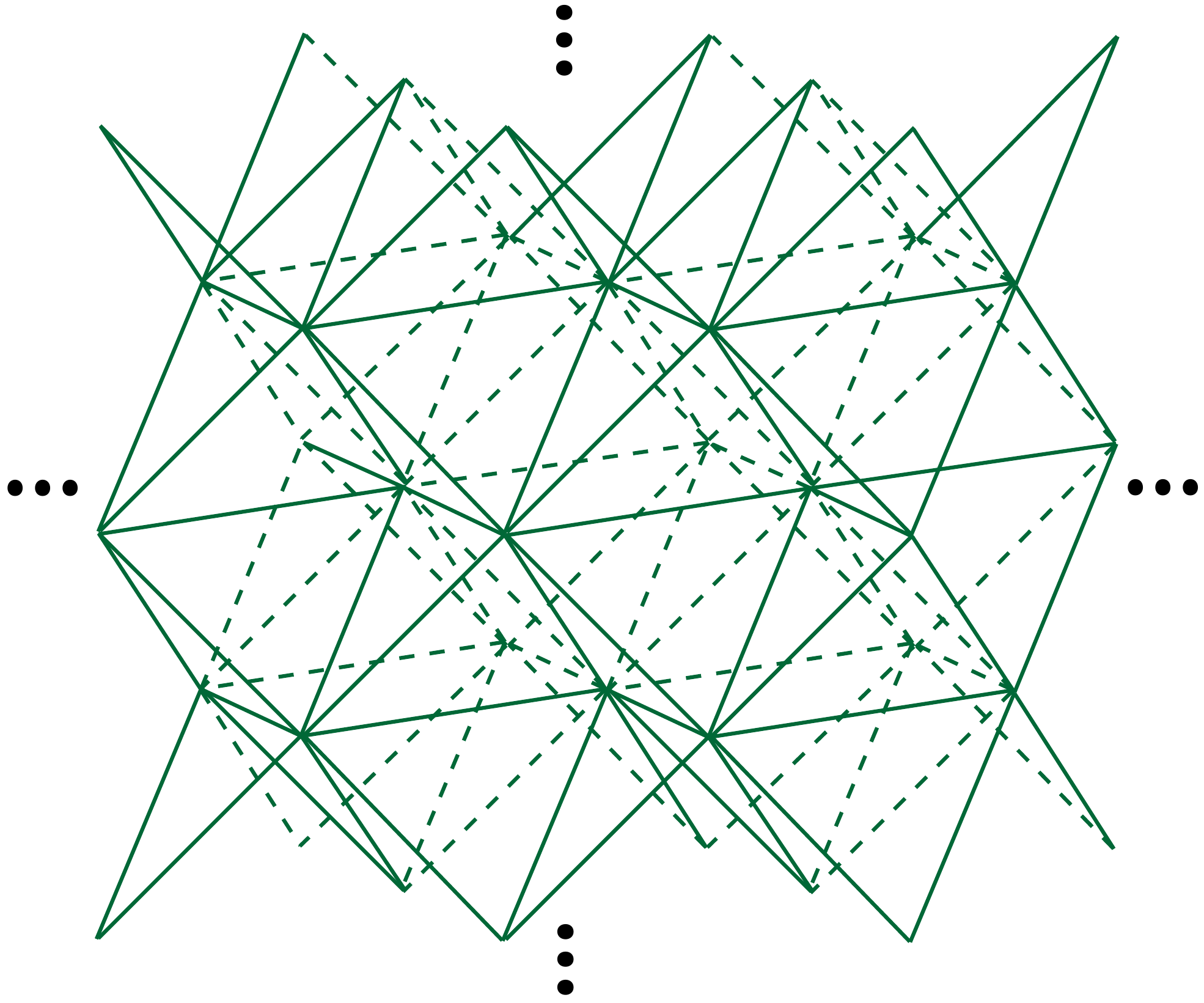}.
\end{align*}
For the same reason, we choose the front and rear boundary of the above lattice to correspond to the original $X$-boundary. The qubits are located on the edges of the lattice. On this boundary, there are truncated $X$-stabilizers on every vertex and intact $Z$ stabilizers on every plaquette. The $X$ and $Z$-stabilizers on this boundary are given by
\begin{align*}
    \adjincludegraphics[width=6cm,valign=c]{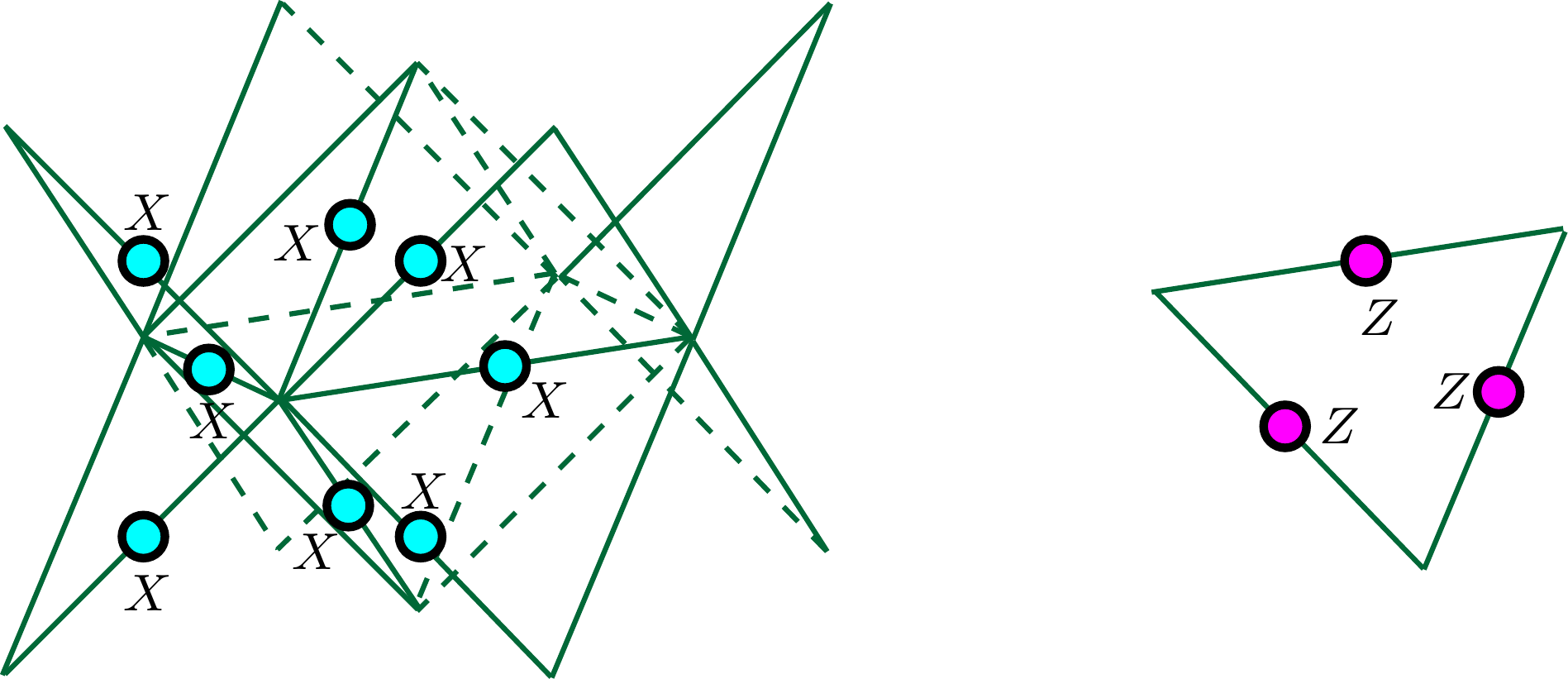}.
\end{align*}

The excitations on this boundary are shown below.
\begin{align*}
    \adjincludegraphics[width=6cm,valign=c]{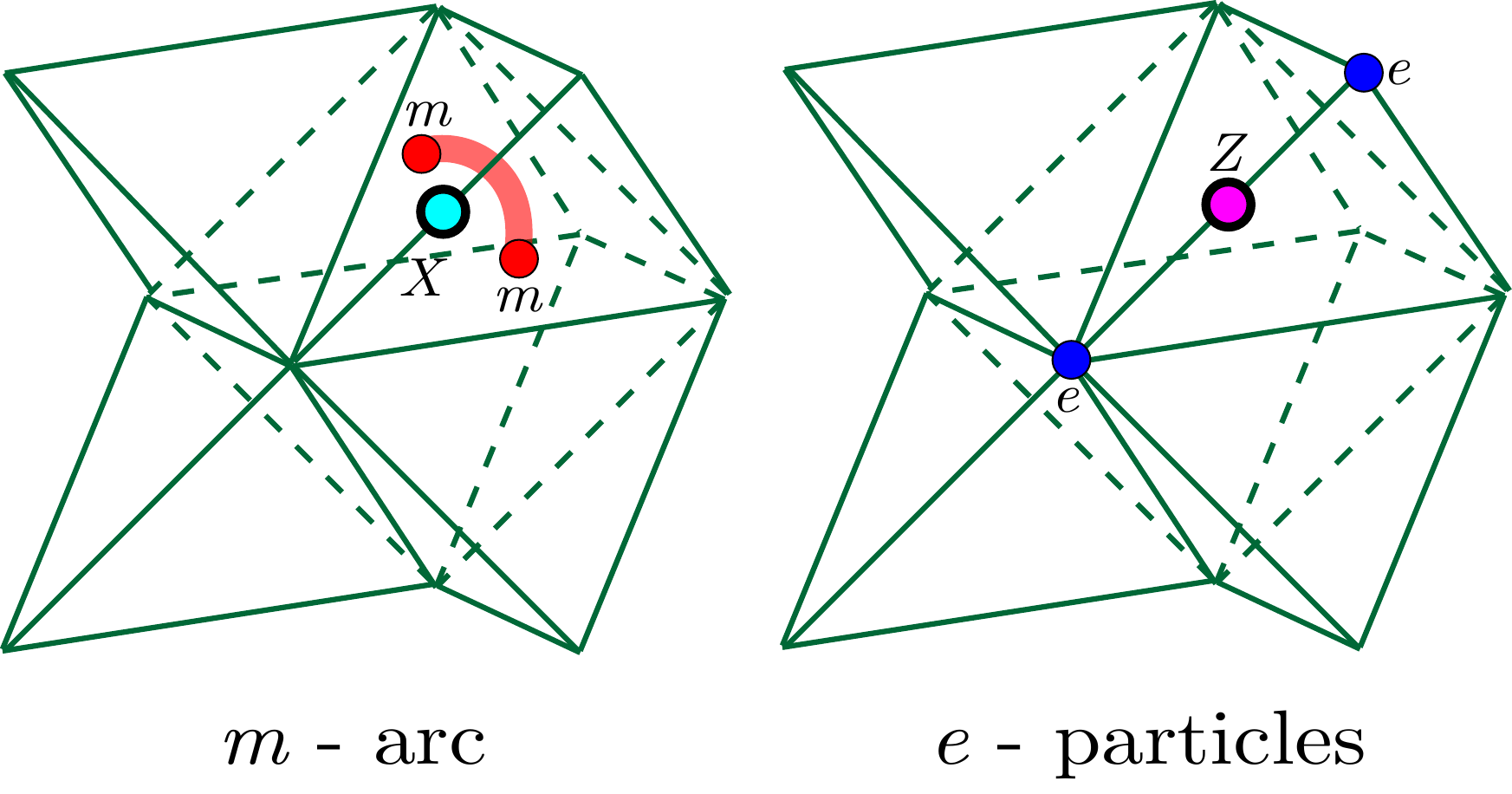}.
\end{align*}
We see that the magnetic loop becomes an arc and condenses on this boundary, while pairs of electric charges can be created on this boundary. This suggests that magnetic fluxes can condense on this boundary, while electric charges cannot. Therefore, this boundary of the green lattice is a smooth boundary.

The yellow lattice has the same structure as the green one. Therefore, all the arguments for the green lattice hold for the yellow lattice. After the unfolding unitaries, the boundary of the yellow lattice is a smooth boundary as well.

Finally, we apply the unfolding unitaries to the purple lattice. The lattice is shown below.
\begin{align*}
    \adjincludegraphics[width=4.5cm,valign=c]{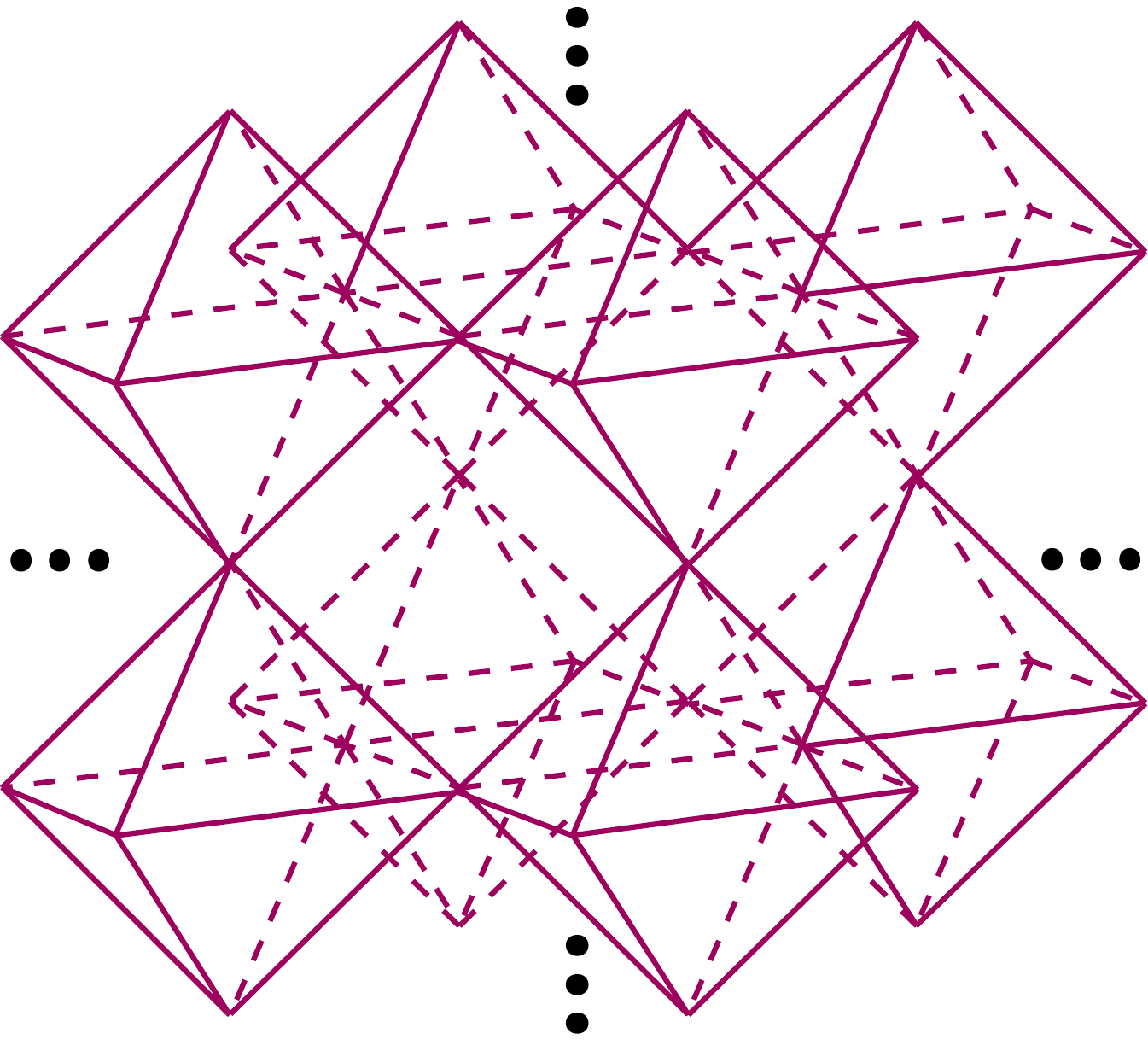}. \label{eq:bdry_18}
\end{align*}
The truncated $X$-stabilizer and intact $Z$-stabilizer on this boundary are depicted as follows.
\begin{align*}
    \adjincludegraphics[width=7cm,valign=c]{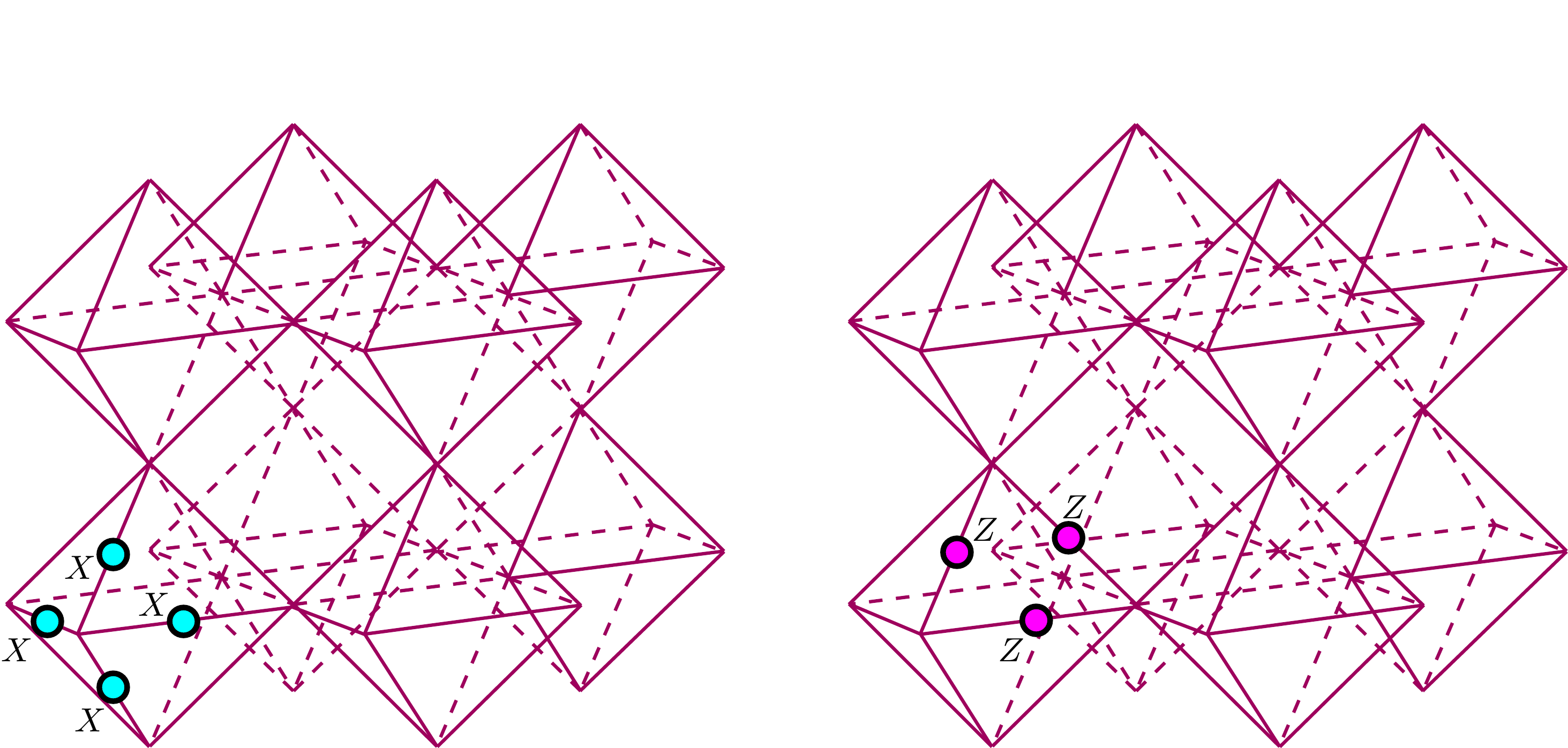}.
\end{align*}
The excitations on this boundary are shown below
\begin{align*}
    \adjincludegraphics[width=7cm,valign=c]{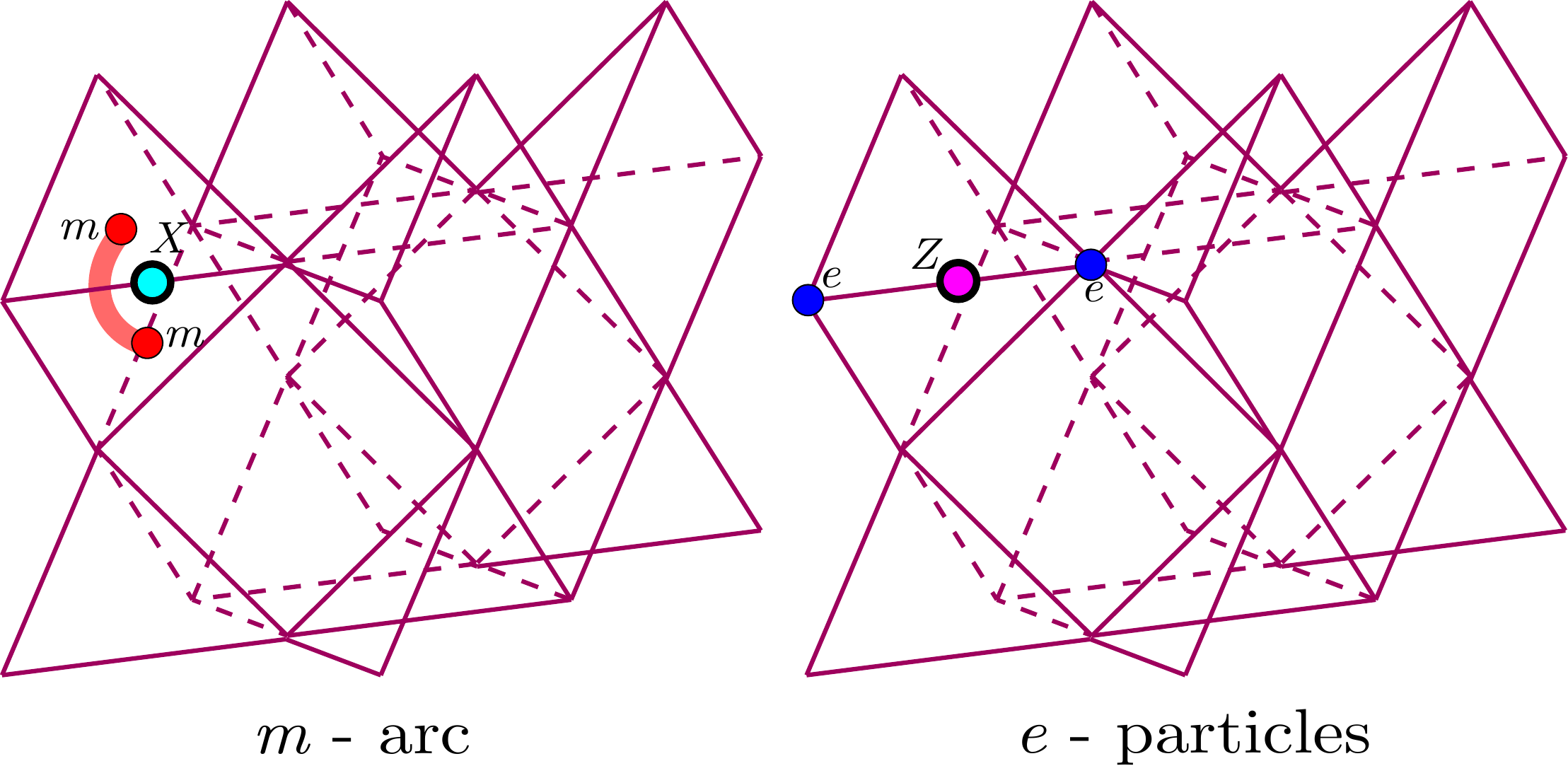}
\end{align*}
We see that the magnetic flux loop becomes an arc and condenses on this boundary, while pairs of electric charges can be created on this boundary.

In conclusion, we show that the $X$-boundary of 3D color code is equivalent to the smooth boundaries of three copies of 3D toric codes up to local unitary transformations and ancilla qubits. Explicitly we show the following map. 
\begin{align}
    \{pg_\mathbf{x}, py_\mathbf{x}, yg_\mathbf{x}\} \longleftrightarrow (m_1, m_2, m_3).
\end{align}
The arrow represents the equivalence up to local unitary transformation and entangling/disentangling ancilla qubits. We illustrate this equivalence in Fig.~\ref{fig:magic_boundary_2}(b).
\end{proof}

\subsection{Other boundaries}

For the completeness of our discussion, we briefly introduce the other boundaries of the 3D color code that allow different condensation patterns compared to the $X$- or $Z$-boundaries.

\subsubsection{The $\{(\mathbf{c}_i)_{\mathbf{z}}, (\mathbf{c}_i\mathbf{c}_k)_{\mathbf{x}}, (\mathbf{c}_i\mathbf{c}_j)_{\mathbf{x}}\}$-boundaries} \label{sec:zxx}

These boundaries are also referred to as the $\mathbf{c}_i$-color boundaries, indicating their capacity to condense $\mathbf{c}_i$-colored electric charges. Conventionally, the 3D color code is defined with three distinct color boundaries and a folded boundary, in a tetrahedron geometry. By applying unfolding unitaries, it is equivalent to three copies of the toric codes with different alignments, as depicted in Fig.~\ref{fig:magic_boundary_2}(a). We use the $\{y_{\mathbf{z}}, py_{\mathbf{x}}, yg_{\mathbf{x}}\}$-boundary as an example to show how to construct this type of boundaries. The boundary stabilizers, $\mathcal{B}^{boundary}_{pg}(Z)$, $\mathcal{A}^{boundary}_{g}(X)$, and $\mathcal{A}^{boundary}_{p}(X)$ are given by
\begin{align}
    \adjincludegraphics[width=6cm,valign=c]{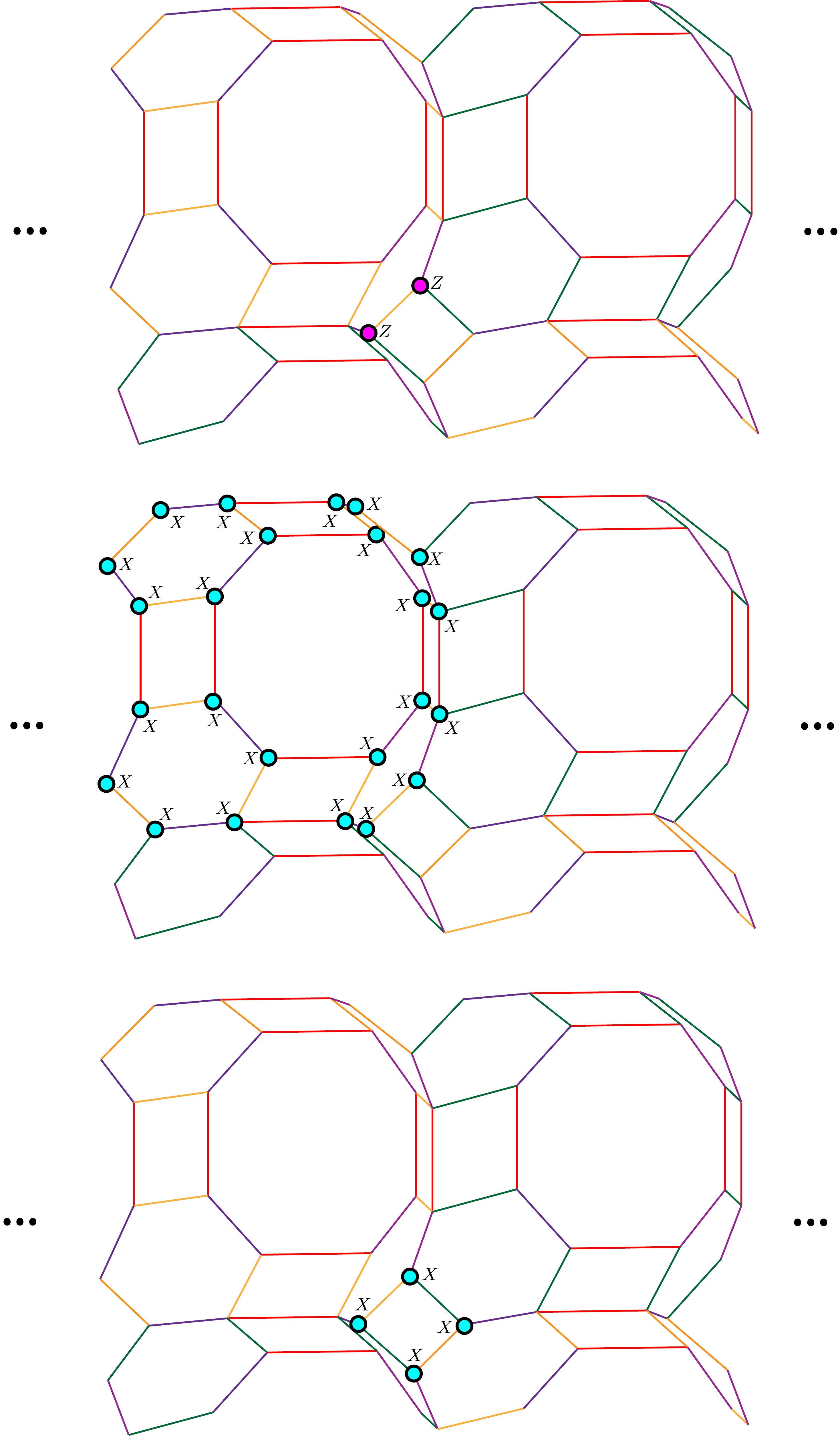}. \label{eq:gate_24}
\end{align}

This boundary is equivalent to the $(e_1, m_2, m_3)$-boundary of three copies of 3D toric codes  after we apply the unfolding unitary transformation.
\begin{align}
    \{y_{\mathbf{z}}, py_{\mathbf{x}}, yg_{\mathbf{x}}\} \longleftrightarrow (e_1, m_2, m_3).
\end{align}
The total number of distinct types of color boundaries is 3.

\subsubsection{The $\{(\mathbf{c}_i)_{\mathbf{z}}, (\mathbf{c}_j)_{\mathbf{z}}, (\mathbf{c}_i\mathbf{c}_j)_{\mathbf{x}}\}$-boundaries} \label{sec:zzx}

This type of boundaries can condense two types of electric charges and one type of flux loops. We use $\{y_{\mathbf{z}}, g_{\mathbf{z}}, yg_{\mathbf{x}}\}$-boundary as an example to illustrate how to construct this type of boundaries. The boundary stabilizers, $\mathcal{B}^{boundary}_{pg}(Z)$, $\mathcal{B}^{boundary}_{py}(Z)$, and $\mathcal{A}^{boundary}_{p}(X)$, are given by
\begin{align}
\adjincludegraphics[width=7cm,valign=c]{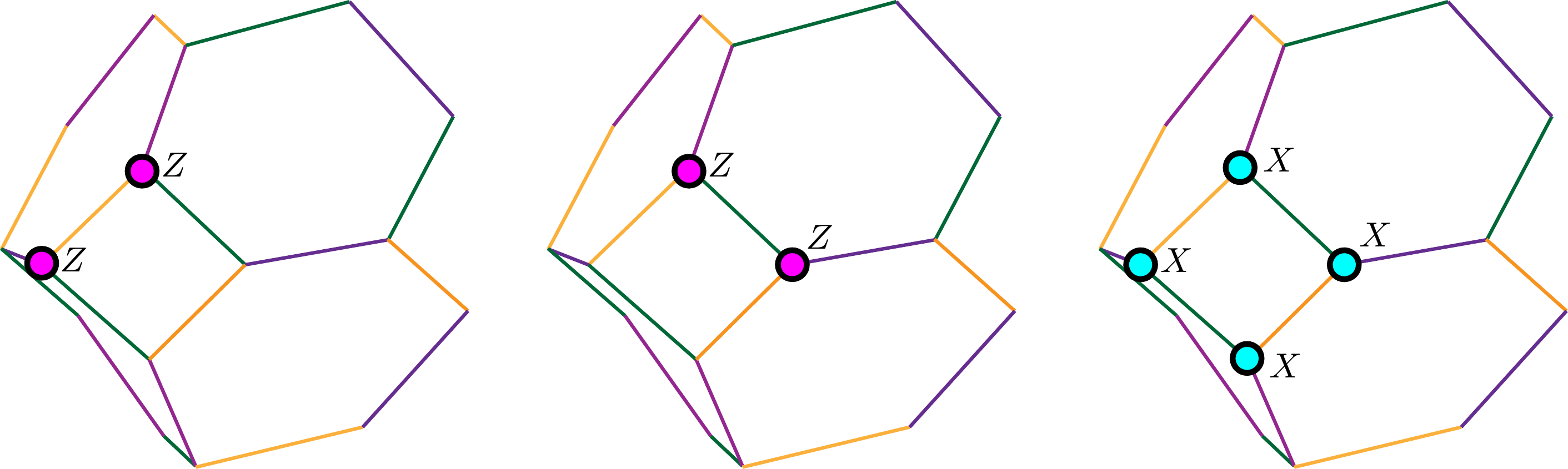}, \label{eq:apdx_9}
\end{align}
where each two-body $Z$-type stabilizer is supported on an edge of a square on the boundary, and each four-body $X$-type stabilizer is supported on a full square on the boundary. We remark that the geometry of the boundary lattice is the same as that of the previously discussed $\{(\mathbf{c}_i)_{\mathbf{z}}, (\mathbf{c}_i\mathbf{c}_k)_{\mathbf{x}}, (\mathbf{c}_i\mathbf{c}_j)_{\mathbf{x}}\}$-boundaries. However, by introducing the second stabilizer in Eq. (21), the second stabilizer in Eq. (19) is no longer preserved.

This boundary is equivalent to the $(e_1, e_2, m_3)$-boundary after we apply the unfolding unitary transformation.
\begin{align}
    \{y_{\mathbf{z}}, g_{\mathbf{z}}, yg_{\mathbf{x}}\} \longleftrightarrow (e_1, e_2, m_3)
\end{align}
The total number of distinct types of $\{y_{\mathbf{z}}, g_{\mathbf{z}}, yg_{\mathbf{x}}\}$-boundaries is 3.

\subsubsection{The folded boundaries} \label{sec:folded_boundary}

Consider there are two copies of 3D toric codes, we can {\it fold} one copy of it, and attach the folded boundary to either the $e$-boundary, or the $m$-boundary of the second copy. We refer to this type of boundaries as the {\it folded boundaries}, implying they are created by folding toric codes. Since one copy of 3D toric code is folded along a codimension-1 (2D) submanifold in the bulk, the condensation set on the folded interface is $(e_i e_j, m_i m_j)$. Together with the boundary of the second copy, one can obtain two different types of boundaries, namely, the $(e_i e_j, m_i m_j, e_k)$-boundary and the $(e_i e_j, m_i m_j, m_k)$-boundary. According to our previous discussion, this boundary correspond to the 3D color-code boundary with condensations $\{(\mathbf{c}_i)_{\mathbf{z}} (\mathbf{c}_j)_{\mathbf{z}}, (\mathbf{c}_j \mathbf{c}_k)_{\mathbf{x}} (\mathbf{c}_i \mathbf{c}_k)_{\mathbf{x}}, (\mathbf{c}_k)_{\mathbf{z}}\}$ and $\{(\mathbf{c}_i)_{\mathbf{z}} (\mathbf{c}_j)_{\mathbf{z}}, (\mathbf{c}_j \mathbf{c}_k)_{\mathbf{x}} (\mathbf{c}_i \mathbf{c}_k)_{\mathbf{x}}, (\mathbf{c}_i \mathbf{c}_j)_{\mathbf{x}}\}$, respectively. For the simplicity of notations, we refer to the former boundary as the fold$|e$-boundary and the latter as the fold$|m$-boundary in the reminder of this paper, as these boundaries are composites of the fold of one copy of 3D toric code and the $e$-boundary or $m$-boundary of another copy, respectively.

Without loss of generality, we use explicity examples to show how to construct these two types of boundaries. For the fold$|e$-boundary, we consider the example when the condensation set is $\{y_{\mathbf{z}} g_{\mathbf{z}}, pg_{\mathbf{x}} py_{\mathbf{x}}, p_{\mathbf{z}}\}$. Just like the cases for 3D toric code, as we illustrated in Appendix~\ref{sec:gauging_folded}, we introduce ancilla qubits on the boundary to realize the condensation property\footnote{Technically speaking, adding ancilla qubits is not necessary; one can multiply $X$-stabilizers to eliminate the need for ancilla qubits. However, ancilla qubits can serve as degrees of freedom that are purely defined on the boundary. By applying local operators on the boundary, typically single $X$ or $Z$ operators, one can observe how the condensations emerge from the boundary and become gapped in the bulk.}. We introduce one ancilla qubit on each horizontal trancated yellow-green faces, as denoted by the white circles in the following figure.
\begin{align}
\adjincludegraphics[width=5cm,valign=c]{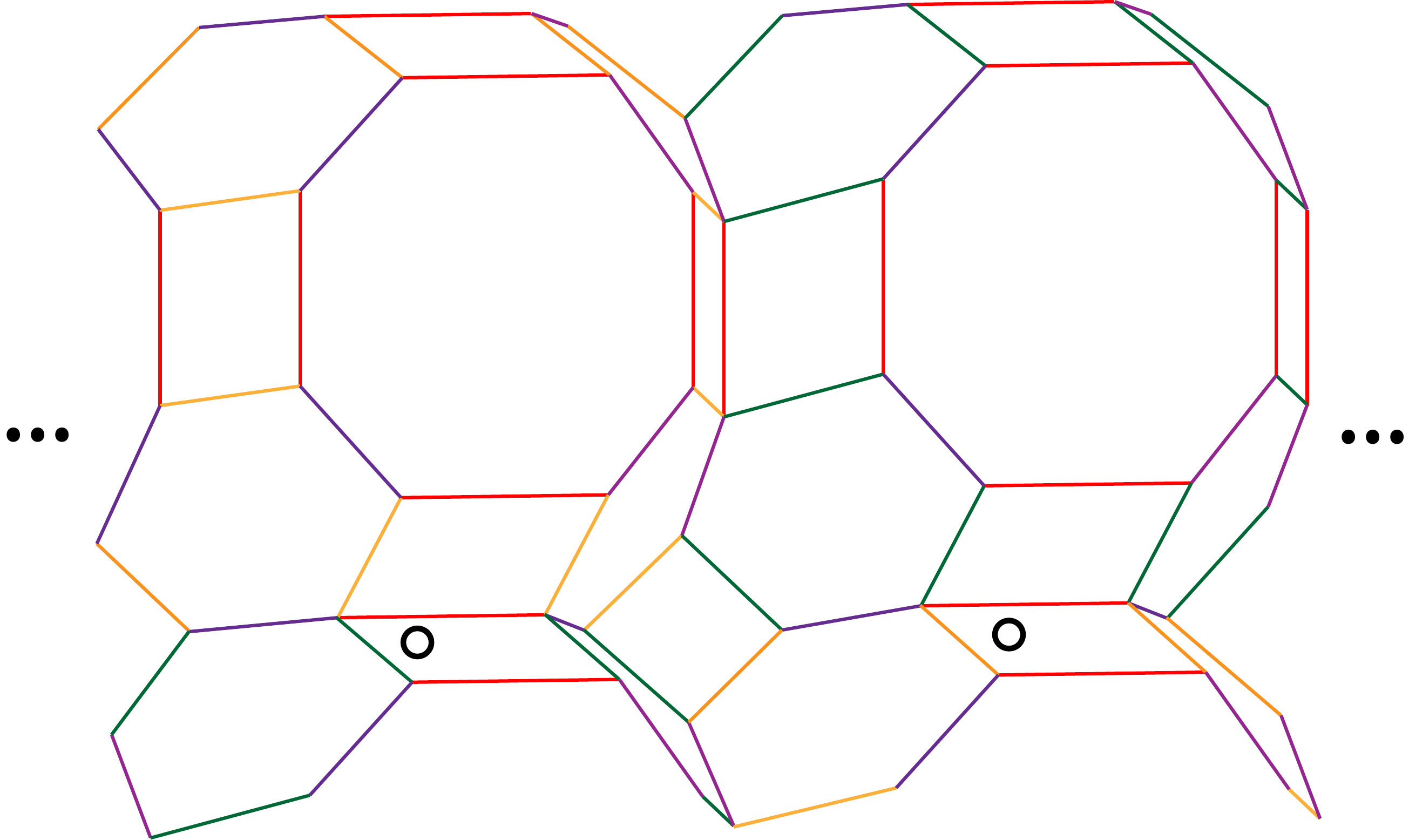}.
\end{align}
We define the $Z$-stabilizers on the boundary as
\begin{align}
\adjincludegraphics[width=5cm,valign=c]{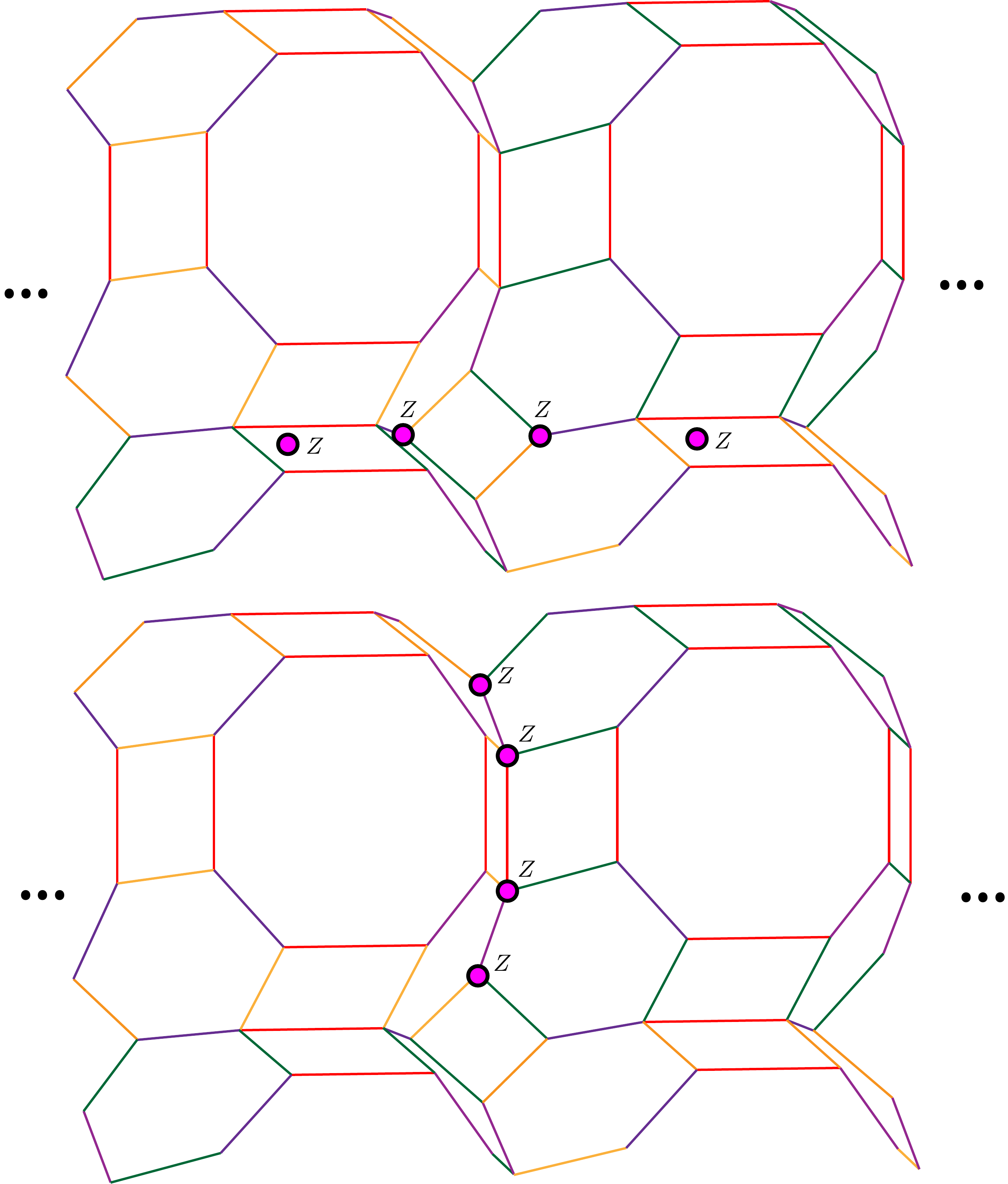}.
\end{align}
We then define the $X$-stabilizers as
\begin{align}
\adjincludegraphics[width=6cm,valign=c]{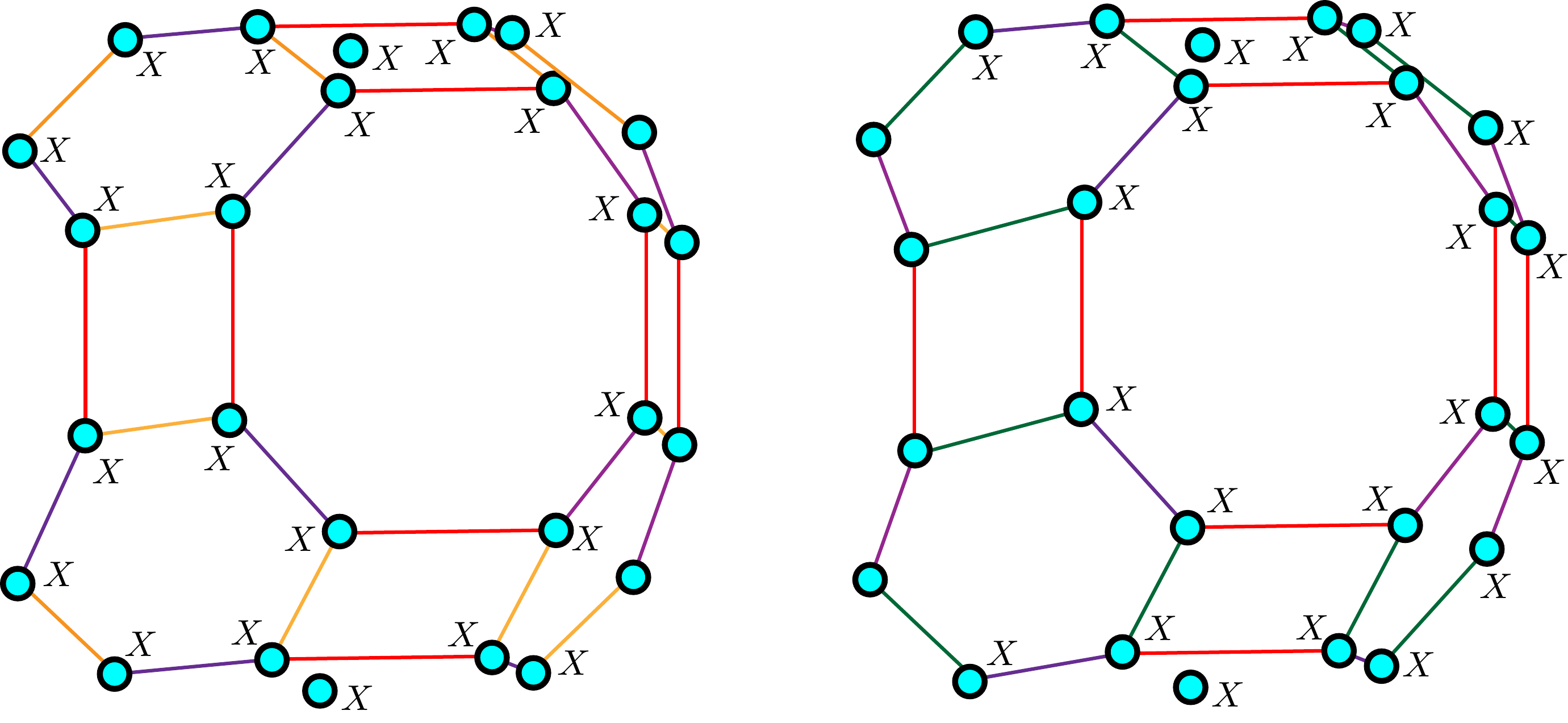}.
\end{align}
Notice that both stabilizers are coupled with the ancilla qubits on the top and the bottom. The first $Z$-stabilizer is given by the product of two $Z$-stabilizers from each copy. In principle, any local product is a generator in the stabilizer group. Therefore, the above figure is just one example that satisfies the mutual commutation condition. The physical meanings of these short $Z$-strings and their relation with the condensation of eletric charges can be stated as follows: In the bulk, the short $Z$-strings can generate pairs of electric charges at their endpoints. However, we add these terms on the boundary and only keep the terms that commute with them in the Hamiltonian. That is equivalent to say, the electric charges at the endpoints are condensed on the boundary, or identified with the vacuum since the corresponding operators commute with every other stabilizers on the boundary. Therefore, any local products of the corresponding truncated $Z$-terms can be regarded as a $Z$-stabilizer on the boundary, as long as they are coupled with the ancilla qubits to make sure every terms are mutually commuting. 

One can check the stabilizers above are mutually commuting, and commute with the string or membrane operators that generate corresponding types of excitations which can condense on it. The total number of distinct types of fold$|e$-boundaries is 3.

We then consider the $\{y_{\mathbf{z}} g_{\mathbf{z}}, pg_{\mathbf{x}} py_{\mathbf{x}}, yg_{\mathbf{x}}\}$-boundary as an example for the fold$|m$-boundary. The stabilizers on this boundary are similar to those of the previous type, except $\mathcal{B}^{boundary}_{yg}(Z)$ is substituted with $\mathcal{A}^{boundary}_{p}(X)$, which is
\begin{align}
\adjincludegraphics[width=5cm,valign=c]{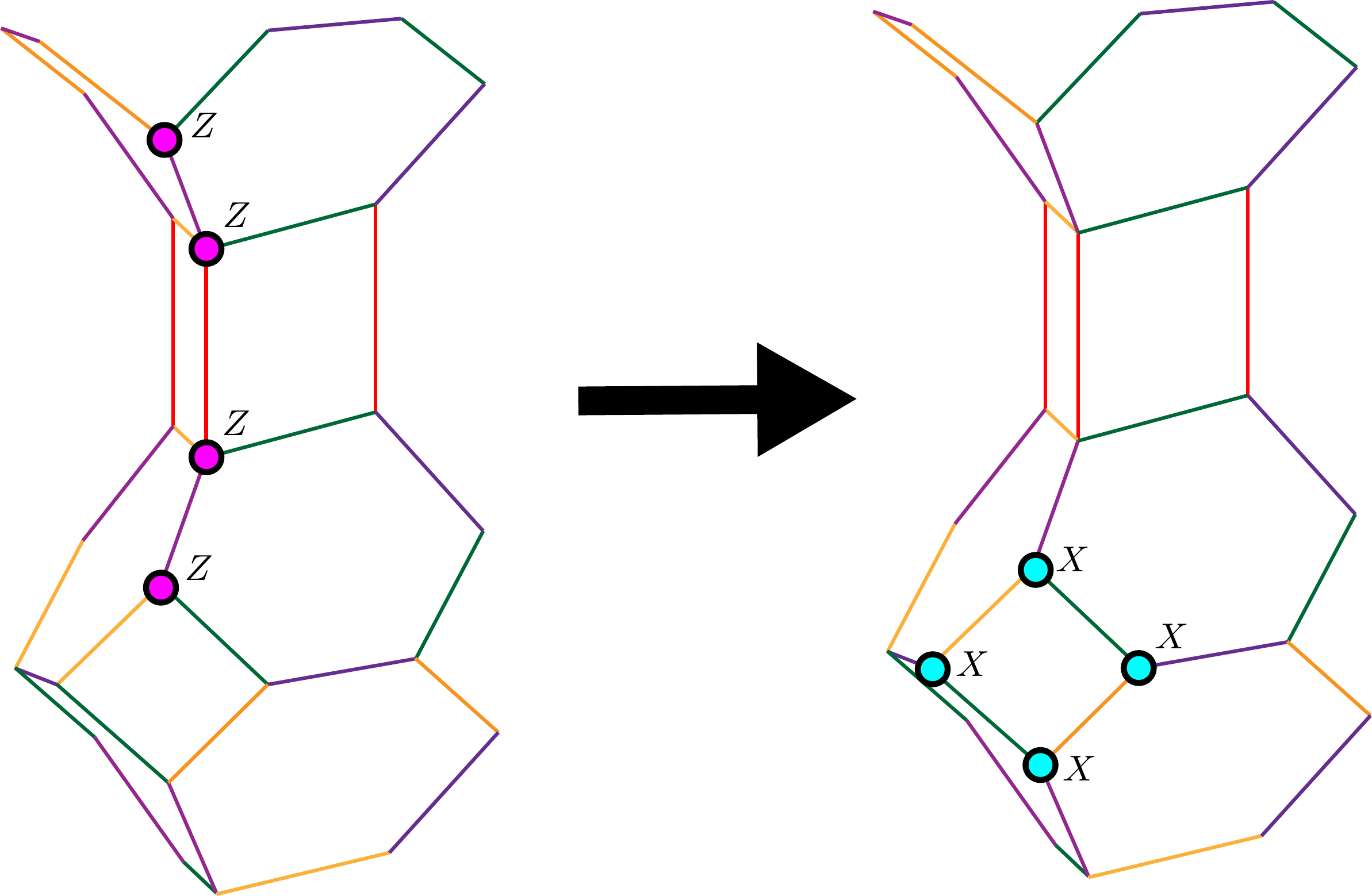}.
\end{align}
Therefore, the generating set of stabilizers on this boundary consists of the first stabilizer in Eq.~(24), the stabilizers in Eq.~(25), and the rightmost stabilizer in Eq.~(26).
The total number of distinct types of fold$|m$-boundaries is 3.

\subsubsection{The $pg_{\mathbf{x}} py_{\mathbf{x}} yg_{\mathbf{x}}$-boundary} \label{sec:m1m2m3}

The condensation set for the $pg_{\mathbf{x}} py_{\mathbf{x}} yg_{\mathbf{x}}$-boundary is given by $\{y_{\mathbf{z}} g_{\mathbf{z}}, g_{\mathbf{z}} p_{\mathbf{z}}, pg_{\mathbf{x}} py_{\mathbf{x}} yg_{\mathbf{x}}\}$. Similar to the folded-boundary cases, we need to couple ancilla qubits to the system to make the boundary stabilizers be mutually commuting. In general, we need to have one ancilla qubit for every pair of $\mathcal{A}^{boundary}_{y}(X)$, $\mathcal{A}^{boundary}_{g}(X)$ and $\mathcal{A}^{boundary}_{p}(X)$ stabilizers, as suggested in the 3D toric code model example in Appendix~\ref{sec:gauging_mmm}. However, due to the asymmetry of the lattice with respect to different colors, the numbers of $X$-stabilizers for each colors don't match. To solve this problem, we need to change the basis of generators for the $\mathcal{A}^{boundary}_{p}(X)$ and $\mathcal{B}^{boundary}_{yg}(Z)$ stabilizers, which are given by 
\begin{align}
\adjincludegraphics[width=6cm,valign=c]{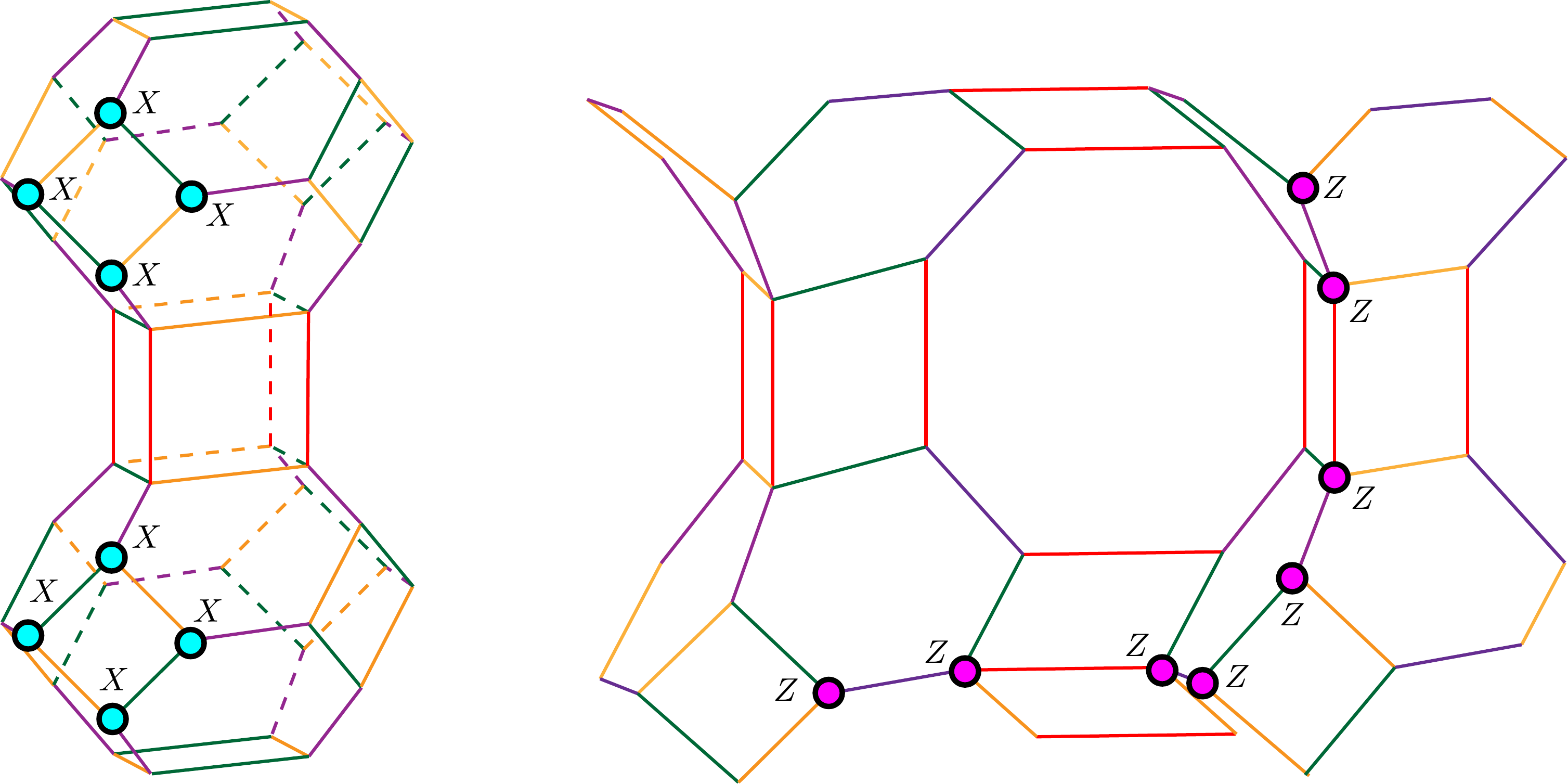}. \label{eq:larger_term}
\end{align}
One can check they anti-commute with each other, while commute with all the other stabilizers. Similar to the cases for the folded boundaries, we introduce ancilla qubits and couple them with the stabilizers. We introduce three ancilla qubits per set of $X$-stabilizers. Each set of $X$-stabilizers contains one $X$-stabilizer from each color. For the explicit forms, one may refer to Eq.~(25) and the stabilizer shown on the left side of Eq.~(27). For example, we add ancilla qubits as follows:
\begin{align}
\adjincludegraphics[width=6cm,valign=c]{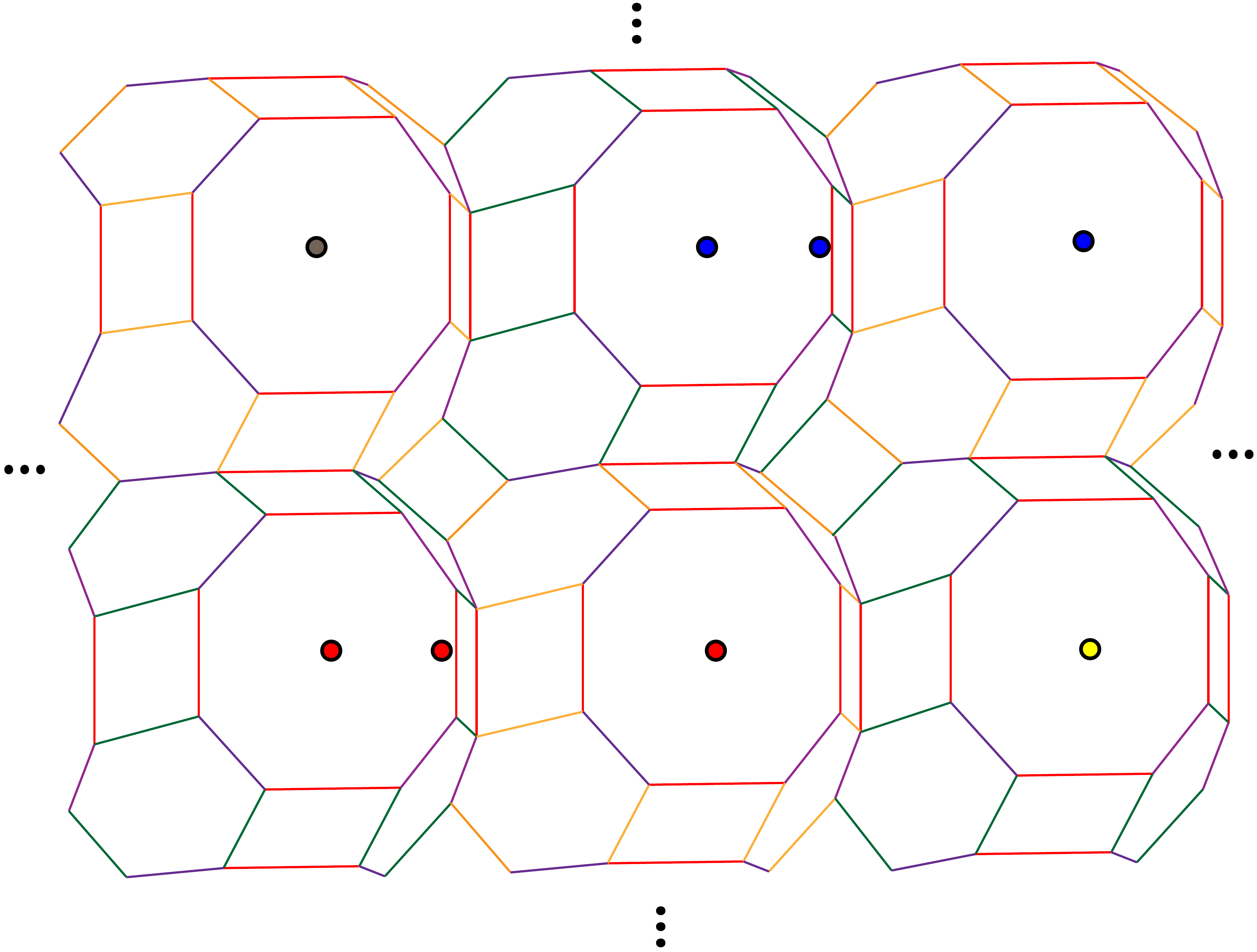}.
\end{align}
Here we use different colors to distinguish ancilla qubits for different sets. The $Z$-stabilizers can be defined as follow:
\begin{align}
\adjincludegraphics[width=7cm,valign=c]{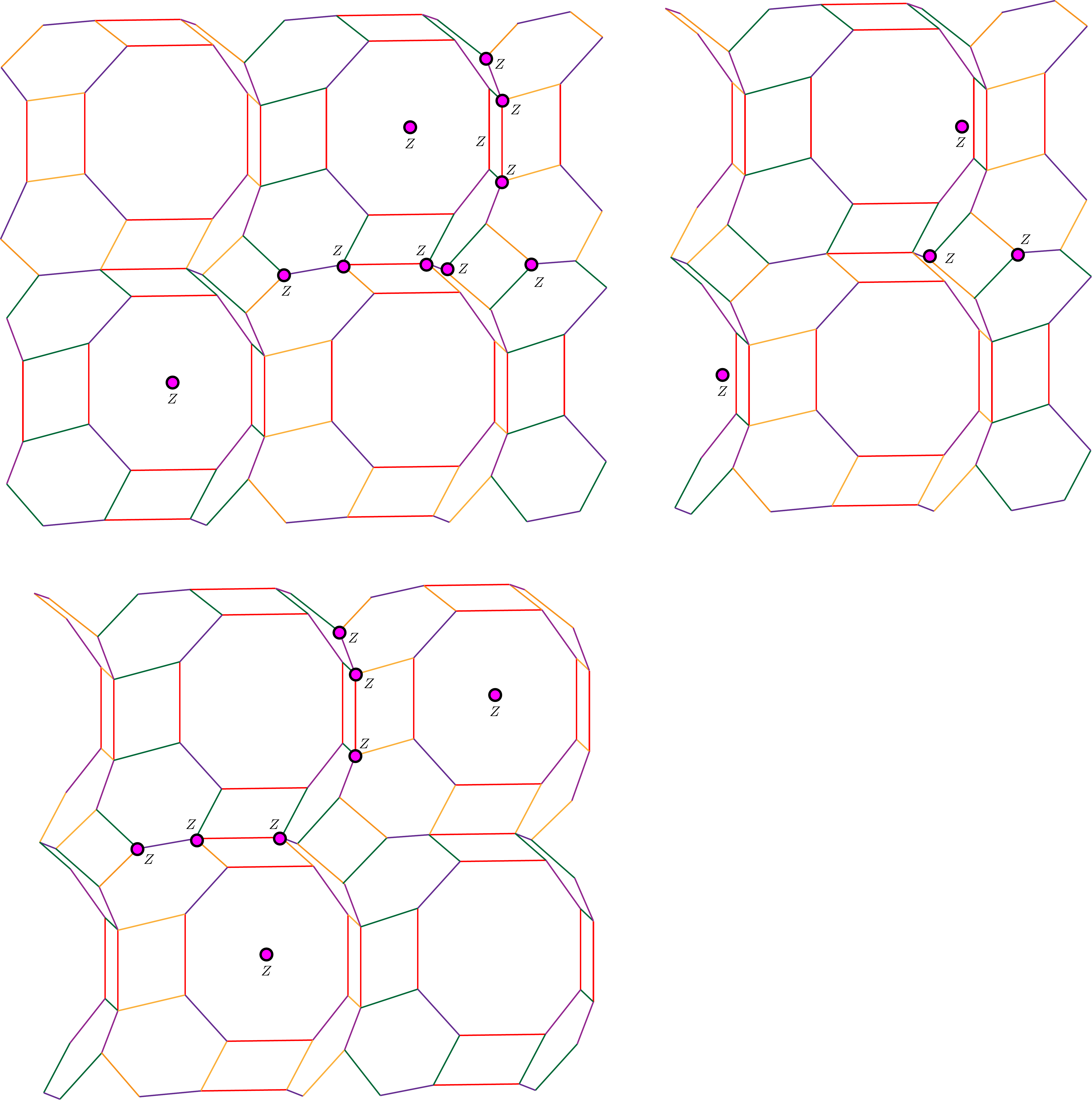}. \label{eq:bdry_51}
\end{align}
Other $Z$-stabilizers can be obtained by translations, and symmetric rotations. However, the translation must be taken in the diagonal directions to make sure the stabilizers are well defined. Similar to the cases of the folded boundaries, the $Z$-stabilizers are given by the product of two $Z$-stabilizers from different copies. As we argued in the previous subsection, any local product is in the stabilizer group. Therefore, the above figure is just one example that satisfies the mutual commutation condition.

We introduce two $X$-stabilizers on each truncated color cell. The $X$-stabilizers on the truncated purple cells are given by
\begin{align}
\adjincludegraphics[width=7cm,valign=c]{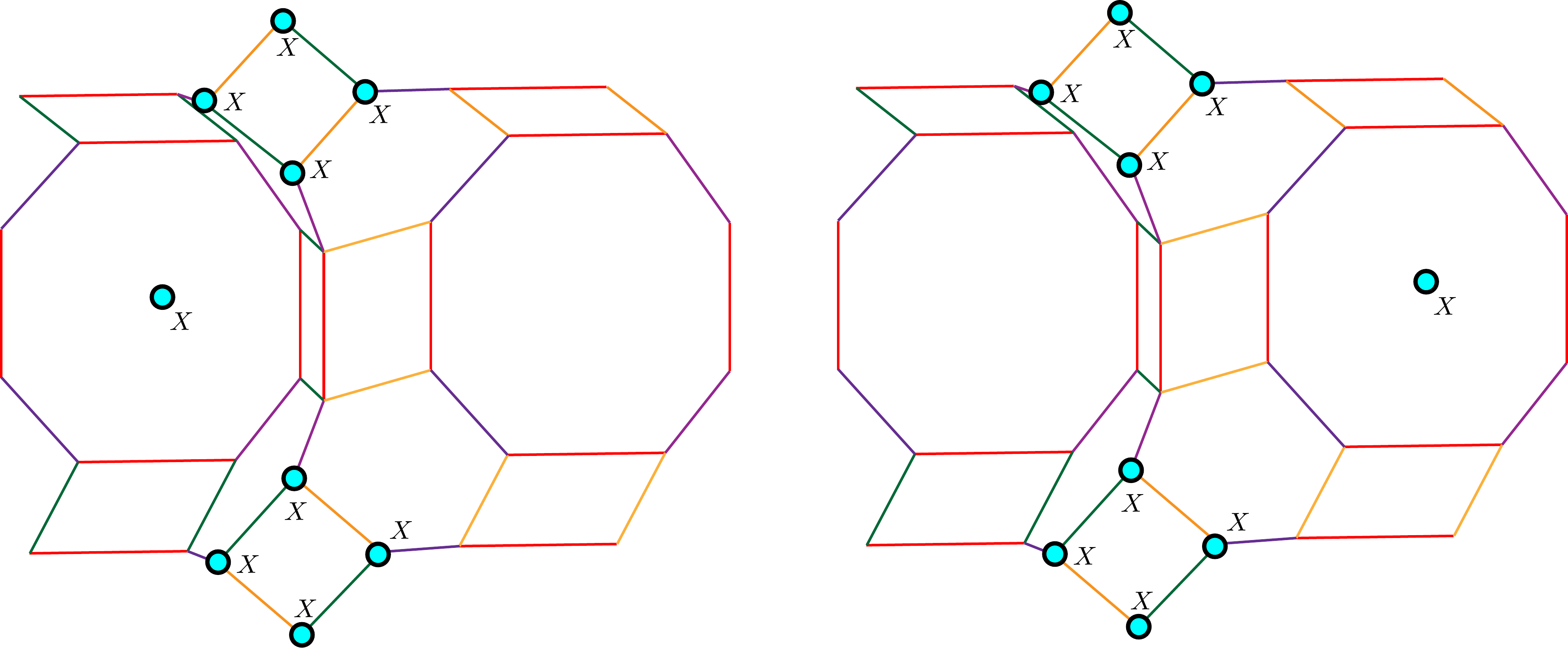}. \label{eq:bdry_41}
\end{align}
The $X$-stabilizers on the yellow cells are given by
\begin{align}
\adjincludegraphics[width=6cm,valign=c]{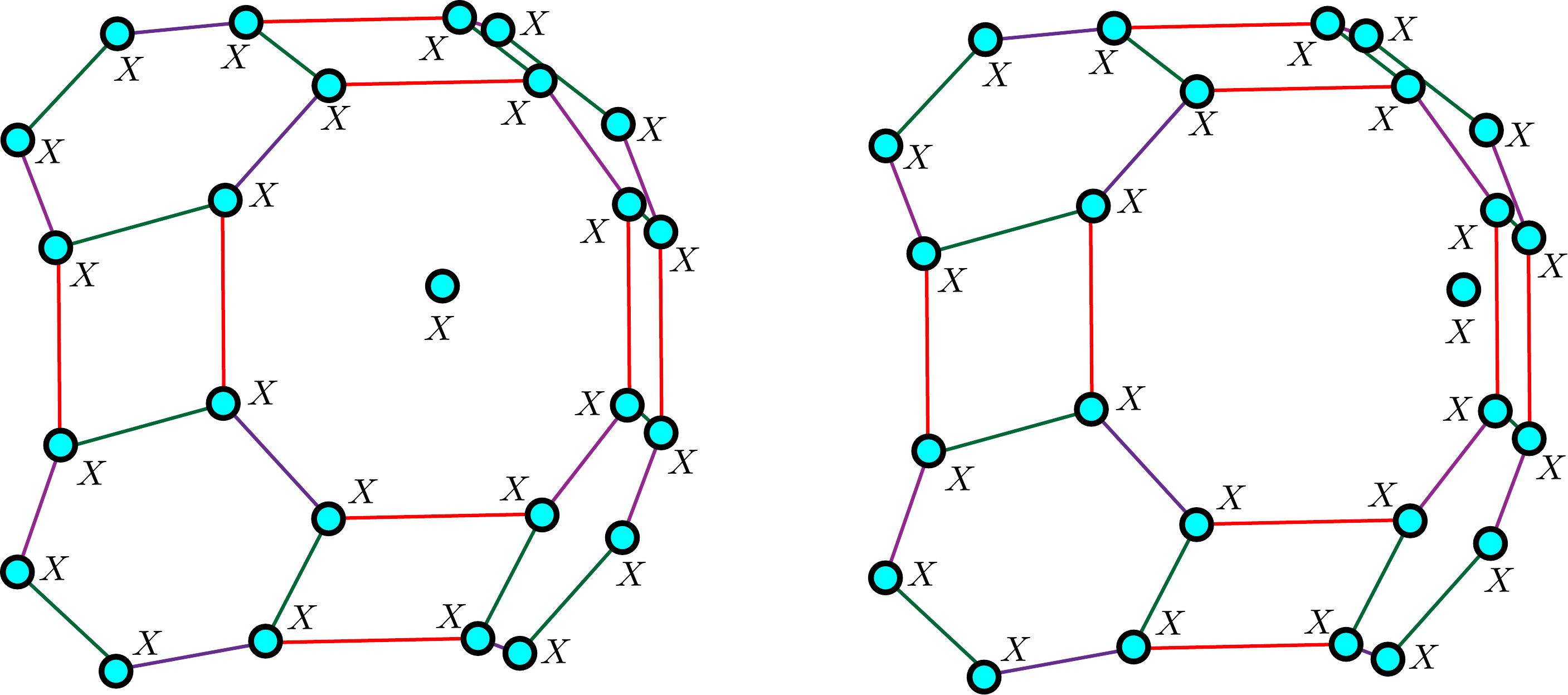}. \label{eq:bdry_42}
\end{align}
The $X$-stabilizers on the truncated green cells are given by
\begin{align}
\adjincludegraphics[width=6cm,valign=c]{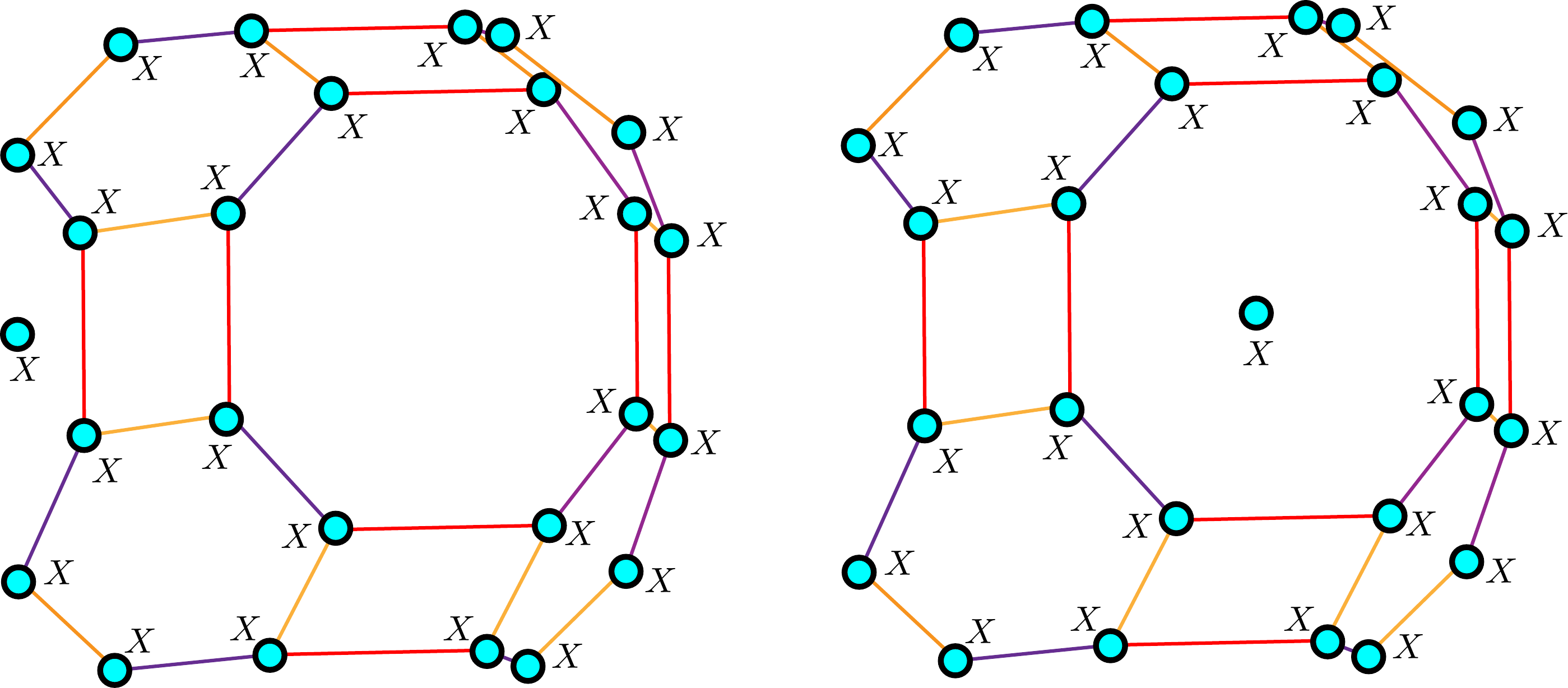}. \label{eq:bdry_43}
\end{align}
One can check those terms in Eq.~(\ref{eq:bdry_51}-\ref{eq:bdry_43}) are mutually commuting, and commute with the string or membrane operators that generate corresponding types of excitations which can condense on it. This boundary is equivalent to the $(e_1 e_2, e_2 e_3, m_1 m_2 m_3)$-boundary after we applying the unfolding unitaries. Namely,
\begin{align}
     \{y_{\mathbf{z}} g_{\mathbf{z}}, g_{\mathbf{z}} p_{\mathbf{z}}, pg_{\mathbf{x}} py_{\mathbf{x}} yg_{\mathbf{x}}\} \longleftrightarrow (e_1 e_2, e_2 e_3, m_1 m_2 m_3).
\end{align}
The total number of distinct types of $pg_{\mathbf{x}} py_{\mathbf{x}} yg_{\mathbf{x}}$-boundaries is 1. A 3D toric code version of this boundary is constructed in Appendix~\ref{sec:gauging_mmm}.

\subsubsection{The $y_{\mathbf{z}} g_{\mathbf{z}} p_{\mathbf{z}}$-boundary} \label{sec:e1e2e3}

The condensation set for the $y_{\mathbf{z}} g_{\mathbf{z}} p_{\mathbf{z}}$-boundary is given by $\{pg_{\mathbf{x}} py_{\mathbf{x}}, py_{\mathbf{x}} yg_{\mathbf{x}}, y_{\mathbf{z}} g_{\mathbf{z}} p_{\mathbf{z}}\}$. To construct the lattice Hamiltonian, we do the same trick as Eq.~\eqref{eq:larger_term} to match the number of stabilizers. However, in this case we couple one ancilla qubit per $X$-stabilizer set, as suggested in the 3D toric code model example in Appendix~\ref{sec:gauging_eee}. The ancilla qubits are placed as follows:
\begin{align}
\adjincludegraphics[width=6cm,valign=c]{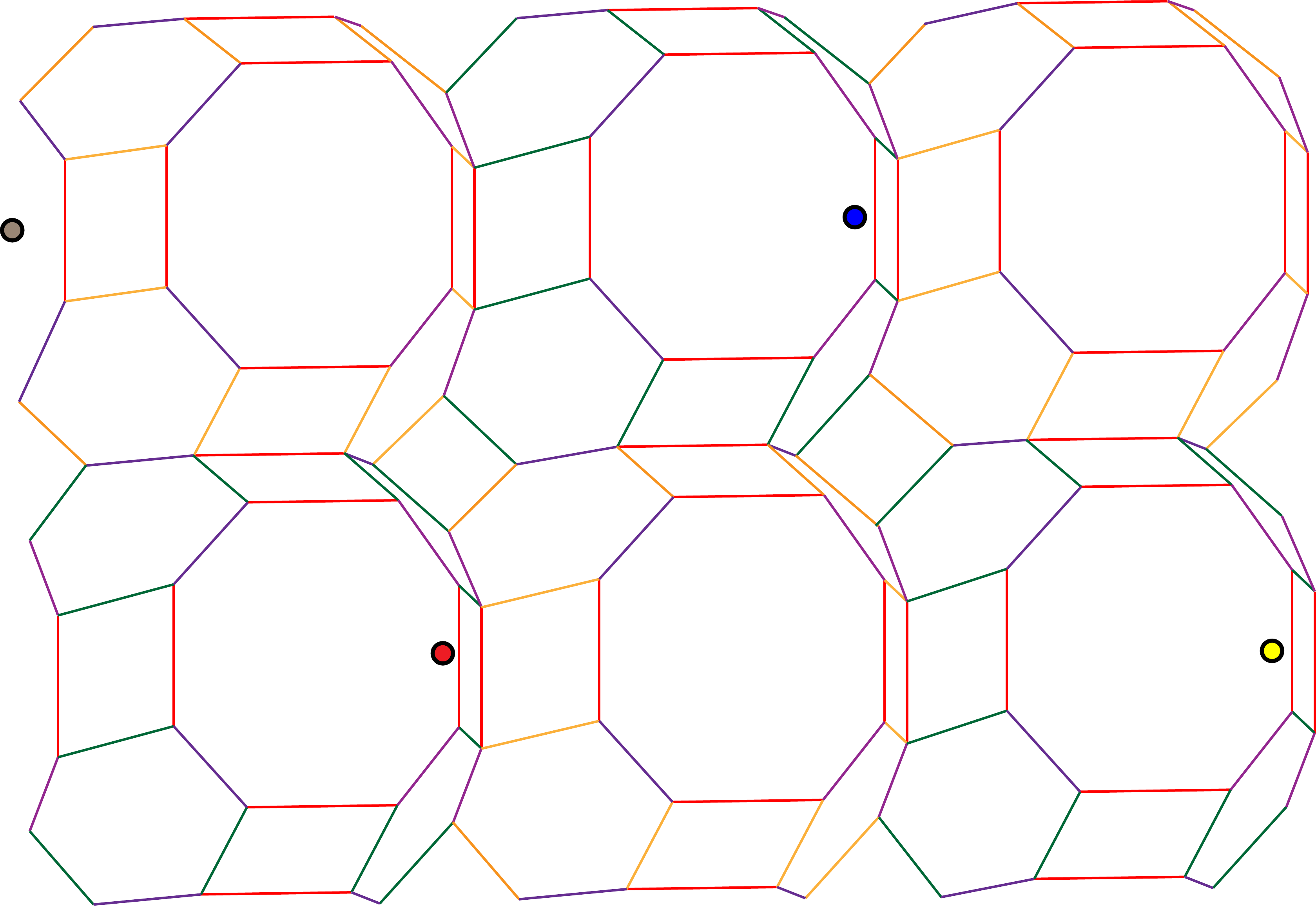}.
\end{align}
Different colors represent they are associated with different sets of $X$-stabilizers. The $Z$-stabilizers are defined as follows:
\begin{align}
\adjincludegraphics[width=3cm,valign=c]{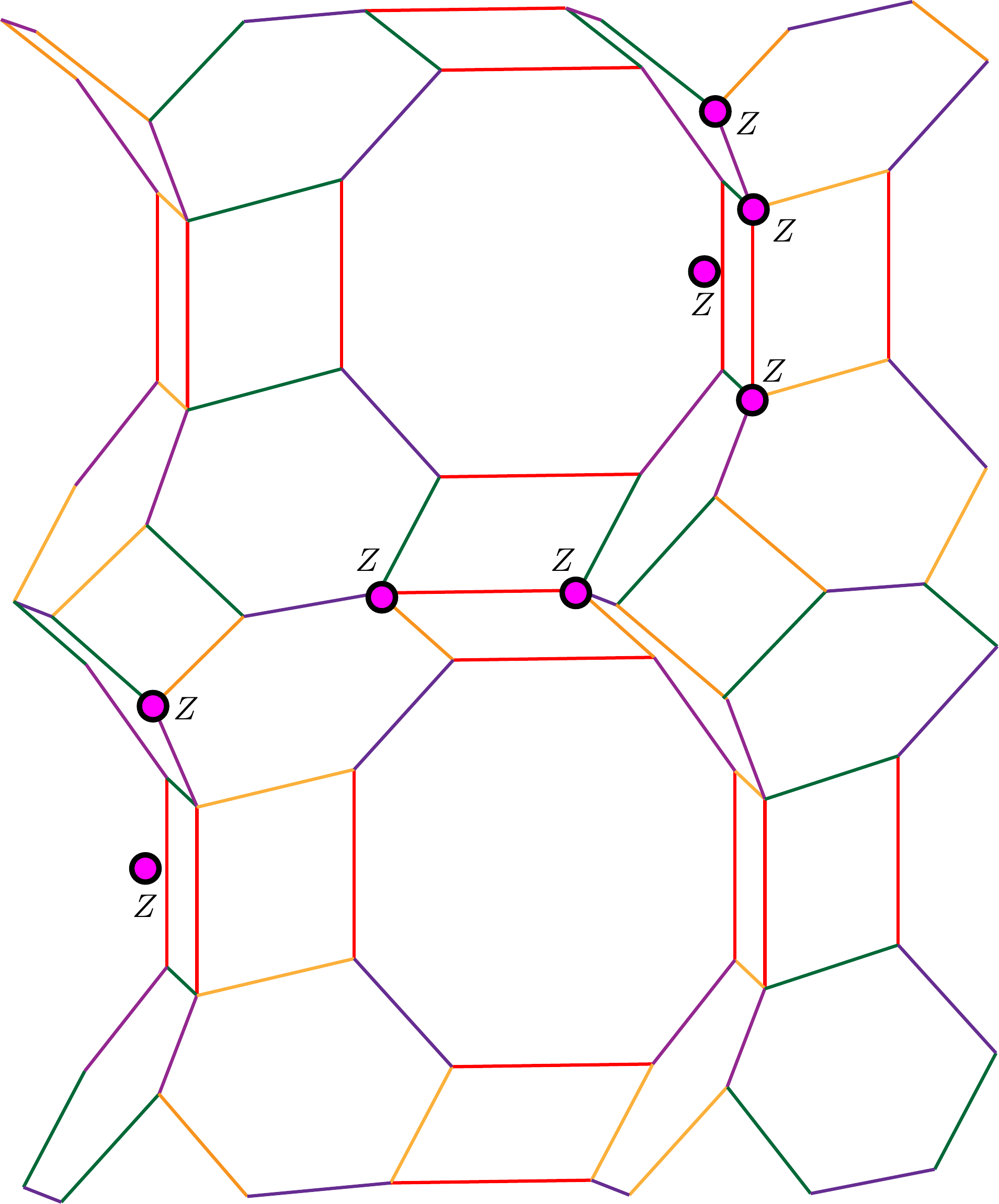},
\end{align}
which is the product of three types of $Z$-stabilizers and the ancilla qubits. Similar to the cases before, the $Z$-stabilizers are given by the product of two $Z$-stabilizers from different copies. In principle, any local product is in the stabilizer group. Therefore, the above figure is just one example that satisfies the mutual commutation condition.

The $X$-stabilizers on the truncated purple cells are given by
\begin{align}
\adjincludegraphics[width=2cm,valign=c]{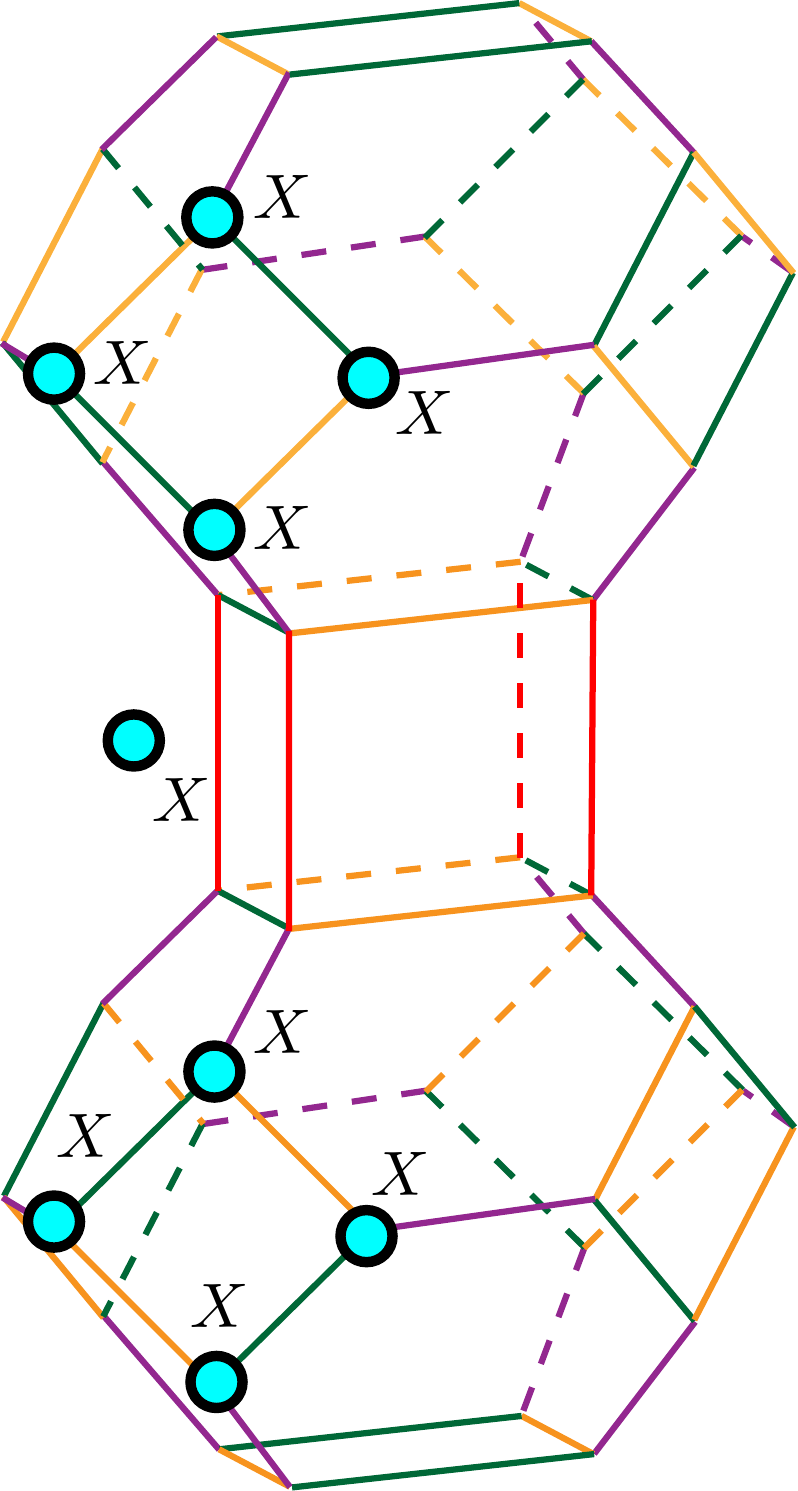}.
\end{align}
The $X$-stabilizers on the yellow and green cells are given by
\begin{align}
\adjincludegraphics[width=6cm,valign=c]{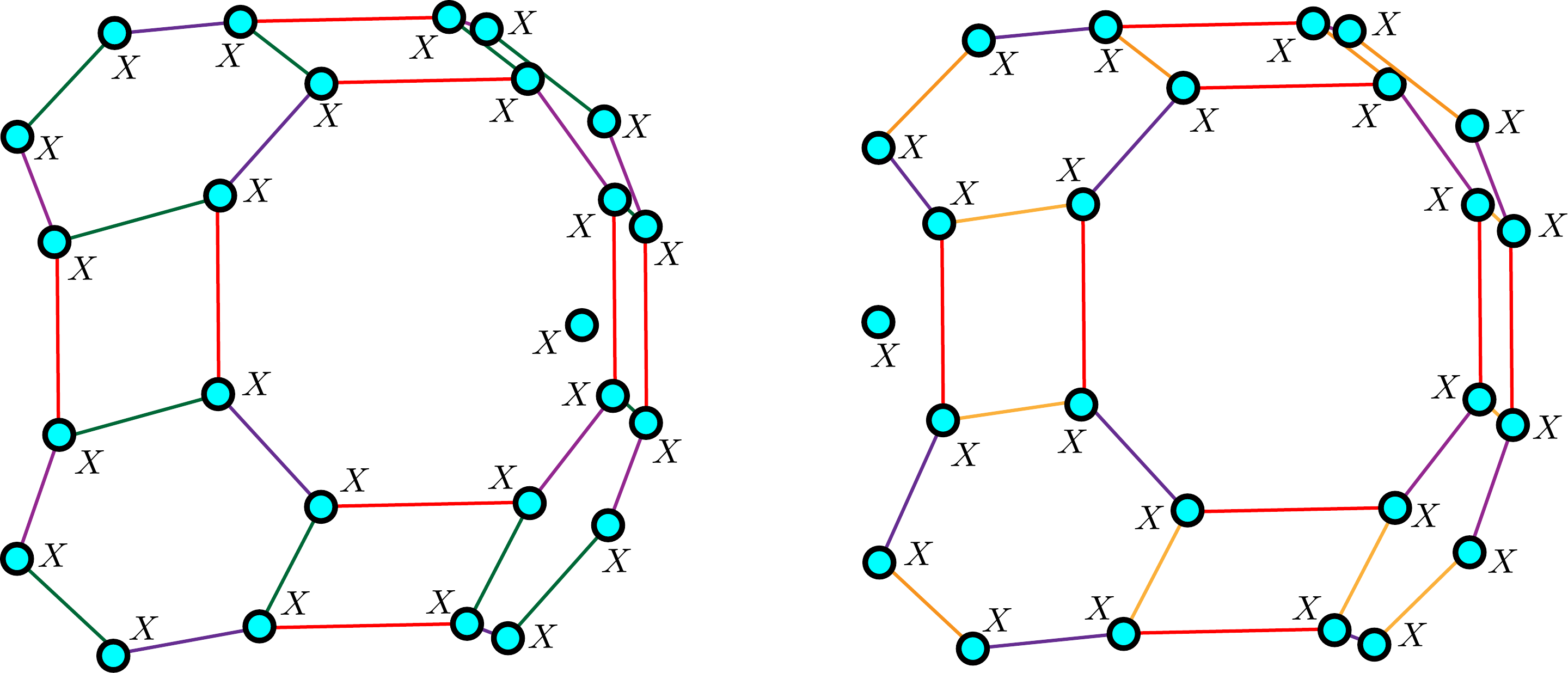}.
\end{align}
One can check those terms are mutually commuting, and commute with the string or membrane operators that generate corresponding types of excitations which can condense on it. This boundary is equivalent to the $(m_1 m_2, m_2 m_3, e_1 e_2 e_3)$-boundary after we applying the unfolding unitaries. Namely,
\begin{align}
      \{pg_{\mathbf{x}} py_{\mathbf{x}}, py_{\mathbf{x}} yg_{\mathbf{x}}, y_{\mathbf{z}} g_{\mathbf{z}} p_{\mathbf{z}}\} \longleftrightarrow (m_1 m_2, m_2 m_3, e_1 e_2 e_3)
\end{align}
The total number of distinct types of $y_{\mathbf{z}} g_{\mathbf{z}} p_{\mathbf{z}}$-boundaries is 1. A 3D toric code version of this boundary is constructed in Appendix~\ref{sec:gauging_eee}.

\subsubsection{Other boundaries with composite condensations} \label{sec:other_boundary_color}

In general, boundaries with composite condensations of $X$ and $Z$ types can also exist, provided they originate from different copies. Therefore, for the completeness of our discussion, we list all other types of boundaries and their counts for the readers' reference. The boundaries and their corresponding counts are as follows:
\begin{equation}
    \begin{aligned}
        &\{pg_{\mathbf{x}} g_{\mathbf{z}} p_{\mathbf{z}}, py_{\mathbf{x}} y_{\mathbf{z}} p_{\mathbf{z}}, yg_{\mathbf{x}} y_{\mathbf{z}} g_{\mathbf{z}}\}, \quad 1, \\
        &\{pg_{\mathbf{x}} py_{\mathbf{x}} y_{\mathbf{z}} g_{\mathbf{z}}, py_{\mathbf{x}} yg_{\mathbf{x}} y_{\mathbf{z}}, y_{\mathbf{z}} g_{\mathbf{z}} p_{\mathbf{z}}\}, \quad 1. \\
        &\{(\mathbf{c}_i)_{\mathbf{z}}, (\mathbf{c}_i \mathbf{c}_k)_{\mathbf{x}} (\mathbf{c}_k)_{\mathbf{z}}, (\mathbf{c}_i \mathbf{c}_j)_{\mathbf{x}}(\mathbf{c}_j)_{\mathbf{z}}\}, \quad 3, \\
        &\{(\mathbf{c}_j \mathbf{c}_k)_{\mathbf{x}} (\mathbf{c}_j)_{\mathbf{z}}, (\mathbf{c}_i \mathbf{c}_k)_{\mathbf{x}} (\mathbf{c}_i)_{\mathbf{z}}, (\mathbf{c}_i \mathbf{c}_j)_{\mathbf{x}}\}, \quad 3, \\
        &\{(\mathbf{c}_j \mathbf{c}_k)_{\mathbf{x}} (\mathbf{c}_j)_{\mathbf{z} }(\mathbf{c}_k)_{\mathbf{z}}, (\mathbf{c}_i \mathbf{c}_k)_{\mathbf{x}} (\mathbf{c}_i)_{\mathbf{z}}, (\mathbf{c}_i \mathbf{c}_j)_{\mathbf{x}}(\mathbf{c}_i)_{\mathbf{z}}\}, \quad 3, \\
        &\{(\mathbf{c}_i)_{\mathbf{z}} (\mathbf{c}_j)_{\mathbf{z}}, (\mathbf{c}_j \mathbf{c}_k)_{\mathbf{x}}(\mathbf{c}_i \mathbf{c}_k)_{\mathbf{x}}(\mathbf{c}_k)_{\mathbf{z}}, (\mathbf{c}_i \mathbf{c}_j)_{\mathbf{x}}(\mathbf{c}_j)_{\mathbf{z}}\}, \quad 3.
    \end{aligned}
\end{equation}

\subsubsection{Summary of the elementary boundary types}

In this section, we study the {\it elementary boundaries} of the 3D color code which can be classified by the Lagrangian subgroup. We explicitly construct the lattice model for the $X$-boundary (1), the $Z$-boundary (1), the color boundaries (3), the $\{(\mathbf{c}_i)_{\mathbf{z}}, (\mathbf{c}_j)_{\mathbf{z}}, (\mathbf{c}_i\mathbf{c}_j)_{\mathbf{x}}\}$-boundaries (3), the folded boundaries (6), the $pg_{\mathbf{x}} py_{\mathbf{x}} yg_{\mathbf{x}}$-boundary (1) and the $y_{\mathbf{z}} g_{\mathbf{z}} p_{\mathbf{z}}$-boundary (1). The number represents the distinct types of certain boundaries. Furthermore, we listed all the other elementary boundaries that are classified by the Lagrangian subgroup, which contains the composite of $Z$ and $X$ type excitations as condensations. The number of distinct types of those boundaries is 14. Therefore, the total number of distinct types of elementray boundaries of the 3D color code that are classified by the Lagrangian subgroup is 30.

\section{Boundaries from attaching domain walls} \label{sec:magic}

It has been studied that for the code space of $D$-dimensional topological stabilizer codes, the logical gate set generated by encoded gates implementable by constant-depth circuits is contained in the $D^\text{th}$ level of the {\it Clifford hierarchy}~\cite{gottesman1999quantum,haah2012logical,bravyi2013classification}. When $D = 3$, the logical gates can be applied are in the third level of Clifford hierarchy, such as the $\overline{T}$ gate~\cite{bombin2015gauge} or the $\overline{\mathrm{CCZ}}$ gate~\cite{kubica2015unfolding,vasmer2019three}.

In Ref.~\cite{zhu2022topological}, the authors showed that for a specific alignment of three copies of the 3D toric codes with the $(m_1, m_2, m_3)$-boundary, a transversal-CCZ gate maps this boundary to an exotic boundary which is different from either the usual $e$ or $m$ type boundaries. Neither $e$ nor $m$ type of topological excitations can condense on this boundary. This suggests that this type of boundary goes beyond the standard classification of gapped boundaries based on the Lagrangian subgroup. Specifically, we have
\begin{align}
   \widetilde{\mathrm{CCZ}}: (m_1, m_2, m_3) \to (m_1 s_{2,3}^{(2)}, m_2 s_{3,1}^{(2)}, m_3 s_{1,2}^{(2)}).
\end{align}
$s_{i,j}^{(2)}$ is a line defect, whose definition and detailed discussion can be found in Ref.~[13].
This boundary is not equivalent to the original all-smooth boundary.

The structure of this section is organized as follows. In subsection~\ref{sec:sweeping}, we review the domain wall sweeping picture for implementing (generalized) global symmetries. In subsection~\ref{sec:TQFT}, we discuss the classification of boundaries in $\mathbb{Z}_2^3$ gauge theories using the language of TQFT, which follows Ref.~\cite{zhu2022topological}. In subsection~\ref{sec:transversal}, we construct the explicit form of transversal gates on the 3D color code defined on our chosen lattice, and analyze their transformation under the application of transversal $T$ and $S$ gates. In subsection~\ref{sec:logical}, we study the action of logical operators on the $T$-conjugated ground state. In subsection~\ref{sec:magicboundary}, we construct the boundary Hamiltonian for the magic boundary and study its exotic properties. In subsection~\ref{sec:elementary}, we construct the boundary Hamiltonians for other types of elementary boundaries. Finally, in subsection~\ref{sec:nested}, we study the nested boundaries, which are boundaries that have codimension 1 condensation defects.

\subsection{Domain wall sweeping}
\label{sec:sweeping}

It has been understood that transversal gates, or more generally constant-depth local circuits, are equivalent to sweeping gapped domain walls across the system~\cite{yoshida2015topological,yoshida2016topological,yoshida2017gapped,webster2018locality, zhu2020quantum, zhu2022topological}. Consider a transversal gate $\widetilde{U}$. For $\widetilde{U}$ to also be a logical gate (denoted by $\overline{U}$), it should satisfy the following condition,
\begin{align}
    \widetilde{U}: \mathcal{H}_C \to \mathcal{H}_C, \label{eq:code_space}
\end{align}
in which $\mathcal{H}_C$ is the code space. The parent Hamiltonian of the system is the following.
\begin{align}
    H = H^{bulk} + H^{boundary},
\end{align}
where $H^{bulk}$ and $H^{boundary}$ are the bulk and boundary Hamiltonians. To satisfy the condition in Eq.~\eqref{eq:code_space}, the transversal gate $\widetilde{U}$ should satisfy the following condition
\begin{align}
    P_C \widetilde{U} H \widetilde{U}^{\dagger} P_C = P_C H P_C,
\end{align}
where $P_C$ is the projector onto the code space $\mathcal{H}_C$. For a translational invariant Hamiltonian, the above condition becomes
\begin{align}
    P_C \widetilde{U} H^{bulk} \widetilde{U}^{\dagger} P_C = P_C H^{bulk} P_C. \label{eq:bulk_invariant}
\end{align}
In this case, we say $\widetilde{U}$ corresponds to a certain type of \textit{emergent symmetries} for the  system, including 0-form symmetries which act on the entire system and higher-form ($q$-form) symmetries which act on a codimension-$q$ submanifold\footnote{Here we only consider the spatial dimensions}.  Note that a codimension-$q$ submanifold of a $d$-dimensional system corresponds to a dimension-$(d-q)$ submanifold.  An important property of a higher-form symmetry operator is that it is equivalent up to local deformation as long as its support, the codimension-$q$ submanifold, belongs to the same homotopy class.

Nevertheless, the existence of boundaries breaks the translational symmetry. To satisfy the condition in Eq.~\eqref{eq:code_space}, we further require
\begin{align}
    P_C \widetilde{U} H^{boundary} \widetilde{U}^{\dagger} P_C = P_C H^{boundary} P_C.
\end{align}
We denote the code space of the boundary of the system as $\mathfrak{B}$, defined as the code space stabilized by the Hamiltonian terms supported entirely on the boundary. To satisfy the condition in Eq.~\eqref{eq:code_space}, the transversal gate $\widetilde{U}$ must keep the boundary code space invariant. We have
\begin{align}
    \widetilde{U}: \mathfrak{B} \to \mathfrak{B}. \label{eq:code_space_boundary}
\end{align}

\begin{figure}
    \centering
    \includegraphics[width = 7cm]{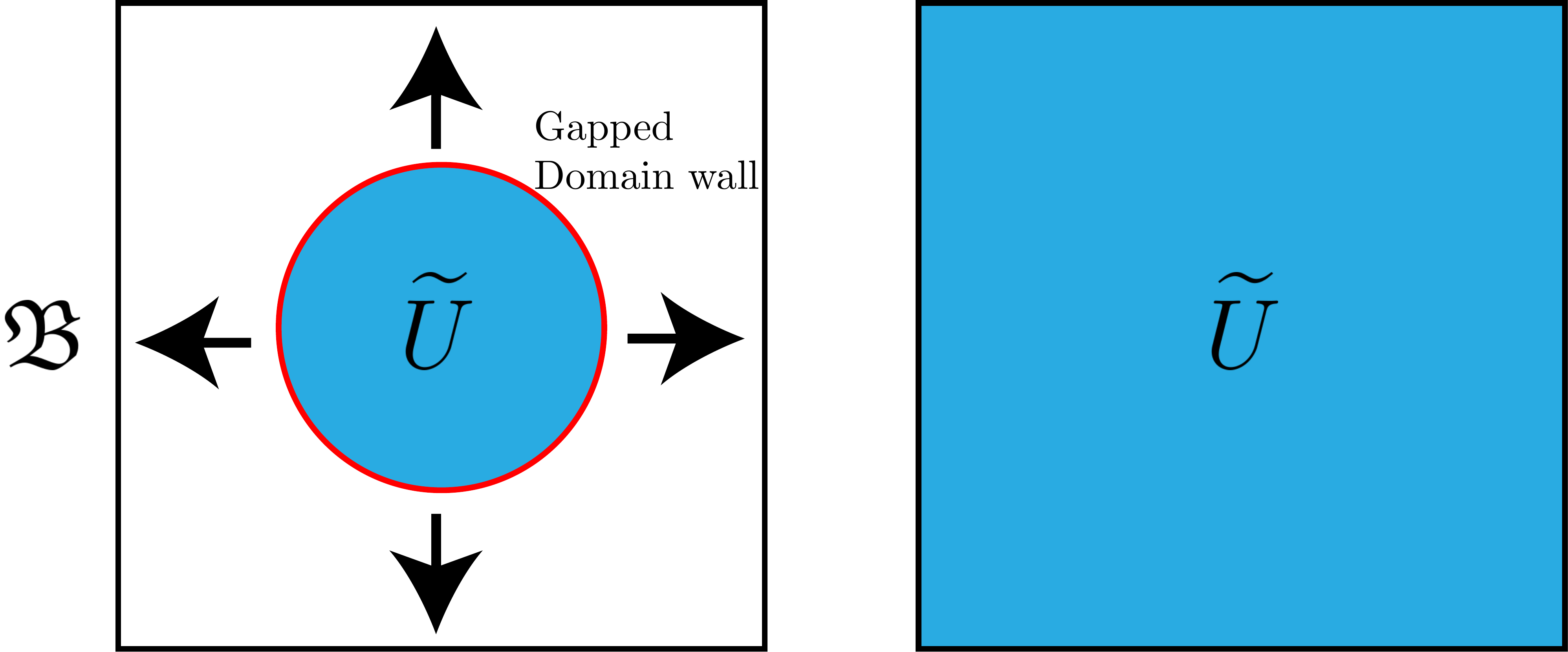}
    \caption{Illustration of domain wall sweeping. Consider a unitary operator acting transversally on a region $\mathcal{R}$ (The blue-colored region in the left figure). A gapped domain wall exists between the region $\mathcal{R}$ and its complement. If the transversal $\widetilde{U}$ operation is a logical gate of the code space, then as we sweep the domain wall across the system, the gapped boundary can condense on the boundary $\mathfrak{B}$. }
    \label{fig:sweeping}
\end{figure}
\begin{figure}
    \centering
    \includegraphics[width = 7cm]{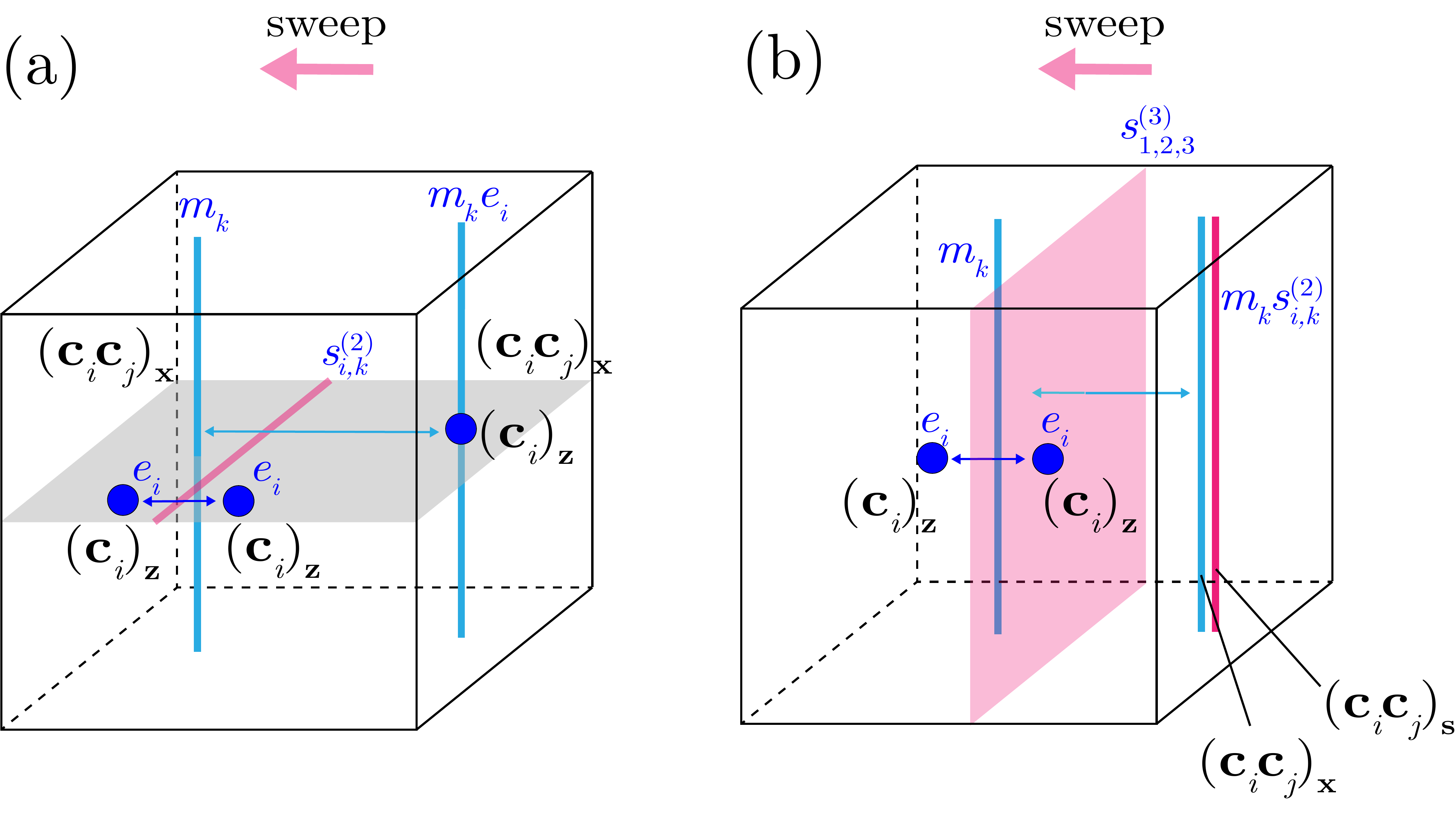}
    \caption{An illustration of how topological excitations of 3D color code map upon traversing the $S$-domain wall. We use two equivalent languages, the 3D color-code language and the 3+1D TQFT/ toric-code language, to illustrate these phenomena, and use black and blue colored fonts, respectively. (a) When sweeping the $S$-domain wall across a non-contractible codimension-1 submanifold labeled by colors $\mathbf{c}_i$ and $\mathbf{c}_j$, as depicted by the gray surface, electric charges remain unchanged, while magnetic flux loops become attached to electric charges as the $S$-domain wall is traversed. (b) When sweeping the $T$-domain wall across the entire system, electric charges remain the same, while the magnetic flux loops get attached to an $S$-domain wall as the $T$-domain wall is traversed. We denote this new object as $(\mathbf{c}_i \mathbf{c}_j)_{\mathbf{xs}}$.}
    \label{fig:two_domain_walls}
\end{figure}

The above statement can be alternatively interpreted by domain-wall sweeping. Consider $\widetilde{U}_{\mathcal{R}}$ to be the transversal unitary operation on region $\mathcal{R}$. As illustrated in Fig.~\ref{fig:sweeping}, one can show that there exists a domain wall along the boundary of the region $\mathcal{R}$. The transversal gate $\widetilde{U}$ is a logical gate on the code space if the corresponding domain wall can condense on the boundaries.

\subsection{An overview of domain walls and boundaries using the TQFT description} \label{sec:TQFT}

In Ref.~\cite{yoshida2015topological, yoshida2017gapped, webster2018locality, zhu2022topological, barkeshli2023codimension, zhu2023non}, it has been shown that there exists two types of emergent symmetries $\widetilde{U}$ and the corresponding domain walls in three copies of 3D toric codes, which is equivalent to a $\mathbb{Z}_2 \times \mathbb{Z}_2 \times \mathbb{Z}_2$ gauge theory in the TQFT language \cite{zhu2022topological, barkeshli2023codimension,  zhu2023non}.  

The first type of domain wall is the codimension-1 (2D) CCZ-domain wall in the context of 3D toric codes.  Sweeping the CCZ-domain wall across the whole system corresponds to applying a transversal-CCZ gate on all three copies of the toric codes, which gives rise to a 0-form emergent symmetry acting on the entire system.  In the TQFT language, we denote such a domain wall by $s_{1,2,3}^{(3)}$, where the susbscript means it couples toric-code copy 1, 2 and 3 while the superscript suggests the corresponding logical gate is in the third-level of the Clifford hierarchy.  As pointed out in Refs.~\cite{yoshida2015topological, yoshida2016topological, yoshida2017gapped, barkeshli2023codimension, barkeshli2022higher, zhu2023non}, $s_{1,2,3}^{(3)}$ is a codimension-1 gauged symmetry-protected topological (SPT) defect generated by decorating a (3+1)D invertible phase with a (2+1)D $\mathbb{Z}_2 \times \mathbb{Z}_2 \times \mathbb{Z}_2$ SPT (also called a 2D cluster state) and then gauging the global $\mathbb{Z}_2 \times \mathbb{Z}_2 \times \mathbb{Z}_2$ symmetry\footnote{Here, (3+1)D and (2+1)D refer to the space-time dimension.}.  

The second type is a codimension-2 (1D) CZ-domain wall in the toric-code context.  Sweeping the CZ-domain wall on a codimension-1 (2D) submanifold corresponds to applying a transversal-CZ gate between two copies of the toric codes, which gives rise to a 1-form emergent symmetry acting only on this submanifold. In the TQFT language, we denote such a domain wall by $s^{(2)}_{i,j}$, where the subscript means it couples toric-code copy $i$ and $j$ while the subscripts suggest that the corresponding logical gate is in the second-level of the Clifford hierarchy.  As shown in Refs.~\cite{yoshida2015topological, yoshida2016topological, yoshida2017gapped, barkeshli2023codimension, barkeshli2022higher}, $s^{(2)}_{i,j}$ is a codimension-2 gauged SPT defect generated by decorating a (3+1)D invertible phase with a (1+1)D $\mathbb{Z}_2 \times \mathbb{Z}_2 $ SPT (also called a 1D cluster state) and then gauging the global $\mathbb{Z}_2 \times \mathbb{Z}_2 $ symmetry. 

Note that both $s_{1,2,3}^{(3)}$ and $s^{(2)}_{i,j}$ are invertible defects with them being their own inverse and satisfies the $\mathbb{Z}_2$ fusion rules: 
\begin{equation}
s^{(3)}_{1,2,3}\times s^{(3)}_{1,2,3} = \mathbb{I}, \quad s^{(2)}_{i,j}\times s^{(2)}_{i,j} = \mathbb{I}.
\end{equation}

When the excitations in the model pass the codimension-2 domain wall  $s^{(2)}_{i,j}$, they are transformed as
\begin{align}\label{eq:s2_transform}
s^{(2)}_{i,j}: m_i \rightarrow m_i e_j, \  m_j \rightarrow e_i m_j,
\  e_i \rightarrow e_i, \  e_j \rightarrow e_j,
\end{align} 
where we have expressed transformation as a mapping of the domain wall $s^{(2)}_{i,j}$ on the excitations.  This is also illustrated in Fig.~\ref{fig:two_domain_walls}(a).  

When the excitations pass the codimension-1 domain wall $s^{(3)}_{1,2,3}$, they are transformed as
\begin{align}\label{eq:s3_transform}
\nonumber s^{(3)}_{1,2,3}:& m_1 \rightarrow m_1 s^{(2)}_{2,3}, m_2 \rightarrow m_2 s^{(2)}_{3,1}, m_3 \rightarrow m_3 s^{(2)}_{1,2}, \\
&  e_1  \rightarrow e_1,   e_2  \rightarrow e_2,  e_3  \rightarrow e_3.
\end{align}
As we can see, both  $s^{(2)}_{i,j}$ and $s^{(3)}_{1,2,3}$ induce no transformation on the $e$-particles. This is due to the fact that both $s^{(2)}_{i,j}$ and $s^{(3)}_{1,2,3}$ can be viewed as a condensation defect of $e$ particles, as pointed out by Ref.~\cite{barkeshli2023codimension}.  On the other hand,  $s^{(2)}_{i,j}$ attaches an additional $e_j$-particles to the $m_i$-flux, while $s^{(3)}_{1,2,3}$ attaches a domain wall  $s^{(2)}_{j,k}$ to the $m_i$-flux.

\subsubsection{The $(e_i, m_j, m_k)$-boundary}

Now we investigate the condensation properties of these domain walls on the boundaries.  In Ref.~\cite{zhu2022topological}, it has been pointed out that both $s^{(2)}_{i,j}$ and $s^{(3)}_{1,2,3}$ can condense on the $(e_i, m_j, m_k)$-boundary. The condensation properties of  $s^{(2)}_{i,j}$ can be shown as
\begin{align}\label{eq:condensation_equivalence}
\nonumber s^{(2)}_{i,j}:(e_i, m_j, m_k) &\rightarrow (e_i, e_i m_j, e_k) \equiv (e_i, m_j, m_k) \\
s^{(2)}_{i,k}:(e_i, m_j, m_k) &\rightarrow (e_i , m_j, e_i m_k) \equiv (e_i, m_j, m_k),
\end{align}
as illustrated in Fig.~\ref{fig:s_domain_wall_2}(b) in the later subsection. Here, the `$\equiv$' on the right side of the above equations essentially corresponds to a different basis choice of the same condensations, which can also be considered as a different choice of the generator set of the same Lagrangian subgroup.  In particular, we have used the fusion rule 
$m_j e_i \times e_i = m_j$.  Since $s^{(2)}_{i,j}$ and $s^{(2)}_{i,k}$ condense on the $(e_i, m_j, m_k)$-boundary, we can introduce the redundant notation for this boundary $(e_i, m_j, m_k)\equiv  (e_i, m_j, m_k, s^{(2)}_{i,j}, s^{(2)}_{i,k})$ by adding the codimension-2 domain walls also into the condensations.  

The condensation properties of  $s^{(3)}_{1,2,3}$  can be shown as
\begin{align}\label{eq:CCZ_condensation}
\nonumber &s^{(3)}_{1,2,3} : \  (e_i, m_j, m_k)\equiv  (e_i, m_j, m_k, s^{(2)}_{i,j}, s^{(2)}_{i,k})  \\
\nonumber 		    \rightarrow & (e_i, m_j s^{(2)}_{i,k}, m_k s^{(2)}_{i,j}, s^{(2)}_{i,j}, s^{(2)}_{i,k}) \equiv (e_i, m_j, m_k, s^{(2)}_{i,j}, s^{(2)}_{i,k})  \\
 		    \equiv &  (e_i, m_j, m_k), 
\end{align}
where the `$\equiv$' in the second line comes from the fusion rule is due to the fact that two invertible domain walls fuse into vacuum $s^{(2)}_{i,k}\times s^{(2)}_{i,k} = \mathbb{I}$ and hence $m_j s^{(2)}_{i,k}$$\times$$s^{(2)}_{i,k} = m_j$.   Therefore, the $s^{(3)}_{1,2,3}$-domain wall also condenses on the $(e_i, m_j, m_k)$-boundary.

With the development of the TQFT description in Ref.~\cite{barkeshli2023codimension}, we can now also understand the above condensation properties, which are deduced from simple algebraic manipulations, by using the perspective of the $\mathbb{Z}_2^3$ gauge theory. One can obtain the $\mathbb{Z}_2$ gauge theory (toric code) by gauging the $\mathbb{Z}_2$ global symmetry of an invertible paramgetic phase of a transversal Ising model  \cite{barkeshli2023codimension, kubica2018ungauging, tantivasadakarn2024long}. The $e$-boundary in the ungauged model (Ising model) spontaneously breaks the $\mathbb{Z}_2$ global  symmetry, while the $m$-boundary preserves this symmetry. This asymmetric boundary properties  comes from the fact that $e$ is the $\mathbb{Z}_2$ symmetry charge in the  $\mathbb{Z}_2$ gauge theory, while $m$ is the symmetry flux. This properties caries over to the $\mathbb{Z}_2^3$ gauge theory, which is just three copies of $\mathbb{Z}_2$ gauge theories (toric codes).   Since  $s^{(2)}_{i,j}$ is a gauged symmetry-protected topological (SPT) defect according to Refs.~\cite{yoshida2015topological, yoshida2016topological, yoshida2017gapped, barkeshli2023codimension, barkeshli2022higher, zhu2023non}, the corresponding $\mathbb{Z}_2 \times \mathbb{Z}_2$ symmetry will be spontaneously broken when they reach a boundary containing a copy of $e_i$- or $e_j$-boundary, which means $s^{(2)}_{i,j}$ is trivialized and hence condenses on the boundary. This is the reason that $s^{(2)}_{i,j}$ and $s^{(2)}_{i,k}$ condense on the $(e_i, m_j, m_k)$-boundary. Similarly, $s^{(3)}_{1,2,3}$ is also a gauged SPT defect, the corresponding $\mathbb{Z}_2 \times \mathbb{Z}_2 \times \mathbb{Z}_2$ symmetry will be spontaneously broken when it reaches a boundary containing any $e_1$, $e_2$ or $e_3$ boundary.   This is the reason why $s^{(3)}_{1,2,3}$ can condense on the $(e_i, m_j, m_k)$-boundary.

In addition, we have the following mapping:
\begin{align}
    s_{j,k}^{(2)}: (e_i, m_j, m_k) \to (e_i, m_j e_k, m_k e_j). \label{eq:sij_2}
\end{align}
Here, a codimension-2 (1D) boundary $(e_i, m_j e_k, m_k e_j)$ is produced at the interface of two codiemsnion-1 (2D) boundaries $(e_i, m_j, m_k)$. We dub this type of boundaries as the {\it nested boundaries}, as illustrated in Fig.~\ref{fig:s_domain_wall_2}(a) in the later subsection.

From the TQFT perspective, the $\mathbb{Z}_2 \times \mathbb{Z}_2$ symmetry of the $j^{\mathrm{th}}$ and the $k^{\mathrm{th}}$ copies is preserved. Therefore, the $s_{j,k}^{(2)}$ domain wall cannot condense on this boundary.

\subsubsection{The $(e_i, e_j, m_k)$- and $(e_1, e_2, e_3)$-boundaries}

Besides the example of $(e_i, m_j, m_k)$ discussed in Ref.~\cite{zhu2022topological}, we also consider the $(e_i, e_j, m_k)$ and the $(e_1, e_2, e_3)$ boundaries in this paper. According to the above discussion of the spontaneous $\mathbb{Z}_2$ symmetry breaking of the $e$-boundary, we can conclude that $s^{(2)}_{i,j}$, $s^{(2)}_{i,k}$, $s^{(2)}_{j,k}$ and $s^{(3)}_{1,2,3}$ will condense on the $(e_i, e_j, m_k)$ and $(e_1, e_2, e_3)$ boundaries. One can check the condensation properties of the $(e_i, e_j, m_k)$ with the algebraic manipulations shown before:
\begin{align}
\nonumber &s^{(2)}_{i,j}:(e_i, e_j, m_k) \rightarrow (e_i, e_j, m_k)\\
\nonumber & s^{(2)}_{i,k}:(e_i, e_j, m_k) \rightarrow (e_i, e_j, e_i m_k) \equiv (e_i, e_j, m_k)\\
\nonumber & s^{(2)}_{j,k}:(e_i, e_j, m_k) \rightarrow  (e_i, e_j, e_j m_k) \equiv (e_i, e_j, m_k)\\
\nonumber &s^{(3)}_{1,2,3} : \  (e_i, e_j, m_k)\equiv  (e_i, e_j, m_k, s^{(2)}_{i,j})  \\
\nonumber 		    \rightarrow & (e_i, e_j, m_k s^{(2)}_{i,j},  s^{(2)}_{i,j}) \equiv  (e_i, e_j, m_k, s^{(2)}_{i,j})  \\
 		    \equiv &  (e_i, e_j, m_k). \label{eq:eem}
\end{align}
We illustrate the condensation property of $s^{(2)}_{i,j}$ in Fig.~\ref{fig:s_domain_wall_2}(c) in the later subsection.

For the $(e_1, e_2, e_3)$-boundaries, the condensation follows obviously from Eq.~\eqref{eq:s2_transform} and Eq.~\eqref{eq:s3_transform} since both $s^{(2)}_{i,j}$ and $s^{(3)}_{1,2,3}$ act trivially on the $e$-particles.

\subsubsection{The folded boundaries} \label{sec:folded_boundary_TQFT}

In this subsection we investigate the condensation properties of these domain walls on the folded boundaries. The condensation properties of $s_{i,j}^{(2)}$ on the $(e_i e_j, m_i m_j, e_k)$-boundary can be shown as
\begin{equation}
    \begin{aligned}
        s_{i,j}^{(2)}: (e_i e_j, m_i m_j, e_k) &\to (e_i e_j, m_i m_j e_i e_j, e_k) \\
        &\equiv (e_i e_j, m_i m_j, e_k) \\
        s_{j,k}^{(2)}: (e_i e_j, m_i m_j, e_k) &\to (e_i e_j, m_i m_j  e_k, e_k) \\
        &\equiv (e_i e_j, m_i m_j, e_k) \\
        s_{i,k}^{(2)}: (e_i e_j, m_i m_j, e_k) &\to (e_i e_j, m_i m_j  e_k, e_k) \\
        &\equiv (e_i e_j, m_i m_j, e_k)
    \end{aligned} \label{eq:S_fold_e}
\end{equation}
Therefore, we conclude that all types of $s_{i,j}^{(2)}$ domain walls can condense on the $(e_i e_j, m_i m_j, e_k)$-boundary.

The condensation properties of $s_{i,j}^{(2)}$ on the $(e_i e_j, m_i m_j, m_k)$-boundary can be shown as
\begin{equation}
    \begin{aligned}
        s_{i,j}^{(2)}: (e_i e_j, m_i m_j, m_k) &\to (e_i e_j, m_i m_j e_i e_j, m_k) \\
    &\equiv (e_i e_j, m_i m_j, m_k) \\
    s_{j,k}^{(2)}: (e_i e_j, m_i m_j, m_k) &\to (e_i e_j, m_i m_j e_k, m_k e_j) \\
    s_{i,k}^{(2)}: (e_i e_j, m_i m_j, m_k) &\to (e_i e_j, m_i m_j e_k, m_k e_i)
    \end{aligned} \label{eq:S_fold_m}
\end{equation}
Therefore, we conclude that $s_{i,j}^{(2)}$ can condense on the $(e_i e_j, m_i m_j, m_k)$-boundary, while $s_{j,k}^{(2)}$ and $s_{i,k}^{(2)}$ cannot. 

From the $\mathbb{Z}_2^3$ gauge theory perspective, the condensation of the $s_{i,j}^{(2)}$ domain wall on the $(e_i e_j, m_i m_j, m_k)$-boundary and the $(e_i e_j, m_i m_j, e_k)$-boundary can be understood as follows: Within the bulk of the $i^{\text{th}}$ and $j^{\text{th}}$ copies of the ungauged model (3D Ising model), a $\mathbb{Z}_2 \times \mathbb{Z}_2$ symmetry exists. However, this symmetry undergoes spontaneous breaking to the $\mathbb{Z}_2^{diag}$ diagonal symmetry at the folded boundary. This occurs because the two copies result from folding a single copy of the 3D Ising model; hence, the real symmetry of these copies is the $\mathbb{Z}_2^{diag}$ symmetry. This is just the global symmetry of the unfolded 3D Ising model. The $\mathbb{Z}_2 \times \mathbb{Z}_2$ symmetry associated with the $s_{i,j}^{(2)}$ domain wall thus undergoes spontaneous breaking. Consequently, the domain wall becomes trivialized and hence condenses on the boundary.

On the other hand, the reason why $s_{j,k}^{(2)}$ and $s_{i,k}^{(2)}$ domain wall cannot condense on the $(e_i e_j, m_i m_j, m_k)$-boundary can be understood as follows: Let's consider $s_{j,k}^{(2)}$ for example. As we discussed before, the actual symmetry on the $j^{\text{th}}$ copy is the $\mathbb{Z}_2^{diag}$ symmetry rather than the $\mathbb{Z}_2$ symmetry that only acts on that copy. For the $k^{th}$ copy, the boundary is an $m$-boundary, thereby preserving the $\mathbb{Z}_2$ symmetry. In the case of the $j^{th}$ copy, the $\mathbb{Z}_2^{diag}$ symmetry remains unbroken since the domain wall effectively remains within the bulk of the unfolded $i^{th}$ and $j^{th}$ copies of the 3D toric code. Since the $\mathbb{Z}_2^{diag} \times \mathbb{Z}_2$ symmetry of the $j^{th}$ and $k^{th}$ copies is preserved, the $s_{j,k}^{(2)}$ thus cannot condense on it; instead, it attaches to it.

For the condensation of $s_{j,k}^{(2)}$ and $s_{i,k}^{(2)}$ domain walls on the $(e_i e_j, m_i m_j, e_k)$-boundary, the reason can be understood as follows: The existence of the $e$-boundary of the $k^{\text{th}}$ copy spontaneously breaks the $\mathbb{Z}_2^{diag} \times \mathbb{Z}_2$ symmetry of the $s_{j,k}^{(2)}$ and $s_{i,k}^{(2)}$ domain walls. Therefore, they are trivialized on this boundary and hence can condense on it.

Furthermore, we investigate the condensation properties of $s_{i,j,k}^{(3)}$ on both types of the folded boundaries, which are shown as
\begin{equation}
    \begin{aligned}
        s_{i,j,k}^{(3)}: &(e_i e_j, m_i m_j, m_k) \equiv (e_i e_j, m_i m_j, m_k, s_{i,j}^{(2)}) \\
        &\to (e_i e_j, m_i m_j s_{j,k}^{(2)} s_{i,k}^{(2)}, m_k s_{i,j}^{(2)},s_{i,j}^{(2)}) \\
        &\equiv (e_i e_j, m_i m_j s_{j,k}^{(2)} s_{i,k}^{(2)}, m_k), \\
        s_{i,j,k}^{(3)}: &(e_i e_j, m_i m_j, e_k)
        \equiv (e_i e_j, m_i m_j, e_k, s_{j,k}^{(2)}, s_{i,k}^{(2)})\\
        &\to (e_i e_j, m_i m_j s_{j,k}^{(2)} s_{i,k}^{(2)}, e_k, s_{j,k}^{(2)}, s_{i,k}^{(2)}) \\
        &\equiv (e_i e_j, m_i m_j, e_k)
    \end{aligned} \label{eq:T_fold}
\end{equation}
For now, it seems like the $s^{(3)}_{i, j, k}$ cannot condense on the $(e_i e_j, m_i m_j, m_k)$-boundary since it leaves a $m_i m_j s_{j,k}^{(2)} s_{i,k}^{(2)}$ condensation that cannot condense on the initial boundary. However, we haven't checked the condensation property of the $s_{j,k}^{(2)} s_{i,k}^{(2)}$-domain wall yet. We have
\begin{equation}
    \begin{aligned}
        s_{j,k}^{(2)} s_{i,k}^{(2)}: &(e_i e_j, m_i m_j, m_k)
        \to s_{j,k}^{(2)}: (e_i e_j, m_i m_j e_k, m_k e_i) \\
        \to &(e_i e_j, m_i m_j, m_k e_i e_j) \equiv (e_i e_j, m_i m_j, m_k).\label{eq:double_s_1}
    \end{aligned}
\end{equation}
Here, we first apply the $s_{i,k}^{(2)}$ defect on the boundary, followed by the $s_{j,k}^{(2)}$ defect. The result shows that the composite of domain walls, $s_{j,k}^{(2)} s_{i,k}^{(2)}$, can condense on the boundary. Therefore, we have
\begin{equation}
    \begin{aligned}
        s_{i,j,k}^{(3)}: (e_i e_j, m_i m_j, m_k) &\to (e_i e_j, m_i m_j s_{j,k}^{(2)} s_{i,k}^{(2)}, m_k) \\
     &\equiv (e_i e_j, m_i m_j, m_k).
    \end{aligned} \label{eq:fold_m_condense}
\end{equation}
The $s_{i,j,k}^{(3)}$ domain wall can condense on both types of the folded boundaries.

From the TQFT perspective, the reason why the  $s_{i,j,k}^{(3)}$ domain wall can condense on the $(e_i e_j, m_i m_j, e_k)$-boundary is that the presence of the $e$-boundary spontaneously breaks the global symmetry. Therefore, the domain wall can condense on it. Regarding the $(e_i e_j, m_i m_j, m_k)$-boundary, the symmetry of the system encounter a symmetry breaking $\mathbb{Z}_2 \times \mathbb{Z}_2 \times \mathbb{Z}_2 \to \mathbb{Z}_2^{diag} \times \mathbb{Z}_2$. As a result, the $s_{i,j,k}^{(3)}$ domain wall, which is a $\mathbb{Z}_2 \times \mathbb{Z}_2 \times \mathbb{Z}_2$ gauged SPT defect, gets trivialized on this boundary, and hence can condense on it.

\subsubsection{The $m_1 m_2 m_3$- and $e_1 e_2 e_3$-boundaries}

The $m_1 m_2 m_3$-boundary is given by the condensation set $(e_1 e_2, e_2 e_3, m_1 m_2 m_3)$, and the $e_1 e_2 e_3$-boundary is given by the condensation set $(m_1 m_2, m_2 m_3, e_1 e_2 e_3)$. We investigate the condensation properties of the domain walls on these boundaries. Let's first consider the $s_{i,j}^{(2)}$ domain walls. We have
\begin{equation}
    \begin{aligned}
        s^{(2)}_{1,2}: &(e_1 e_2, e_2 e_3, m_1 m_2 m_3) \\
        \to &(e_1 e_2, e_2 e_3, m_1 m_2 m_3 e_1 e_2) \\
        \equiv &(e_1 e_2, e_2 e_3, m_1 m_2 m_3), \\
         s^{(2)}_{1,2}: &(m_1 m_2, m_2 m_3, e_1 e_2 e_3) \\
         \to &(m_1 m_2 e_1 e_2, m_2 m_3 e_1, e_1 e_2 e_3).
    \end{aligned} \label{eq:e1e2e3_s12}
\end{equation}
Since the condensation has permutation symmetries, the results hold for other $s_{i,j}^{(2)}$ domain walls. We thus conclude that the $s_{i,j}^{(2)}$ domain wall can condense on the $m_1 m_2 m_3$-boundary, while cannot condense on the $e_1 e_2 e_3$-boundary.

From the TQFT perspective, the $\mathbb{Z}_2 \times \mathbb{Z}_2 \times \mathbb{Z}_2$ symmetry in the bulk breaks into the $\mathbb{Z}_2^{diag}$ symmetry on the $(e_1 e_2, e_2 e_3, m_1 m_2 m_3)$-boundary, which corresponds to the charge conservation symmetry on the entire system. Since the $s_{i,j}^{(2)}$ domain walls are $\mathbb{Z}_2 \times \mathbb{Z}_2$ gauged SPT defects, while the boundary only have one $\mathbb{Z}_2^{diag}$ symmetry, it get trivialized and hence can condense on the boundary.

In the case of the $e_1 e_2 e_3$-boundary, the bulk symmetry breaks into a $\mathbb{Z}_2^{(110)} \times \mathbb{Z}_2^{(011)}$ symmetry. $\mathbb{Z}_2^{(110)}$ means the product of $\mathbb{Z}_2$ symmetries on the first and the second copies. Each $\mathbb{Z}_2$ symmetry here corresponds to the charge conservation symmetry of two copies of 3D toric code. Since the $s_{i,j}^{(2)}$ domain walls are $\mathbb{Z}_2 \times \mathbb{Z}_2$ gauged SPT defects, the symmetry is preserved on the boundary. Therefore, they cannot condense on it.

However, as we see in Eq.~\eqref{eq:double_s_1}, even though one $s_{i,j}^{(2)}$ domain wall cannot condense, the composite of them may condense. We have
\begin{equation}
    \begin{aligned}
        &\ \quad s^{(2)}_{2,3}s^{(2)}_{1,2}: (m_1 m_2, m_2 m_3, e_1 e_2 e_3) \\
        &\to s^{(2)}_{2,3}: (m_1 m_2 e_1 e_2, m_2 m_3 e_1, e_1 e_2 e_3) \\
        &\to (m_1 m_2 e_1 e_2 e_3, m_2 m_3 e_1 e_2 e_3, e_1 e_2 e_3) \\
        &\equiv (m_1 m_2, m_2 m_3, e_1 e_2 e_3),
    \end{aligned} \label{eq:composite_condensation}
\end{equation}
which shows that $s^{(2)}_{2,3}s^{(2)}_{1,2}$ can condense on this boundary. The other two composite domain wall can also condense because of the permutation symmetry.

From the TQFT perspective, the composite of domain walls, $s^{(2)}_{2,3}s^{(2)}_{1,2}$, corresponds to two $\mathbb{Z}_2 \times \mathbb{Z}_2$ symmetries, while on the boundary the symmetry is $\mathbb{Z}_2^{(110)} \times \mathbb{Z}_2^{(011)}$. Each $\mathbb{Z}_2$ symmetry on the boundary corresponds to the symmetry breaking of $\mathbb{Z}_2 \times \mathbb{Z}_2$ symmetry in the bulk. As a result, the composite domain walls get trivialized, and hence can condense on the boundary.

Finally let's check whether $s_{i,j,k}^{(3)}$ condenses on those boundaries. We have
\begin{equation}
    \begin{aligned}
        s_{i,j,k}^{(3)}: &(e_1 e_2, e_2 e_3, m_1 m_2 m_3) \\
        \equiv &(e_1 e_2, e_2 e_3, m_1 m_2 m_3, s^{(2)}_{1,2}, s^{(2)}_{2,3}, s^{(2)}_{3,1}) \\
        \to &(e_1 e_2, e_2 e_3, m_1 m_2 m_3s^{(2)}_{1,2}s^{(2)}_{2,3}s^{(2)}_{2,3} , s^{(2)}_{1,2}, s^{(2)}_{2,3}, s^{(2)}_{3,1}) \\
        \equiv &(e_1 e_2, e_2 e_3, m_1 m_2 m_3).
    \end{aligned} \label{eq:mmm_1}
\end{equation}
Therefore, $s_{i,j,k}^{(3)}$ domain wall can condense on the $m_1 m_2 m_3$-boundary.

For the $e_1 e_2 e_3$-boundary, we have
\begin{equation}
    \begin{aligned}
        s_{i,j,k}^{(3)}: &(m_1 m_2, m_2 m_3, e_1 e_2 e_3) \\
        \to &(m_1 m_2 s^{(2)}_{2,3} s^{(2)}_{3,1}, m_2 m_3 s^{(2)}_{3,1} s^{(2)}_{1,2}, e_1 e_2 e_3).
    \end{aligned} \label{eq:eee_1}
\end{equation}
We know from Eq.~\eqref{eq:composite_condensation} that, $s_{i,j}^{(2)}s_{j,k}^{(2)}$ can condense on this boundary. As a result, we have
\begin{equation}
    \begin{aligned}
        &(m_1 m_2 s^{(2)}_{2,3} s^{(2)}_{3,1}, m_2 m_3 s^{(2)}_{3,1} s^{(2)}_{1,2}, e_1 e_2 e_3) \\
        \equiv &(m_1 m_2, m_2 m_3, e_1 e_2 e_3).
    \end{aligned}\label{eq:eee_2}
\end{equation}
$s_{i,j,k}^{(3)}$ domain wall can condense on the $e_1 e_2 e_3$-boundary.

In the perspective of TQFT, the symmetry on the boundary spontaneously breaks into $\mathbb{Z}_2^{diag}$ symmetry for the $m_1m_2m_3$-boundary, and $\mathbb{Z}_2^{(110)} \times \mathbb{Z}_2^{(011)}$ symmetry for the $e_1 e_2 e_3$-boundary. In both case the original $\mathbb{Z}_2^3$ symmetry in the bulk is broken. Therefore, the $s_{i,j,k}^{(3)}$ domain wall gets trivialized, and hence can condense on these boundaries.

\subsubsection{The $(m_1, m_2, m_3)$-boundary}

In Ref.~\cite{zhu2022topological}, it has been  shown that exotic boundaries can be generated by attaching a domain wall which does not condense on the normal boundaries discussed above. Namely, consider the condensations on the boundary of three copies of 3D toric codes to be $(m_1, m_2, m_3)$, sweeping the CCZ-domain wall $ s_{1,2,3}^{(3)}$ on the entire system and CZ-domain wall $s_{i,j}^{(2)}$  on a codimension-1 submanifold can generate two different types of exotic boundaries respectively. By sweeping the  $ s_{1,2,3}^{(3)}$ domain wall, the condensations of the all-smooth boundaries become
\begin{align}
    s_{1,2,3}^{(3)}: (m_1, m_2, m_3) \to (m_1 s_{2,3}^{(2)}, m_2 s_{3,1}^{(2)}, m_3 s_{1,2}^{(2)}). \label{eq:sijk}
\end{align}
By sweeping the $s_{i,j}^{(2)}$ domain wall,  we have the following map on the condensations of the boundary:
\begin{align}
    s_{i,j}^{(2)}: (m_i, m_j, m_k) \to (m_i e_j, m_j e_i, m_k), \label{eq:sij}
\end{align}
in which $i,j,k \in \{1,2,3\}$ and $i \neq j \neq k$.

As we see from Eq.~\eqref{eq:sij}, the boundary generated by attaching  the $s_{i,j}^{(2)}$ domain wall can only condense a charge-flux composite $m_i e_j$ or $m_j e_i$ but not the pure flux $m_i$ or $m_j$. The generated boundary labeled by $(m_i e_j, m_j e_i, m_k)$ is different from the original boundary $(m_i, m_j, m_k)\equiv (m_1, m_2, m_3)$ since there is no way to rewrite the boundary with a basis change like previous cases.  This also shows that the $s_{i,j}^{(2)}$ domain wall  cannot condense on the $(m_1, m_2, m_3)$-boundary since the map does not keep the boundary invariant according to Eq.~\eqref{eq:code_space_boundary}.  

Instead, the $s_{i,j}^{(2)}$ domain wall is attached to the  $(m_1, m_2, m_3)$-boundary and produced an exotic nested boundary where a codimension-2 (1D) boundary $(m_i e_j, m_j e_i, m_k)$ is generated at the interface of the two original codimension-1 (2D) $(m_1, m_2, m_3)$-boundaries, as illustrated in Fig.~\ref{fig:s_domain_wall}(a) in the later subsection.

An even more exotic example is the one we show in Eq.~\eqref{eq:sijk}, where the produced boundary $(m_1 s_{2,3}^{(2)}, m_2 s_{3,1}^{(2)}, m_3 s_{1,2}^{(2)})$ is also different from the original $(m_1, m_2, m_3)$-boundary since the CZ-domain wall $s_{i,j}^{(2)}$ itself cannot condense on the $(m_1, m_2, m_3)$-boundary. In this case, neither the $e_i$ nor $m_i$ excitations can condense on this new boundary, which means the new gapped boundary is beyond the classification based on the Lagrangian subgroup.  Instead, only the composite of the $m$-flux and the CZ-domain wall $m_i s^{(2)}_{j,k}$ ($i \neq j \neq k$) can condense on this new \textit{magic boundary}, as illustrated in Fig.~\ref{fig:magic_boundary} in the later subsection.

The reason that neither $s_{i,j}^{(2)}$ nor $s_{1,2,3}^{(3)}$ condense on the  $(m_1, m_2, m_3)$-boundary is because that the $m$-boundaries preserve the $\mathbb{Z}_2$ symmetry in the unguaged model (Ising model) such that the SPT defects cannot be trivialized on the boundary and hence condense.

\subsubsection{Other boundaries classified by the Lagrangian subgroup}

There are also other boundaries that can be classified by the Lagrangian subgroup. These boundaries contains condensations that are composites of $e$-particles and $m$-flux loops. We listed them here for the readers' reference. The domain-wall condensation properties are summarized in Fig.~\ref{fig:web_1}.

Namely, those boundary condensations are 
\begin{equation}
    \begin{aligned}
        (m_i e_j, m_j e_i, m_k),\ &(m_i e_j e_k, m_j e_i, m_k e_i), \\
        (m_1 e_2 e_3, m_2 e_1 e_3, m_3 e_1 e_2),\ &(e_i, m_j e_k, m_k e_j), \\
        (m_1 m_2 e_1 e_2, m_2 m_3 e_1, e_1 e_2 e_3),\ &(e_i e_j, m_i m_j e_k, m_k e_j).
    \end{aligned}
\end{equation}
Those boundaries appeared as nested boundaries in our previous discussion, in which case they are 1D lines nested on the 2D boundaries. However, the boundaries can also has those condensations. In this case, the condensations on the nested boundaries are just the boundary condensations we discussed before, as illustrated in Fig.~\ref{fig:web_1}.
\\

We know from Ref.~\cite{kubica2015unfolding} that, there is a correspondence between 3D color code and three copies of 3D toric codes or equivalently the $\mathbb{Z}_2^3$ gauge theory. Then a natural question arises: what does the magic boundary and the other boundaries look like in the corresponding color code? We answer this question by explicitly constructing such kinds of boundaries in the 3D color-code lattice model in the following subsections. Although we have already derived the existence of these boundaries using the abstract TQFT language in this subsection, an explicit lattice construction helps to verify the TQFT prediction and understand the microscopic details of these boundaries such that we can also use these models for fault-tolerant quantum computing. In addition, we will also give a classification of the 3D color-code boundaries in the following subsections according to our current understanding of the defects and boundaries in the  $\mathbb{Z}_2^3$ gauge theory.

\subsection{Transversal gates on the 3D color code} \label{sec:transversal}

Following Ref.~\cite{kubica2015unfolding}'s prescription, we divide the lattice of 3D color code into two sublattices and guarantee each vertex of one sublattice is connected to the vertices of the other sublattice. When we apply a gate $\widetilde{U}$ transversally, we apply  $U$ on the vertices of one sublattice and $U^{\dagger}$ on the vertices of the other sublattice. This definition of transversal gates can avoid the appearance of overall phase factors. 

It is demonstrated in Ref.~\cite{kubica2015unfolding} that applying transversal-$C^{d-1}Z$ gates on $d$ copies of toric codes in $d$-dimension is equivalent to applying transversal-$R_d$ gate on its corresponding color code. In 3D, we are able to apply $R_2 = S$ and $R_3 = T$ transversally.

Transversal-$S$ gate can be applied on the $\mathbf{c}_i \mathbf{c}_j$-membrane. We denote this operator as $\widetilde{S}_{\mathbf{c}_i \mathbf{c}_j}$. We have the following equation.
\begin{equation}
    \begin{aligned}
        &P_C \widetilde{S}_{\mathbf{c}_i \mathbf{c}_j}\mathcal{A}_{\mathbf{c}_k}^{bulk} (X) \widetilde{S}_{\mathbf{c}_i \mathbf{c}_j}^{\dagger} P_C \\
    =& P_C \mathcal{A}_{\mathbf{c}_k}^{bulk} (X) \prod_{v \in \mathcal{M}} Z(v) P_C \\
    =& P_C \mathcal{A}_{\mathbf{c}_k}^{bulk} (X) P_C P_C \prod_{v \in \mathcal{M}} Z(v) P_C \\
    =& P_C \mathcal{A}_{\mathbf{c}_k}^{bulk} (X) P_C,
    \end{aligned} \label{eq:transversal_s_bulk}
\end{equation}
in which
\begin{align}
    \mathcal{M} = \mathrm{supp}(\widetilde{S}_{\mathbf{c}_i \mathbf{c}_j}) \cap \mathrm{supp} (\mathcal{A}_{\mathbf{c}_k}^{bulk}),
\end{align}
and the $\mathrm{supp}$ of an operator $\mathcal{O}$ means the set of qubits where $\mathcal{O}$ acts non-trivially.

From the third line to the fourth line of Eq.~\eqref{eq:transversal_s_bulk} we use the fact that $\prod_{v \in \mathcal{M}} Z(v)$ can be always expressed as the product of $Z$-stabilizers, which we discuss in detail in the next subsection. Therefore, this new $Z$-term remains a stabilizer, making it a logical identity within the code space. In addition, since $S_{\mathbf{c}_i \mathbf{c}_j}$ are composed of diagonal gates, it commutes with all the $Z$-stabilizers $\mathcal{B}_{\mathbf{c}_i \mathbf{c}_j}^{bulk} (Z)$, $\mathcal{B}_{\mathbf{c}_j \mathbf{c}_k}^{bulk} (Z)$ and $\mathcal{B}_{\mathbf{c}_i \mathbf{c}_k}^{bulk} (Z)$.  Therefore, we have  $P_C \widetilde{S}_{\mathbf{c}_i \mathbf{c}_j} H^{bulk} \widetilde{S}_{\mathbf{c}_i \mathbf{c}_j}^{\dagger} P_C = P_C H^{bulk} P_C$. The transversal-$S_{\mathbf{c}_i \mathbf{c}_j}$ gate is thus an emergent 1-form symmetry of the code space, according to the definition  using Eq.~\eqref{eq:bulk_invariant}.

When $\widetilde{S}_{\mathbf{c}_i \mathbf{c}_j}$ is applied on a region $\mathcal{R}$ with boundaries. Codimension-2 (1D) domain walls are created on these boundaries and we denote them as the $S_{\mathbf{c}_i \mathbf{c}_j}$-domain wall. When an electric excitation passes the domain wall, it remains unchanged. However, when a magnetic flux loop passes the domain wall, it gets attached to electric charges on the intersections. We have the following map:
\begin{equation}
    \begin{aligned}
        \widetilde{S}_{\mathbf{c}_i \mathbf{c}_j}: (\mathbf{c}_i \mathbf{c}_k)_{\mathbf{x}} &\to (\mathbf{c}_i)_{\mathbf{z}} (\mathbf{c}_i \mathbf{c}_k)_{\mathbf{x}}, \ (\mathbf{c}_i)_{\mathbf{z}} \to (\mathbf{c}_i)_{\mathbf{z}},  \\
    (\mathbf{c}_j \mathbf{c}_k)_{\mathbf{x}} &\to (\mathbf{c}_j)_{\mathbf{z}} (\mathbf{c}_j \mathbf{c}_k)_{\mathbf{x}}, \ (\mathbf{c}_j)_{\mathbf{z}} \to (\mathbf{c}_j)_{\mathbf{z}},
    \end{aligned}
\end{equation}
which is illustrated in Fig.~\ref{fig:two_domain_walls}(a) and corresponds to the mapping Eq.~\eqref{eq:s2_transform} in the TQFT language.

Transversal-$T$ gate can be applied on the entire system. After conjugation, the X-stabilizers on the purple cell $\mathcal{A}_{p}^{bulk} (X)$ becomes
\begin{align}  \adjincludegraphics[width=6cm,valign=c]{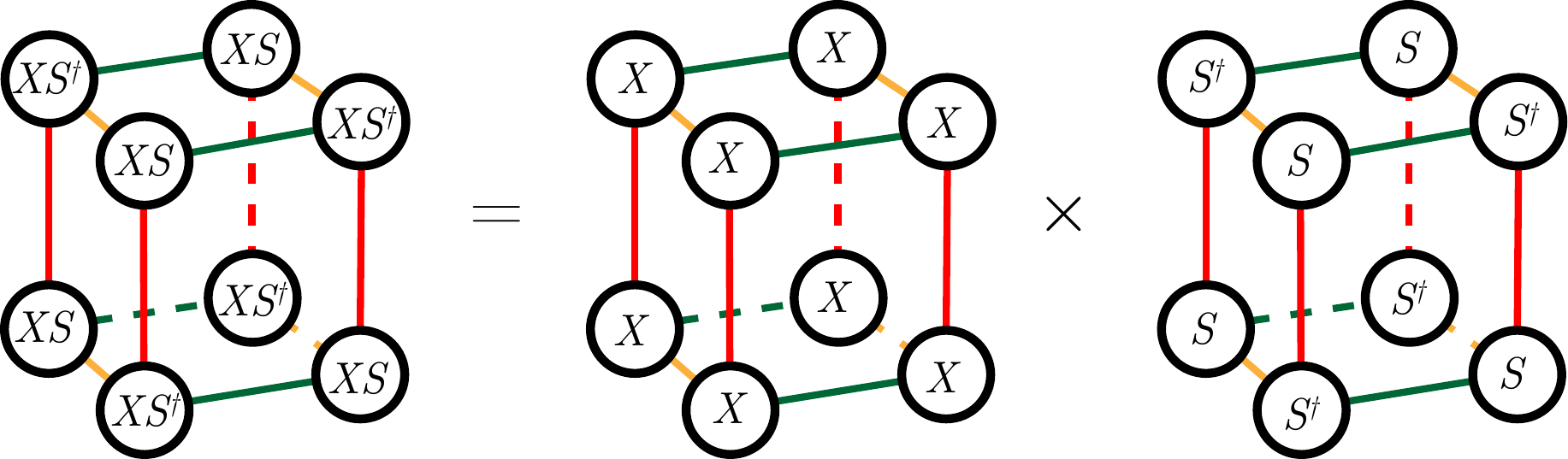}. \label{eq:gate_00}
\end{align}
Similar conjugation can be applied to the red, green, and yellow cells.

One can notice that the new $XS$-operator can be viewed as a contractible $S$-domain wall attached on the original $X$-stabilizer, which can  itself be considered as the smallest contractible $X$-membrane operator. The contractible $S$-domain wall  is a logical identity. To see this, we notice the following identity,
\begin{align}
    \adjincludegraphics[width=6cm,valign=c]{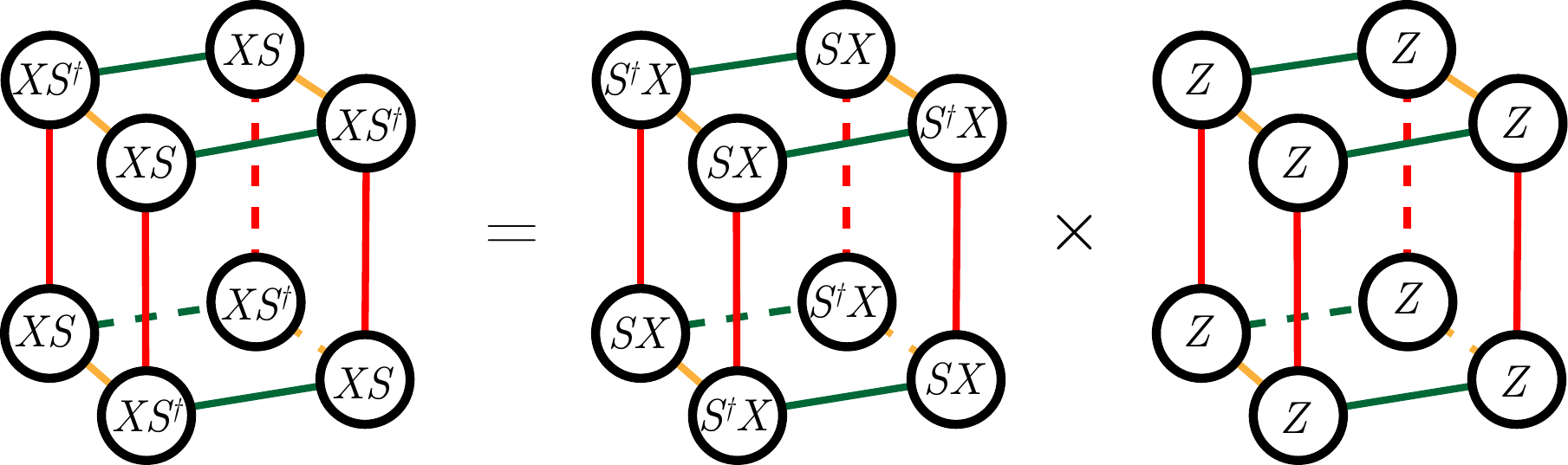}. \label{eq:gate_0}
\end{align}
The second operator on the right hand side is a stabilizer of the code, which is itself a logical identity. That is to say, the contractible $S$-domain wall and the $X$-stabilizer commute with each other within the code space, which means the contractible $S$-domain wall is also a logical identity, namely
\begin{align}
    P_C \mathcal{A}_{\mathbf{c}_i}^{bulk} (S) P_C = \mathbb{I}.
\end{align}
We hence obtain
\begin{equation}
    \begin{aligned}
        &P_C \widetilde{T} \mathcal{A}_{\mathbf{c}_i}^{bulk} (X) \widetilde{T}^{\dagger} P_C \\
    =& P_C \mathcal{A}_{\mathbf{c}_i}^{bulk} (XS)P_C \\
    =& P_C \mathcal{A}_{\mathbf{c}_i}^{bulk} (S) P_C P_C \mathcal{A}_{\mathbf{c}_i}^{bulk} (X) P_C \\
    =& P_C \mathcal{A}_{\mathbf{c}_i}^{bulk} (X) P_C,
    \end{aligned}
\end{equation}
which leads to $P_C \widetilde{T} H^{bulk} \widetilde{T}^\dag P_C$$=$$P_C H^{bulk} (X) P_C$ and satisfing the condition in Eq.~\eqref{eq:bulk_invariant}. From the symmetry perspective, this is equivalent to say that transversal-$T$ gate corresponds to an emergent 0-form symmetry of the code space in the bulk.

When the transversal-$T$ gate is applied on a region $\mathcal{R}$ with boundaries. Codimension-1 (2D) domain walls are created on these boundaries and we denote them as the $T$-domain wall. When an electric excitation passes the domain wall, it remains unchanged. However, when a magnetic flux loop passes the domain wall, it get attached to an $S$-domain wall. We have the following map:
\begin{equation}
    \begin{aligned}
        \widetilde{T}: pg_{\mathbf{x}} \to pg_{\mathbf{xs}}, \ py_{\mathbf{x}} \to py_{\mathbf{xs}},\ 
        yg_{\mathbf{x}} \to yg_{\mathbf{xs}},
    \end{aligned}
\end{equation}
which is illustrated in Fig.~\ref{fig:two_domain_walls}(b) and corresponds to the mapping Eq.~\eqref{eq:s3_transform} in the TQFT language.

In contrast, the $T$-symmetry does not hold when the system has the $X$-boundaries.  Similarly, the 1-form $S$ symmetry does not hold when its support touches the $X$-boundary.  This is due to the fact that the $T$ and $S$ domain walls do not condense on the $X$-boundary. By conjugating transversal-$T$ gate on the corresponding color code with $X$-boundaries, one can obtain a new type of boundaries of the 3D color code.

The $Z$-stabilizers on the boundary remain unchanged after applying the transversal-$T$ gate, while  the $X$-stabilizers on the boundary are turned into non-Pauli stabilizers which are dubbed as as $XS$-stabilizers \cite{ni2015non, webster2022xp}.\footnote{More general cases are discussed in Ref.~\cite{ni2015non, webster2022xp}. Following the notation in Ref.~\cite{webster2022xp}, a single operator can be represented by
\begin{align}
    XP_N(a|b|c) = \mathrm{exp}\left(\frac{a}{2N} 2\pi i\right) X^{b} P^c, \quad P^N = \mathbb{I}.
\end{align}
Therefore, the $XS$ and $XS^{\dagger}$ operators can be rewritten as
\begin{align}
    XS \equiv XP_4(0 | 1 | 1), \quad XS^{\dagger} \equiv XP_4(0 | 1 | 3)
\end{align}}
The condensation set of the new boundary is $\{pg_{\mathbf{xs}}, py_{\mathbf{xs}}, yg_{\mathbf{xs}}\}$, representing the $\{pg_{\mathbf{x}}, py_{\mathbf{x}}, yg_{\mathbf{x}}\}$ flux strings attached with $S$-domain walls, as illustrated in Fig.~\ref{fig:magic_boundary}(c) and (d).

In the subsequent subsections, we demonstrate explicitly that this boundary corresponds to the {\it magic boundary} previously outlined in the context of TQFT, as illustrated in Section~\ref{sec:TQFT}. Therefore, we also refer to this new boundary as the magic boundary. The origin of this name is due to the fact that the Hamiltonian of this boundary is beyond the Pauli stabilizers formalism, and hence possess `\textit{magic}' \cite{chen2024magic}. The magic boundary is also an important ingredient for performing fault-tolerant non-Clifford logical gates in fractal codes which can replace the usual magic state distillation protocol as pointed out in Ref.~\cite{zhu2022topological}. This magic boundary can also be viewed as a composite boundary, comprising the $X$-stabilizers on the boundaries attached with $S$-domain walls (equivalent to $s^{(3)}_{1,2,3}$), as depicted in Fig.~\ref{fig:magic_boundary}(a). 

\subsection{Logical operators after applying the transversal $T$-gate.}
\label{sec:logical}

\begin{figure*}
    \centering
    \includegraphics[width = 16cm]{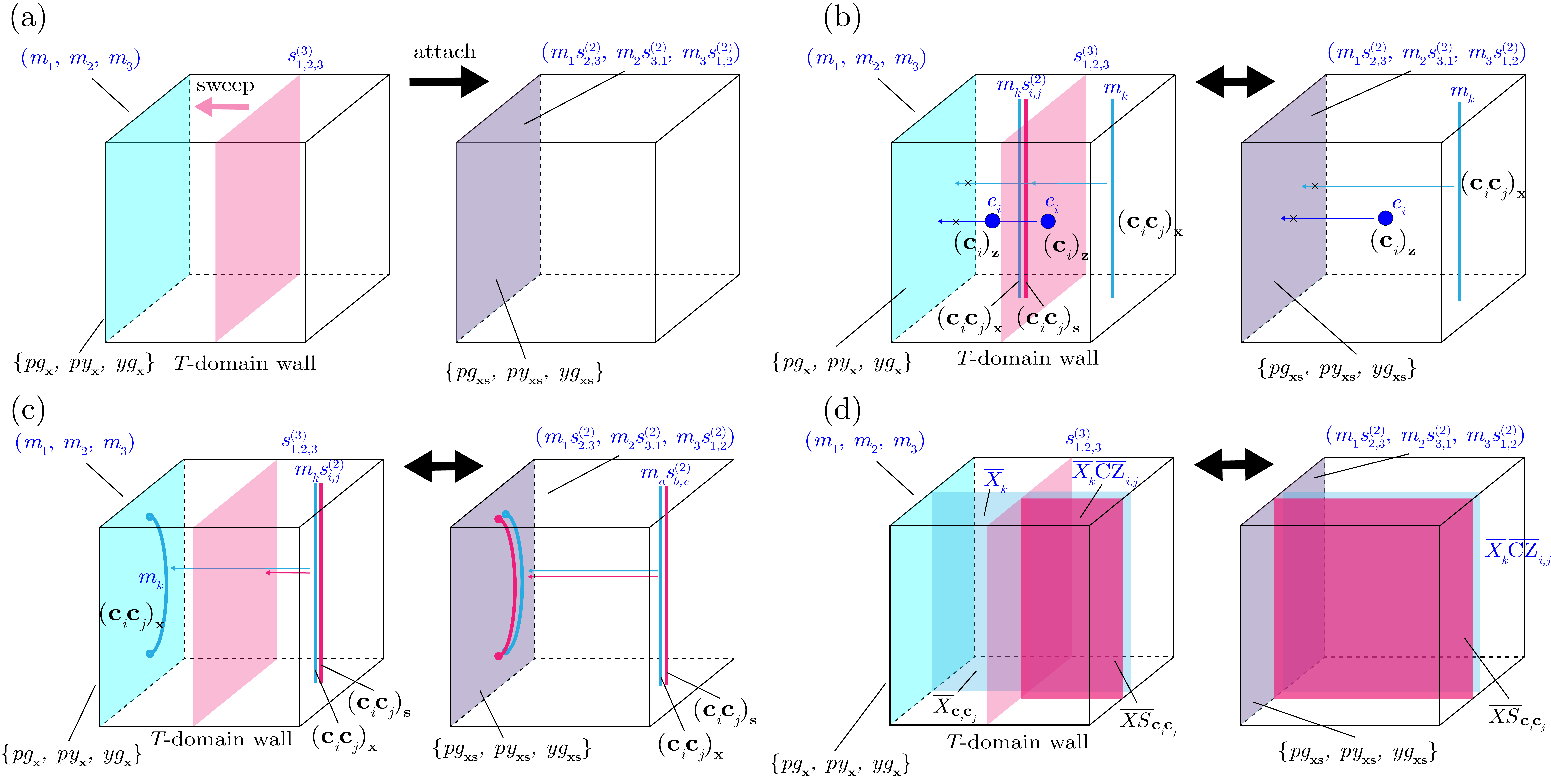}
    \caption{An illustration of the $T$ ($\mathrm{CCZ}$)-domain wall sweeping and its effect on the $\{pg_{\mathbf{x}}, py_\mathbf{x}, yg_{\mathbf{x}}\}$-boundary, topological excitations and logical operations. We use two equivalent languages, the 3D color-code language and the 3+1D TQFT/ toric-code language, to illustrate these phenomena, and use black and blue colored fonts, respectively. (a) The magic boundary can be obtained by sweeping the $T$-domain wall across the entire system and attaching it to the $\{pg_{\mathbf{x}}, py_\mathbf{x}, yg_{\mathbf{x}}\}$ boundary. (b) When traversing the $T$-domain wall, the electric charges $(\mathbf{c}_i)_{\mathbf{z}}$ remain unchanged. However, magnetic flux loops become attached to $S$-domain walls and cannot condense on the $\{pg_{\mathbf{x}}, py_\mathbf{x}, yg_{\mathbf{x}}\}$-boundary. This implies that neither electric charges nor magnetic flux loops can condense on the magic boundary. (c) Upon traversing the $T$-domain wall, the $S$-domain walls attached to the magnetic flux loops get re-absorbed into the $T$-domain wall. Consequently, the remaining $(\mathbf{c}_i \mathbf{c}_j)_{\mathbf{x}}$ excitation can condense on the $\{pg_{\mathbf{x}}, py_\mathbf{x}, yg_{\mathbf{x}}\}$-boundary. Equivalently, this means that magnetic flux loops with $S$-domain walls attached can condense on the magic boundary. (d) Upon sweeping the $T$-domain wall, the logical $X$-operator becomes attached to an $S$-membrane, forming the $\overline{XS}_{\mathbf{c}_i\mathbf{c}_j}$ operator. This membrane operator generates a magnetic flux loop with an $S$-domain wall attached to it. As the domain wall sweeps across the entire system, it can condense on the magic boundary.}
    \label{fig:magic_boundary}
\end{figure*}

Consider the $X$-membranes we introduce in Fig.~\ref{fig:gate_10}. These three membrane operators create $pg_{\mathbf{x}}, yg_{\mathbf{x}}$, and $yg_{\mathbf{x}}$ on their boundaries respectively. After conjugating the transversal-$T$ gate on the 3D color code, these $X$-membrane operators do not commute with the Hamiltonian terms on cells ($XS$-stabilizers). To better study the interplay between the membrane operators and the $XS$-stabilizers, we define the following group commutator,
\begin{align}
    K (T,V) := V^{-1} T^{-1} V T, \label{eq:group_commutator}
\end{align}
where $T$ and $V$ are both operators. In the case $T$ and $V$ commute with each other, we have $K = \mathbb{I}$. In the case $T$ and $V$ anti-commute with each other, we have $K = -\mathbb{I}$. In general cases, $K$ can be a non-trivial matrix. As suggested in Refs.~\cite{barkeshli2023codimension, barkeshli2022higher}, Eq.~\eqref{eq:group_commutator} is an example of higher group structure when $T$, $V$ and $K$ are symmetry operators acting on the Hilbert space.

We have the following algebraic relation between $XS$, $X S^{\dagger}$ and $X$-operators,
\begin{equation}
    \begin{aligned}
    K(X, XS) = i Z,\quad
    K(X, XS^{\dagger}) = - i Z.
\end{aligned} \label{eq:commutation}
\end{equation}

It suggests that $\widetilde{X}_{\mathbf{c}_i\mathbf{c}_j}$ and $\mathcal{A}^{bulk}_{\mathbf{c}_k}(XS)$ do not generally commute. However, when these operators act on the code space of the 3D color code, the resulting $Z$-terms are equivalent to product of $Z$-stabilizers for the $XS$-stabilizers in the bulk. Consequently, they commute with each other within the code space. This relationship can be illustrated by the following equation:
\begin{equation}
    \begin{aligned}
    &P_{C} \widetilde{X}_{\mathbf{c}_i\mathbf{c}_j} \mathcal{A}^{bulk}_{\mathbf{c}_k}(XS) P_{C}\\
    =& P_C \mathcal{A}^{bulk}_{\mathbf{c}_k}(XS) \widetilde{X}_{\mathbf{c}_i\mathbf{c}_j} K\left(\mathcal{A}^{bulk}_{\mathbf{c}_k}(XS), \widetilde{X}_{\mathbf{c}_i\mathbf{c}_j}\right) P_C \\
    =& P_C \mathcal{A}^{bulk}_{\mathbf{c}_k}(XS) \widetilde{X}_{\mathbf{c}_i\mathbf{c}_j} P_C, \label{eq:commutator}
\end{aligned}
\end{equation}
in which $P_C$ is the projector onto the code space $\mathcal{H}_C$, $\widetilde{X}_{\mathbf{c}_i\mathbf{c}_j}$ is the $X$-membrane operator with color $\mathbf{c}_i\mathbf{c}_j$ and $\mathcal{A}^{bulk}_{\mathbf{c}_k}(XS)$ is the $XS$-stabilizers on a $\mathbf{c}_k$-colored cell in the bulk. 

In the remainder of this subsection, we demonstrate that this equation is true by showing $K\left(\mathcal{A}^{bulk}_{\mathbf{c}_k}(XS), \widetilde{X}_{\mathbf{c}_i\mathbf{c}_j}\right)$ is equivalent to product of $Z$-stabilizers, which acts trivially on the code space.\footnote{In the following discussions, when we analyze the commutator between two operators, we consistently consider the scenario that they have a non-trivial intersection of supports. When the intersection of supports is trivial (non-overlapping), the commutator equals to $\mathbb{I}$. Eq.~\eqref{eq:commutator} still holds.} Although the illustration is based on the 3D color code lattice depicted in Fig.~\ref{fig:3DCC}, it holds for other cellulations and colorizations as well.

We first discuss the $X$-membrane operator on the purple-yellow membrane as shown in Fig.~\ref{fig:gate_10}(b), namely $\widetilde{X}_{py}$. The intersection between the support of this membrane operator and a yellow cell in the bulk is shown below.
\begin{align}
    \adjincludegraphics[width=6.5cm,valign=c]{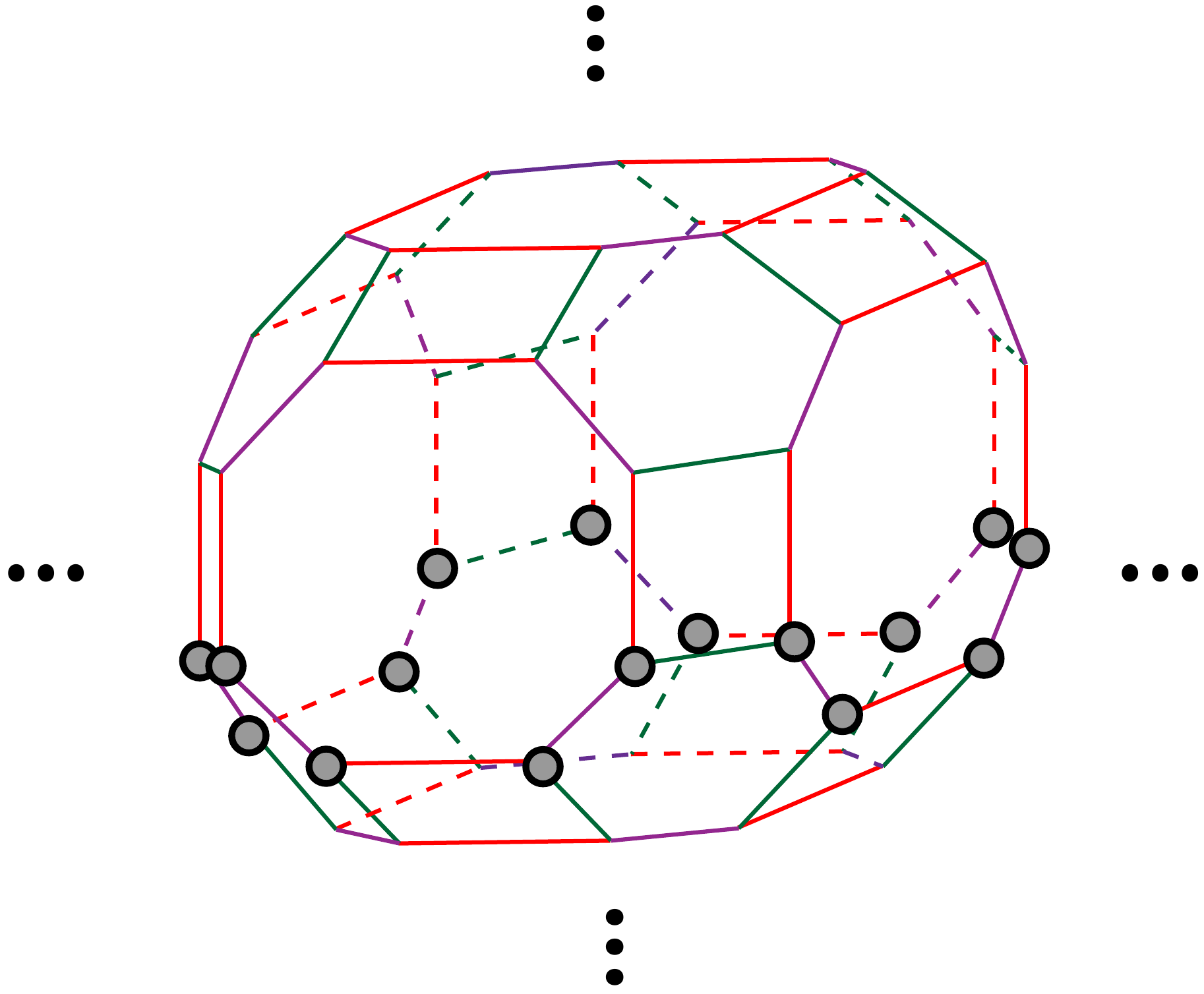}. \label{eq:absorb1}
\end{align}
Here, we utilize gray circles to denote the intersecting vertices between $\widetilde{X}_{py}$ and $\mathcal{A}^{bulk}_{y}(XS)$. As shown by Eq.~\eqref{eq:commutation}, upon exchanging the order of $\widetilde{X}_{py}$ and $\mathcal{A}^{bulk}_{y}(XS)$, an additional $Z$-operator, accompanied by a phase factor of $i$ or $-i$, emerges at each gray circle. These phase factors cancel themselves in our convention.

The commutator of $\widetilde{X}_{py}$ and $\mathcal{A}^{bulk}_{y}(XS)$ is given by the following $Z$-membrane,
\begin{align}
    \adjincludegraphics[width=6.5cm,valign=c]{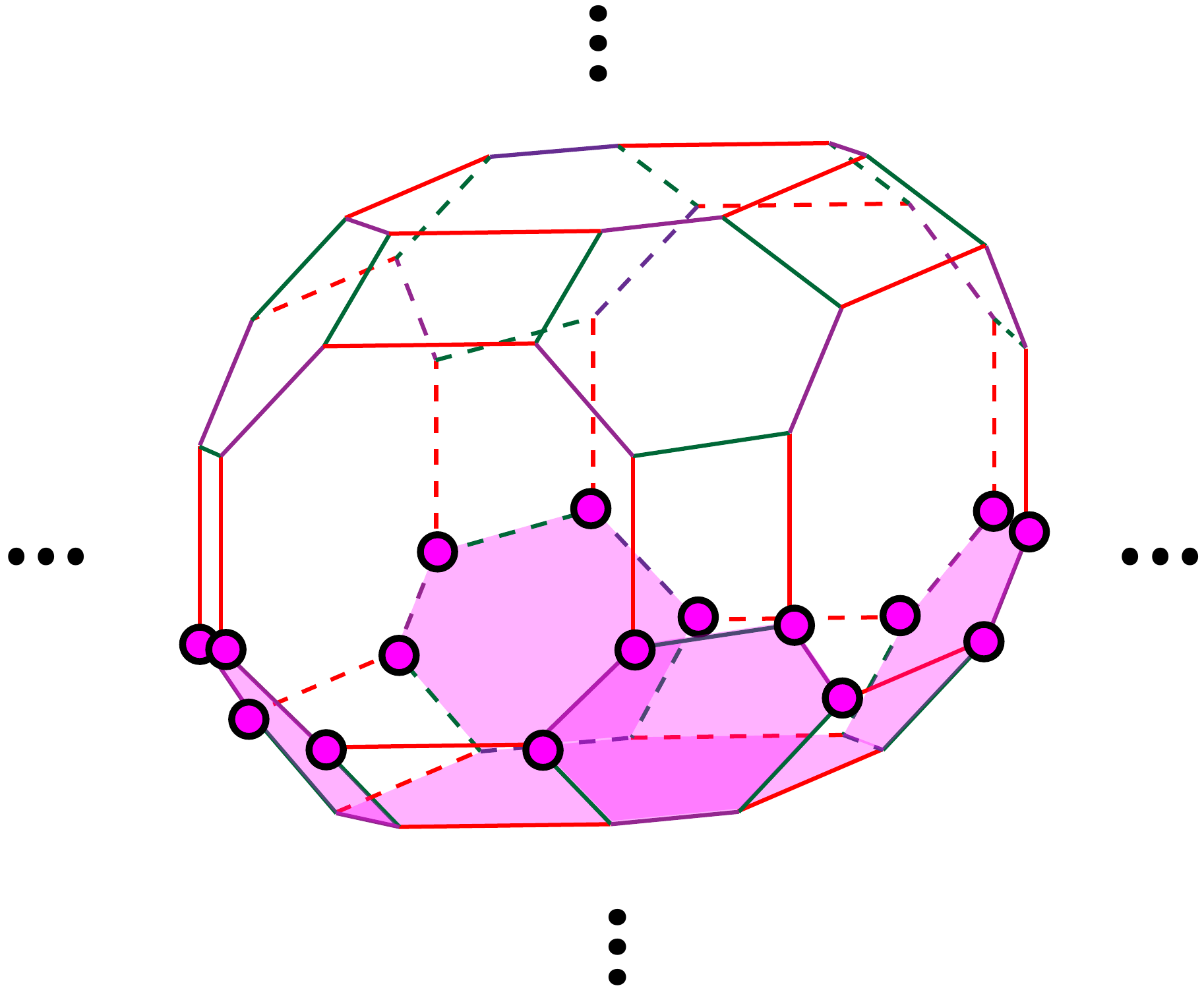}. \label{eq:absorb2}
\end{align}
This operator is equivalent to the product of four $Z$-stabilizers on the hexagons and one on the bottom octagon. Consequently, it acts trivially on the code space. As a result, $\widetilde{X}_{py}$ commutes with the $\mathcal{A}^{bulk}_{y}(XS)$ within the code space.

The commutator of $\widetilde{X}_{py}$ and $\mathcal{A}^{bulk}_{g}(XS)$ in the bulk is the following.
\begin{align}
    \adjincludegraphics[width=6.5cm,valign=c]{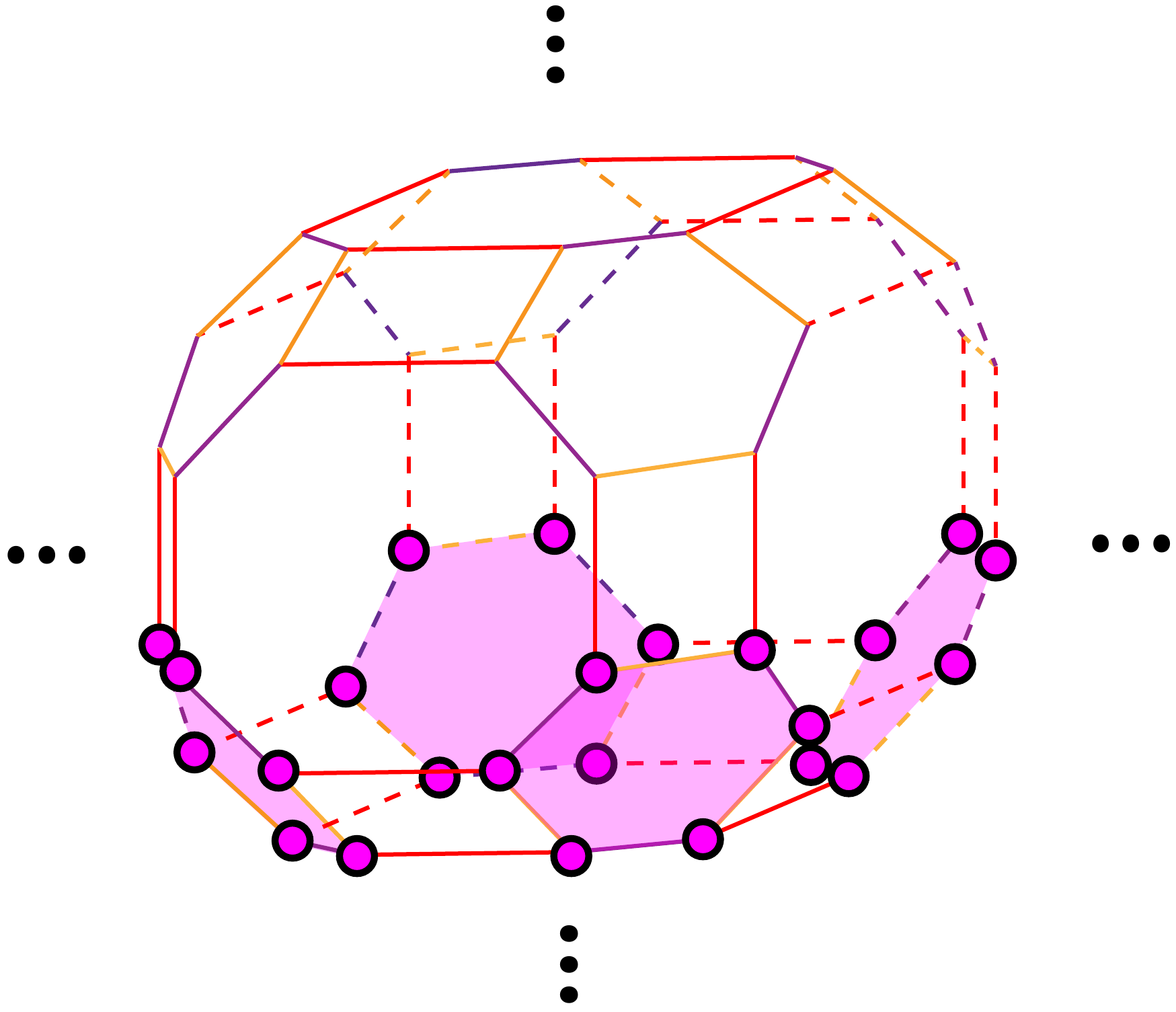}.
\end{align}
Similarly, this operator is equivalent to the product of four $Z$-stabilizers on the bottom hexagons. As a result, $\widetilde{X}_{py}$ commutes with $\mathcal{A}^{bulk}_{g}(XS)$ in the bulk within the code space.

The commutator of $\widetilde{X}_{py}$ and $\mathcal{A}^{bulk}_{r}(XS)$ in the bulk is the following.
\begin{align}
    \adjincludegraphics[width=4.5cm,valign=c]{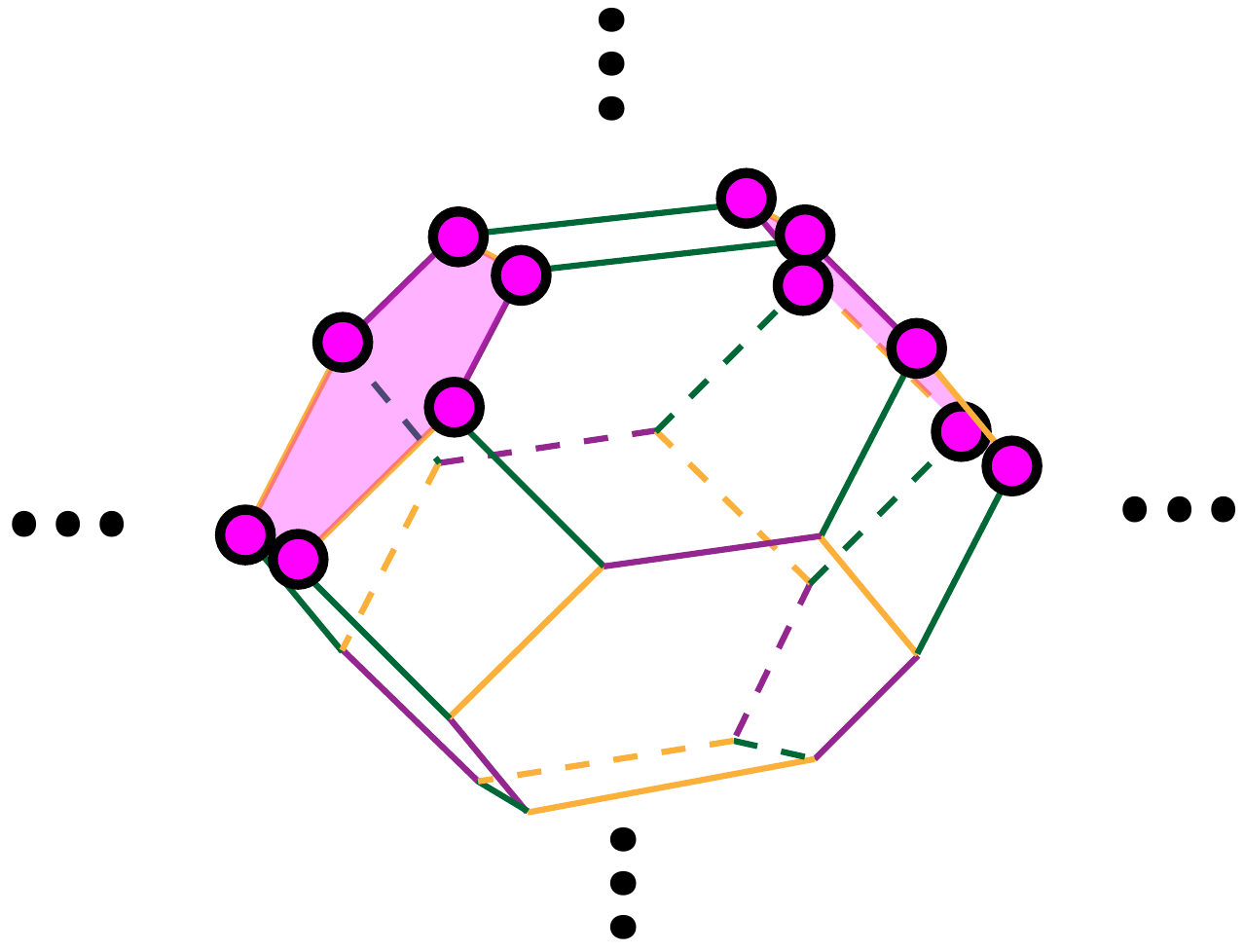}.
\end{align}
This operator is equivalent to the product of two $Z$-stabilizers on hexagons. Therefore, $\widetilde{X}_{py}$ commutes with $\mathcal{A}^{bulk}_{r}(XS)$ in the bulk within the code space.

Finally, the commutator of $\widetilde{X}_{py}$ and $\mathcal{A}^{bulk}_{p}(XS)$ in the bulk is the following.
\begin{align}
    \adjincludegraphics[width=3cm,valign=c]{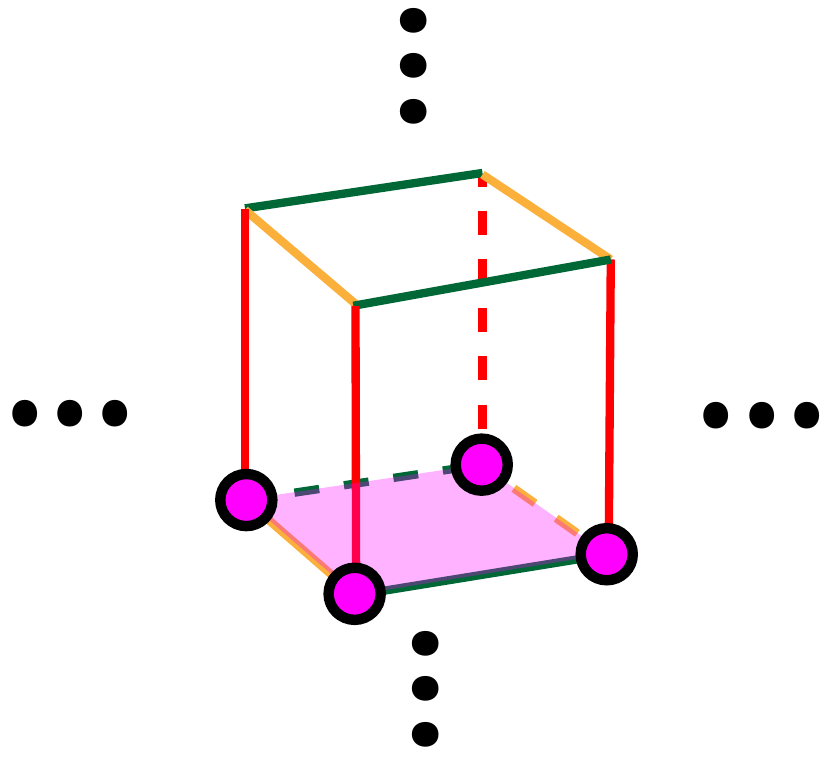}.
\end{align}
This operator is equivalent to a $Z$-stabilizer on the square. 

In conclusion, we demonstrate that $\widetilde{X}_{py}$ commutes with all the stabilizers in the bulk within the code space.

A similar argument can be applied to $\widetilde{X}_{pg}$ as shown in Fig.~\ref{fig:gate_10} (a) and $\mathcal{A}^{bulk}_{\mathbf{c}_i}(XS)$ due to the symmetry of the lattice.

Then we discuss the commutator between $\widetilde{X}_{yg}$, as shown in Fig.~\ref{fig:gate_10} (c), and $\mathcal{A}^{bulk}_{\mathbf{c}_i}(XS)$. The commutator of $\widetilde{X}_{yg}$ and $\mathcal{A}^{bulk}_{y}(XS)$ in the bulk is the following.
\begin{align}
    \adjincludegraphics[width=6.5cm,valign=c]{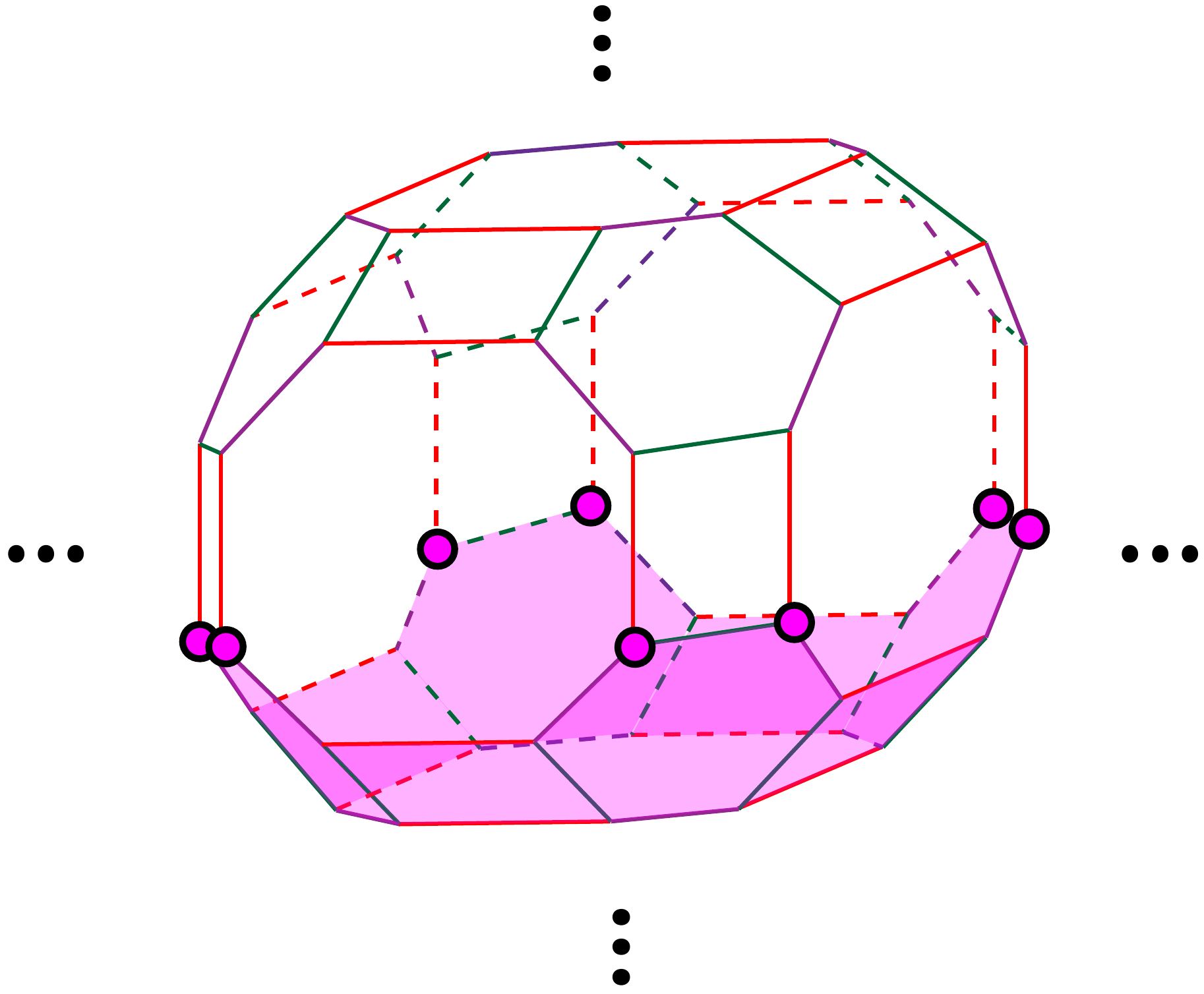}. \label{eq:gate_6}
\end{align}
This operator is equivalent to the product of four $Z$-stabilizers on hexagons and four on squares.

The commutator between $\widetilde{X}_{yg}$ and $\mathcal{A}^{bulk}_{g}(XS)$ is the same as shown in Eq.~\eqref{eq:gate_6} due to the symmetry of the lattice and the membrane operator. Therefore, they also commute with each other within the code space.

The commutator of $\widetilde{X}_{yg}$ and $\mathcal{A}^{bulk}_{r}(XS)$ is the following.
\begin{align}
    \adjincludegraphics[width=4.5cm,valign=c]{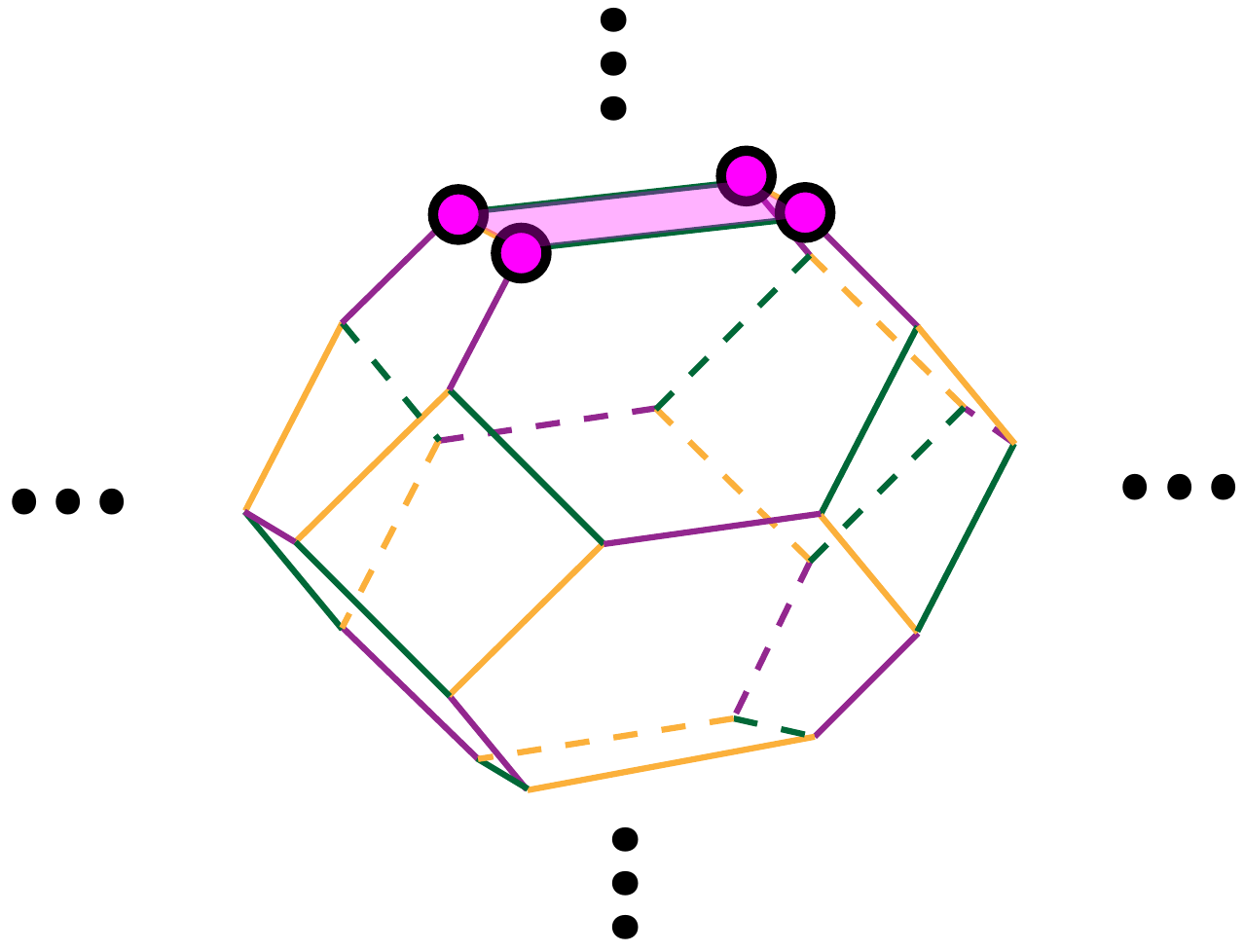}.
\end{align}
And the commutator of $\widetilde{X}_{yg}$ and $\mathcal{A}^{bulk}_{p}(XS)$ is given by
\begin{align}
    \adjincludegraphics[width=3cm,valign=c]{gate_5.pdf}.
\end{align}
Either term is equivalent to a $Z$-stabilizer on a square.

Up to now, we have demonstrated that $\widetilde{X}_{py}$, $\widetilde{X}_{pg}$, and $\widetilde{X}_{yg}$ commute with the transversal-$T$ gate conjugated Hamiltonian in the bulk within the code space. 

\subsection{The magic boundary} \label{sec:magicboundary}

In the presence of boundaries, the translation symmetry is broken, and the stabilizers on the boundaries may not be identical to those in the bulk. Consequently, after conjugating transversal-$T$ gate on the entire system, $\widetilde{X}_{\mathbf{c}_i\mathbf{c}_j}$ cannot terminate on the magic boundary, namely, the $\{pg_{\mathbf{xs}}, py_{\mathbf{xs}}, yg_{\mathbf{xs}}\}$-boundary.

We follow the same procedure to check the commutation relations on the magic boundary. Define $\mathcal{A}^{boundary}_{\mathbf{c}_i}(XS)$ to be the $XS$-stabilizers on the truncated $\mathbf{c}_i$-cells on the boundary, which we introduce in Section~\ref{sec:paulixboundary}. We first consider the commutator between $\widetilde{X}_{py}$, as shown in Fig.~\ref{fig:gate_10}(b), and $\mathcal{A}^{boundary}_{y}(XS)$, which is given by
\begin{align}
    \adjincludegraphics[width=5.5cm,valign=c]{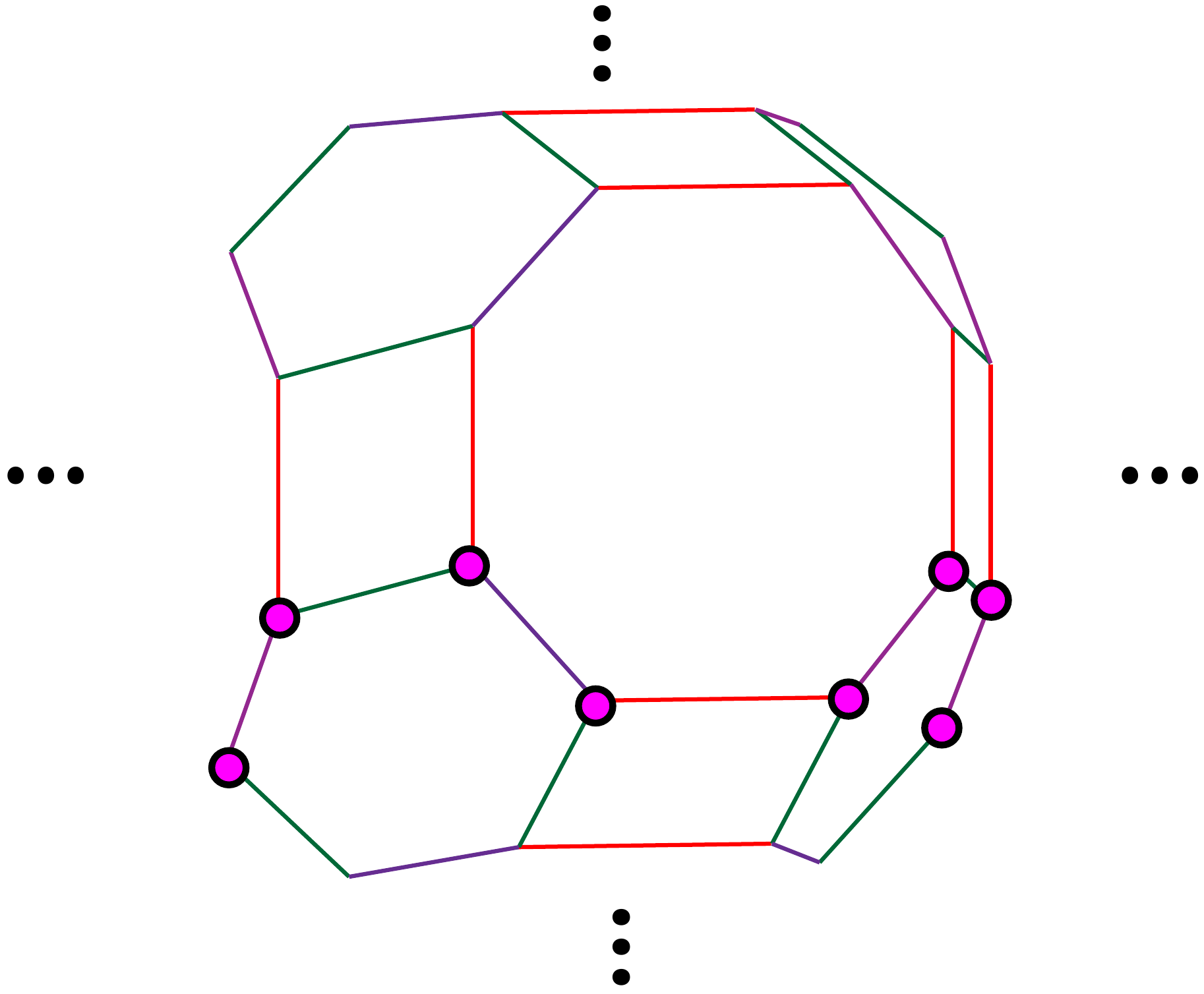}.
\end{align}

From Def.~\ref{def:Paulixboundary}, we know that this operator is not equivalent to the product of $Z$-stabilizers on this boundary. Therefore, $\widetilde{X}_{py}$ and $\mathcal{A}^{boundary}_{y}(XS)$ do not commute with each other within the code space.

Then, we consider the same membrane operator and $\mathcal{A}^{boundary}_{g}(XS)$. The commutator is the following.
\begin{align}
    \adjincludegraphics[width=5.5cm,valign=c]{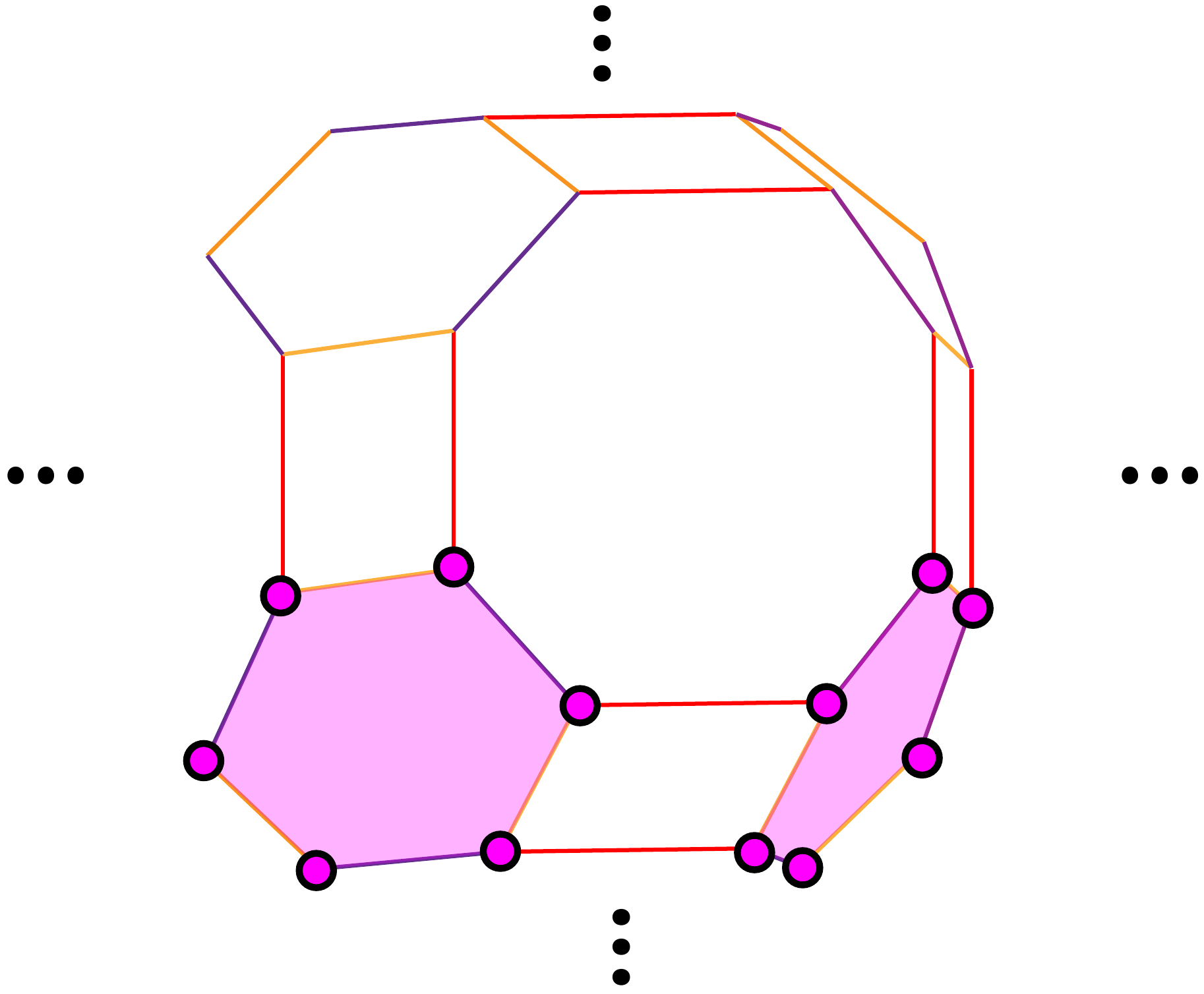}.
\end{align}

We see the resultant operator is a product of two $Z$-stabilizers on hexagons. Therefore, $\widetilde{X}_{py}$ and $\mathcal{A}^{boundary}_{g}(XS)$ commute with each other within the code space.

Finally, we consider the same membrane operator and $\mathcal{A}^{boundary}_{p}(XS)$. The commutator is the following.
\begin{align}
    \adjincludegraphics[width=3cm,valign=c]{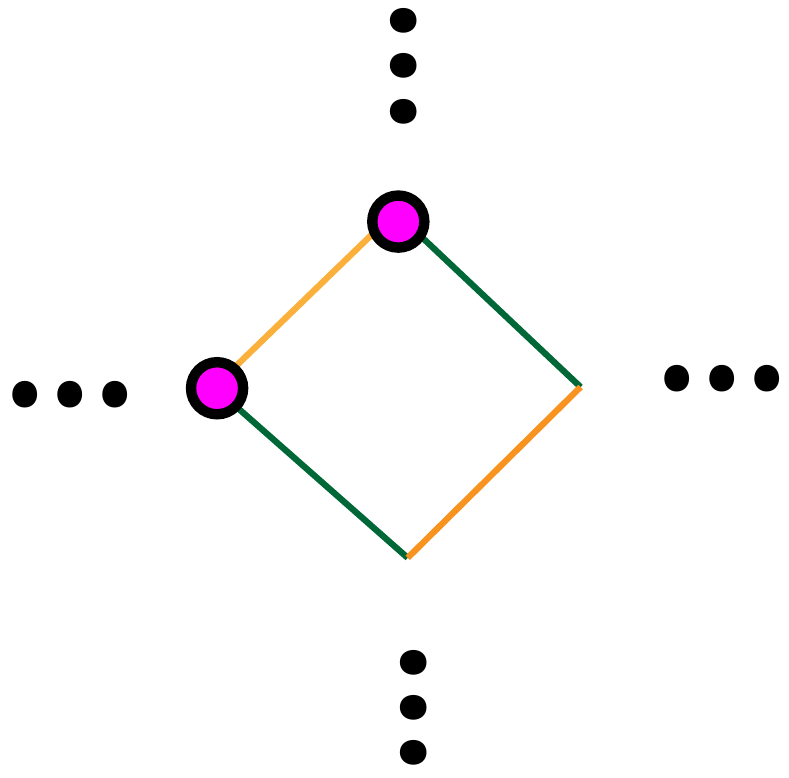}.
\end{align}
Following the same argument, one can demonstrate that the resulting $Z$-operator is not equivalent to the product of stabilizers within the code space. Consequently, $\widetilde{X}_{py}$ does not commute with $\mathcal{A}^{boundary}_{p}(XS)$. As $\widetilde{X}_{py}$ generates a $py_{\mathbf{x}}$ flux loop on its boundary, the above argument is equivalent to say that $py_{\mathbf{x}}$ cannot condense on the magic boundary.

In general, one can show that $\widetilde{X}_{\mathbf{c}_i\mathbf{c}_j}$ does not commute with $\mathcal{A}^{boundary}_{\mathbf{c}_i}(XS)$ and $\mathcal{A}^{boundary}_{\mathbf{c}_j}(XS)$ on the magic boundary, while commutes with $\mathcal{A}^{boundary}_{\mathbf{c}_k}(XS)$, in which $\mathbf{c}_i \in \mathbf{C}$ is the color of the 3D color code, $i,j,k \in \{1,2,3\}$, and $i \neq j \neq k$. 

Furthermore, since the $p_{\mathbf{z}}, y_{\mathbf{z}}$, and $g_{\mathbf{z}}$ excitations do not condense on the $X$-boundary, as we argued in Section~\ref{sec:paulixboundary}, it is straightforward to check they do not condense on the magic boundary as well.

\begin{figure}
    \centering
    \includegraphics[width = 1 \columnwidth]{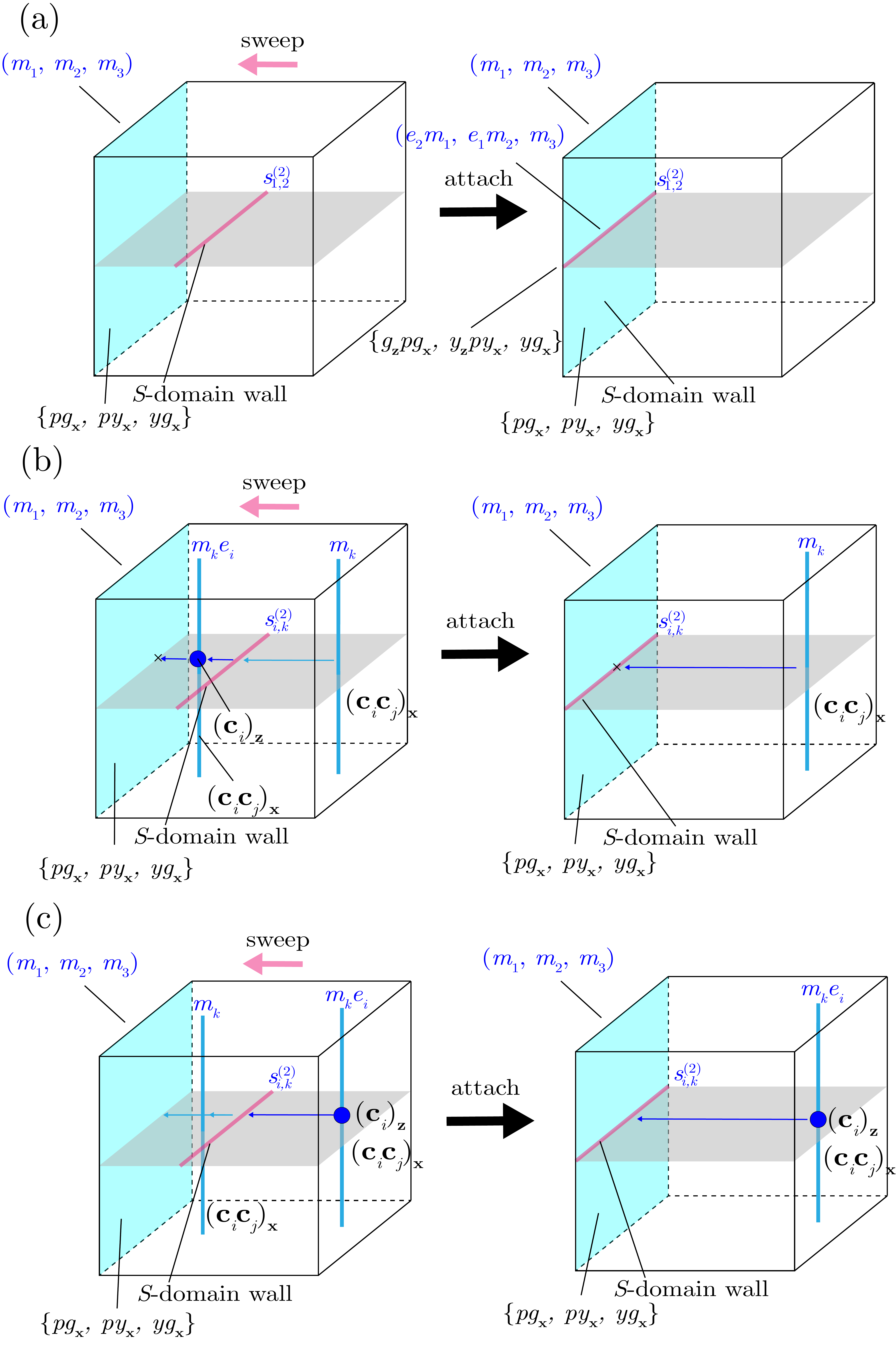}
    \caption{An illustration of the $S$ $(\mathrm{CZ})$-domain wall sweeping across a codimension-1 submanifold in the presence of the $X$-boundary. (a) After sweeping the $S$-domain wall across a codimension-1 submanifold, it attaches to the $X$-boundary and cannot condense on it. We refer to this type of boundaries as the nested boundaries. (b) When traversing the $S$-domain wall, the magnetic flux loops become attached to electric charges. As a result, they cannot condense on the $\{pg_{\mathbf{x}}, py_{\mathbf{x}}, yg_{\mathbf{x}}\}$-boundary. (c) When traversing the $S$-domain wall, the electric charges attached to the magnetic flux loops get re-absorbed into the domain wall. As a result, they can condense on the $\{pg_{\mathbf{x}}, py_{\mathbf{x}}, yg_{\mathbf{x}}\}$-boundary. It is equivalent to say that the magnetic flux loops with electric charges attached to them can condense on the $\{pg_{\mathbf{x}}, py_{\mathbf{x}}, yg_{\mathbf{x}}\}$-boundary with an $S$-domain wall attached.}
    \label{fig:s_domain_wall}
\end{figure}

In summary, we show that $p_{\mathbf{z}}, y_{\mathbf{z}}, g_{\mathbf{z}}, pg_{\mathbf{x}},$ $ py_{\mathbf{x}}$ and $yg_{\mathbf{x}}$ cannot condense on the magic boundary, as illustrated in Fig.~\ref{fig:magic_boundary}(b), and this boundary is beyond the classification of gapped boundaries based on the Lagrangian subgroup.

\subsection{The other elementary boundaries}
\label{sec:elementary}

In this subsection, we study the other elementary boundaries of 3D color code, and their domain-wall condensation properties. We summarize the properties of all the boundaries we study in Table \ref{table:I} and \ref{table:II}. 

\subsubsection{The $\{(\mathbf{c}_i)_{\mathbf{z}}, (\mathbf{c}_i\mathbf{c}_k)_{\mathbf{x}}, (\mathbf{c}_i\mathbf{c}_j)_{\mathbf{x}}\}$-boundaries}

When the boundary can condense at least one $Z$-type excitation, for example, $(\mathbf{c}_i)_{\mathbf{z}}$, then after conjugation of transversal $T$-gate, $\overline{X}_{\mathbf{c}_i \mathbf{c}_j}$ and $\overline{X}_{\mathbf{c}_i \mathbf{c}_k}$ can still terminate on it, for $i \neq j \neq k$.

Without loss of generality, let us consider the boundary that condenses $\{y_{\mathbf{z}}, py_{\mathbf{x}}, yg_{\mathbf{x}}\}$, as we introduce in Section~\ref{sec:zxx}.

After conjugation of transversal-$T$ gate, the $X$-stabilizers become $XS$-stabilizers and the $Z$-stabilizers remain invariant. The condensations thus become $\{y_{\mathbf{z}}, py_{\mathbf{xs}}, yg_{\mathbf{xs}}\}$.

We follow the same procedure in Section~\ref{sec:magicboundary}. The commutator between $\overline{X}_{py}$ and $\mathcal{A}^{boundary}_{p}(XS)$ is the following
\begin{align}
    \adjincludegraphics[width=3cm,valign=c]{gate_25.pdf}.
\end{align}
It is precisely the $Z$-stabilizers on the boundary. Therefore, $\overline{X}_{py}$ and $\mathcal{A}^{boundary}_{p}(XS)$ commute with each other on the boundary within the code space.

The commutator between $\overline{X}_{py}$ and $\mathcal{A}^{boundary}_{g}(XS)$ is the following.
\begin{align}
    \adjincludegraphics[width=5.5cm,valign=c]{gate_21.pdf}. \label{eq:gate_21}
\end{align}
This operator is the product of two $Z$-stabilizers thus $\overline{X}_{py}$ and $\mathcal{A}^{boundary}_{g}(XS)$ commute with each other on the boundary within the code space.

Finally, we examine the commutator between $\overline{X}_{py}$ and $\mathcal{B}^{boundary}_{pg}(Z)$, and find that they also commute with each other. A similar argument can be applied to the case of $\overline{X}_{yg}$. 

Therefore, we conclude that $yg_{\mathbf{x}}$, $py_{\mathbf{x}}$ and $y_{\mathbf{z}}$ can condense on the $\{y_{\mathbf{z}}, py_{\mathbf{xs}}, yg_{\mathbf{xs}}\}$-boundary. This is equivalent to say
\begin{align}
    \{y_{\mathbf{z}}, py_{\mathbf{xs}}, yg_{\mathbf{xs}}\} \equiv \{y_{\mathbf{z}}, py_{\mathbf{x}}, yg_{\mathbf{x}}\},
\end{align}
which is equivalent to Eq.~\eqref{eq:condensation_equivalence} in the TQFT description in Section~\ref{sec:TQFT}.
Although the 3D color code has color permutation symmetry, by purely permuting the color labels, the boundaries we get exhibit identical physical properties, differing only by relabeling of the lattice and excitations. Therefore, we conclude that there are 3 distinct types of boundaries which can condense one $Z$-type excitations on them. For the 3D color code with these type of boundaries, the transversal-$T$ gate corresponds to an emergent 0-form symmetry of the entire code space. 

In general, the relation can be written in the following form:
\begin{align}
    \{(\mathbf{c}_i)_{\mathbf{z}}, (\mathbf{c}_i\mathbf{c}_k)_{\mathbf{xs}}, (\mathbf{c}_i\mathbf{c}_j)_{\mathbf{xs}}\} \equiv \{(\mathbf{c}_i)_{\mathbf{z}}, (\mathbf{c}_i\mathbf{c}_k)_{\mathbf{x}}, (\mathbf{c}_i\mathbf{c}_j)_{\mathbf{x}}\},
\end{align}
in which $\mathbf{c}_i, \mathbf{c}_j, \mathbf{c}_k \in \mathbf{C}$ and $i \neq j \neq k$. This general relation also corresponds to Eq.~\eqref{eq:condensation_equivalence}.

According to the case we discuss in Eq.~\eqref{eq:sij_2} in the TQFT description, there are two types of $S$-domain wall that can condense on this boundary, $S_{\mathbf{c}_i\mathbf{c}_j}$- and $S_{\mathbf{c}_i\mathbf{c}_k}$-domain wall, while $S_{\mathbf{c}_j\mathbf{c}_k}$-domain wall cannot condense on this boundary. We leave the discussion of this case to the next subsection.

\subsubsection{The $\{(\mathbf{c}_i)_{\mathbf{z}}, (\mathbf{c}_j)_{\mathbf{z}}, (\mathbf{c}_i\mathbf{c}_j)_{\mathbf{x}}\}$-boundary}

Without loss of generality, we consider the $\{y_{\mathbf{z}}, g_{\mathbf{z}}, yg_{\mathbf{x}}\}$-boundary to illustrate the properties of this type of boundary. The boundary stabilizers can be found in Section~\ref{sec:zzx}.

After conjugation of the transversal-$T$ gate, the condensations on the boundary become $\{y_{\mathbf{z}}, g_{\mathbf{z}}, yg_{\mathbf{xs}}\}$. The commutator between $\overline{X}_{py}$ and $\mathcal{A}^{boundary}_{g}(XS)$ equals to identity. Therefore, $\mathcal{A}^{boundary}_{g}(XS)$ and $\overline{X}_{py}$ commute with each other within the code space and $yg_{\mathbf{x}}$ can condense on the boundary, which is equivalent to say
\begin{align}
    \{y_{\mathbf{z}}, g_{\mathbf{z}}, yg_{\mathbf{xs}}\} \equiv \{y_{\mathbf{z}}, g_{\mathbf{z}}, yg_{\mathbf{x}}\}.
\end{align}

Same as the TQFT description in Eq.~\eqref{eq:eem}, all three types of $S$-domain walls can condense on this boundary. Therefore, the total number of distinct types of the $\{(\mathbf{c}_i)_{\mathbf{z}}, (\mathbf{c}_j)_{\mathbf{z}}, (\mathbf{c}_i\mathbf{c}_j)_{\mathbf{x}}\}$-boundaries is 3.

\subsubsection{The $\{y_{\mathbf{z}}, g_{\mathbf{z}}, p_{\mathbf{z}}\}$-boundary}

We also discuss the $\{y_{\mathbf{z}}, g_{\mathbf{z}}, p_{\mathbf{z}}\}$-boundary ($Z$-boundary or the $(e_1,e_2,e_3)$-boundary in the TQFT description). The boundary condensations remain invariant under the conjugation of transversal-$T$ gate or under the conjugation of transversal-$S$ gate on codimension-1 submanifold. Therefore, the total number of distinct types of the $Z$-boundaries is 1.

\subsubsection{The folded boundaries}

In this subsection, we investigate the condensation properties of the folded boundaries introduced in Section~\ref{sec:folded_boundary}.

Consider the $\{y_{\mathbf{z}} g_{\mathbf{z}}, pg_{\mathbf{x}} py_{\mathbf{x}}, yg_{\mathbf{x}}\}$-boundary as an example for the fold$|m$-boundaries. After conjugation of transversal-$T$ gate, the boundary condensation set becomes $\{y_{\mathbf{z}} g_{\mathbf{z}}, pg_{\mathbf{xs}} py_{\mathbf{xs}}, yg_{\mathbf{x}}\}$. One can check the commutator between $\overline{X_{pg} X_{py}}$ and the boundary $XS$-stabilizer $\mathcal{A}_{y}^{boundary}(XS) \mathcal{A}_{g}^{boundary}(XS)$ is equivalent to the product of $Z$-stabilizers on the boundary. Therefore, we have the following relation:
\begin{align}
    \{y_{\mathbf{z}} g_{\mathbf{z}}, pg_{\mathbf{xs}} py_{\mathbf{xs}}, yg_{\mathbf{x}}\} \equiv \{y_{\mathbf{z}} g_{\mathbf{z}}, pg_{\mathbf{x}} py_{\mathbf{x}}, yg_{\mathbf{x}}\}
\end{align}

In the TQFT description, this corresponds to Eq.~\eqref{eq:S_fold_m}, Eq.~\eqref{eq:T_fold} and Eq.~\eqref{eq:fold_m_condense}. The total number of distinct types of fold$|m$-boundary is 3.

For the fold$|e$-boundaries, let's consider the $\{y_{\mathbf{z}} g_{\mathbf{z}}, pg_{\mathbf{x}} py_{\mathbf{x}}, p_{\mathbf{z}}\}$-boundary as an example. After conjugation of the transversal-$T$ gate, the condensation set becomes $\{y_{\mathbf{z}} g_{\mathbf{z}}, pg_{\mathbf{xs}} py_{\mathbf{xs}}, p_{\mathbf{z}}\}$. Similarly, one can check the commutator between $\overline{X_{pg} X_{py}}$ and the boundary $XS$-stabilizer $\mathcal{A}_{y}^{boundary}(XS) \mathcal{A}_{g}^{boundary}(XS)$ equals identity within the code space. Therefore, $pg_{\mathbf{xs}} py_{\mathbf{xs}}$ can condense on the bounday, which is equivalent to say
\begin{align}
    \{y_{\mathbf{z}} g_{\mathbf{z}}, pg_{\mathbf{xs}} py_{\mathbf{xs}}, p_{\mathbf{z}}\} \equiv \{y_{\mathbf{z}} g_{\mathbf{z}}, pg_{\mathbf{x}} py_{\mathbf{x}}, p_{\mathbf{z}}\}
\end{align}
This relation corresponds to Eq.~\eqref{eq:T_fold}. Furthermore, one can check all the three types of $S$-domain walls can condense on this boundary, which corresponds to Eq.~\eqref{eq:S_fold_e} in the perspective of TQFT. The total number of distinct types of the fold$|e$-boundary is 3.

\subsubsection{The $pg_{\mathbf{x}} py_{\mathbf{x}} yg_{\mathbf{x}}$- and the $y_{\mathbf{z}} g_{\mathbf{z}} p_{\mathbf{z}}$-boundaries}

Consider the lattice model we introduce in Section~\ref{sec:m1m2m3} and \ref{sec:e1e2e3}. After the conjugation of transversal-$T$ gate, the boundary condensation set becomes $\{y_{\mathbf{z}} g_{\mathbf{z}}, g_{\mathbf{z}} p_{\mathbf{z}}, pg_{\mathbf{xs}} py_{\mathbf{xs}} yg_{\mathbf{xs}}\}$ and $\{pg_{\mathbf{xs}} py_{\mathbf{xs}}, py_{\mathbf{xs}} yg_{\mathbf{xs}}, y_{\mathbf{z}} g_{\mathbf{z}} p_{\mathbf{z}}\}$, respectively.

For the former case, one can check that the commutator between the $\overline{X_{pg} X_{py} X_{yg}}$ operator and the $XS$-stabilizers on the boundary can be expressed as product of $Z$-stabilizers (product of short $Z$-strings corresponding to two different copies). Therefore, we have the following relation:
\begin{equation}
    \begin{aligned}
        &\{y_{\mathbf{z}} g_{\mathbf{z}}, g_{\mathbf{z}} p_{\mathbf{z}}, pg_{\mathbf{xs}} py_{\mathbf{xs}} yg_{\mathbf{xs}}\} \notag \\&\equiv \{y_{\mathbf{z}} g_{\mathbf{z}}, g_{\mathbf{z}} p_{\mathbf{z}}, pg_{\mathbf{x}} py_{\mathbf{x}} yg_{\mathbf{x}}\}.
    \end{aligned}
\end{equation}

For the latter case, one can check that the commutator between $\overline{X_{pg} X_{py}}$ and the boundary $XS$-stabilizers can be expressed as product of $Z$-stabilizers (product of short $Z$-strings corresponding to three different copies). Therefore, we have the following relation:
\begin{equation}
    \begin{aligned}
        &\{pg_{\mathbf{xs}} py_{\mathbf{xs}}, py_{\mathbf{xs}} yg_{\mathbf{xs}}, y_{\mathbf{z}} g_{\mathbf{z}} p_{\mathbf{z}}\} \\&\equiv \{pg_{\mathbf{x}} py_{\mathbf{x}}, py_{\mathbf{x}} yg_{\mathbf{x}}, y_{\mathbf{z}} g_{\mathbf{z}} p_{\mathbf{z}}\}.
    \end{aligned}
\end{equation}
In the perspective of TQFT, these relations correspond to Eq.~\eqref{eq:mmm_1}, ~\eqref{eq:eee_1} and ~\eqref{eq:eee_2}.

\subsection{The nested boundaries} \label{sec:nested}

\begin{figure}
    \centering
    \includegraphics[width = 1 \columnwidth]{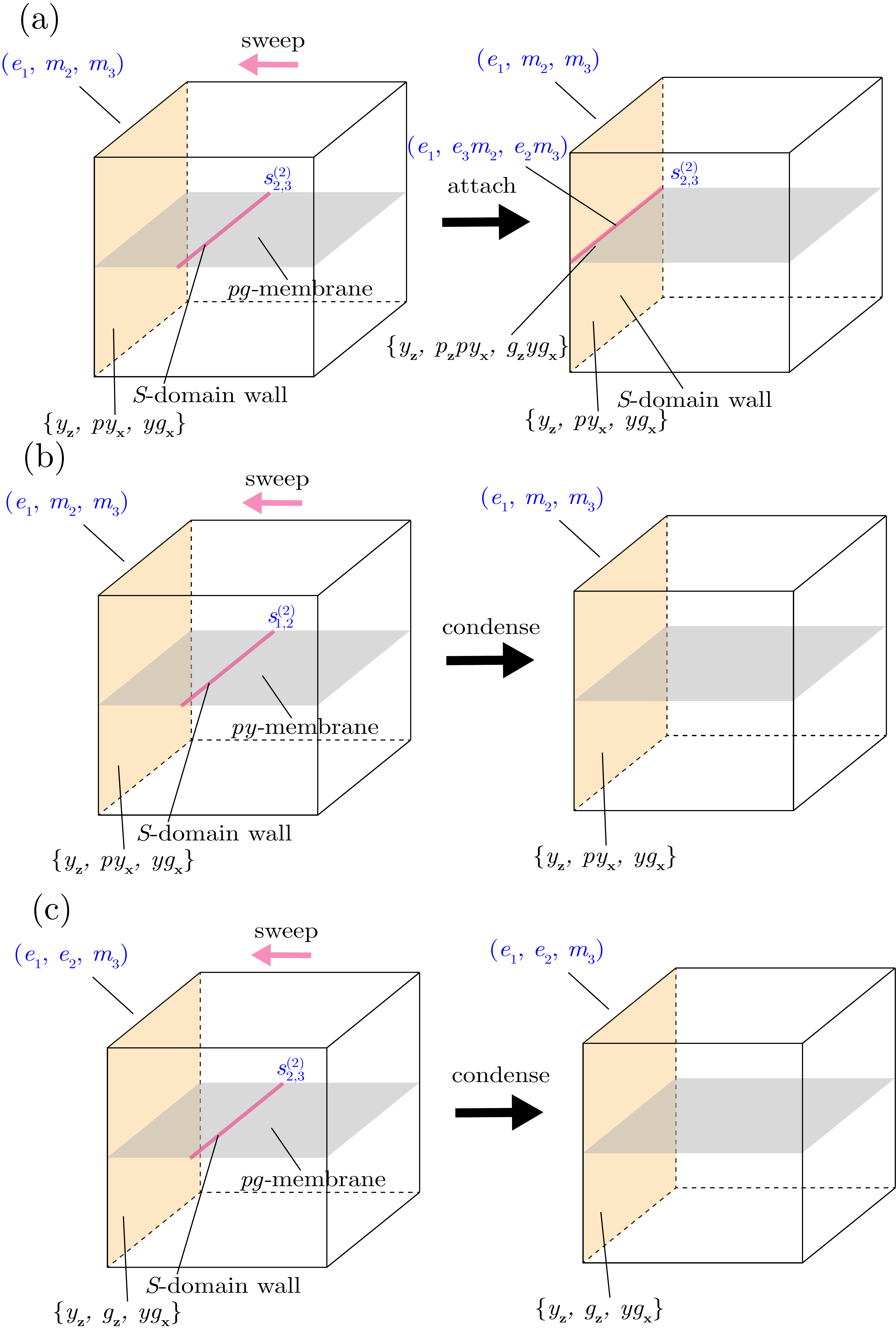}
    \caption{An illustration of the $S$ $(\mathrm{CZ}$)-domain wall sweeping across a codimension-1 submanifold in the presence of an boundary with at least one $Z$-type condensation. (a) When the boundary has one $Z$-type condensations, for example $y_{\mathbf{z}}$, the $S$-domain wall on the $pg$-membrane cannot condense on it. (b) However, the other two $S$-domain walls on the $py$- and $yg$-membranes can condense on this boundary. (c) When the boundary has two or more $Z$-type condensations, all of the three types of $S$-domain walls can condense on it.}
    \label{fig:s_domain_wall_2}
\end{figure}

As suggested in Section~\ref{sec:TQFT}, there are four types of boundaries that some of the $S$-domain walls cannot condense on, i.e., $(e_i, m_j, m_k)$, $(m_1, m_2, m_3)$, $(e_i e_j, m_i m_j, m_k)$ and $(m_1 m_2, m_2 m_3, e_1 e_2 e_3)$ boundaries in the TQFT descriptions. The corresponding color-code boundaries are the $\mathbf{c}_i$-color boundaries, the $X$-boundary, the fold$|m$-boundaries, and the  $y_{\mathbf{z}} g_{\mathbf{z}} p_{\mathbf{z}}$-boundary, respectively. We study the properties of these boundaries in this subsection, and summarize them in Fig.~\ref{fig:web_1}.

\subsubsection{The $\mathbf{c}_i$-color boundary} \label{sec:nested_zxx}

Consider the $\{(\mathbf{c}_i)_{\mathbf{z}}, (\mathbf{c}_i\mathbf{c}_j)_{\mathbf{x}},(\mathbf{c}_i\mathbf{c}_k)_{\mathbf{x}}\}$-boundary, the $S_{\mathbf{c}_i\mathbf{c}_j}$- or $S_{\mathbf{c}_i\mathbf{c}_k}$-domain wall can condense on the boundary. However, the $S_{\mathbf{c}_j\mathbf{c}_k}$-domain wall can not condense on it. It attaches to the boundary and generates a codimension-2 (1D) domain-wall defect. The condensation set on that defect becomes $\{(\mathbf{c}_i)_{\mathbf{z}}, (\mathbf{c}_k)_{\mathbf{z}} (\mathbf{c}_i\mathbf{c}_k)_{\mathbf{x}}, (\mathbf{c}_j)_{\mathbf{z}} (\mathbf{c}_i\mathbf{c}_j)_{\mathbf{x}}\}$, while other part of the boundary is unchanged. The corresponding condensation in the TQFT description is $(e_i, m_j e_k, m_k e_j)$,  which has been introduced in Section~\ref{sec:TQFT}. The total number of distinct types of the $\mathbf{c}_i$-color boundaries with non-condensing $S$-domain wall is 3. We denote this type of boundary as the {\it nested boundary}, as illustrated in Fig.~\ref{fig:s_domain_wall}, Fig.~\ref{fig:s_domain_wall_2}(a), and Fig.~\ref{fig:web_1}(d).

\subsubsection{The $X$-boundary} \label{sec:nested_xxx}

Similarly, in the case the boundary condensation is $\{pg_\mathbf{x}, py_\mathbf{x}, yg_\mathbf{x}\}$ (X-boundary), $S$-domain walls on all the three-colored membranes cannot condense on this boundary, and introduce a codimension-2 defect on it, as depicted in Fig.~\ref{fig:s_domain_wall}. For example, consider the $S_{yg}$-domain wall attaches to the $X$-boundary. The condensation set of this codimension-2 domain-wall defect becomes $\{g_\mathbf{z} pg_\mathbf{x}, y_\mathbf{z} py_\mathbf{x}, yg_\mathbf{x}\}$, while the other part of the boundary is unchanged. In the TQFT description, this nested defect is denoted by $(e_2 m_1, e_1 m_2, m_3)$. In the general cases, one can attach composites of $S$-domain walls on the $X$-boundary. The total number of distinct $X$-boundaries with nested defects is 7, as illustrated in Fig.~\ref{fig:web_1}(a).

\subsubsection{The fold$|m$-boundary}\label{sec:nested_fold}

Consider the $\{y_{\mathbf{z}} g_{\mathbf{z}}, pg_{\mathbf{x}} py_{\mathbf{x}}, yg_{\mathbf{x}}\}$-boundary as an example. the $S_{yg}$ domain wall can condense on it. However, the $S_{pg}$ and $S_{yp}$ domain wall cannot. At the intersection of this boundary and the $S_{pg}$-domain wall, the condensation set becomes  $\{y_{\mathbf{z}} g_{\mathbf{z}}, p_{\mathbf{z}} pg_{\mathbf{x}} py_{\mathbf{x}}, g_{\mathbf{z}} yg_{\mathbf{x}}\}$, while the other part of the boundary remain unchanged. In the TQFT description, this boundary is denoted as $(e_1 e_2, e_3 m_1 m_2, e_2 m_3)$. The total number of distinct types of fold$|m$-boundaries with nested boundaries is 3, as illustrated in Fig.~\ref{fig:web_1}(c).

\begin{figure*}
    \centering
    \includegraphics[width = 14cm]{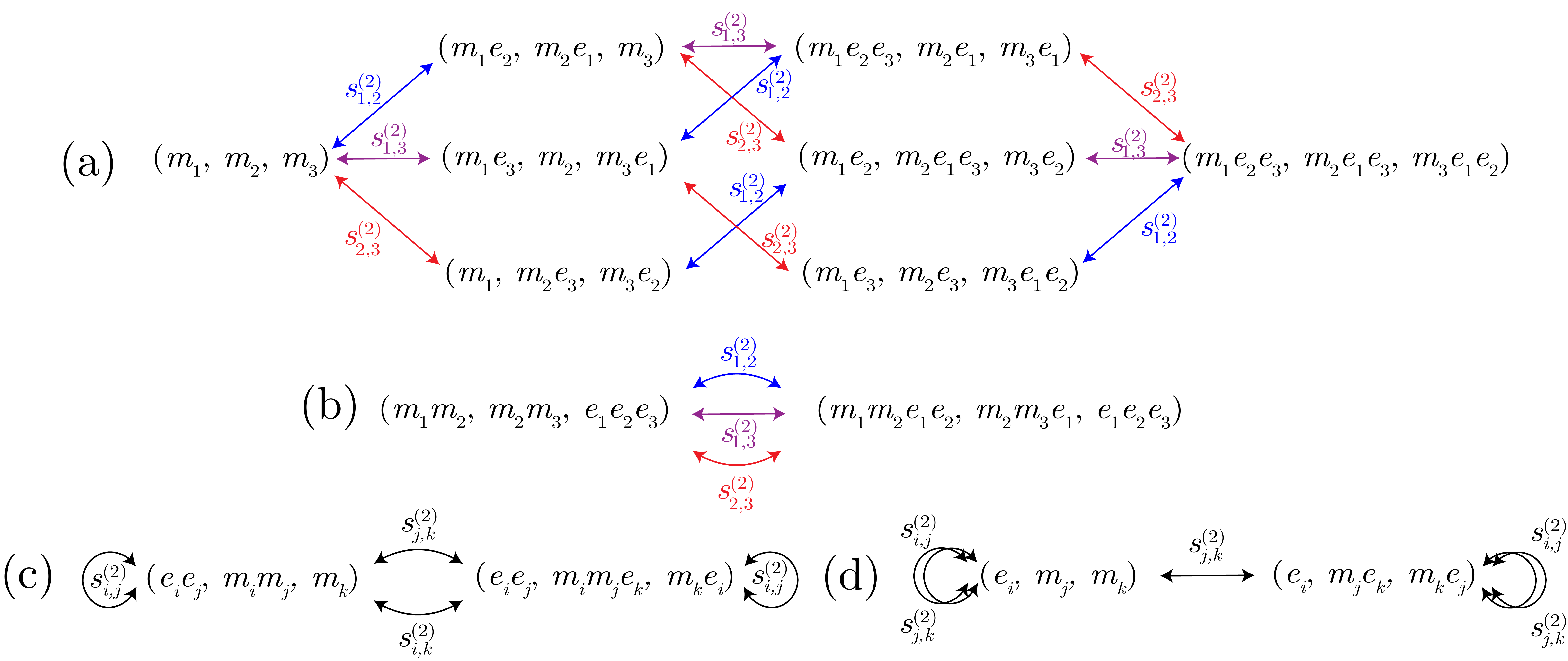}
    \caption{This figure shows the web of domain-wall attachment. Each pair of condensations can be viewed as a pair of the boundary condensation and the nested condensation. For example, consider the pair $(m_1, m_2, m_3)$ and $(m_1 e_2, m_2 e_1 e_3, m_3 e_2)$, If $(m_1, m_2, m_3)$ is the condensation on the boundary, them by attaching the composite domain wall $s_{1,2}^{(2)} s_{2,3}^{(2)}$, one could get a nested boundary with $(m_1 e_2, m_2 e_1 e_3, m_3 e_2)$ condensation on the 1D intersection and $(m_1, m_2, m_3)$ condensation on other part. The inverse process is also possible, as indicated by the arrows in the figure. One can start with the $(m_1 e_2, m_2 e_1 e_3, m_3 e_2)$-boundary, and attach $s_{1,2}^{(2)} s_{2,3}^{(2)}$ domain wall on it. The condensation set at the intersection becomes $(m_1, m_2, m_3)$. The total number of distinct types of the nested boundaries is 70, in which 56 are associated with (a), 2 are associated with (b), 6 are associated with (c), and 6 are associated with (d).}
    \label{fig:web_1}
\end{figure*}

\subsubsection{The $y_{\mathbf{z}} g_{\mathbf{z}} p_{\mathbf{z}}$-boundary}

The condensation set on the $y_{\mathbf{z}} g_{\mathbf{z}} p_{\mathbf{z}}$-boundary is $\{pg_{\mathbf{x}} py_{\mathbf{x}}, py_{\mathbf{x}} yg_{\mathbf{x}}, y_{\mathbf{z}} g_{\mathbf{z}} p_{\mathbf{z}}\}$. The lattice construction is given in Section~\ref{sec:e1e2e3}. One can show that by sweeping the $S$-domain wall across the codimension-1 submanifold, at the intersection with the boundary, the $S$-domain wall cannot terminate. In the TQFT description, this relation corresponds to Eq.~\eqref{eq:e1e2e3_s12}. The total number of distinct types of nested boundaries associated with this boundary is 1, as illustrated in Fig.~\ref{fig:web_1}(b).

\subsubsection{The other boundaries}

We summarize the the cases for the other boundaries that some $S$-domain walls cannot condense in Fig.~\ref{fig:web_1}.

\subsection{Summary of boundary types}
In summary, we identify 101 distinct types of boundaries for the 3D color code, with one being the magic boundary and 70 being nested boundaries. While a trivial color permutation ($y \to g$, $g \to p$, $p \to r$, and $r \to y$) can generate additional boundary types, the new boundaries created by this permutation exhibit identical physical properties, differing only by relabeling of the lattice and excitations. Therefore, we exclude these from our count. We summarize the properties of these boundaries in Table~\ref{table:I}, Table~\ref{table:II}, and Fig.~\ref{fig:web_1}.

\begin{table*}
\centering
\resizebox{1.9\columnwidth}{!}{
\begin{tabular}{||c | c | c | c | c ||} 
 \hline
 Condensations on boundaries & TQFT/toric code & Distinct & Types & Sections\\

  & counterpart & types &   &  \\[0.5ex] 
 \hline\hline
 $\{y_{\mathbf{z}}, g_{\mathbf{z}}, p_{\mathbf{z}}\}$ & $(e_1, e_2, e_3)$ & 1 & Elementary& \ref{sec:paulizboundary}  \\
 \hline
 $\{pg_\mathbf{x}, py_\mathbf{x}, yg_\mathbf{x}\}$ & $(m_1, m_2, m_3)$ & 1 & Elementary& \ref{sec:paulixboundary}\\
 \hline
 $\{(\mathbf{c}_j \mathbf{c}_k)_{\mathbf{x}} (\mathbf{c}_j)_{\mathbf{z}}, (\mathbf{c}_i \mathbf{c}_k)_{\mathbf{x}} (\mathbf{c}_i)_{\mathbf{z}}, (\mathbf{c}_i \mathbf{c}_j)_{\mathbf{x}}\}$ & $(m_i e_j, m_j e_i, m_k)$ & 3 & Elementary  & \ref{sec:other_boundary_color}\\
  \hline
 $\{(\mathbf{c}_j \mathbf{c}_k)_{\mathbf{x}} (\mathbf{c}_j)_{\mathbf{z}} (\mathbf{c}_k)_{\mathbf{z}}, (\mathbf{c}_i \mathbf{c}_k)_{\mathbf{x}} (\mathbf{c}_i)_{\mathbf{z}}, (\mathbf{c}_i \mathbf{c}_j)_{\mathbf{x}}(\mathbf{c}_i)_{\mathbf{z}}\}$ & $(m_i e_j e_k, m_j e_i, m_k e_i)$ & 3 & Elementary  & \ref{sec:other_boundary_color}\\
 \hline
 $\{pg_{\mathbf{x}} g_{\mathbf{z}} p_{\mathbf{z}}, py_{\mathbf{x}} y_{\mathbf{z}} p_{\mathbf{z}}, yg_{\mathbf{x}} y_{\mathbf{z}} g_{\mathbf{z}}\}$& $(m_1 e_2 e_3, m_2 e_1 e_3, m_3 e_1 e_2)$ & 1 & Elementary  & \ref{sec:other_boundary_color}\\
 \hline
 $\{(\mathbf{c}_i)_{\mathbf{z}}, (\mathbf{c}_i\mathbf{c}_k)_{\mathbf{x}}, (\mathbf{c}_i\mathbf{c}_j)_{\mathbf{x}}\}$ & $(e_i, m_j, m_k)$ & 3 & Elementary & \ref{sec:zxx}\\
 \hline
 $\{(\mathbf{c}_i)_{\mathbf{z}}, (\mathbf{c}_i \mathbf{c}_k)_{\mathbf{x}} (\mathbf{c}_k)_{\mathbf{z}}, (\mathbf{c}_i \mathbf{c}_j)_{\mathbf{x}}(\mathbf{c}_j)_{\mathbf{z}}\}$& $(e_i, m_j e_k, m_k e_j)$ & 3 &  Elementary & \ref{sec:other_boundary_color}\\
 \hline
 $\{(\mathbf{c}_i)_{\mathbf{z}}, (\mathbf{c}_j)_{\mathbf{z}}, (\mathbf{c}_i\mathbf{c}_j)_{\mathbf{x}}\}$ & $(e_i, e_j, m_k)$ & 3 & Elementary&\ref{sec:zzx}\\
 \hline
 $\{(\mathbf{c}_i)_{\mathbf{z}} (\mathbf{c}_j)_{\mathbf{z}}, (\mathbf{c}_j \mathbf{c}_k)_{\mathbf{x}} (\mathbf{c}_i \mathbf{c}_k)_{\mathbf{x}}, (\mathbf{c}_k)_{\mathbf{z}}\}$ & $(e_i e_j, m_i m_j, e_k)$ & 3  & Elementary& \ref{sec:folded_boundary}\\
 \hline
 $\{(\mathbf{c}_i)_{\mathbf{z}} (\mathbf{c}_j)_{\mathbf{z}}, (\mathbf{c}_j \mathbf{c}_k)_{\mathbf{x}} (\mathbf{c}_i \mathbf{c}_k)_{\mathbf{x}}, (\mathbf{c}_i \mathbf{c}_j)_{\mathbf{x}}\}$ & $(e_i e_j, m_i m_j, m_k)$ & 3 & Elementary& \ref{sec:folded_boundary}\\
 \hline
 $\{(\mathbf{c}_i)_{\mathbf{z}} (\mathbf{c}_j)_{\mathbf{z}}, (\mathbf{c}_j \mathbf{c}_k)_{\mathbf{x}}(\mathbf{c}_i \mathbf{c}_k)_{\mathbf{x}}(\mathbf{c}_k)_{\mathbf{z}}, (\mathbf{c}_i \mathbf{c}_j)_{\mathbf{x}}(\mathbf{c}_j)_{\mathbf{z}}\}$& $(e_i e_j, m_i m_j e_k, m_k e_j)$ & 3 &  Elementary & \ref{sec:other_boundary_color}\\
 \hline
  $\{y_{\mathbf{z}} g_{\mathbf{z}}, g_{\mathbf{z}} p_{\mathbf{z}}, pg_{\mathbf{x}} py_{\mathbf{x}} yg_{\mathbf{x}}\}$ & $(e_1 e_2, e_2 e_3, m_1 m_2 m_3)$ & 1 & Elementary  & \ref{sec:m1m2m3} \\
 \hline
 $\{pg_{\mathbf{x}} py_{\mathbf{x}}, py_{\mathbf{x}} yg_{\mathbf{x}}, y_{\mathbf{z}} g_{\mathbf{z}} p_{\mathbf{z}}\}$& $(m_1 m_2, m_2 m_3, e_1 e_2 e_3)$ & 1 & Elementary  & \ref{sec:e1e2e3}\\
 \hline
 $\{pg_{\mathbf{x}} py_{\mathbf{x}} y_{\mathbf{z}} g_{\mathbf{z}}, py_{\mathbf{x}} yg_{\mathbf{x}} y_{\mathbf{z}}, y_{\mathbf{z}} g_{\mathbf{z}} p_{\mathbf{z}}$\} & $(m_1 m_2 e_1 e_2, m_2 m_3 e_1, e_1 e_2 e_3)$ & 1 & Elementary  & \ref{sec:other_boundary_color}\\
 \hline
 $\{pg_\mathbf{xs}, py_\mathbf{xs}, yg_\mathbf{xs}\}$ & $(m_1 s_{2,3}^{(2)}, m_2 s_{3,1}^{(2)}, m_3 s_{1,2}^{(2)})$ & 1 & Magic& \ref{sec:magicboundary}\\
 [1ex] 
  \hline
\end{tabular}}
\caption{This table summarizes all the types of elementary boundaries as well as the magic boundary.}
\label{table:I}
\end{table*}

\begin{table*}
\centering
\resizebox{1.9\columnwidth}{!}{
\begin{tabular}{||c | c | c | c | c | c | c | c | c ||} 
 \hline
 Condensations on  & TQFT/toric code  & $s^{(2)}_{i,j}$  & $s^{(2)}_{j, k}$  & $s^{(2)}_{i, k}$  & $s^{(2)}_{i,j}s^{(2)}_{j, k}$ & $s^{(2)}_{j,k}s^{(2)}_{i, k}$& $s^{(2)}_{i,j}s^{(2)}_{i, k}$&  $s^{(3)}_{i, j, k}$  \\ 
 boundaries & counterpart & ($S_{\mathbf{c}_i\mathbf{c}_j}$) & ($S_{\mathbf{c}_j\mathbf{c}_k}$) & ($S_{\mathbf{c}_i\mathbf{c}_k}$) & ($S_{\mathbf{c}_i\mathbf{c}_j}S_{\mathbf{c}_j\mathbf{c}_k}$)& ($S_{\mathbf{c}_j\mathbf{c}_k}S_{\mathbf{c}_i\mathbf{c}_k}$) & ($S_{\mathbf{c}_i\mathbf{c}_j}S_{\mathbf{c}_i\mathbf{c}_k}$) &($T$)\\[0.5ex] 
 \hline\hline
 $\{y_{\mathbf{z}}, g_{\mathbf{z}}, p_{\mathbf{z}}\}$ & $(e_1, e_2, e_3)$ & Y & Y & Y & Y & Y & Y & Y\\ 
 \hline
 $\{pg_\mathbf{x}, py_\mathbf{x}, yg_\mathbf{x}\}$ & $(m_1, m_2, m_3)$ & N & N & N & N & N & N & N\\
 \hline
 $\{(\mathbf{c}_i)_{\mathbf{z}}, (\mathbf{c}_i\mathbf{c}_k)_{\mathbf{x}}, (\mathbf{c}_i\mathbf{c}_j)_{\mathbf{x}}\}$ & $(e_i, m_j, m_k)$ & Y & N & Y & N & N & Y & Y \\
 \hline
 $\{(\mathbf{c}_i)_{\mathbf{z}}, (\mathbf{c}_j)_{\mathbf{z}}, (\mathbf{c}_i\mathbf{c}_j)_{\mathbf{x}}\}$ & $(e_i, e_j, m_k)$ & Y & Y & Y & Y & Y & Y & Y\\
 \hline
 $\{(\mathbf{c}_i)_{\mathbf{z}} (\mathbf{c}_j)_{\mathbf{z}}, $ & $(e_i e_j, m_i m_j, e_k)$ & Y & Y & Y & Y & Y & Y & Y\\
 $ (\mathbf{c}_j \mathbf{c}_k)_{\mathbf{x}} (\mathbf{c}_i \mathbf{c}_k)_{\mathbf{x}}, (\mathbf{c}_k)_{\mathbf{z}}\}$ &  &  &  &  &  &  &  & \\
 \hline
 $\{(\mathbf{c}_i)_{\mathbf{z}} (\mathbf{c}_j)_{\mathbf{z}}, (\mathbf{c}_j \mathbf{c}_k)_{\mathbf{x}} (\mathbf{c}_i \mathbf{c}_k)_{\mathbf{x}},$ & $(e_i e_j, m_i m_j, m_k)$ & Y & N & N & N & Y & N & Y \\
  $(\mathbf{c}_i \mathbf{c}_j)_{\mathbf{x}}\}$ &  &  &  &  &  &  &  &  \\
 \hline
 $\{y_{\mathbf{z}} g_{\mathbf{z}}, g_{\mathbf{z}} p_{\mathbf{z}}, pg_{\mathbf{x}} py_{\mathbf{x}} yg_{\mathbf{x}}\}$ & $(e_1 e_2, e_2 e_3, m_1 m_2 m_3)$ & Y & Y & Y & Y & Y & Y & Y \\
 \hline
 $\{pg_{\mathbf{x}} py_{\mathbf{x}}, py_{\mathbf{x}} yg_{\mathbf{x}}, y_{\mathbf{z}} g_{\mathbf{z}} p_{\mathbf{z}}\}$& $(m_1 m_2, m_2 m_3, e_1 e_2 e_3)$ & N & N & N & Y & Y & Y & Y\\
 \hline
\end{tabular}}
\caption{This table summarizes the condensation properties of the $S$ and $T$ domain walls. ``Y" indicates that a certain type of domain wall can condense on the boundary, while ``N" indicates it cannot.}
\label{table:II}
\end{table*}

\section{Summary and discussion} \label{sec:discussion}

In this paper, we systematically classify the codimension-1 (2D) boundaries and codimension-2 (1D) nested boundaries of the 3D color code.

We first explicitly construct the $X$-boundaries and the $Z$-boundaries of 3D color code. By applying the unfolding unitaries, we further demonstrate that they are equivalent to the all-smooth and all-rough boundaries of three copies of 3D toric codes, respectively.

We demonstrate that, following the conjugation of the transversal-$T$ gate or the sweeping of the $T$-domain wall across the entire system, the 3D color-code stabilizers located on the boundary transform into non-Pauli stabilizers. Instead of condensing on the boundaries, the $T$-domain wall attaches to the $X$-boundary, and generates an exotic boundary. This is also equivalent to say that the transversal-$T$ gate is no longer a logical gate of the 3D color code when the $X$-boundary is present. We refer to this new type of boundary as the magic boundary, since it goes beyond the Pauli stabilizer formalism and can be potentially used for implementing fault-tolerant non-Clifford logical gate such as in the case of fractal topological codes \cite{zhu2022topological, dua2023quantum}.  It also goes beyond the classification of Lagrangian subgroup, which means the condensations on the boundary are no longer a subset of the electric and magnetic excitations. We explicitly verify with the lattice model that neither the $X$-type (magnetic flux), nor the $Z$-type (electric charge) excitations can condense on the magic boundary, while the combination of the $X$-type excitation (magnetic flux) and the $S$-domain wall (electric charges condensation defect) can condense on it. Moreover, we illustrate that if a boundary exhibits the spontaneous breaking of the $\mathbb{Z}_2 \times \mathbb{Z}_2 \times \mathbb{Z}_2$ symmetry, e.g. $e$-boundary for at least one copy, the $T$-domain wall can condense on it, making the resulting boundary equivalent to the initial one.

To complete the classification of boundaries, we also discuss the codimension-2 domain wall generated by transversal-$S$ gate applied on the codimension-1 submanifold.  Similar to the case with the $T$-domain wall, $S$-domain walls cannot condense on the $X$-boundary, thereby leading to the creation of new nested boundaries. Additionally, for those boundaries that preserve a certain $\mathbb{Z}_2 \times \mathbb{Z}_2$ symmetry, the corresponding $S$-domain wall cannot condense on them either. However, in cases where the boundary exhibits spontaneous breaking of the $\mathbb{Z}_2 \times \mathbb{Z}_2$ symmetry, the corresponding $S$-domain wall can condense on it. The resulting boundary is then equivalent to the initial one.

In conclusion, we give a classification of the gapped boundaries of 3D color code based on the current understanding of emergent symmetries, symmetry defects, and gapped boundaries in the corresponding $\mathbb{Z}_2^3$ gauge theory. The total number of distinct types of boundaries we study is 101, including 1 type of magic boundary and 70 types of nested boundaries.

In principle, it is possible to include additional types of nested boundaries, specifically, those with nested $S$-domain walls where the walls intersect at multiple points, e.g., forming a crossing. As detailed in Ref.~\cite{barkeshli2023codimension}, at these intersections, $e$-particles emerge and attach to the boundary. However, these boundaries can be derived from the nested boundaries we explore in this paper. Consequently, we have chosen not to include them in our count.

Other new boundaries could also be obtained by attaching any (2+1)D topological order, e.g., those realized by (2+1)D (twisted) quantum double models or (extended) string-net models \cite{kitaev2003fault, levin2005string, hu2013twisted, schotte2022quantum, lin2021generalized},  to the existing boundary of the 3D color codes. In these scenarios, the potential number of boundaries could be infinite. A perhaps more natural classification is to mod out the boundaries which are attached to the bulk via a trivial tensor product and can hence be disentangled (decoupled) by a constant-depth local unitary circuit acted only on the boundary. For example, the magic boundary considered in our paper cannot be disentangled/detached by such a circuit and is hence more non-trivial. In this paper, we mainly focus on the codimension-1 boundaries corresponding to the type-III cocycle of the gauged $\mathbb{Z}_2^3$ SPT defect.  We have not included the dicussion of attaching the gauged SPT defect corresponding to type-I and type-II cocycle, where the attachment of the defect of type-I cocycle (Levin-Gu state) has been considered in the case of a single copy of 3D toric code ($\mathbb{Z}_2$ gauge theory) \cite{Zhao:2022String, ji2023boundary, luo2023gapped} forming the so-called \textit{twisted smooth boundary}. The classification of such twisted boundaries in the context of 3D color code ($\mathbb{Z}_2^3$ gauge theory) will be consider in future works.

Future directions along this line include the application of the magic boundaries to fault-tolerant non-Clifford logical gates, where the initial exploration has occurred in Ref.~\cite{zhu2022topological, dua2023quantum} in the context of fractcal topological codes. Besides, a natural theoretical framework describing such magic boundaries which go beyond the Lagrangian subgroup formalism is the \textit{Lagrangian (condensable) algebra}, which has been used to describe gapped boundaries and domain walls in non-Abelian topological orders in two spatial dimensions~\cite{Kaidi:2021Higher,kong2014anyon,levin2013protected,cong2017hamiltonian,zhang2023anomalies}. In three and higher spatial dimensions, the Lagrangian (condensable) algebra framework can be generalized using higher category theory.~\cite{kong2024higher} Recent studies have explored examples of boundaries and domain walls, including the twisted smooth boundary in the 3D toric code~\cite{Zhao:2022String,luo2023gapped,ji2023boundary}, as well as type-I, type-II, and type-III domain walls between 3D toric codes~\cite{li2024domain}. Since the $T$-domain wall $s^{(3)}_{1,2,3}$ and $S$-domain wall $s^{(2)}_{i,j}$ can be understood as condensation defects corresponding to condensation of $e$ particles (i.e., summing over the electric Wilson lines) according to Ref.~\cite{barkeshli2023codimension}, it is expected that the magic boundary can be described by the Lagrangian algebra and the framework of higher category theory similar to the study in Ref.~\cite{Zhao:2022String}.

\noindent{\it Acknowledgements} --- We thank Arpit Dua and Tomas Jochym-O'Connor  for previous collaboration on magic boundaries in the 3D toric codes.  We appreciate the discussion with Massaim Barkeshli, Ryohei Kobayashi, and Nathanan Tantivasadakarn on the symmetry preserving mechanism of the magic boundary. Z.S. thanks Isaac Kim and Yabo Li for helpful discussions. G.Z. is supported by the U.S. Department of Energy, Office of Science, National Quantum Information Science Research Centers, Co-design Center for Quantum Advantage (C2QA) under contract number DE-SC0012704. Z.S. is supported by funds from the UC Multicampus Research Programs and Initiatives of the University of California, Grant Number M23PL5936.

\bibliographystyle{unsrturl}
\bibliography{reference.bib}

\appendix
\section{Detailed description of the unfolding unitary transformation} \label{sec:unfolding}

This section provides an detailed explanation of the unfolding unitary transformation introduced in Section~\ref{sec:unfold_without_boundary} and Ref.~\cite{kubica2015unfolding}.

Without loss of generality, we select the lattice displayed in Fig.~\ref{fig:3DCC} to illustrate the unfolding unitaries explicitly. In general, one can choose arbitrary 4-colorable and 4-valent cellulations and the recipe still works up to local unitary transformation.

First, let us consider the unfolding unitaries associated with the purple cells. We introduce one ancilla qubit on each purple edge. The lattice near a purple cell is shown below.
\begin{align}
     \adjincludegraphics[width=3cm,valign=c]{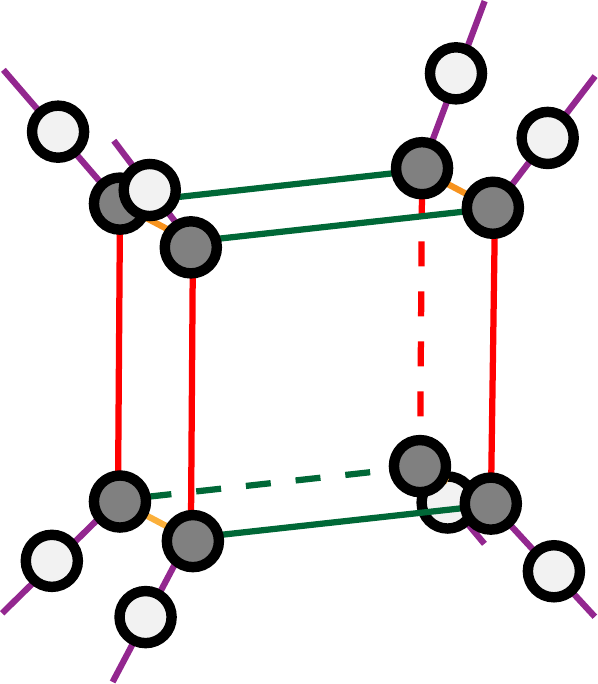}. \label{eq:apdx_1}
\end{align}
The gray circles represent the qubits of the original 3D color code and the white circles represent the ancilla qubits introduced on the purple edges. We define the following unitary operators,
\begin{align}
     \adjincludegraphics[width=4cm,valign=c]{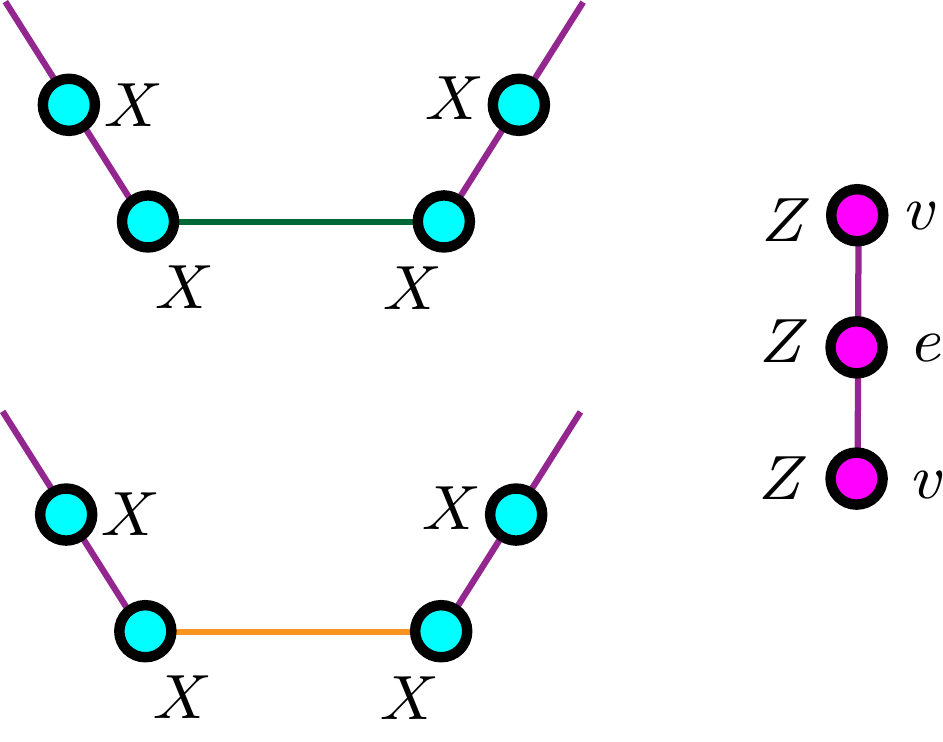}. \label{eq:apdx_2}
\end{align}
These two operators couple qubits on vertices and edges together, and the second operator only apply on the purple edges.

After applying the above operators to the stabilizers, the $X$-stabilizer on the purple cell and the $Z$-stabilizer on the plaquettes with purple edges become the following.
\begin{align}
     \adjincludegraphics[width=6cm,valign=c]{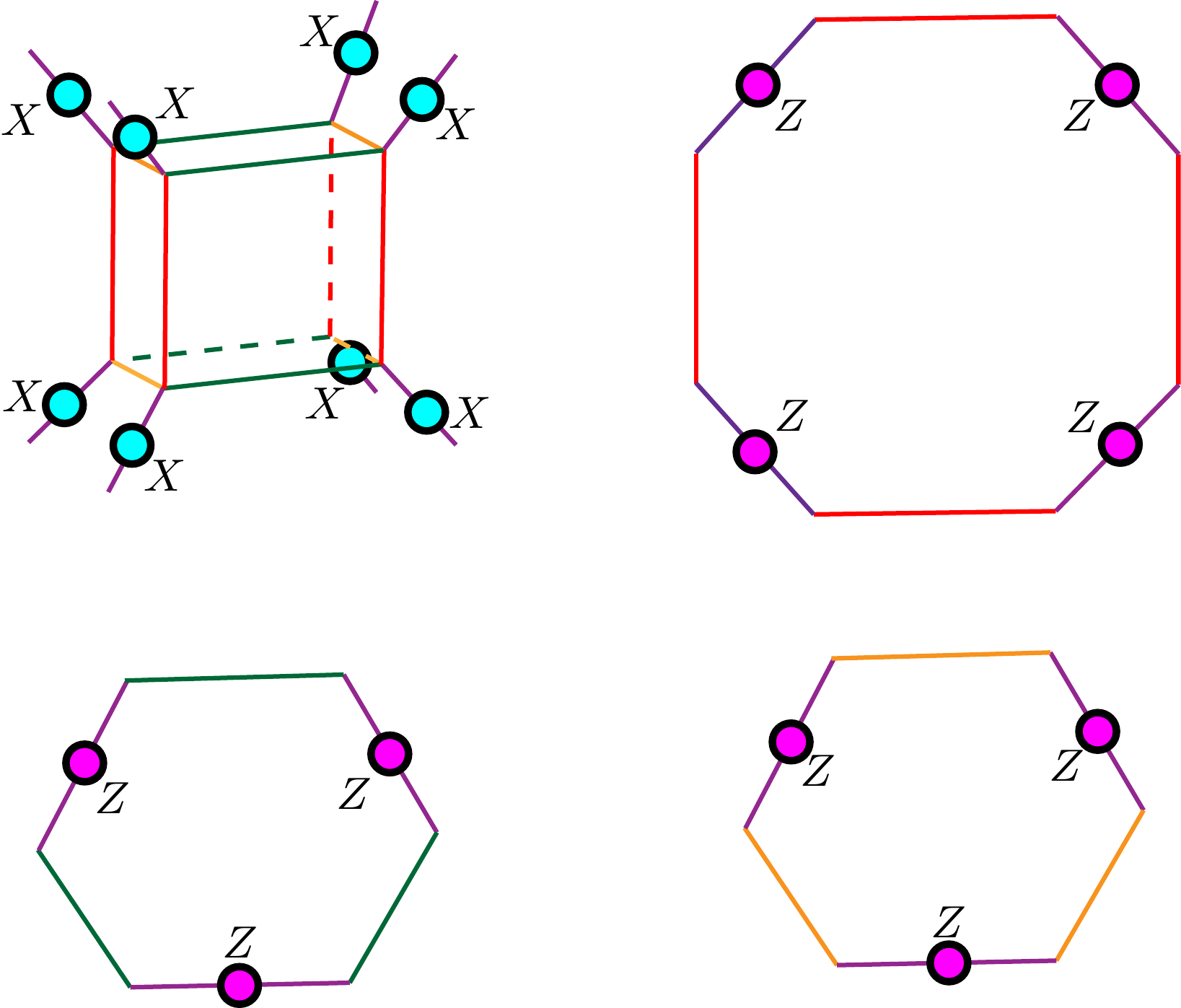}. \label{eq:apdx_3}
\end{align}
One can check that the other $X$-stabilizers and $Z$-stabilizers map to identity.

One can further shrink the purple cells. The stabilizers in Eq.~\eqref{eq:apdx_3} become the stabilizers shown in Fig.~\ref{fig:mapping_1}, and the model becomes a 3D toric code model on purple lattice, as introduced in Section~\ref{sec:unfold_without_boundary}.

Then let us consider the unfolding unitaries associated with the yellow cells. We introduce one ancilla qubit on each yellow edge, and applying the following unitaries to the stabilizers.
\begin{align}
     \adjincludegraphics[width=4cm,valign=c]{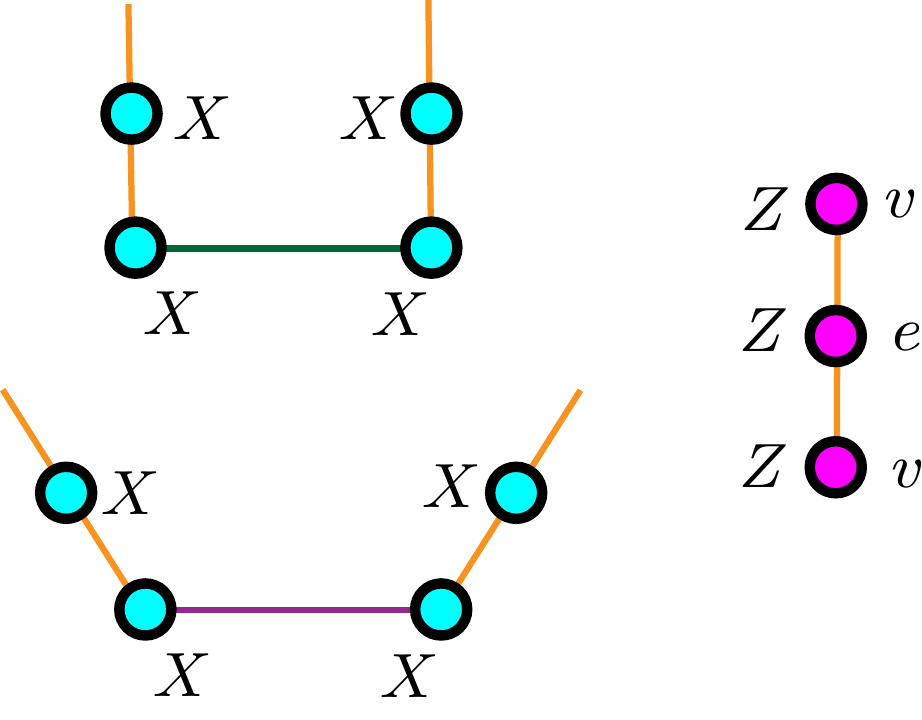}. \label{eq:apdx_4}
\end{align}

After applying the above operators to the stabilizers, the $X$-stabilizer on the yellow cell and the $Z$-stabilizer on the plaquettes with yellow edges become the following.
\begin{align}
     \adjincludegraphics[width=6cm,valign=c]{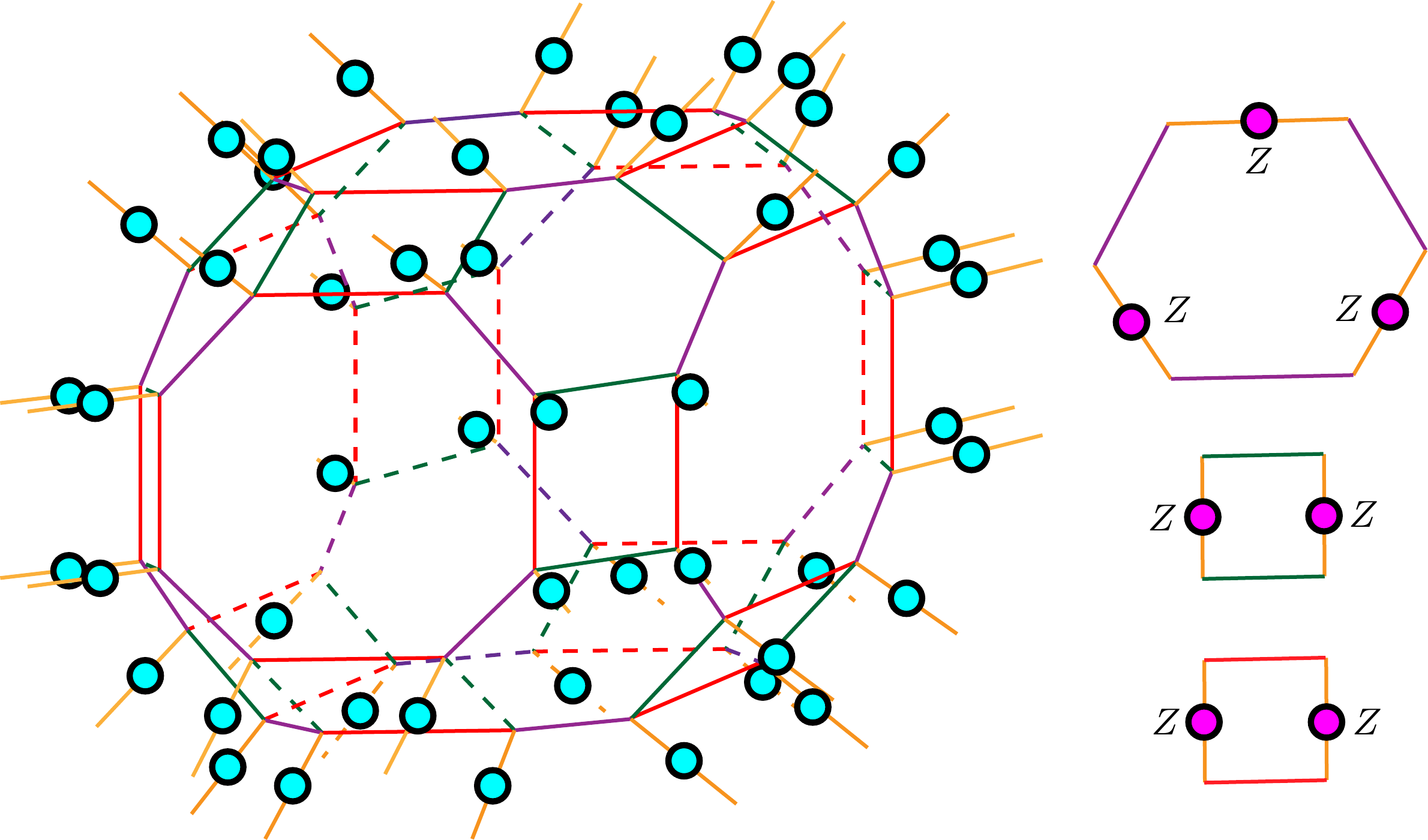}. \label{eq:apdx_5}
\end{align}
Similar to the case for purple cells, the other $X$-stabilizers and $Z$-stabilizers map to identity. We can then disentangle and discard the qubits on the vertices. The remaining stabilizers are generated by the stabilizers in Eq.~\eqref{eq:apdx_5}.

However, after shrinking the yellow cells, there are four qubits lying on each edge. They are coupled with four neighbouring $Z$-stabilizers respectively, as shown below.
\begin{align}
     \adjincludegraphics[width=6cm,valign=c]{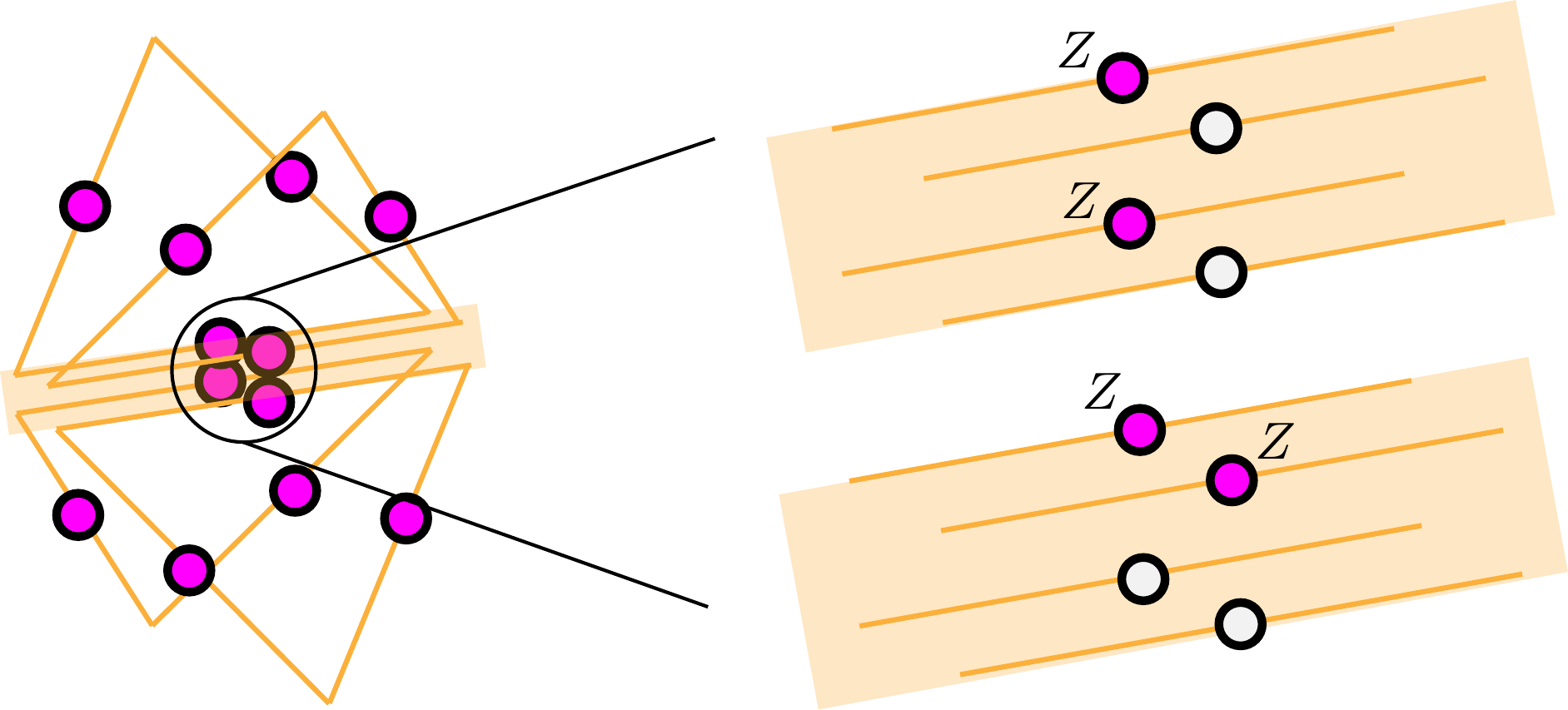}. \label{eq:apdx_6}
\end{align}
The shaded area on the left represents the fine structure of an yellow edge after we apply the lattice deformation. There are four qubits correspond to one edge and each qubit are coupled with a $Z$-stabilizer, which are displayed as triangles. On the right hand side, we zoom into this area, and there are two more types of $Z$-stabilizers. Both are two body stabilizers and couple qubits on two different directions. To get rid of these redundant qubits, one could first change the basis of stabilizers by product some of them together to make a single qubit on this edge couple to all the four different plaquettes, and then disentangle and discard the other three qubits. The remaining model is a 3D toric code on the yellow lattice as introduced in Section~\ref{sec:unfold_without_boundary}.

The unfolding process on the green lattice is similar to the two cases we discussed above. Therefore, we leave the details to the readers who are interested.

\section{3D toric code from gauging 3D Ising model} \label{sec:gauging}

Gauging is a bijective, isometric duality map from wave functions with global symmetries to wave functions with gauge (local) symmetries~\cite{williamson2016fractal,kubica2018ungauging,song2023topological}. In this appendix, we discuss the gauging process for the 3D Ising model to obtain the 3D toric code. And further show how the global symmetry is spontaneously broken in the presence of certain boundaries.

\subsection{Gauging without boundary}

First let us discuss the case without boundary. Consider a $3$-dimensional square lattice $\mathcal{L}(\mathcal{V}, \mathcal{E}, \mathcal{F}, \mathcal{C})$. $\mathcal{V}$ is the set of vertices $v \in \mathcal{V}$, $\mathcal{E}$ is the set of edges $e \in \mathcal{E}$, $\mathcal{F}$ is the set of faces $f \in \mathcal{F}$, and $\mathcal{C}$ is the set of cells $c \in \mathcal{C}$. The Hamiltonian of the 3D Ising model is given by
\begin{align}
    H_{Ising} = - J \sum_{v} X(v) - \sum_{e} \prod_{v \in \partial e} Z(v),
\end{align}
in which $J$ is the coupling constant. Pictorially, the above two terms can be shown as follows.
\begin{align}
     \adjincludegraphics[width=5cm,valign=c]{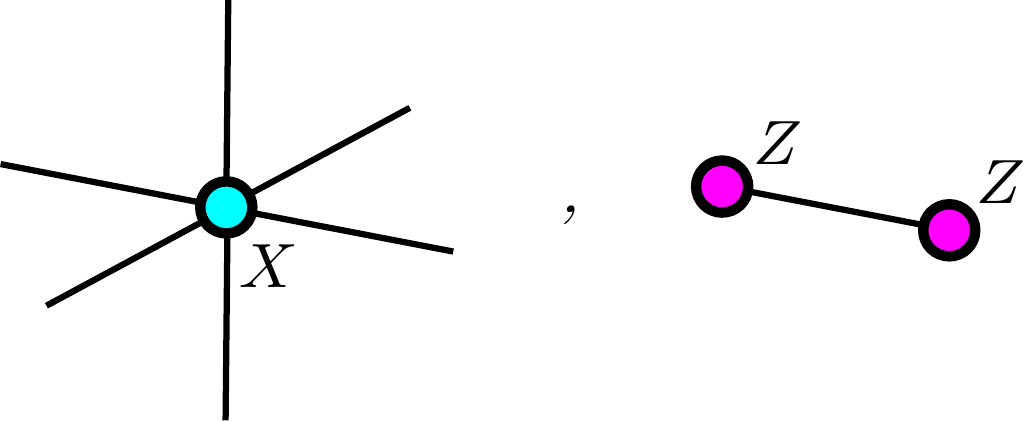}. \label{eq:apdx_10}
\end{align}
The $ZZ$-terms can be defined on edges along all directions.

The Hamiltonian is invariant under $\mathbb{Z}_2$ global symmetry, $\prod_{v}X(v)$, transformation. We have
\begin{align}
    \left(\prod_{v}X(v)\right)^{\dagger} H_{Ising}  \left(\prod_{v}X(v)\right) = H_{Ising}. \label{eq:globalz2}
\end{align}

To gauge this model, we first couple the constraint terms ($ZZ$-terms) with gauge qubits. We couple one gauge qubit to each $ZZ$-term. The map is given by the following.
\begin{align}
     \adjincludegraphics[width=5cm,valign=c]{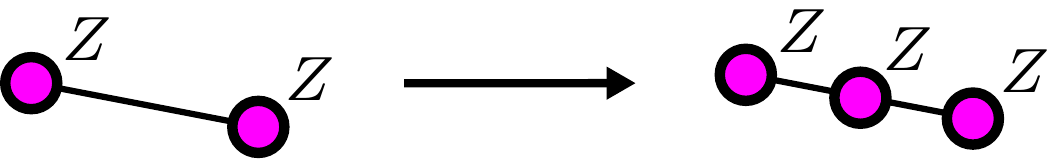}. \label{eq:apdx_11}
\end{align}
The new Ising Hamiltonian becomes
\begin{align}
    H_{Ising}' = - J \sum_{v} X(v) - \sum_{e} Z(e) \prod_{v \in \partial e} Z(v).
\end{align}
Similar to Eq.~\eqref{eq:globalz2}, now we can find a local (gauge) $\mathbb{Z}_2$ symmetry operator that keeps $H_{Ising}'$ invariant under gauge transformation. The gauge symmetry operator is the following.
\begin{align}
    X(v)\prod_{v \in \partial e} X(e) = \adjincludegraphics[width=2.3cm,valign=c]{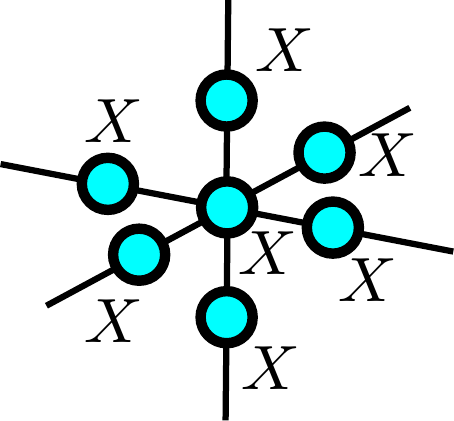}
\end{align}
And we obtained the following gauged Hamiltonian:
\begin{equation}
    \begin{aligned}
        H_{gauged} = &- J \sum_{v} X(v) - \sum_{e} Z(e) \prod_{v \in \partial e} Z(v) \\ &-  X(v)\prod_{v \in \partial e} X(e).
    \end{aligned}
\end{equation}
By taking the strong coupling limit $J \to \infty$, one can obtain a model that purely defined on the gauge qubits, which is also called the 3D toric code or pure $\mathbb{Z}_2$ gauge theory. The Hamiltonian is given in Eq.~\eqref{eq:toric_code}.


\subsection{Gauging in the presence of Boundaries}

\subsubsection{The smooth boundary} \label{sec:gauging_smooth}

First, let us consider the smooth boundary. The `smooth' here means all the terms on the boundary are intact. For example, we have the following terms on the boundary.
\begin{align}
     \adjincludegraphics[width=4cm,valign=c]{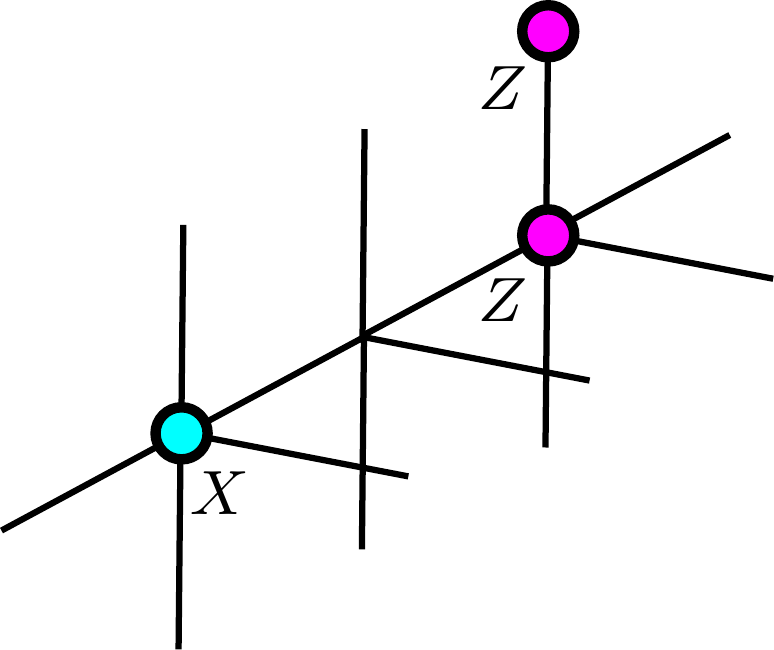}, \label{eq:apdx_13}
\end{align}
which are single qubit $X$ terms and two-body $Z$ terms that are entirely supported on the boundary qubits.

After gauging, the Hamiltonian terms on the smooth boundary becomes.
\begin{align}
     \adjincludegraphics[width=5cm,valign=c]{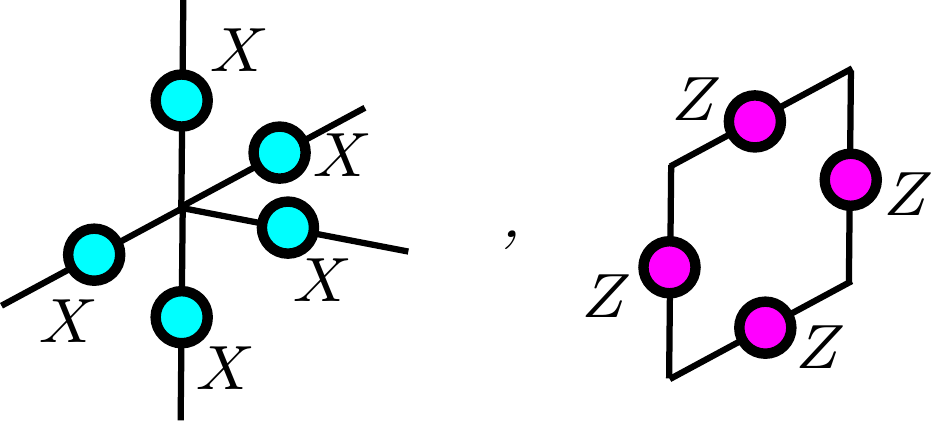}. \label{eq:apdx_14}
\end{align}
This boundary is exatly the smooth ($m$) boundary we discussed in Sec~\ref{sec:review}. By checking the commutation relation between the $\mathbb{Z}_2$ global symmetry operator and the boundary Hamiltonian terms, one can also conclude that the ungauged Ising-model boundary corresponding to the $m$-boundary preserves the $\mathbb{Z}_2$ global symmetry.

\subsubsection{The rough boundary} \label{sec:gauging_rough}

In contrast, one can also consider the `rough' boundary of 3D Ising model. In this case, there are truncated $ZZ$-terms existing. The Hamiltonian terms on this boundary is shown below.
\begin{align}
     \adjincludegraphics[width=6cm,valign=c]{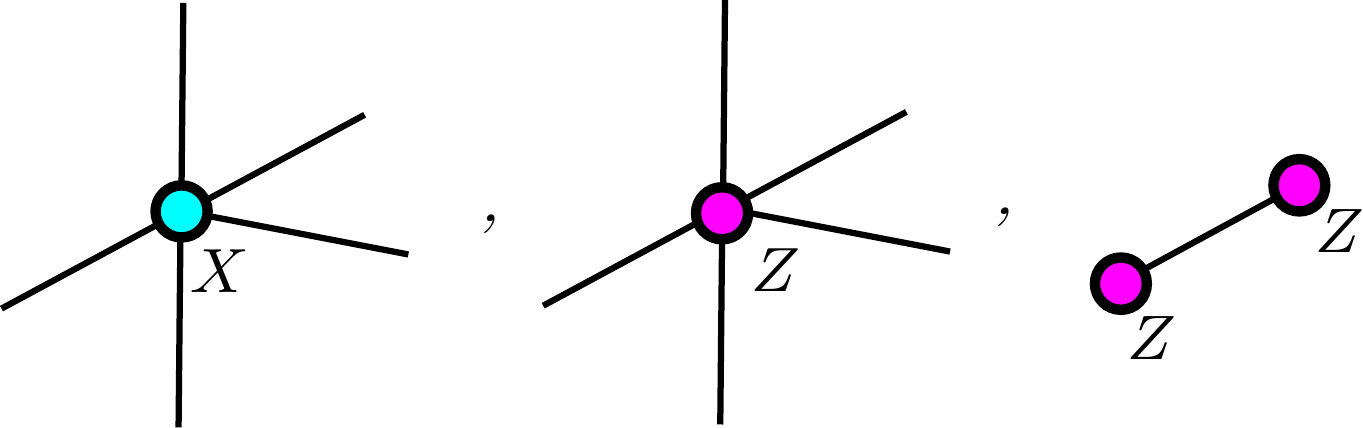}. \label{eq:apdx_15}
\end{align}
The single $Z$-term (truncated $ZZ$-term) corresponds to the $ZZ$-term on the missing leg pointing outward to the boundary. After coupling the gauge qubits, the Hamiltonian terms become
\begin{align}
     \adjincludegraphics[width=6cm,valign=c]{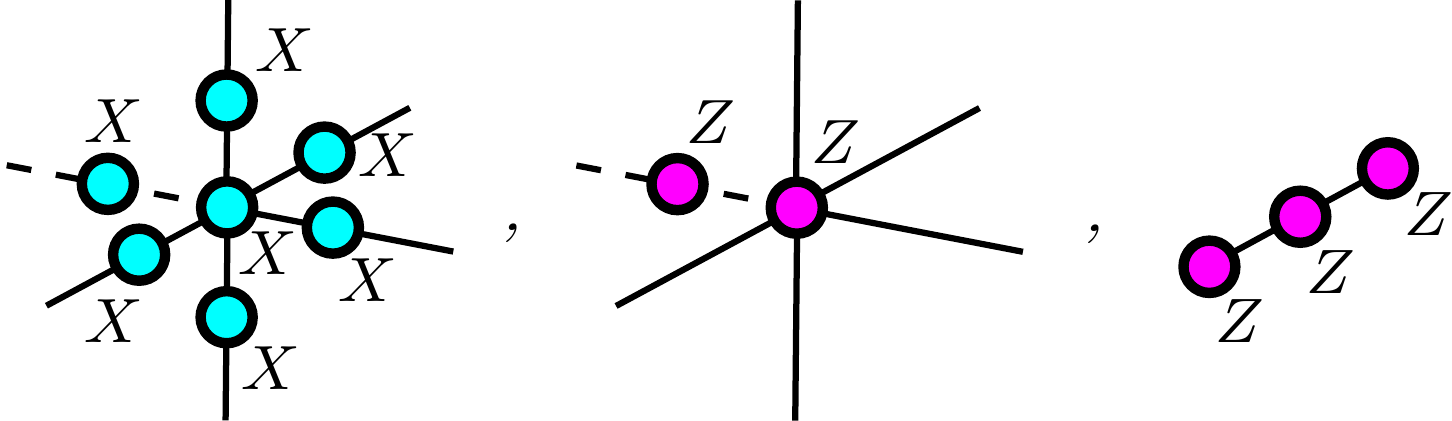}. \label{eq:apdx_16}
\end{align}
The dashed line here represents the leg outside the boundary. Then we can take the strong coupling limit and get a pure gauge model. The Hamiltonian terms on this boundary become
\begin{align}
     \adjincludegraphics[width=6cm,valign=c]{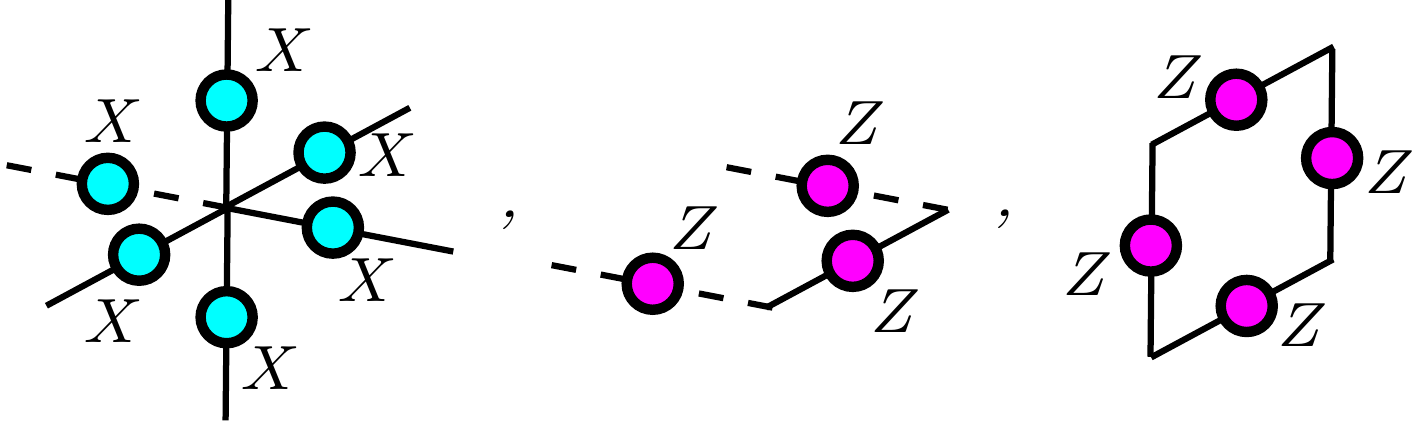}, \label{eq:apdx_17}
\end{align}
which is exactly the stabilizers for smooth ($e$) boundary of 3D toric code. One can check that, the Hamiltonian terms in Eq.~\eqref{eq:apdx_15} violate the $\mathbb{Z}_2$ global symmetry. Therefore, we conclude that the $e$-boundary in the ungauged model spontaneously breaks the $\mathbb{Z}_2$ global symmetry.

The $0$-form symmetry breaking can also be undertood in the context of $\mathbb{Z}_2$ gauge theory. Condensation of $e$-particles (which corresponds to the symmetry charges of the $0$-form symmetry) violates the charge conservation symmetry. Therefore, the corresponding $\mathbb{Z}_2$ global symmetry is spontaneously broken on this boundary.

\subsubsection{The folded boundary} \label{sec:gauging_folded}

The folded boundary can be obtained by folding the 3D Ising model lattice along a codimension-1 submanifold. An example is shown below.
\begin{align}
     \adjincludegraphics[width=6cm,valign=c]{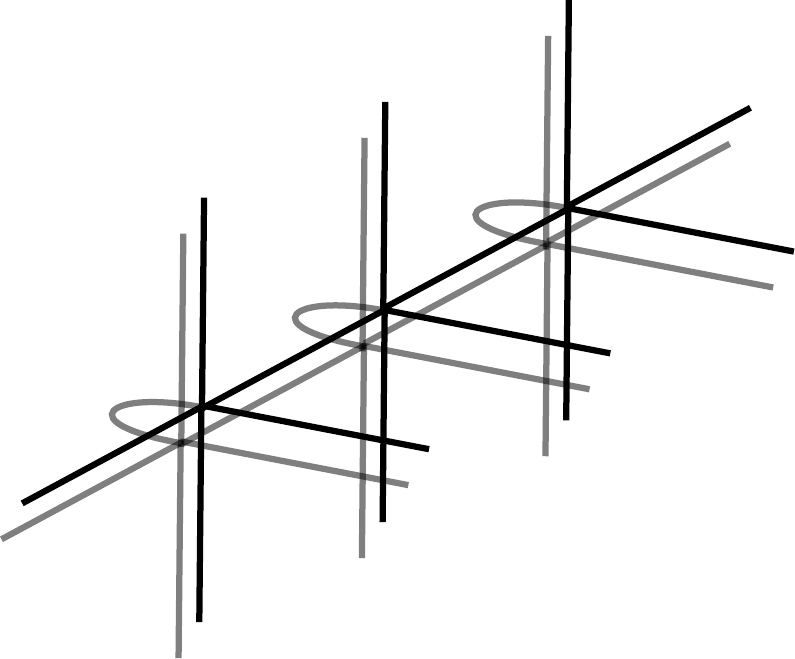}. \label{eq:apdx_18}
\end{align}
The black and the gray lattice correspond to the two copies of lattices after folding, and they are connected at the boundary by the curved lines. The Hamiltonian terms are the following.
\begin{align}
     \adjincludegraphics[width=6cm,valign=c]{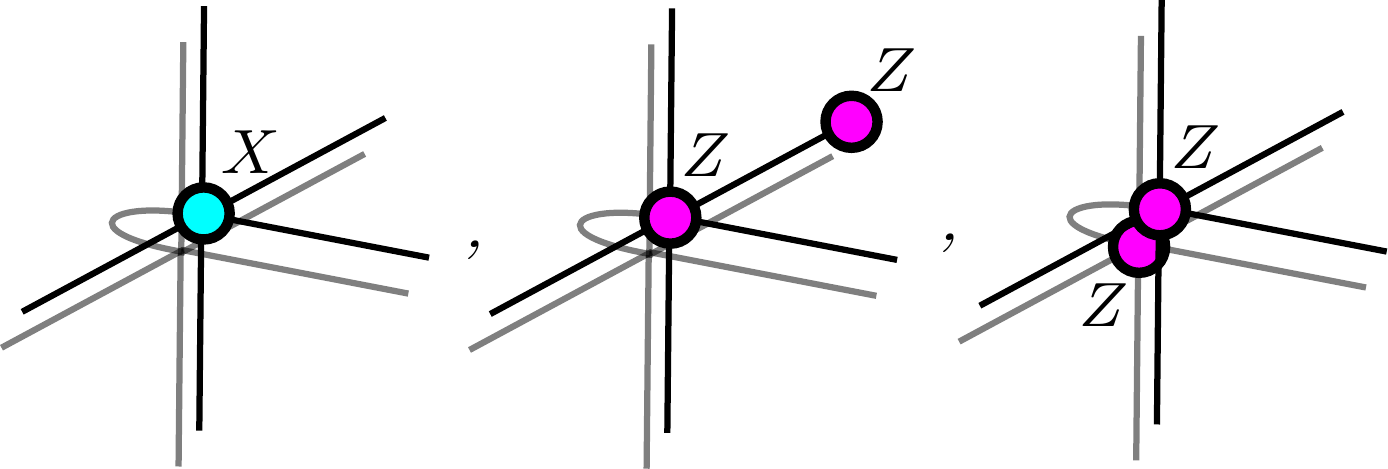}. \label{eq:apdx_19}
\end{align}
Besides the typical 3D Ising model terms on each copy, there are also $ZZ$ terms that couple two copies together on the boundary.

There is a $\mathbb{Z}_2 \times \mathbb{Z}_2$ global symmetry in the bulk of the two copies of 3D Ising model. However, this $\mathbb{Z}_2 \times \mathbb{Z}_2$ global symmetry breaks into a $\mathbb{Z}_2^{diag}$ symmetry in the presence of the folded boundary. The symmetry generator of the $\mathbb{Z}_2$ diagonal symmetry is the product of Pauli X operators on all the vertices in both copies. The reason for this symmetry breaking is as follows: The folded system comes from folding a single copy of 3D Ising model, in which there is only one $\mathbb{Z}_2$ global symmetry. Therefore, all though the symmetry in the bulk looks like a $\mathbb{Z}_2 \times \mathbb{Z}_2$ global symmetry, the actual symmetry of the system is $\mathbb{Z}_2^{diag}$.

Following the gauging process we described before, the gauged Hamiltonian terms on the boundary are the following.
\begin{align}     \adjincludegraphics[width=6cm,valign=c]{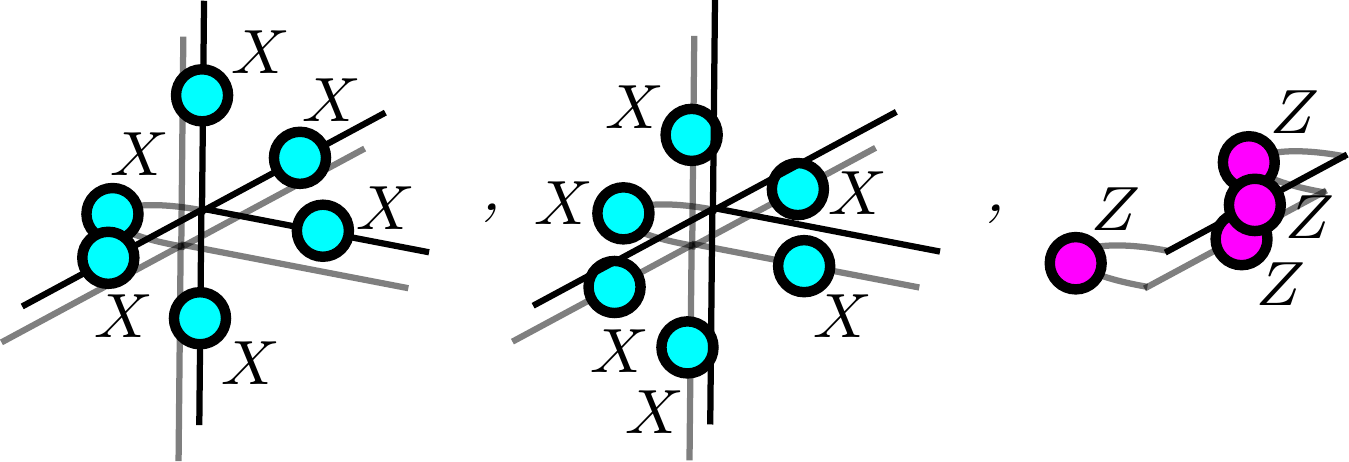}. \label{eq:apdx_20}
\end{align}
One can check this boundary corresponds to the $(e_1 e_2, m_1 m_2)$-boundary discussed in Se.~\ref{sec:folded_boundary}.

Similar to the case of rough boundary, the $\mathbb{Z}_2 \times \mathbb{Z}_2 \to \mathbb{Z}_2^{diag}$ symmetry breaking picture can also be understood in the context of $\mathbb{Z}_2 \times \mathbb{Z}_2$ gauge theory. Consider we move one $e_1$ particle to the folded boundary, it get annihilated and created another $e_2$ particle on the other copy. However, if we only focus on one copy, the charge conservation symmetry is violated. Therefore, the $\mathbb{Z}_2$ global symmetry on each individual copy is spontaneously broken.

Then the readers may ask: what symmetry is broken corresponding to the fact that $e_1 e_2$ can condense on this boundary? The answer is again the $\mathbb{Z}_2$ global symmetry on each individual copy. Since the $e_1 e_2$ particles are always created and annihilated in pairs on this boundary, the $\mathbb{Z}_2$ diagonal symmetry is always preserved on this boundary.

\subsubsection{The $m_1 m_2 m_3$-boundary} \label{sec:gauging_mmm}

The lattice on the boundary of the ungauged model is given by
\begin{align}
\adjincludegraphics[width=5cm,valign=c]{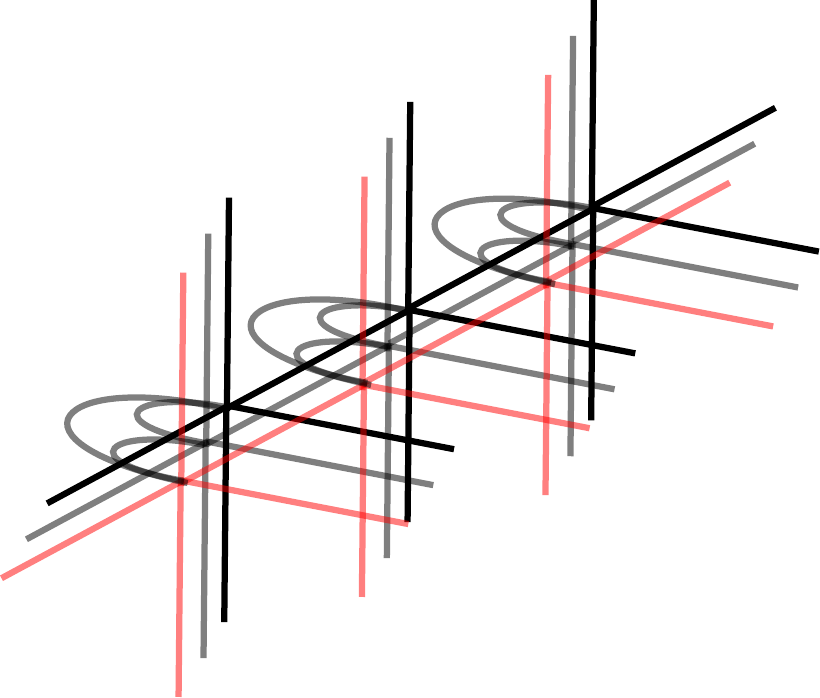}. \label{eq:apdx_21}
\end{align}
We use three different color to represent three copies of 3D Ising models. On the boundary, each pair of copies are connected. The Hamiltonian terms are shown below.
\begin{align}
\adjincludegraphics[width=6cm,valign=c]{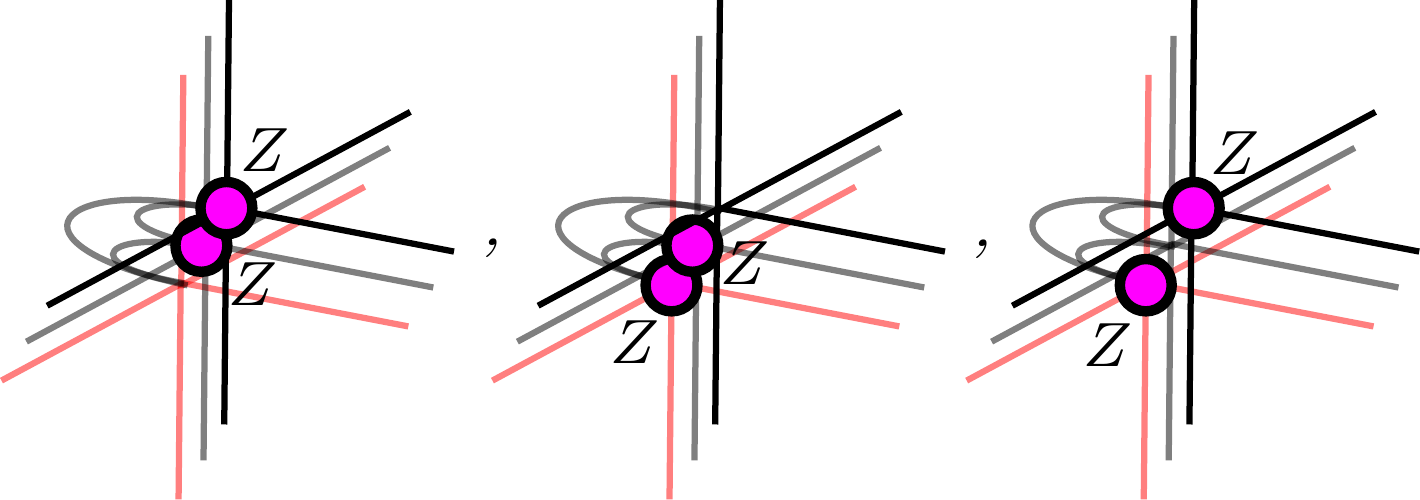}. \label{eq:apdx_22}
\end{align}
There are additional $ZZ$-terms connecting different copies of 3D Ising models. We omit the $X$-terms here since they are just single qubit terms on every vertices as displyed in Eq.~\eqref{eq:apdx_19}.

After gauging, the boundary terms become
\begin{align}
\adjincludegraphics[width=5cm,valign=c]{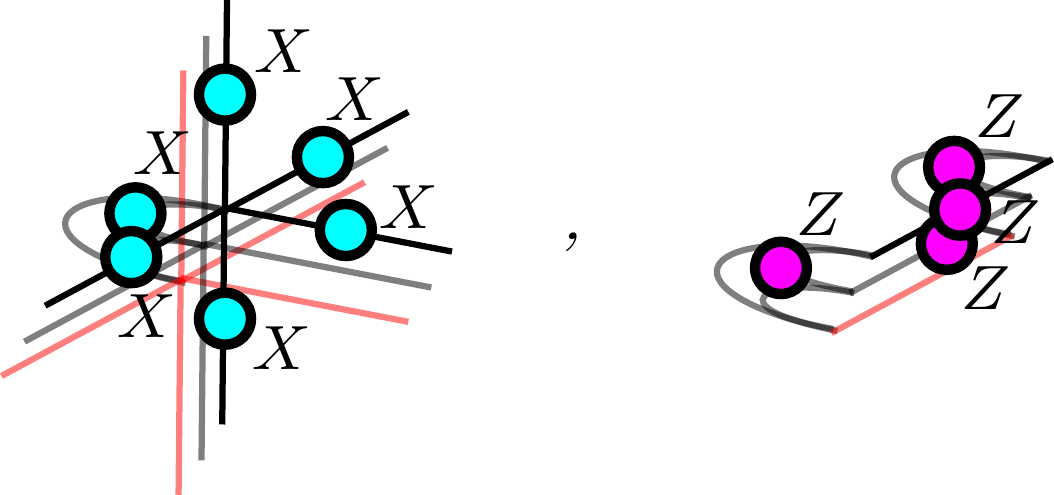}. \label{eq:apdx_23}
\end{align}
Here we show one example of $X$-stabilizer and one example of $Z$-stabilizer. Other stabilizers can be obtained by permuting the copy labels. We assign three additional gauge qubits for each $ZZ$-terms. Therefore, we have six boundary $X$-stabilizers corresponding to each boundary site (group of three boundary vertices), and three boundary $Z$-stabilizers corresponding to each pair of truncated plaquettes.

In the 3D Ising-model perspective, the existence of $ZZ$-terms violates the $\mathbb{Z}_2$ symmetry corresponding to each copy. The only residue symmetry is the $\mathbb{Z}_2^{diag}$ symmetry. Therefore, after gauging, this $\mathbb{Z}_2^{diag}$ symmetry corresponds to the charge conservation symmetry of the entire system. Since this symmetry is unbroken, it's equivalent to say a single $e$-particle cannot condense on this boundary.

\subsubsection{The $e_1 e_2 e_3$-boundary} \label{sec:gauging_eee}

The lattice on the boundary of the ungauged 3D Ising model is given by
\begin{align}
\adjincludegraphics[width=5cm,valign=c]{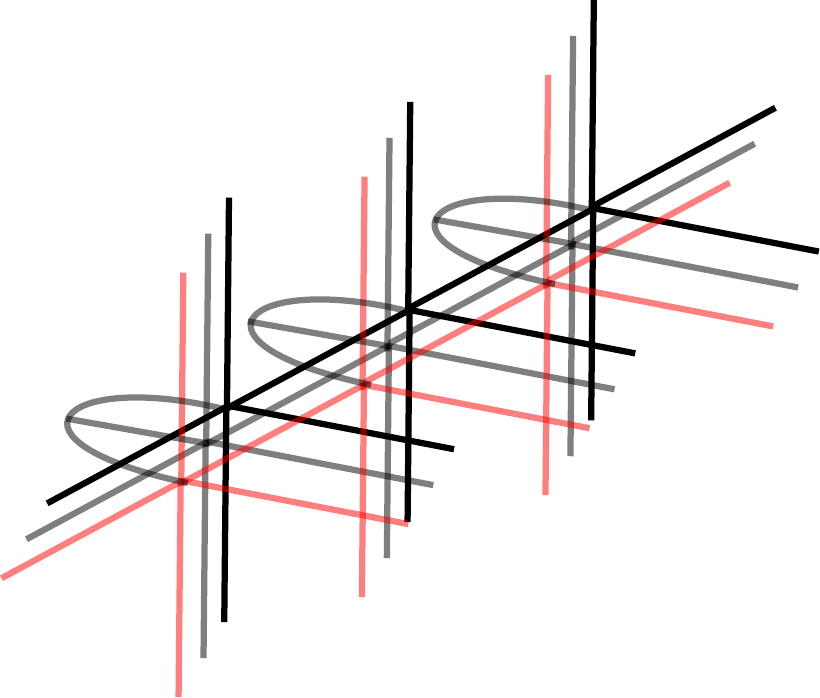}. \label{eq:apdx_24}
\end{align}

On the boundary there are $ZZZ$-terms that coupling three copies together.
\begin{align}
\adjincludegraphics[width=2cm,valign=c]{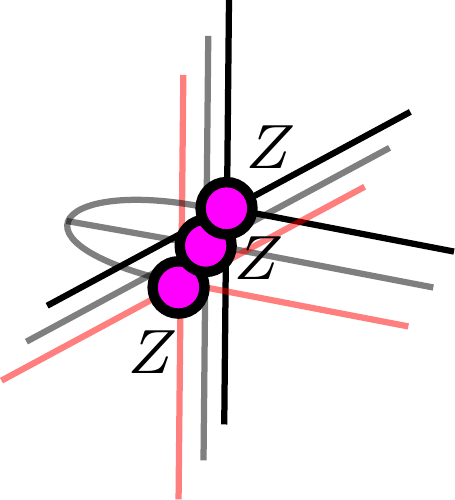}. \label{eq:apdx_25}
\end{align}

After gauging the boundary terms become
\begin{align}
\adjincludegraphics[width=5cm,valign=c]{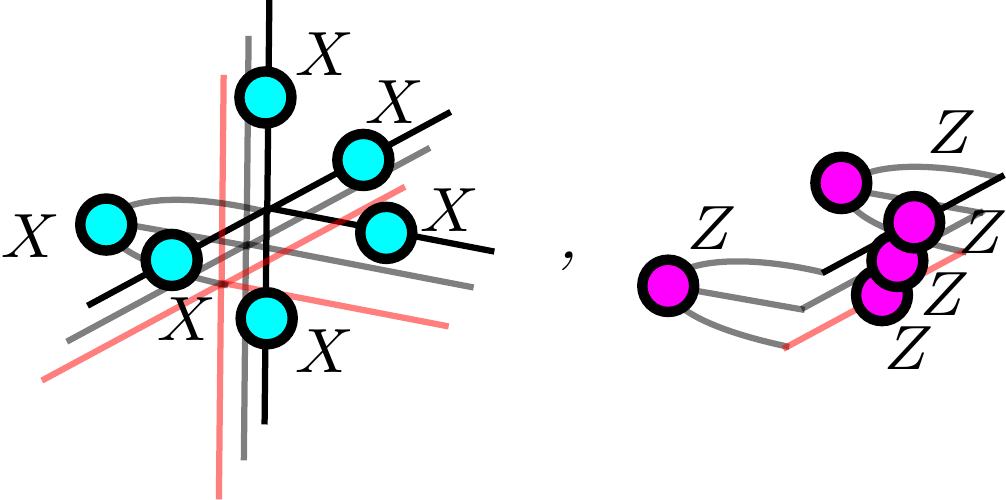}. \label{eq:apdx_26}
\end{align}

In the symmetry perspective, the existence of the $ZZZ$-terms on the boundary breaks the bulk symmetry into a $\mathbb{Z}_2^{(110)} \times \mathbb{Z}_2^{(011)}$ symmetry, which is also equivalent to say, in the gauged theory, the charge conservation symmetry on each pair of copies is preserved.

\section{A Minimal Model Example} \label{sec:minimal}

In this appendix, we present the construction of a minimal model for the 3D color code, featuring $\{pg_{\mathbf{x}}, py_{\mathbf{x}}, yg_{\mathbf{x}}\}$-boundaries and encoding three logical qubits. The lattice structure of this minimal model is depicted in Fig.~\ref{fig:minimal_model}. It includes two $\{y_{\mathbf{z}}, g_{\mathbf{z}}, p_{\mathbf{z}}\}$-boundaries at the top and bottom, while the remaining four boundaries are $\{pg_{\mathbf{x}}, py_{\mathbf{x}}, yg_{\mathbf{x}}\}$-boundaries. Below, we outline the steps for constructing this minimal model.

\begin{figure}
    \centering
    \includegraphics[width = 7cm]{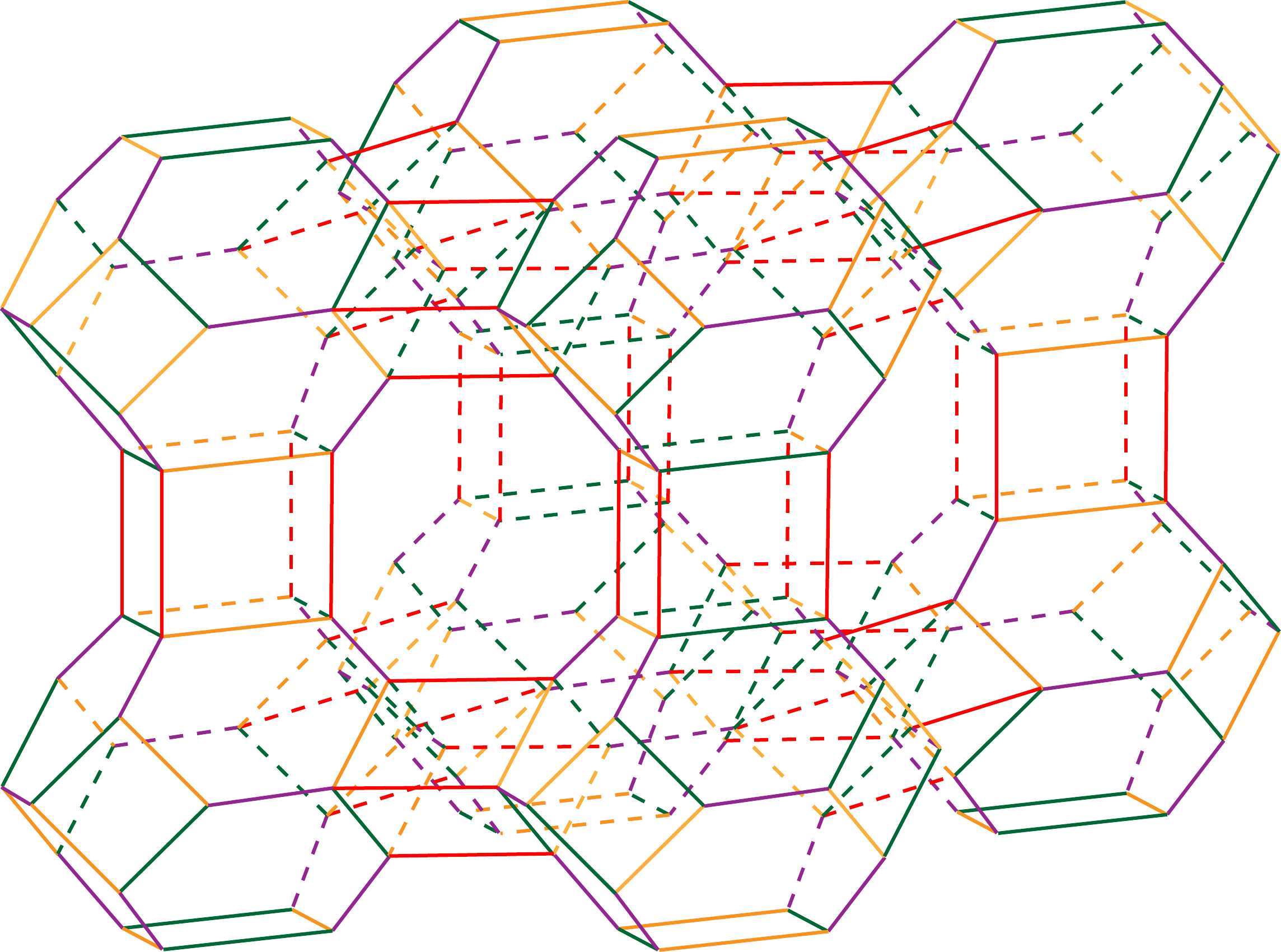}
    \caption{The lattice of the minimal model of 3D color code that can realize the $\{pg_{\mathbf{x}},py_{\mathbf{x}},yg_{\mathbf{x}}\}$-boundary. Qubits are supported on the vertices of the lattice. The lattice is constructed by one truncated cuboctahedron (yellow cell), eight truncated octahedrons (red cells), and twelve cubes (purple cells).}
    \label{fig:minimal_model}    
\end{figure}

First, we begin with a truncated cuboctahedron lattice (yellow cell).
\begin{align}
     \adjincludegraphics[width=4cm,valign=c]{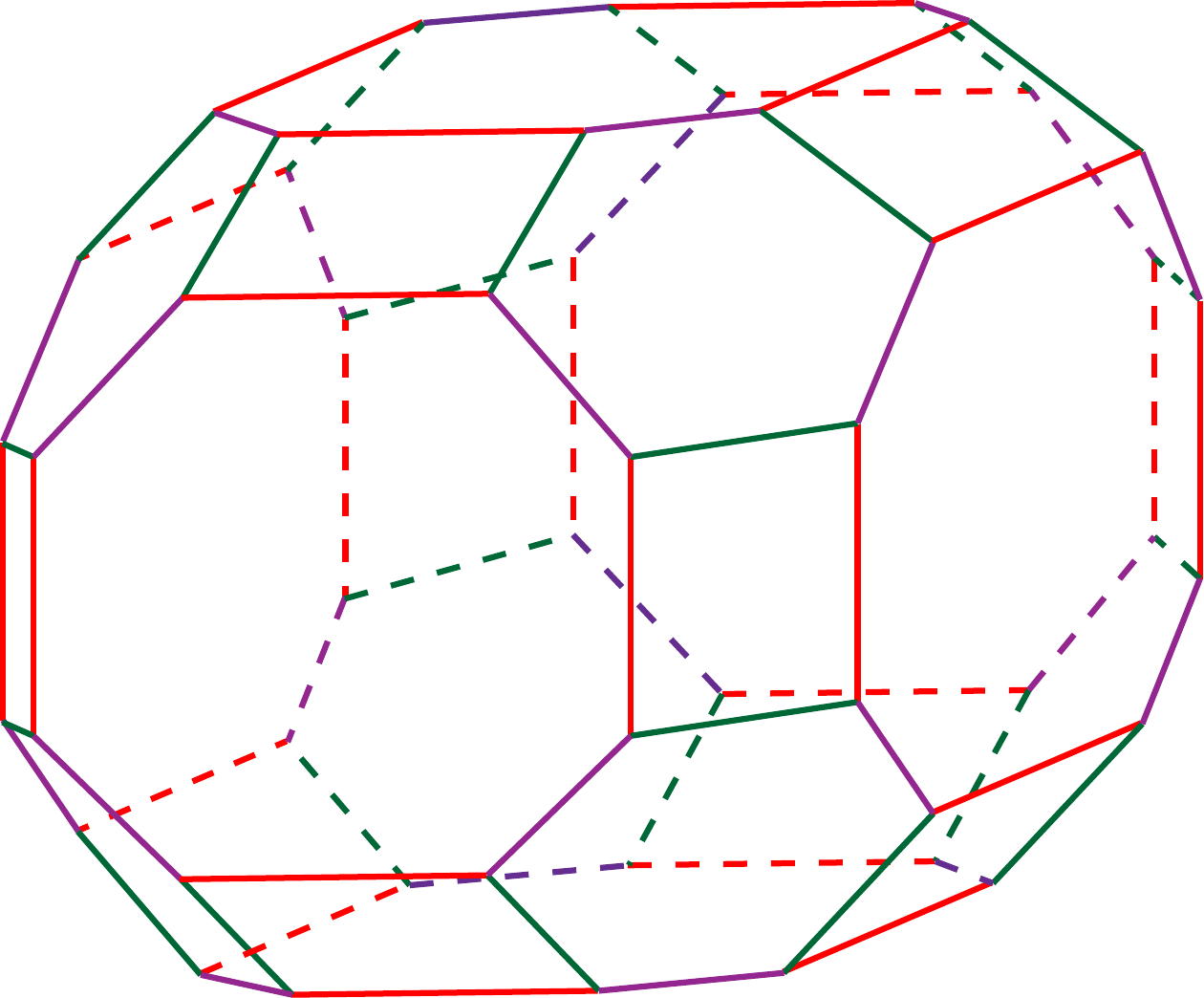}. \label{eq:minimal_1}
\end{align}
We define an $X$-stabilizer for the cell and $Z$-stabilizers for all the plaquettes. 

Second, we attach a truncated octahedron lattice (red cell) to each hexagonal plaquette of the yellow cell. An example of the truncated octahedron lattice is displayed below.
\begin{align}
     \adjincludegraphics[width=2.5cm,valign=c]{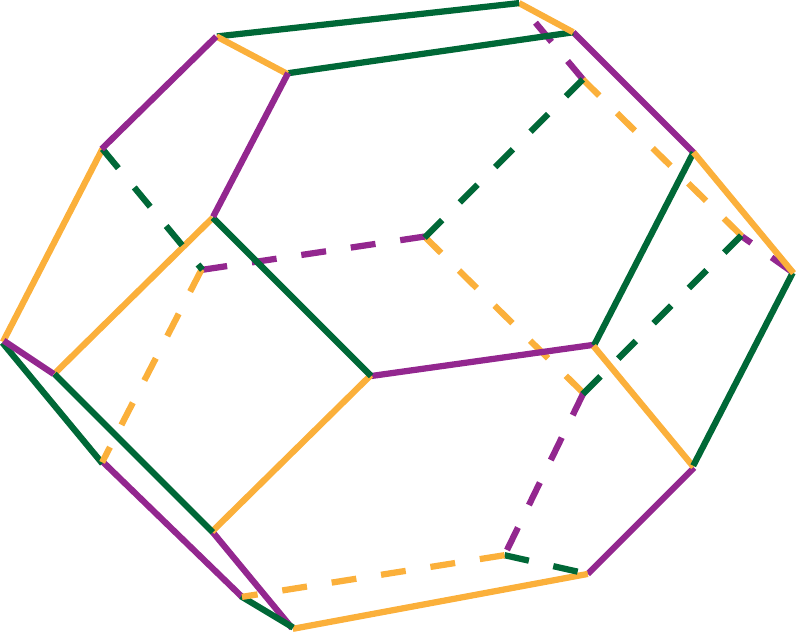}
\end{align}
On each truncated octahedron, we define one $X$-stabilizer on the cell and $Z$-stabilizers on all the plaquettes.

Finally, 12 cubes are attached to the squares in Eq.~\eqref{eq:minimal_1}. The resulting lattice is depicted in Fig.~\ref{fig:minimal_model}.

We select the top and bottom boundaries to be the $Z$-boundaries and the remaining four boundaries to be the $X$-boundaries, as we defined in Section~\ref{sec:boundaries}.

At the intersection of two $X$-boundaries, a yellow cell is truncated from two directions. Consequently, we introduce the $\mathcal{A}_{y}^{corner}(X)$ stabilizers, illustrated as follows:
\begin{align*}
     \adjincludegraphics[width=5cm,valign=c]{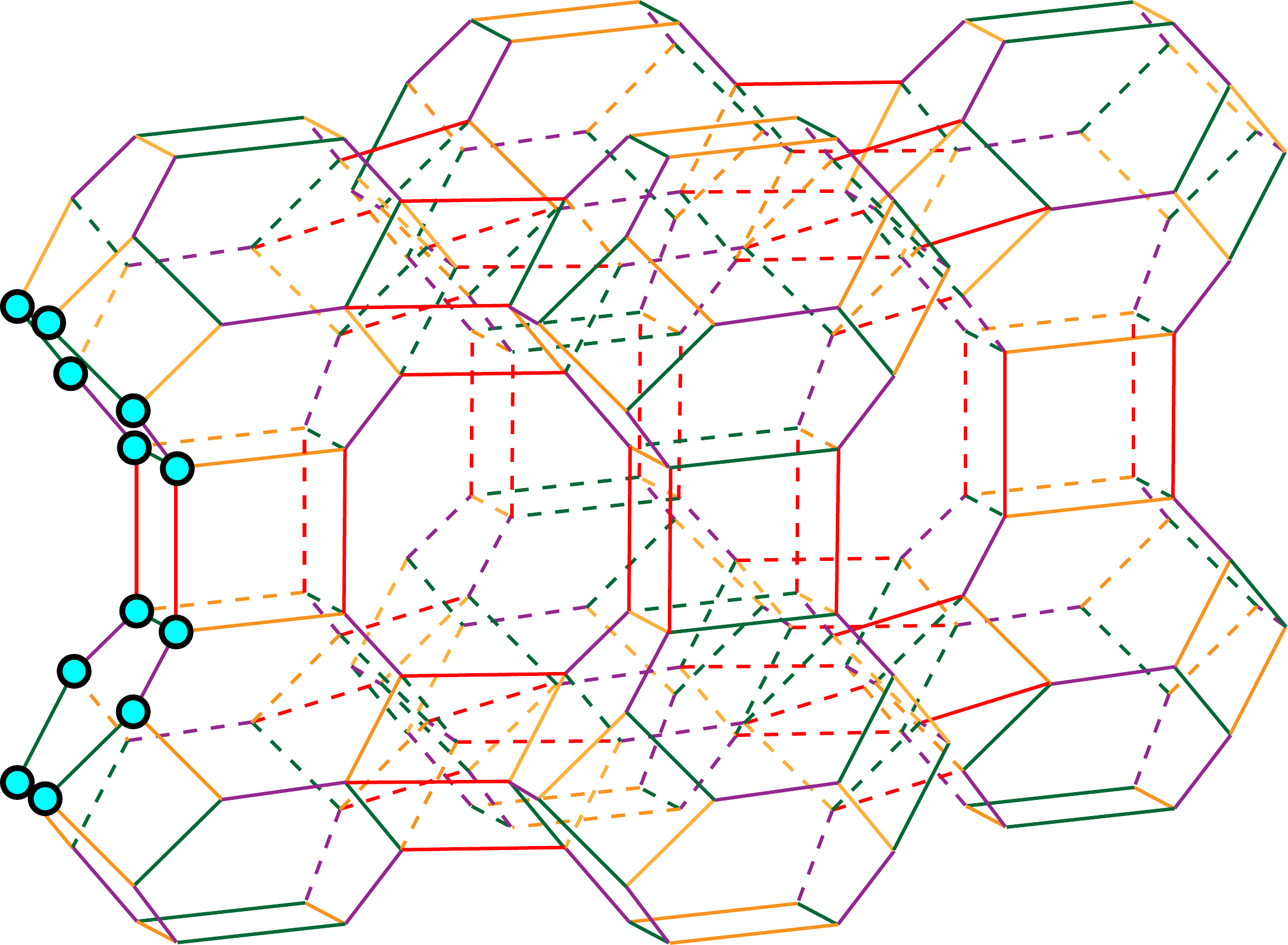}
\end{align*}

At the intersection between the $X$-boundary and the $Z$-boundary, no additional stabilizers are introduced beyond those discussed in Section~\ref{sec:paulizboundary}.

One can further verify that after conjugation of the transversal-$T$ gate, the $X$-boundaries map to the magic boundaries, which then cannot condense either the $X$-type or the $Z$-type excitations, as we discuss in Section~\ref{sec:magicboundary}.

To further demonstrate that this model is equivalent to the one proposed in Fig.~\ref{fig:magic_boundary_2}(b), we initially demonstrate that, through dimension counting, this model can encode three logical qubits. Subsequently, by applying unfolding unitaries to this model, one can obtain three copies of the toric codes, each having the same boundary alignment.

The minimal model we obtain has 192 qubits on the vertices, 45 $X$-stabilizers (8 red, 4 green, 5 yellow, and 28 purple), 206 $Z$-stabilizers (166 on plaquettes and 40 on truncated plaquettes), and 62 independent identity relations (8 on red cells, 24 on purple cells, 28 on the top and bottom boundaries and 2 on yellow cells). Therefore, the total degree of freedom of this system is given by
\begin{align}
    DOF = 2^{192-45-206+62} = 2^3.
\end{align}
Three logical qubits can be encoded in this system. The logical operators are displayed in Fig.~\ref{fig:logical_operators}.
\begin{figure*}
    \centering
    \includegraphics[width = 0.9\linewidth]{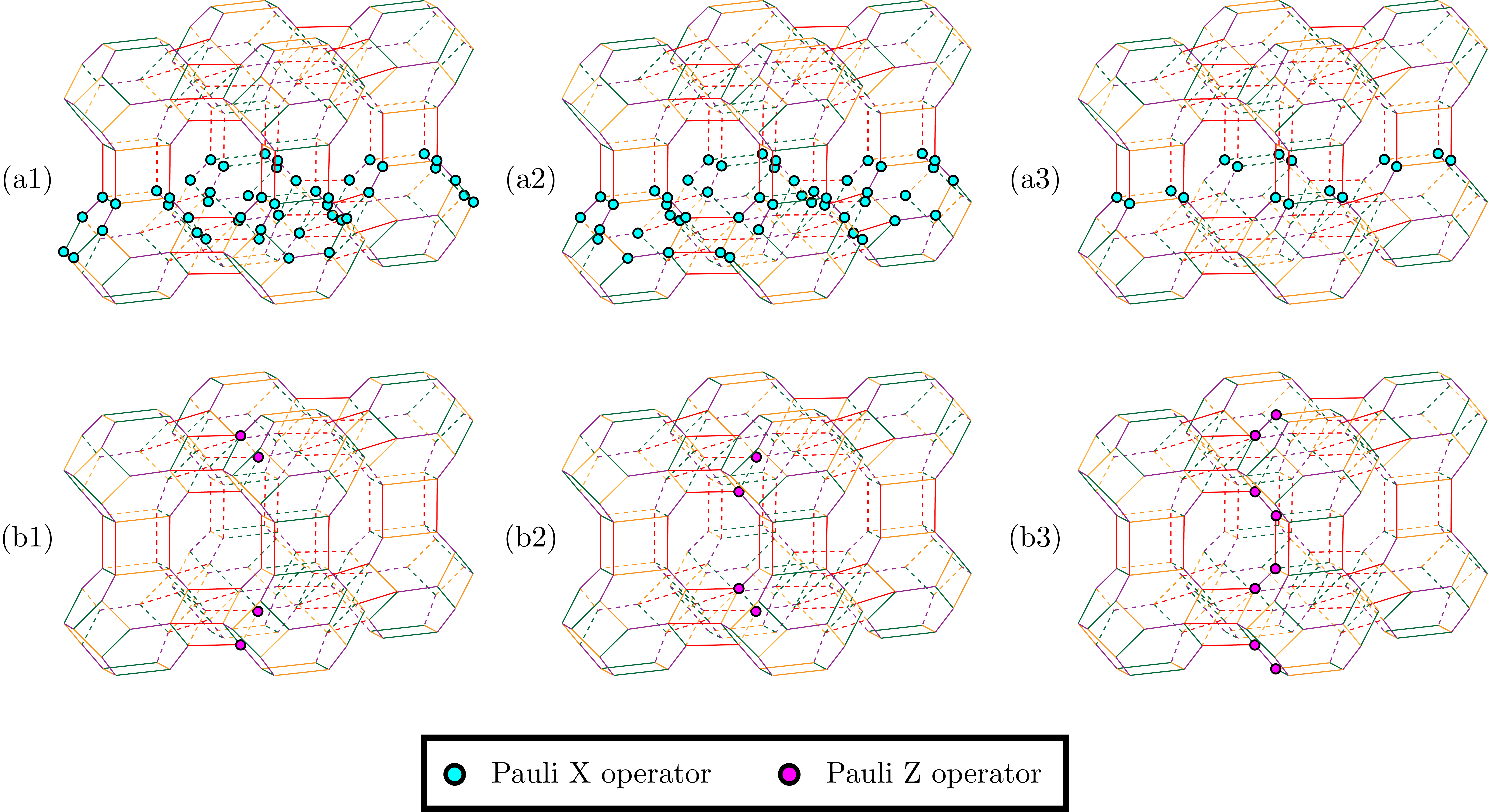}
    \caption{(a1), (a2) and (a3) display the logical $\overline{X}_{pg}$-, $\overline{X}_{py}$- and $\overline{X}_{yg}$-operators, respectively. (b1), (b2) and (b3) display the corresponding logical $\overline{Z}_y$-, $\overline{Z}_g$- and 
 $\overline{Z}_p$-operators, respectively.}
    \label{fig:logical_operators} 
\end{figure*}

\begin{figure*}
    \centering
    \includegraphics[width=0.9\linewidth]{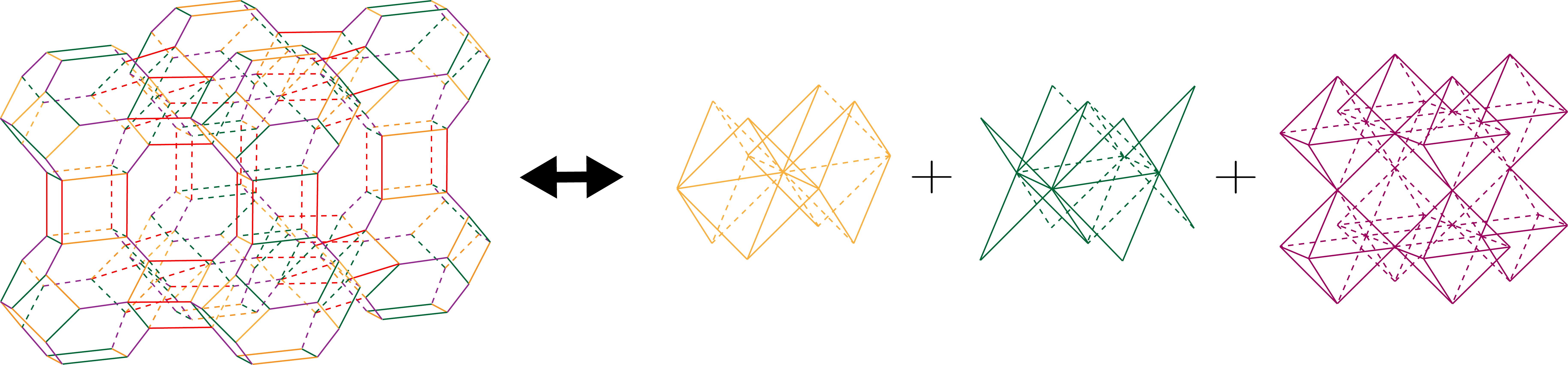}
    \caption{Unfolding the 3D color code to three copies of the 3D toric codes, featuring $(e_1, e_2, e_3)$-boundaries at the top and the bottom, and $(m_1, m_2, m_3)$-boundaries on the four remaining sides.}
    \label{fig:minimal_equivalence}
\end{figure*}

After the unfolding unitaries, the green lattice we get is
\begin{align*}
     \adjincludegraphics[width=5cm,valign=c]{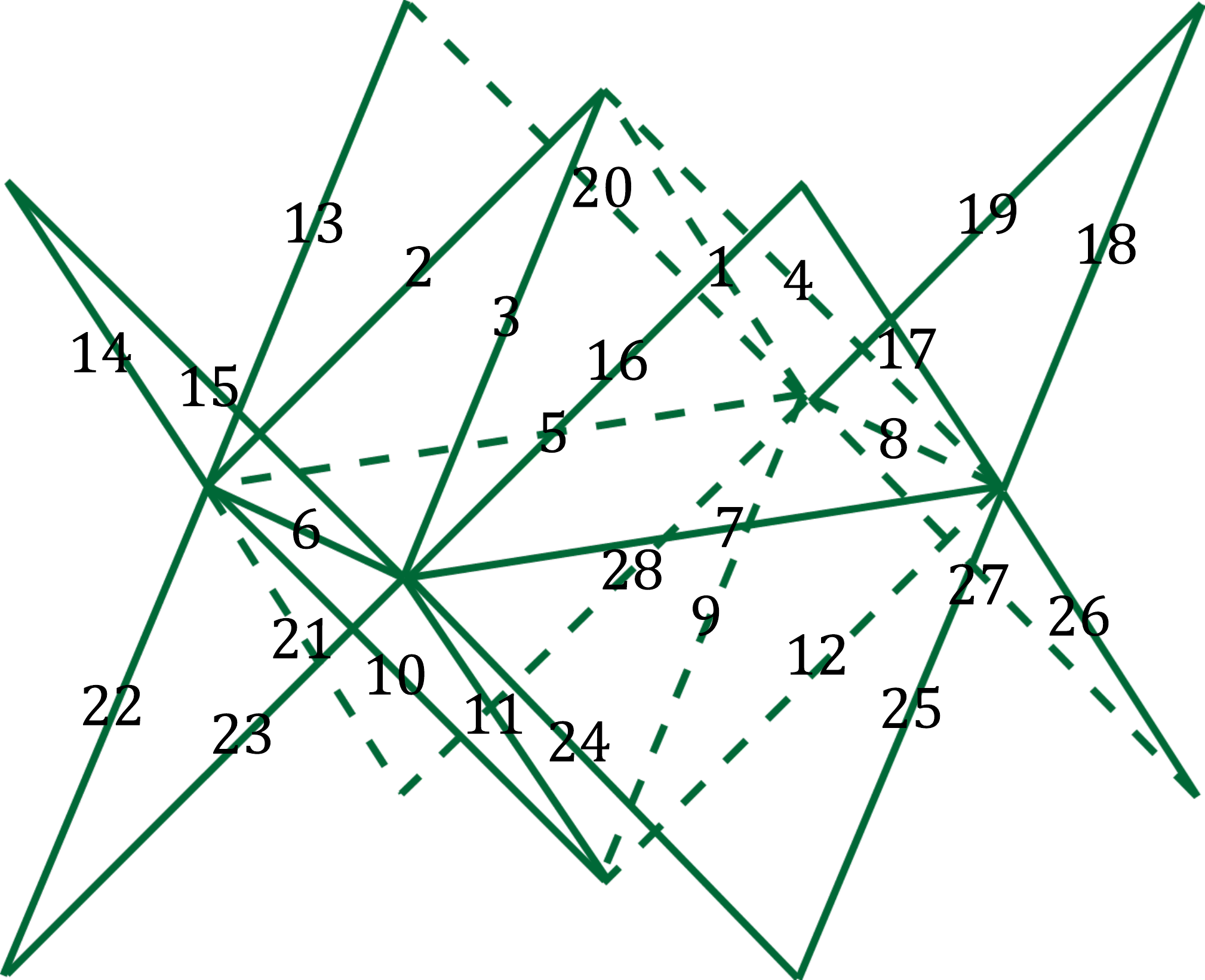}.
\end{align*}
The $X$-stabilizers defined on this lattice are given by
\begin{align*}
    A_1 &= X_1 X_5 X_8 X_9,\ A_2 = X_2 X_5 X_6 X_{10}, \\
    A_3 &= X_3 X_6 X_7 X_{11},\ A_{4} = X_4 X_7 X_8 X_{12}.
\end{align*}
The $Z$-stabilizers are given by
\begin{align*}
    B_1 &= Z_1 Z_2 Z_5,\ B_2 = Z_2 Z_3 Z_6,\ B_3 = Z_3 Z_4 Z_7, \\
    B_4 &= Z_1 Z_4 Z_8,\ B_5 = Z_5 Z_7 Z_{10},\ B_6 = Z_{6} Z_{10} Z_{11},\\
    B_7 &= Z_7 Z_{11} Z_{12},\ B_{8} = Z_{8} Z_9 Z_{12},\ B_9 = Z_{1} Z_{20},\\
    B_{10} &= Z_{2} Z_{13},\ B_{11} = Z_{2} Z_{14},\ B_{12} = Z_{3} Z_{15}, \\
    B_{13} &= Z_3 Z_{16},\ B_{14} = Z_{4} Z_{17},\ B_{15} = Z_{4} Z_{18},\\
    B_{16} &= Z_1 Z_{19},\ B_{17} = Z_9 Z_{27},\ B_{18} = Z_9 Z_{28},\\
    B_{19} &= Z_{10} Z_{21},\ B_{20} = Z_{10} Z_{22},\ B_{21} = Z_{11} Z_{23},\\
    B_{22} &= Z_{11} Z_{24},\ B_{23} = Z_{12} Z_{25},\ B_{24} = Z_{12} Z_{26},\\
    B_{25} &= Z_{5} Z_{13} Z_{20},\ B_{26} = Z_{6} Z_{14} Z_{15},\ B_{27} = Z_{7} Z_{16} Z_{17},\\
    B_{28} &= Z_{8} Z_{18} Z_{19},\ B_{29} = Z_{5} Z_{21} Z_{28},\ B_{30} = Z_{6} Z_{22} Z_{23},\\
    B_{31} &= Z_{7} Z_{24} Z_{25},\ B_{32} = Z_{8} Z_{26} Z_{27}
\end{align*}
The stabilizers are not independent; they are interconnected by nine relations. Specifically,
\begin{align}
\prod_{i=1}^{8} B_i = 1, \quad B_{2} B_{11} B_{12} B_{26} = 1,
\end{align}
with the remaining seven relations being analogous to the second one, but applied to rotationally equivalent positions. Therefore, the degree of freedom of this model is given by
\begin{align}
    DOF_{green} = 2^{28-4-32+9} = 2.
\end{align}
This model encodes one logical qubit. The logical operators are given by
\begin{align}
    \overline{X} = \prod_{i=9}^{12} X_{i} \prod_{j = 21}^{28} X_{j},\ \overline{Z} = Z_{3} Z_{11}.
\end{align}
One can also confirm that the boundaries at the top and bottom are rough boundaries, while the other four boundaries are smooth boundaries.

The yellow lattice we get is
\begin{align*}
     \adjincludegraphics[width=5cm,valign=c]{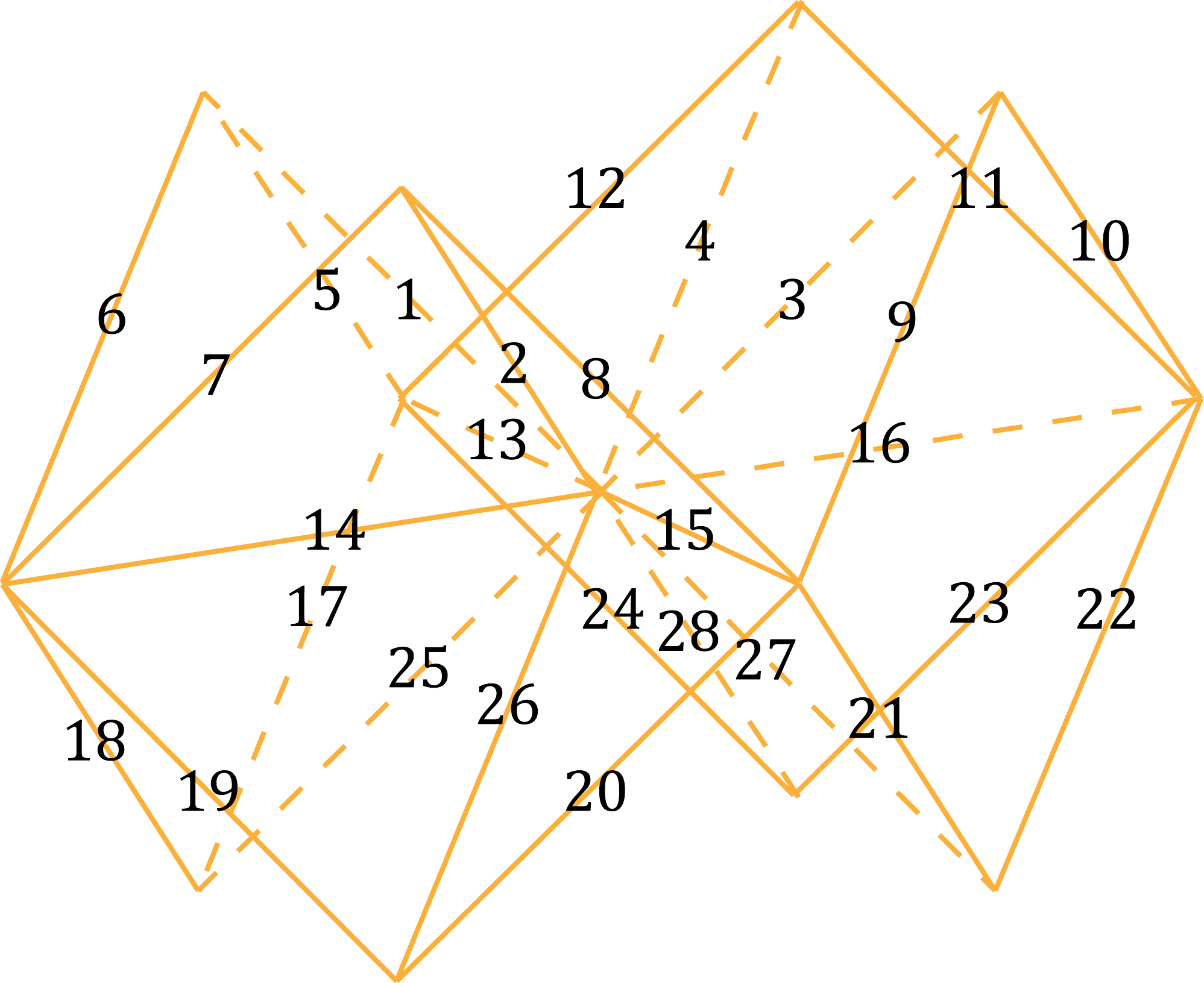}.
\end{align*}
The $X$-stabilizers are given by
\begin{align*}
    A_1 &= X_6 X_7 X_{18} X_{19},\ A_2 = X_5 X_{12} X_{17} X_{24} \\
    A_3 &= X_{10} X_{11} X_{22} X_{23},\ A_{4} = X_{8} X_{9} X_{20} X_{21} \\
    A_5 &= X_1 X_2 X_3 X_4 X_{13} X_{14} X_{15} X_{16} X_{25} X_{26} X_{27} X_{28}
\end{align*}
The $Z$-stabilizers are given by
\begin{align*}
    B_1 &= Z_1 Z_{5} Z_{13},\ B_{2} = Z_{1} Z_{6} Z_{14},\ B_{3} = Z_{2} Z_{7} Z_{14} \\
    B_4 &= Z_2 Z_8 Z_{15},\ B_5 = Z_3 Z_9 Z_{15},\ B_6 = Z_{3} Z_{10} Z_{16} \\
    B_7 &= Z_4 Z_{11} Z_{16},\ B_8 = Z_4 Z_{12} Z_{13},\ B_9 = Z_{13} Z_{17} Z_{25} \\
    B_{10} &= Z_{14} Z_{18} Z_{25},\ B_{11} = Z_{14} Z_{19} Z_{26},\\
    B_{12} &= Z_{15} Z_{20} Z_{26},\ B_{13} = Z_{15} Z_{21} Z_{27},\\
    B_{14} &= Z_{16} Z_{22} Z_{27},\ B_{15} = Z_{16} Z_{23} Z_{28} \\
    B_{16} &= Z_{13} Z_{24} Z_{28},\ B_{17} = Z_1 Z_2,\ B_{18} = Z_2 Z_3,\\
    B_{19} &= Z_3 Z_4,\ B_{20} = Z_1 Z_4,\ B_{21} = Z_{25} Z_{28},\\
    B_{22} &= Z_{25} Z_{26},\ B_{23} = Z_{26} Z_{27},\ B_{24} = Z_{27} Z_{28},\\
    B_{25} &= Z_6 Z_7,\ B_{26} = Z_8 Z_9,\ B_{27} = Z_{10} Z_{11},\\
    B_{28} &= Z_{5} Z_{12},\ B_{29} = Z_{18} Z_{19},\ B_{30} = Z_{20} Z_{21},\\
    B_{31} &= Z_{22} Z_{23},\ B_{32} = Z_{17} Z_{24}
\end{align*}
The stabilizers are not independent. They are interconnected by ten relations. Specifically
\begin{align*}
    B_{17} B_{18} B_{19} B_{20} &= 1,\\
    B_{21} B_{22} B_{23} B_{24} &= 1,\\
    B_2 B_3 B_{17} B_{25} &= 1,
\end{align*}
with the remaining seven relations being analogous to the second one, but applied to rotationally equivalent positions. So the remaining degree of freedom of this model is given by
\begin{align}
    DOF_{yellow} = 2^{28-5-32+10} = 2.
\end{align}
This model encodes one logical qubit. The logical operators are given by
\begin{align}
    \overline{X} = \prod_{i=1}^{12} X_{i},\ \overline{Z} = Z_{2} Z_{26}.
\end{align}

The purple lattice we get is
\begin{align*}
     \adjincludegraphics[width=5cm,valign=c]{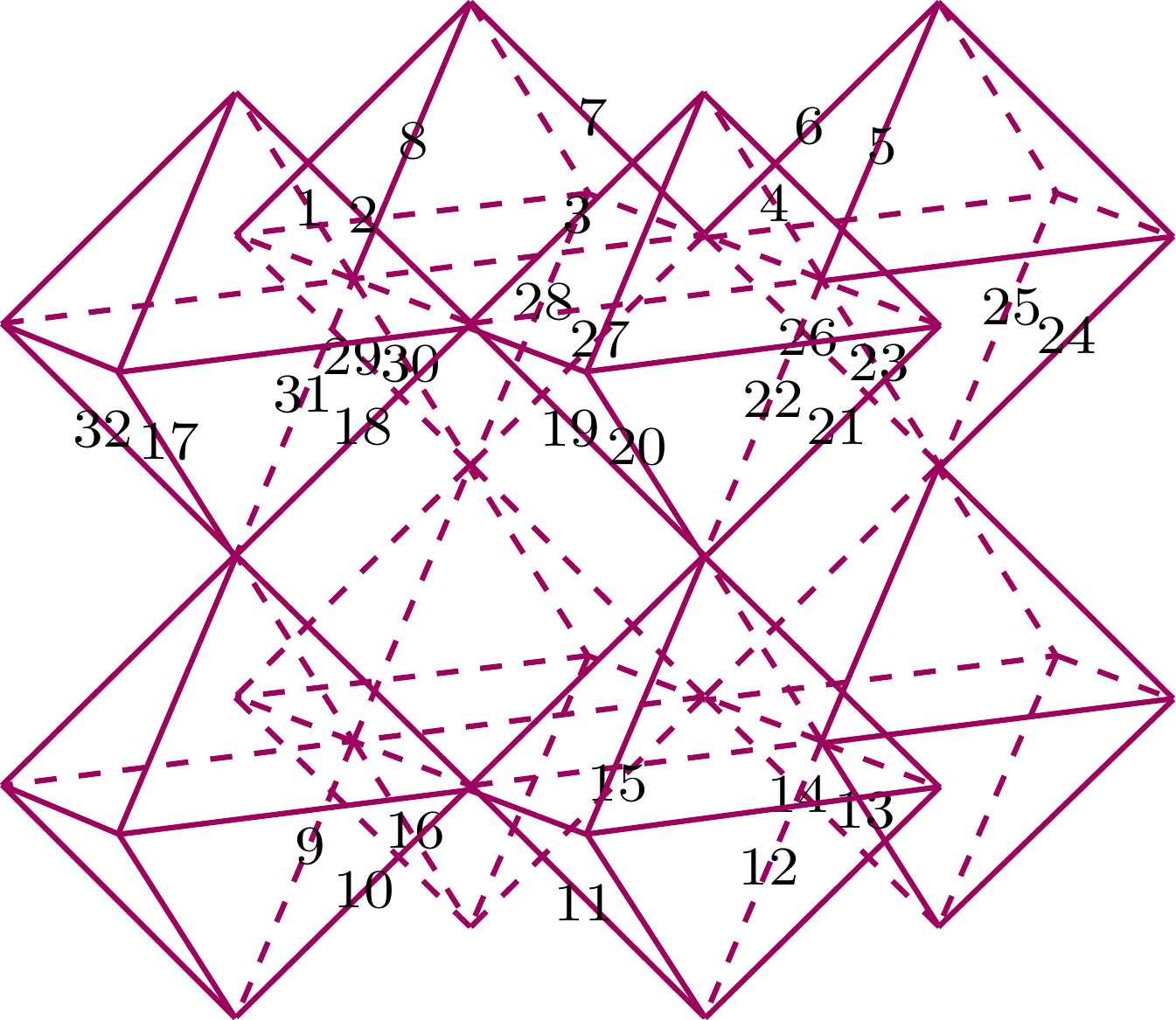}.
\end{align*}
The labels are ordered in the following way. Qubits 1 to 8 are at the same height, qubits 17 to 32 are at the same height, and qubits 9 to 16 are at the same height. 

Since there are 96 qubits on this lattice, writing down all the stabilizers is tedious. For better understanding, we give a brief summary here. According to the construction before, there are 28 $X$-stabilizers on the vertices (In total there are 36 vertices. However, the top four and the bottom four vertices don't have $X$-stabilizers on them.). There are $Z$-stabilizers on every plaquette, and 8 $Z$-stailizers on the top and bottom, which are
\begin{align*}
    &Z_2 Z_3, Z_4 Z_5, Z_6 Z_7, Z_8 Z_1,\\ 
    &Z_{10} Z_{11}, Z_{12} Z_{13}, Z_{14} Z_{15}, Z_{16} Z_{9}.
\end{align*}
In addition, there are 11 independent identity relations. Therefore, the total degree of freedom of this model is given by
\begin{align}
    DOF_{purple} = 2^{96-28-78+11} = 2.
\end{align}
This model encodes one logical qubit. The logical operators are given by
\begin{align}
    \overline{X} = \prod_{i=17}^{32} X_{i},\ \overline{Z} = Z_{2} Z_{10} Z_{18} Z_{33}.
\end{align}

In conclusion, the 3D color code obtained is equivalent to three copies of the toric codes, featuring $(e_1, e_2, e_3)$-boundaries at the top and bottom, and $(m_1, m_2, m_3)$-boundaries on the four remaining sides, up to local unitaries and ancillary qubits. Analogous to Fig.~\ref{fig:magic_boundary_2}(b), we show the relation in Fig.~\ref{fig:minimal_equivalence}.

\end{document}